\crefname{algocf}{Algorithm}{Algorithms}
\definecolor{niceRed}{RGB}{190,38,38}
\definecolor{Red2}{RGB}{219, 50, 54}
\definecolor{mgreen}{RGB}{160, 200, 140}
\definecolor{blueGrotto}{RGB}{5,157,192}
\definecolor{limeGreen}{HTML}{81B622}
\definecolor{myellow}{rgb}{0.88,0.61,0.14}
\definecolor{darkGreen}{HTML}{2E8B57}
\definecolor{navyBlueP}{HTML}{03468F}
\definecolor{Sepia}{HTML}{7F462C}
\definecolor{red2}{HTML}{1F462C}
\definecolor{orange2}{HTML}{FF8000}
\definecolor{mgray}{HTML}{ABB3B8}
\definecolor{myPurple}{RGB}{175,0,124}
\definecolor{mypurple2}{rgb}{0.8,0.62,1}
\definecolor{royalBlue}{HTML}{057DCD}
\definecolor{mpink}{HTML}{FC6C85}
\theoremstyle{plain}
\newtheorem{theorem}{Theorem}[section]
\theoremstyle{definition}
\newtheorem{definition}[theorem]{Definition}
\theoremstyle{remark}
\def\eqref#1{equation~\ref{#1}}
\def\1{\bm{1}}
\DeclareMathAlphabet{\mathsfit}{\encodingdefault}{\sfdefault}{m}{sl}
\SetMathAlphabet{\mathsfit}{bold}{\encodingdefault}{\sfdefault}{bx}{n}
\DeclareMathOperator*{\argmax}{arg\,max}
\algnewcommand{\IfThen}[2]{
	\State \algorithmicif\ #1\ \algorithmicthen\ #2}
\newcommand{\sset}{\mathcal{S}}
\newcommand{\hset}{\mathcal{H}}
\newcommand{\A}{\mathcal{A}}
\newcommand{\vprincipal}{V^{\textnormal{P},\pi}}
\newcommand{\vagent}{V^{\textnormal{A},\pi}}
\newcommand{\vagentdev}{\widehat{V}^{\textnormal{A},\pi}}
\newcommand{\qprincipal}{Q^{\textnormal{P},\pi}}
\newcommand{\qagent}{Q^{\textnormal{A},\pi}}
\newcommand{\qagentdev}{\widehat{Q}^{\textnormal{A},\pi}}
\newcommand{\vprincipalpr}{V^{\textnormal{P},\sigma}}
\newcommand{\vagentpr}{V^{\textnormal{A},\sigma}}
\newcommand{\vagentdevpr}{\widehat{V}^{\textnormal{A},\sigma}}
\newcommand{\qprincipalpr}{Q^{\textnormal{P},\sigma}}
\newcommand{\qagentpr}{Q^{\textnormal{A},\sigma}}
\newcommand{\qagentdevpr}{\widehat{Q}^{\textnormal{A},\sigma}}
\newcommand{\vprincipalmark}{V^{\textnormal{P},\rho}}
\newcommand{\vagentmark}{V^{\textnormal{A},\rho}}
\newcommand{\X}{\mathcal{X}}
\definecolor{mygreen}{rgb}{0.0, 0.5, 0.0}
\definecolor{myorange}{rgb}{0.55, 0.62, 1}
\title{Contracting With a Reinforcement Learning Agent by Playing Trick or Treat}
\author{}
\renewcommand{\eqref}[1]{(\ref{#1})}
\author{
	Matteo Bollini \and
	Francesco Bacchiocchi\and
	Matteo Castiglioni \and
	Alberto Marchesi  \and
	Nicola Gatti\\
	 Politecnico di Milano, Milan, Italy\\
	\texttt{name.surname@polimi.it}
}
\begin{document}
\maketitle

\begin{abstract}
	We study \emph{principal-agent} problems where a \emph{farsighted} agent takes costly actions in an MDP.
	The core challenge in these settings is that agent's actions are \emph{hidden} to the principal, who can only observe their outcomes, namely state transitions and their associated rewards.
	Thus, the principal's goal is to devise a policy that incentives the agent to take actions leading to desirable outcomes.
	This is accomplished by committing to a payment scheme (a.k.a.~\emph{contract}) at each step, specifying a monetary transfer from the principal to the agent for every possible outcome.
	Interestingly, we show that \emph{Markovian} policies are unfit in these settings, as they do \emph{not} allow to achieve the optimal principal's utility and are constitutionally intractable.
	Thus, accounting for \emph{history} in unavoidable, and this begets considerable additional challenges compared to standard MDPs.
	Nevertheless, we design an efficient algorithm to compute an optimal policy, leveraging a compact way of representing histories for this purpose.
	Unfortunately, the policy produced by such an algorithm cannot be readily implemented, as it is only approximately \emph{incentive compatible}, meaning that the agent is incentivized to take the desired actions only approximately.
	To fix this, we design an efficient method to make such a policy incentive compatible, by only introducing a negligible loss in principal's utility.
	This method can be generally applied to any approximately-incentive-compatible policy, and it generalized a related approach that has already been discovered for classical principal-agent problems to more general settings in MDPs.
\end{abstract}

\section{Introduction}

Over the last few years, \emph{principal-agent problems} have received a growing attention from algorithmic game theory community~\citep{dutting2019simple,alon2021contracts,guruganesh2021contracts,dutting2021complexity,dutting2022combinatorial,castiglioni2022bayesian,alon2023bayesian,castiglioni2023designing,castiglioni2023multi,GuruganeshPower23}.
At their core, these problems capture scenarios where a principal tries to steer the behavior of a self-interested agent towards favorable outcomes.
In the simplest version of the problem, the agent has to take a costly action resulting in a stochastic outcome that determines a reward for the principal.
The key challenge stems from the fact that the agent's action is \emph{hidden} to the principal, who can only observe its resulting outcome.
In order to incentivize the agent to take actions leading to favorable outcomes, the principal has the ability commit to a \emph{contract}, which is payment scheme specifying a monetary transfer from the principal to the agent for every possible outcome.

Thanks to their generality and flexibility, principal-agent problems find application in a terrific number of settings, ranging from crowdsourcing platforms~\citep{ho2015adaptive}, blockchain-based smart contracts~\citep{cong2019blockchain}, and healthcare~\citep{bastani2016analysis} to delegation of online search problems and machine learning tasks~\citep{bechtel2022delegated,kleinberg2018delegated}.

Despite the recent surge in popularity of principal-agent problems, most of the works on the topic focus on ``one-shot'' problems in which the principal and the agent interact only one time.
Nevertheless, in most of the real-world problems of interest, it is natural to assume that the principal and the agent can interact multiple times in a \emph{sequential} manner.
Recently, some works~\citep{ho2015adaptive,cohen2022learning,zhu2023sample,bacchiocchilearning,chen2024bounded} tackled diverse learning tasks in repeated principal-agent problems.
Moreover, very recently, some concurrent works~\citep{ivanov2024principal,wu2024contractual} initiated the study of settings where the principal and the agent sequentially interact in a \emph{Markov decision process} (MDP).
However, these kind of interactions are still largely under-investigated and bring in several open questions that need to be addressed.

In this paper, we study principal-agent problems in which the agent takes \emph{hidden} actions in a time-inhomogeneous finite-horizon MDP.
At each step of the MDP, based on the current state and the performed action, the agent incurs in a cost and the environment transitions to a new state.
Moreover, the principal gets a reward that only depends on the current state and the next one (\emph{i.e.}, the outcome of the action taken by the agent).
Notice that principal's rewards are naturally independent of agent's actions, since the principal cannot observe them.
%
%
We study settings in which the principal can commit to a (randomized) \emph{policy} specifying contracts to be proposed to the agent at each step of the MDP. 
%
%
In particular, we study the most general and relevant case in which both the principal and the agent are \emph{farsighted}, namely, they both aim at maximizing their long-term cumulative expected utilities.

\subsection{Original contributions}

The focus of this paper is on the problem of efficiently computing an \emph{optimal} policy to commit to for the principal, \emph{i.e.}, one maximizing principal's cumulative expected utility given that the agent is \emph{farsighted} and best responds to the commitment so as to maximize their cumulative expected utility.
The main computational challenges of this problem beget from the fact that, when the agent is farsighted, \emph{Markovian} policies are \emph{not} sufficient to achieve optimality.
Indeed, the principal can increase their utility by ``threatening'' the farsighted agent to provide them little future payments in the case in which they do \emph{not} undertake the action plan devised by the principal.
To do so, the principal needs policies capable on conditioning their behavior on the whole history of interaction with the agent.
We call such policies \emph{history-dependent}, as they are clearly \emph{non-Markovian} in nature.

As a first preliminary result, we show that \emph{Markovian} policies are the wrong tool to deal with a farsighted agent, \emph{not} only because the preclude the principal from obtaining their optimal utility, but also because they are computationally intractable.
This is perhaps unexpected, since in most of the MDP settings of interest \emph{Markovian} policies are usually amenable to polynomial-time computations.
In particular, we show that the problem of computing a (suboptimal in general) utility-maximizing \emph{Markovian} policy is $\mathsf{NP}$-hard, even if one only seeks for an approximate solution to the problem.
Nevertheless, in this paper we show that, surprisingly, an approximately-optimal history-dependent policy to commit to can be computed in polynomial time, despite such a task may seem in principle much more demanding than its relative one in the realm of \emph{Markovian} policies.

Dealing with history-dependent policies raises considerable computational challenges, as in general these do \emph{not} even admit an efficient (\emph{i.e.}, polynomially-sized) representation.
However, as a first step we show that it is possible to focus w.l.o.g.~on policies that admit a compact representation, which are called \emph{promise-form} policies.
Intuitively, such policies provide a tool to efficiently encode all the needed information contained in the history of interaction in a single \emph{promise} value, which represents how much utility the agent is expecting to get by following the principal's intended course of action in future.
We prove that such information is the only one needed, as there always exists a promise-form policy that attains the principal's optimal utility.
Let us remark that a similar idea has been exploited in related \emph{Bayesian} persuasion settings by~\citet{bernasconi2023persuading}, and, in this paper, we carry it over to principal-agent problems by making the required adaptations needed to make it work there.

The rest of the paper revolves around computing an approximately-optimal promise-form policy in polynomial time.
This is conceptually split into two main building blocks, described next.

The \textbf{first} building block is a efficient algorithm to efficiently compute a suitable promise-form policy.
Specifically, we design an algorithm capable of finding a promise-form policy achieving at least the optimal principal's cumulative expected, though such a policy is only approximately \emph{incentive compatible} (IC).
This means that the agent is only approximately incentivized to follow the actions that the principal is willing to incentivize.
Thus, such a policy cannot be effectively put in place.

The \textbf{second} building block is a polynomial-time algorithm to convert any approximately IC promise-form policy into an (exactly) IC one, by only incurring in a bounded loss in principal's cumulative expected utility.
This applied to the policy output by the first block allows achieves the desired result.

\subsection{Related works}
\paragraph{Principal-Agent MDPs}
Our setting is related to the ones analyzed by~\cite{wu2024contractual} and~\cite{ivanov2024principal}.
In particular, their model consists of an agent playing an MDP while a principal commits to a contract at each step.
Unlike us, they look for a \emph{subgame perfect equilibrium}.
This equilibrium concept requires that the principal commits to optimal contract at each state and time step, thus disallowing \emph{threats}.
As a result, these works can focus on Markovian policies.

On the other hand, we focus on a framework where the principal can commit to a policy beforehand.
By means of this commitment, the principal can achieve a larger expected utility by threatening to reduce payments if the agent doesn't follow some trajectories.
For further details, see the proof of Proposition~\ref{thm:history_necessary} and the discussion of~\cite{ivanov2024principal}.

Finally, other works analyze similar problems in the context of policy teaching~\cite{zhang2008value,zhang2009policy}, where  the principal provides incentives to follow some trajectories, and environment design~\cite{yu2022environment,ben2024principal} (\emph{i.e.}, modifying the agent's reward functions).
\paragraph{Robust Contracts} 
Our work extends a known result in the context of ``one-shot'' principal-agent problems, namely that an almost incentive compatible contract can be converted into a incentive compatible one.
More formally, \cite{dutting2021complexity} show that given a contract $p$ that $\epsilon$-incentivizes an action $a$ and provides the principal's utility $u$ when $a$ is played, one can build another contract $p'$ that perfectly incentivizes some action $a'$ (possibile different from $a$) while achieving a principal's expected utility slightly smaller than $u$.
This result has also been extend to more complex principal-agent problems involving, \emph{e.g.}, menu of contracts~\cite{castiglioni2023designing} or repeated interactions~\cite{bernasconi2024regret}. 

\paragraph{Promises}
We devise a subclass of efficiently representable policies based on the concept of promises introduced by~\cite{bernasconi2024persuading} in the context of Bayesian persuasion in an MDP.
Intuitively, their policies promise a minimum expected utility to the agent at each step.
The main challenge in adapting their idea to contract design is that  in our framework the principal \emph{cannot} observe the agent's action.
This prompted us to use a stricter notion of promises and to change the relative incentive-compatibility constraints in a more complex formulation.
Furthermore, we observe that, being defined as lower bounds, the promises of~\cite{bernasconi2024persuading} can be arbitrarily smaller than the expected agent's utility.
Instead, our promises provide better insights on the inner working of our policies, as they represent the exact expected utility that is due to the agent.
%
 
\section{Preliminaries}
\label{sec:preliminaries}

\subsection{Model}

We study principal-agent problems where a \emph{farsighted} agent takes actions in a \emph{time-inhomogeneous finite-horizon} MDP, with such actions being \emph{hidden} to the principal, who can only observe their outcomes.
Formally, a problem instance is defined as a tuple:
\begin{equation*}
	\left( \sset, \A, \hset, \{P_h\}_{h \in \hset}, \mu, \{r_h\}_{h \in \hset}, \{c_h\}_{h \in \hset} \right), \,\, \textnormal{where:}
\end{equation*}
\begin{itemize}
	\item $\sset$ is a finite set of states.
	\item $\A$ is a finite set of agent's actions.
	\item $\hset = \{1,\dots,H\}$ is a set of time steps, with $H$ being the time horizon.
	\item $P_h : \sset \times \A \rightarrow \Delta(\sset)$ is a transition function, with $P_h(s' | s,a)$ denoting the probability of reaching state $s' \in \sset$ at step $h+1$ when taking action $a \in \A$ in state $s \in \sset$ at step $h \in \hset$.\footnote{In this paper, we denote by $\Delta(\mathcal{Y})$ the set of all the probability distributions defined over a set $\mathcal{Y}$.} 
	\item $\mu \in \Delta(\sset)$ is an initial distribution over states, with $\mu(s)$ denoting the probability of $s \in \sset$.
	\item $r_h : \sset \times \sset \rightarrow [0,1]$ is a principal's reward function, with $r_h(s,s')$ being the reward obtained when transitioning from state $s \in \sset$ at step $h \in \hset$ to state $s' \in \sset$ at step $h+1$.
	\item $c_h : \sset \times \A \rightarrow [0,1]$ is an agent's cost function, with $c_h(s,a)$ being the cost suffered by the agent when taking action $a \in \A$ in state $s \in \sset$ at step $h \in \hset$.\footnote{In this paper, for ease of presentation, we assume w.l.o.g.~that there exists an action $a \in \A$ whose cost is $c_h(s,a)=0$ for every step $h \in \mathcal{H}$ and state $s \in \sset$.}
\end{itemize}

At each step of the MDP, the principal has the ability to commit to a contract and an associated action recommendation for the agent.
A \emph{contract} is a payment scheme that prescribes a monetary transfer from the principal to the agent for every possible future state.
Formally, this is defined as a function $p : \sset \to [0,B]$, with $p(s)$ encoding how much the principal commits to pay the agent in the event that state $s \in \sset$ is reached at the next step of the MDP.\footnote{As it is customary in the literature, in this paper we assume that the agent has \emph{limited liability}, \emph{i.e.}, payments are only from the principal to the agent and \emph{not viceversa} (see, \emph{e.g.},~\citep{dutting2019simple}), and that payments are bounded above by some ``sufficiently-big'' constant $B \in \mathbb{R}_{>0}$ (see, \emph{e.g.},~\citep{zhu2023sample}).}
In the following, we denote by $\mathcal{P}$ the set of all the possible contracts, while, for ease of presentation, we let $\X \coloneqq \mathcal{P} \times \A$ be the set of all the possible pairs of contracts and action recommendations the principal can choose from.

In the most general case, the principal can commit beforehand to a \emph{non-stationary} and \emph{non-Markovian} policy (henceforth called \emph{history-dependent} policy for short).
This defines a randomization over contract-recommendation pairs for every possible \emph{history} realized while interacting with the agent in the MDP.
Formally, a history-dependent policy is a function $\pi : \mathcal{T} \to \Delta(\mathcal{X})$, where $\mathcal{T}$ is the set of all the possible histories (up to any step of the MDP).
Specifically, $\mathcal{T} \coloneqq \bigcup_{h \in \hset} \mathcal{T}_h$, where $\mathcal{T}_h$ is the set of all histories up to step $h \in \hset$, defined as follows:
\begin{equation*}
	\mathcal{T}_h \coloneqq \{\tau \mid \tau = (s_1,p_1,a_1,\dots,s_{h-1},p_{h-1},a_{h-1},s_h) : s_i \in \sset, p_i \in \mathcal{P}, a_i \in \A \},
\end{equation*}
with $s_i$ denoting the state reached at the $i$-th step and $(p_i,a_i)$ being the contract-recommendation pair selected by the principal at that step.
Notice that histories do \emph{not} include the actions that are actuality played by the agent, but only those recommended by the principal, as the former are \emph{not} observable.

		\begin{algorithm}[!htp]
			\caption{\texttt{Interaction}}\label{alg:nteraction}
			\begin{algorithmic}[1]
				\State Principal commits to $\pi: \mathcal{T} \to \Delta(\mathcal{X})$
				\State $s_1 \sim \mu$, $\tau \gets (s_1)$
				\ForAll{$h = 1, \ldots, H$}
				\State Commitment $(p_h,a_h) \sim \pi(\tau)$
				\State Agent plays an action $\widehat{a}_h \in \A$ maximizing future expected utility
				%
				%
				\State $s_{h+1} \sim P_h(s_h,\widehat{a}_h)$
				\State Principal gets $r^\text{P}_h(s_h,p_h,s_{h+1}) $
				\State Agent gets $r^\text{A}_h(s_h,p_h,\widehat{a}_h,s_{h+1}) $
				\State $\tau \gets \tau \oplus (p_h,a_h, s_{h+1})$
				\EndFor
			\end{algorithmic}
		\end{algorithm}

The interaction between the principal and the agent during a given step $h \in \hset$ when the history up to $h$ is $\tau = (s_1,p_1,a_1,\dots,s_{h-1},p_{h-1},a_{h-1},s_h)\in \mathcal{T}_h$ goes on as follows (see also Algorithm~\ref{alg:nteraction}):
\begin{enumerate}
	%
	%
	%
	\item The principal publicly commits to a contract-recommendation pair $(p_h, a_h) \sim \pi(\tau)$.
	%
	%
	\item The agent plays an action $\widehat{a}_h \in \A$ (possibly different from recommendation $a_h$) maximizing their (future) cumulative expected utility.
	\item Next state $s_{h+1} \sim P_h(s_h,\widehat{a}_h)$ is sampled.
	%
	%
	\item The principal and the agent collect their utilities $r^\text{P}_h(s_h,p_h,s_{h+1}) \coloneqq r_h(s_h,s_{h+1}) - p_h(s_{h+1})$ and $r^\text{A}_h(s_h,p_h,\widehat{a}_h,s_{h+1}) \coloneqq p_h(s_{h+1}) -c_h(s_h,\widehat{a}_h)$, respectively.
	%
\end{enumerate}


\subsection{Value functions}

Next, we introduce value functions for the principal and the agent.
%
The principal’s value function $\vprincipal_h : \mathcal{T}_h \to \mathbb{R}$ encodes the principal's cumulative expected utility from step $h \in \hset$ onwards under policy $\pi: \mathcal{T} \to \Delta(\mathcal{X})$, when assuming that the \emph{agent sticks to recommendations}.
Formally, for every history $\tau = (s_1,p_1,a_1,\dots,s_{h-1},p_{h-1},a_{h-1},s_h) \in \mathcal{T}_h$, the following holds:\footnote{In this paper, we always work with policies $\pi : \mathcal{T} \to \Delta(\X)$ that specify finitely-supported distributions. Thus, we can write expectations with respect to $\pi$ as sums over the pairs $(p,a) \in \mathcal{X}_\pi \coloneqq \bigcup_{\tau \in \mathcal{T}} \text{supp}(\pi(\tau))$, where $\text{supp}(\pi(\tau))$ denotes the (finite) support of the probability distribution $\pi(\tau) $}
\begin{equation*}
	\vprincipal_h (\tau) \coloneqq \sum_{(p,a) \in \mathcal{X}_\pi} \pi(p,a | \tau)  \qprincipal_h(\tau,p,a),
\end{equation*}
%
where $\qprincipal_h: \mathcal{T}_h \times \X_\pi \to \mathbb{R}$ is such that
\begin{equation*}
	\qprincipal_h(\tau,p,a) \coloneqq \sum_{s' \in \sset}P_h(s'|s_h,a) \left( r^\text{P}_h(s_h,p,s') + \vprincipal_{h+1}(\tau \oplus (p,a,s')) \right).
\end{equation*}

Similarly, we introduce the agent's value function $\vagent_h : \mathcal{T}_h \to \mathbb{R}$, encoding the agent's cumulative expected utility by following recommendations from $h \in \hset$ onwards under policy $\pi: \mathcal{T} \to \Delta(\mathcal{X})$.
%
For every history $\tau = (s_1,p_1,a_1,\dots,s_{h-1},p_{h-1},a_{h-1},s_h) \in \mathcal{T}_h$, it holds:
\begin{equation*}
	\vagent_h (\tau) \coloneqq \sum_{(p,a) \in \mathcal{X}_\pi} \pi(p,a | \tau) \qagent_h(\tau,p,a),
\end{equation*}
where $\qagent_h: \mathcal{T}_h \times \X_\pi \to \mathbb{R}$ is such that
\begin{equation*}
	\qagent_h(\tau,p,a) \coloneqq \sum_{s' \in \sset} P_h(s'|s_h,a) \left(r^\text{A}_h(s_h,p,a,s') + \vagent_{h+1}(\tau \oplus (p,a,s')) \right).
\end{equation*}

We also introduce the agent's deviation-value function $\qagentdev_h : \mathcal{T}_h \times \mathcal{X}_\pi \rightarrow \mathbb{R}$ to encode the agent's cumulative expected utility from step $h \in \hset$ onwards under policy $\pi: \mathcal{T} \to \Delta(\mathcal{X})$, when the \emph{agent deviates from recommended actions}.
%
%
For every $\tau = (s_1,p_1,a_1,\dots,s_{h-1},p_{h-1},a_{h-1},s_h)\in \mathcal{T}_h$, contract $p \in \mathcal{P}$, and recommendation $a \in \mathcal{A}$, this is formally defined as:
%
%
\begin{equation*}
	\qagentdev_h(\tau,p,a) \coloneqq \max_{\widehat{a} \in A} \sum_{s' \in \sset} P_h(s'|s_h,\widehat{a}) \left(r^\text{A}_h(s_h,p,\widehat{a},s') +\vagentdev_{h+1}(\tau \oplus (p,a,s')) \right),
\end{equation*}
where $\vagentdev_h: \mathcal{T}_h \to \mathbb{R}$ is defined as follows for every $h \in \hset$ and $\tau \in \mathcal{T}_h$:
\begin{equation*}
\vagentdev_h(\tau) \coloneqq \sum_{(p,a) \in \mathcal{X}_\pi} \pi(p,a | \tau) \qagentdev_h(\tau,p,a).
\end{equation*}
Intuitively, $\qagentdev_h(\tau,p,a)$ is the maximum cumulative expected utility that the agent can achieve by deviating from recommendations, once the principal has committed to the contract-recommendation pair $(p,a)$ at step $h$ given history $\tau$.
Notice that the cumulative expected utility achieved by deviating from step $h+1$ onwards is computed by using $\tau \oplus (p,a,s')$, \emph{i.e.}, accounting for the recommendation $a$ rather than the action $\widehat{a}$ actually played by the agent, as the principal cannot observe the latter.
%
%


Finally, given a policy $\pi : \mathcal{T} \to \Delta(\mathcal{X})$, we let $\vprincipal \coloneqq \sum_{s \in \sset} \mu(s) \vprincipal_1(s)$ be the principal's cumulative expected utility from the initial step onwards, when the agent sticks to recommendations.

\subsection{Optimal policies}

Next, we formally introduce the optimization problem faced by the principal.
First, we observe that, by well-known revelation-principle-style arguments (see, \emph{e.g.},~\citep{gan2022optimal}), it is possible to focus w.l.o.g.~on policies that are \emph{incentive compatible} and \emph{direct}.
The first property intuitively characterizes the policies under which the agent is ``correctly'' incentivized to follow recommendations, while the second one identifies policies that specify at most one contract for every possible action recommendation.
These two properties are formally defined in the following.
%
%
%
\begin{definition}[$\epsilon$-IC policies]
	Let $\epsilon \geq 0$. A history-dependent policy $\pi : \mathcal{T} \to \Delta(\mathcal{X})$ is said to be \emph{$\epsilon$-incentive compatible ($\epsilon$-IC)} if, for every step $h \in \mathcal{H}$ and history $\tau \in \mathcal{T}_h$, the following condition holds for every contract-recommendation pair $(p,a) \in \mathcal{X}$ such that $\pi(p,a | \tau)>0$:
	%
	%
	\begin{equation*}
		\qagent_h(\tau,p,a) \ge  \qagentdev_h(\tau,p,a) - \epsilon.
	\end{equation*}
	%
	Moreover, we say that $\pi$ is \emph{incentive compatible (IC)} if the condition holds for $\epsilon=0$.
\end{definition}
%
Intuitively, a policy is $\epsilon$-IC if the agent always loses at most $\epsilon$ by playing the action recommended by the principal rather than one maximizing their cumulative expected utility.
In this paper, we denote by $\Pi$ the set of all the IC policies.\footnote{In the rest of this paper, we assume w.l.o.g.~that, under IC policies, the agent always sticks to the principal's recommendations $a_h$ when there are multiple actions $\widehat{a}_h \in \A$ maximizing their cumulative expected utility.}
Moreover, let us also notice that, if a policy $\pi: \mathcal{T} \to \Delta(\mathcal{X})$ is IC, then it is the case that $\vagent_h(\tau) = \vagentdev_h(\tau)$ for every step $h \in \mathcal{H}$ and history $\tau \in \mathcal{T}_h$.
%


\begin{definition}[Direct policies]
	A policy $\pi : \mathcal{T} \to \Delta(\mathcal{X})$ is said to be \emph{direct} if, for every history $\tau \in \mathcal{T}$ and action $a \in \A$, there exists at most one contract $p \in \mathcal{P}$ such that $\pi(p,a | \tau) > 0$.
\end{definition}

Thus, the principal's goal can be formulated as the problem of finding an \emph{optimal} IC policy $\pi^\star \in \Pi$, which is one that achieves value $\text{OPT}$.
This is formally defined as $\text{OPT} \coloneqq \max_{\pi \in \Pi} \vprincipal$.

\section{Why do we need history-dependent policies?}\label{sec:history_necessary}

As a preliminary step, we discuss why history-dependent policies are necessary.

First, we show that history-dependent policies allow the principal to achieve a cumulative expected utility strictly larger than what is attainable by means of (non-stationary) Markovian policies.
%
%
Intuitively, whenever there are multiple trajectories reaching a particular state, a history-dependent policy can induce the agent to follow one providing high utility to the principal.
This is possible by ``threatening'' to do \emph{not} pay the agent in the event that they deviate from the desired trajectory.
In contrast, a Markovian policy cannot implement such a ``threat'', since it is \emph{not} able to distinguish how a state has been reached.
In Appendix~\ref{sec:app_markov}, we provide an example in which the principal can benefit from implementing the ``threats''  described above, thus showing that the following proposition holds.
\begin{restatable}{proposition}{historyNecessary}\label{thm:history_necessary}
	There exists a problem instance in which a history-dependent policy provides a principal's cumulative expected utility larger than that of any Markovian policy.
\end{restatable}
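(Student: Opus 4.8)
The plan is to exhibit a single small instance and to compare the optimal value over all (history-dependent) IC policies against the optimal value over Markovian IC policies. I would build a ``diamond'' instance with horizon $H=3$ in which two distinct first-step actions eventually funnel the agent into a common state, so that a Markovian policy is forced to treat both incoming trajectories identically, whereas a history-dependent policy can tell them apart.

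Concretely, at step $1$ from the unique initial state the agent chooses between a costly action $a^\star$ (cost $c>0$), leading deterministically to a state $s_1$ that yields the principal a reward, and a free action $a'$ leading deterministically to a different state $s_2$ that yields no reward. Both $s_1$ and $s_2$ admit only a zero-cost action transitioning deterministically to a common state $m$ at step $3$. At $m$ I would place a genuine moral-hazard subgame with stochastic outcomes: the principal-favored action is costly and raises the probability of the high-reward terminal state, while a free action keeps that probability strictly positive. Because of limited liability, incentivizing the favored action at $m$ leaves the agent a strictly positive information rent $\rho>0$, and I would tune the parameters so that $\rho\ge c$ while the principal still strictly profits from running this subgame.

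First I would analyze a history-dependent policy that offers the empty contract ($p_1\equiv 0$) at step $1$, recommends $a^\star$, and then, at $m$, plays the rent-paying subgame-optimal contract on the trajectory passing through $s_1$ but a punishing zero contract on the trajectory passing through $s_2$. Since reaching $s_2$ is impossible under $a^\star$, this off-path history reveals a deviation; comparing $\qagent$ and $\qagentdev$ at step $1$ shows that following $a^\star$ secures the continuation rent $\rho\ge c$ whereas deviating triggers the zero continuation, so the agent (weakly, and strictly if $\rho>c$) prefers $a^\star$ and the policy is IC. Its principal value is the step-$1$ reward plus the subgame value, with no step-$1$ payment; I would also verify IC at every remaining (on- and off-path) history.

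The crux is the matching upper bound on every Markovian IC policy, and this is the step I expect to be the main obstacle. The key observation is that a Markovian contract at $m$ induces a continuation utility $V_m\ge 0$ for the agent that is identical on both incoming trajectories. Writing the step-$1$ incentive constraint for $a^\star$, the common term $V_m$ cancels, so the agent can be made to prefer $a^\star$ only through nonnegative payments along the two branches that differ by at least $c$; by limited liability this costs the principal at least $c$ in expectation. Hence any Markovian policy incentivizing $a^\star$ loses at least $c$ relative to the history-dependent value, while any Markovian policy that instead lets the agent play $a'$ forfeits the step-$1$ reward. The maximum of these two options still falls short of the history-dependent value by a strictly positive margin, proving the claim. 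The delicate point throughout is to argue that the information rent at $m$ is genuinely unavoidable under Markovian policies—so that it cannot be reused as a threat—which is precisely what limited liability and the cancellation of $V_m$ establish.
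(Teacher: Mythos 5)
Your proposal is correct and follows essentially the same route as the paper's proof: the paper's instance is exactly such a ``diamond'' with $H=3$, a step-one action of cost $c=\nicefrac{1}{4}$, and a downstream moral-hazard subgame at the common state whose limited-liability rent ($\nicefrac{1}{4}$) is withheld off-path by the history-dependent policy and cancels out of the step-one IC constraint for any Markovian policy, forcing an explicit payment of $c$. The only cosmetic difference is that the paper fixes concrete numbers (Markovian optimum $2$ versus history-dependent value $\nicefrac{9}{4}$) rather than keeping $c$ and $\rho$ symbolic.
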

Furthermore, we also show that computing an approximately-optimal \emph{non-stationary Markovian} policy is $\mathsf{NP}$-hard, while, as we show in the rest of this paper, it is possible to implement an approximately-optimal history-dependent policy by means of a polynomial-time algorithm.
%
%
\begin{restatable}{theorem}{Hard}
	There exists a constant $\delta>0$ such that computing a non-stationary Markovian policy with value at least $(1-\delta)\textnormal{OPT}_\textnormal{Mkv}$ is $\mathsf{NP}$-hard, where $\textnormal{OPT}_\textnormal{Mkv}$ is the principal's cumulative expected utility of an optimal non-stationary Markovian policy.
\end{restatable}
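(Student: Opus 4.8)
The plan is to prove hardness of approximation by a gap-preserving reduction from a problem that is $\mathsf{NP}$-hard to approximate within a fixed constant — concretely, the gap version of \textsc{Max-3Sat} furnished by the PCP theorem, where one must distinguish satisfiable instances from those in which no assignment satisfies more than a $(1-\gamma)$ fraction of the clauses, for an absolute constant $\gamma>0$. The subtlety I would keep front of mind is that a plain expected-reward objective over a layered MDP is solvable in polynomial time by backward induction even when many trajectories share a state, so the hardness \emph{cannot} come from the reward/transition structure alone; it must be injected through the incentive-compatibility constraints, which are exactly what separates this problem from ordinary Markovian planning. The mechanism I would exploit is that, under a Markovian policy, the agent's continuation value $\vagent_{h+1}(s')$ at each state is a \emph{single} shared scalar seen by every predecessor, and the contracts of one layer both set these scalars and pay them out; requiring one layer of contracts to simultaneously satisfy the family of IC inequalities attached to all predecessor states, while keeping the principal's payments small, is where a combinatorial constraint-satisfaction problem can be encoded.

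Concretely, for a formula with variables $x_1,\dots,x_n$ and clauses $C_1,\dots,C_m$, I would build a shallow (constant-horizon) instance in which each variable is represented by a gadget whose Markovian commitment encodes a truth value: two zero-cost actions are available, and the recommendation (together with directness and the tie-breaking convention for IC policies) selects one of them at no payment, so the assignment is chosen freely and the principal's value is driven purely by the reward terms. Clauses are encoded by states whose incentive constraints can be satisfied ``for free'' precisely when at least one literal is set to satisfy them, and cannot otherwise be satisfied without the principal conceding payment; the initial distribution $\mu$ spreads uniformly across the $m$ clause-gadgets so that $\vprincipal$ becomes an affine, increasing function of the number of clauses the induced assignment satisfies. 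In the completeness case a satisfying assignment yields $\textnormal{OPT}_\textnormal{Mkv}=V_{\max}$, whereas in the soundness case every assignment — and, after dealing with randomization, every Markovian policy — satisfies at most a $(1-\gamma)$ fraction of clauses, capping the value at $(1-\gamma')V_{\max}$ for a constant $\gamma'$ determined by $\gamma$ and the reward scaling; picking any $\delta\in(0,\gamma')$ then makes a $(1-\delta)$-approximate Markovian policy decide gap-\textsc{Max-3Sat}.

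The two steps I expect to be genuinely delicate — and which I would treat as the heart of the proof — are ruling out beneficial \emph{randomization} and turning the additive CSP gap into a \emph{multiplicative} one. For the former, a Markovian policy may randomize its contract-recommendation at a variable gadget, encoding a fractional assignment and fractional satisfaction of the IC constraints; I would argue that, because transitions and the principal's reward are linear in the policy's mixing weights while the IC inequalities are linear in the induced continuation values, the principal's value is affine in each gadget's mixing weight separately (i.e.\ multilinear), so that an extreme — hence integral — optimal policy exists and the relaxation collapses to the Boolean optimum. This is the main obstacle, since the argument must be carried out \emph{jointly} with the IC constraints, whose dependence on the shared continuation values $\vagent_{h+1}(\cdot)$ is what makes the soundness direction non-obvious (one must ensure the IC polytope does not create profitable fractional vertices). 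For the multiplicative gap, I would normalize so that the per-clause reward and every minimally-incentivizing payment are $\Theta(1)$ and $V_{\max}=\Theta(1)$, using the model's guaranteed zero-cost action to keep incentive payments controlled; a routine accounting then shows that the additive $\gamma$-gap in satisfied clauses becomes a constant multiplicative gap in $\vprincipal$, yielding the claimed universal constant $\delta>0$.
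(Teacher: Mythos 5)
Your proposal correctly diagnoses \emph{where} the hardness must live (the IC constraints coupling predecessor states through shared continuation values, not the reward/transition structure) and the overall shape of the argument (a gap-preserving reduction from an APX-hard problem, followed by an additive-to-multiplicative conversion) matches the paper's, which reduces from the gap version of vertex cover on cubic graphs due to Alimonti and Kann rather than from gap-\textsc{Max-3Sat}. However, there is a genuine gap: the entire content of the proof is the gadget, and you assert rather than construct it. The claim that one can build clause states ``whose incentive constraints can be satisfied for free precisely when at least one literal is set to satisfy them'' is exactly the step that needs an explicit design. In the paper's reduction this role is played by a concrete mechanism: at each vertex state the principal chooses between an expensive contract (payment $3/4$ on the good terminal, incentivizing the costly action) and a cheap one, so that ``being in the vertex cover'' is encoded by a contract choice with a quantifiable local loss to the principal, while at each edge state the agent will only pay the routing cost toward an endpoint whose continuation value is high, i.e., an endpoint carrying the expensive contract. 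Your choice of \textsc{Max-3Sat} also introduces a difficulty the paper deliberately avoids: a variable and its negation must be forced to take opposite values, whereas vertex cover is a monotone selection problem, so no consistency gadget is needed and the soundness analysis reduces to lower-bounding the size of an induced vertex cover.

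A second concrete problem is your derandomization step. The claim that the principal's value is multilinear in the per-gadget mixing weights, so that an integral extreme point exists, does not go through as stated: the IC inequalities at a state involve the agent's continuation values $\vagentmark_{h+1}(\cdot)$, which themselves depend on the mixing at successor states, so the feasible region is not a product over gadgets and the objective is not multilinear over a box. (This coupling is precisely why backward induction fails and the problem is hard in the first place.) The paper sidesteps this by restricting to deterministic Markovian policies and then upper-bounding the value of \emph{every} IC deterministic policy via a case analysis on which vertices carry the expensive contract; any derandomization claim you make needs its own argument rather than an appeal to multilinearity.
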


\section{Promise-form policies}\label{sec:promise_form}
As observed in Section~\ref{sec:history_necessary}, Markovian policies are both suboptimal and computationally intractable.
Therefore, in this work, we focus on the problem of computing a history-dependent policy.
However, history-dependent policies pose a computational challenge, as representing them requires an exponential number of bits with respect to the time horizon $H$.
To address this issue, we employ a class of policies—called \emph{promise-form} policies—that can be efficiently represented and are guaranteed to perform as well as history-dependent ones.
In this section, we first formally define a \emph{promise-form} policy, and we show that there always exists a \emph{promise-form} policy that is both IC and provides the principal's with a cumulative expected utility equal to $\textnormal{OPT}$.\footnote{A similar class of policies has been previously adopted in Bayesian persuasion settings~\citep{bernasconi2024persuading}, providing effective results when the agent is farsighted.}
\subsection{Promise-form policies}
A promise-form policy is defined as a tuple $\sigma \coloneqq \{(I_h,J_h,\varphi_h,g_h)\}_{h \in \mathcal{H}}$, where:
\begin{itemize}
	\item $I_h : \mathcal{S} \rightarrow 2^{[0,H]}$ is a function specifying, for every state $s \in \mathcal{S}$, a finite set $I(s) \subseteq [0,H]$ of promises for step $h \in \mathcal{H}$.
	We assume that $|I_1(s)|=1$ for every $s \in \mathcal{S}$ and we denote with $i(s)$ the single element belonging to $|I_1(s)|$.
	For ease of notation, we set $I_{H+1}(s) \coloneqq \{0\}$ for every $s \in \mathcal{S}$, and we let $\mathcal{I} \coloneqq \bigcup_{h \in \mathcal{H}} \bigcup_{s \in \mathcal{S}} I_h(s)$.
	%
	\item $J_h:\sset \times \mathcal{I} \times \mathcal{A} \rightarrow \mathcal{B}$ is a function defining, for every state $s \in \mathcal{S}$, promise $\iota \in I_h(s)$, and action $a \in \mathcal{A}$, a finite set $J_h(s,\iota,a) \subseteq \mathcal{P}$ of contracts that the principal can commit to when recommending action $a \in \mathcal{A} $ in state $s \in \mathcal{S}$ with promise $\iota \in \mathcal{I}$, at step $h  \in \mathcal{H}$.
	Furthermore, we let $\mathcal{J} \coloneqq \bigcup_{h \in \mathcal{H}} \bigcup_{s \in \mathcal{S}} \bigcup_{\iota \in I_h(s)} \bigcup_{a \in \mathcal{A}} J_h(s,\iota,a)$.
	%
	\item $\varphi_h : \mathcal{S} \times \mathcal{I} \rightarrow \Delta(\mathcal{J} \times \mathcal{A})$ 
	encodes the policy at step $h \in \mathcal{H}$, with $\varphi_h(p,a | s, \iota)$ denoting the probability with which the principal commits to contract $p \in J(s,\iota,a)$ and recommends action $a \in \mathcal{A}$, when in a state $s \in \mathcal{S}$ with promise $\iota \in \mathcal{I}_h(s)$.
	%
	\item $g_h : \mathcal{S} \times \mathcal{I} \times \mathcal{J} \times \mathcal{A} \times \mathcal{S} \rightarrow \mathcal{I}$ is a promise function for step $h \in \mathcal{H}$, with $g_h(s,\iota,p,a,s') \in \mathcal{I}_{h+1}(s')$ being the promise for step $h+1$ when, at step $h \in \mathcal{H}$: the state is $s \in \mathcal{S}$, the promise is $\iota \in \mathcal{I}_h(s)$, the principal recommends action $a \in \mathcal{A}$, they commit to contract $p \in \mathcal{J}_h(s,\iota,a)$, and the environment transitions to state $s' \in \mathcal{S}$.
\end{itemize}
In a \emph{promise-form} policy, when reaching a state $s \in \mathcal{S}$ at step $h \in \mathcal{H}$, the principal promises a value $\iota \in I_h(s)$ to the agent that represents how much the agent will gain by following the principal’s recommendations. 
Furthermore, at each time step $h \in \mathcal{H}$, we notice that the principal’s contract-recommendations depend only on the current promise and the current state through the function $\varphi_h$.
%
%
However, both the cardinality of the set of promises $\mathcal{I}$ and the cardinality of the set of recommended contracts $\mathcal{J}$ may, in principle, be particularly large, thus potentially requiring a large number of bits to represent a promise-form policy.
In the rest of the paper, we show how to sidestep this issues by working with ``sufficiently small'' sets of promises and recommended contracts.
%

%
%

\subsection{Implementing a promise-form policy}
We now discuss how the principal can implement a promise-form policy $\sigma$.
Specifically, we show how the principal, given a promise-form policy $\sigma$, can provide the agent with a contract-recommendation pair at each step $h \in \mathcal{H}$ based on the current history $\tau_h$.
To do so, we introduce Algorithm~\ref{alg:implement}, which takes as input the promise-form policy $\sigma$ that the principal has committed to, the time step $h \in \mathcal{H}$, and the history up to that time step $\tau_h$.
Then, in order to determine the current promise $\iota \in I_h(s_h)$, Algorithm~\ref{alg:implement} recursively composes the functions $g_h$ for the preceding steps $h'<h$ using the components of history $\tau_h$ it receives as input (See Line~\ref{line:implement_compute_next_iota}).
Once Algorithm~\ref{alg:implement} has computed the current promise $\iota \in I_h(s_h)$, it returns the contract-recommendation pair sampled from the function $\varphi_h$ evaluated in the current promise $\iota$ and in the current states $s_h$.
%
%
%
\begin{algorithm}[!htp]
	\caption{\texttt{Implement promise-form}}\label{alg:implement}
	\begin{algorithmic}[1]
		\Require $h \in \mathcal{H}$, \Statex $\tau = (s_1,p_1,a_1, \dots, s_{h-1},p_{h-1},a_{h-1},s_h)$, \Statex $\sigma = \{(I_h,J_h,\varphi_h,g_h)\}_{h \in \mathcal{H}}$.
		\State $\iota \gets i(s_1)$ \Comment{$I_1(s_1)$ contains only $i(s_1)$}
		\ForAll{$h'=1,\dots,h-1$}
		\State $\iota \gets g_{h'}(s_{h'},\iota,p_{h'},a_{h'},s_{h'+1})$ \label{line:implement_compute_next_iota}
		\EndFor
		\State $(p,a) \sim \varphi_h(s_h,\iota)$
		\State \textbf{Return} $(p,a)$
	\end{algorithmic}
\end{algorithm}

In the following, given a promise-form policy $\sigma$, we denote with $\pi^\sigma$ the policy induced by $\sigma$ by following the procedure described above.
Formally, for each step $h \in \mathcal{H}$ and history $\tau_h$, we let $\pi^\sigma(\cdot|\tau) \coloneqq \varphi(\cdot|s_h, \iota^\sigma_\tau )$, where $\iota^\sigma_\tau$ is the promise computed according to Algorithm~\ref{alg:implement}, given in input the history $\tau$ and the promise-form policy $\sigma$.

Furthermore, given a promise-form policy $\sigma$, the value and action-value functions of its corresponding history-dependent policy $\pi^\sigma$ can be computed from the components that define the promise-form policy itself.
Indeed, when the principal commits to a promise-form policy, their behavior does not depend on the full history, but rather on the current state and promise.
Thus, we can define some equivalent functions that take in input, instead of a history $\tau \in \mathcal{T}$, a state $s \in \sset$ and a promise $\iota \in I_h(s)$, while the probability of prescribing a pair $(p,a) \in \mathcal{X}$ at step $h \in \mathcal{H}$ is computed as $\varphi_h(p,a|s,\iota)$.
The formal definition of these promise-form value functions are provided in Appendix~\ref{appendix:promise_form}, while their equivalence to the history-dependent value functions is stated in the following lemma:
\begin{restatable}{lemma}{promiseFunctionsEq}
	\label{lem:promise_functions_eq}
	Given a promise-form policy $\sigma =\{(I_h,J_h,\varphi_h,g_h)\}_{h \in \mathcal{H}}$ and the corresponding history-dependent policy $\pi$, for any step $h \in \mathcal{H}$, history $\tau$ ending in state $s$, action $a \in \mathcal{A}$ and contract $p \in \mathcal{B}$, the following holds:
	\begin{align*}
		\vprincipalpr_h (s,\iota^\sigma_\tau) &= \vprincipal_h (\tau), \quad
		\qprincipalpr_h(s,\iota^\sigma_\tau,p,a) = \qprincipal_h(\tau,p,a), \\
		\vagentpr_h (s,\iota^\sigma_\tau) &= \vagent_h (\tau), \quad
		\qagentpr_h(s,\iota^\sigma_\tau,p,a) = \qagent_h(\tau,p,a), \\
		\vagentdevpr_h (s,\iota^\sigma_\tau) &= \vagentdev_h (\tau), \quad
		\qagentdevpr_h(s,\iota^\sigma_\tau,p,a) = \qagentdev_h(\tau,p,a),
	\end{align*}
	where $\iota^\sigma_\tau$ is computed according to Algorithm~\ref{alg:implement}.
\end{restatable}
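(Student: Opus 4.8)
The plan is to prove the six equalities simultaneously by backward induction on the step $h$, descending from $h = H+1$ down to $h = 1$. The key conceptual point to establish up front is that the promise $\iota^\sigma_\tau$ computed by Algorithm~\ref{alg:implement} is a sufficient statistic for the history $\tau$: two distinct histories $\tau, \tau'$ ending in the same state $s$ and yielding the same promise $\iota^\sigma_\tau = \iota^\sigma_{\tau'}$ induce the same future behavior under $\pi^\sigma$. This follows directly from the definition $\pi^\sigma(\cdot \mid \tau) \coloneqq \varphi_h(\cdot \mid s_h, \iota^\sigma_\tau)$ together with the recursive promise-update rule on Line~\ref{line:implement_compute_next_iota}, namely that appending a transition $(p,a,s')$ to $\tau$ produces the new promise $g_h(s_h, \iota^\sigma_\tau, p, a, s')$. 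I would record this compatibility as the crucial observation before running the induction, since it is what licenses writing the promise-form value functions as functions of $(s,\iota)$ rather than of $\tau$.

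For the base case at $h = H+1$, all six functions are identically zero (the horizon has ended and $I_{H+1}(s) = \{0\}$ by convention), so the equalities hold trivially. For the inductive step at a fixed $h \le H$, I would fix a history $\tau \in \mathcal{T}_h$ ending in state $s = s_h$ with induced promise $\iota = \iota^\sigma_\tau$, and verify each of the three $Q$-type identities first, then deduce the corresponding $V$-type identities. Concretely, for the principal's action-value I would expand
\begin{equation*}
	\qprincipalpr_h(s,\iota,p,a) = \sum_{s' \in \sset} P_h(s' \mid s,a)\left( r^\text{P}_h(s,p,s') + \vprincipalpr_{h+1}(s', g_h(s,\iota,p,a,s')) \right),
\end{equation*}
and then invoke the inductive hypothesis at step $h+1$ applied to the extended history $\tau \oplus (p,a,s')$. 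The essential checkpoint here is that the promise induced by that extended history, $\iota^\sigma_{\tau \oplus (p,a,s')}$, equals exactly $g_h(s,\iota,p,a,s')$, which is precisely the recursive update performed by Algorithm~\ref{alg:implement}; granting this, the summand matches term-by-term the defining recursion for $\qprincipal_h(\tau,p,a)$. The agent's standard action-value $\qagentpr_h$ is handled identically with $r^\text{A}$ in place of $r^\text{P}$. The $V$-identities then follow by averaging over $(p,a)$: since $\pi^\sigma(p,a \mid \tau) = \varphi_h(p,a \mid s,\iota)$ by definition of the induced policy, we get $\vprincipalpr_h(s,\iota) = \sum_{(p,a)} \varphi_h(p,a\mid s,\iota)\,\qprincipalpr_h(s,\iota,p,a) = \vprincipal_h(\tau)$, and likewise for the agent.

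The deviation functions $\qagentdev$ and $\vagentdev$ require the most care, and I expect this to be the main obstacle. The subtlety is that $\qagentdev_h$ contains an inner maximization over a deviating action $\widehat{a}$, where the transition uses $\widehat{a}$ but the continuation value is evaluated along the history $\tau \oplus (p,a,s')$ that records the \emph{recommended} action $a$, not $\widehat{a}$ — reflecting that the principal cannot observe the agent's true play. I would check that the promise-form definition $\qagentdevpr_h$ (given in Appendix~\ref{appendix:promise_form}) mirrors this asymmetry exactly: the promise update $g_h(s,\iota,p,a,s')$ feeds the recommended $a$ into the next promise while the sampling probability $P_h(s'\mid s,\widehat{a})$ uses the deviation. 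Provided the appendix definition is set up this way, the same term-by-term matching under the inductive hypothesis goes through, and the maximum over $\widehat{a}$ is preserved since it appears identically on both sides; the outer average against $\varphi_h$ then yields the $\vagentdev$ identity. I would close by noting that all three pairs share the identical inductive skeleton, so after writing out the principal's case in full the agent's and the deviation cases can be dispatched by pointing to the structurally identical argument, flagging only the deviation asymmetry as the one genuinely distinct ingredient.
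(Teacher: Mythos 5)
Your proposal is correct and follows essentially the same route as the paper: a backward induction over $h$ whose engine is the identity $\iota^\sigma_{\tau\oplus(p,a,s')} = g_h(s,\iota^\sigma_\tau,p,a,s')$ from Algorithm~\ref{alg:implement}, with the $Q$-identities proved term-by-term and the $V$-identities obtained by averaging against $\varphi_h(\cdot\mid s,\iota^\sigma_\tau)=\pi(\cdot\mid\tau)$, and with the deviation functions handled by observing that the appendix definition of $\qagentdevpr_h$ reproduces exactly the asymmetry (transition under $\widehat{a}$, promise update under the recommended $a$) present in $\qagentdev_h$. The only cosmetic difference is anchoring the base case at $h=H+1$ rather than $h=H$, which changes nothing of substance.
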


\subsection{Honesty}
We now introduce a subclass of promise-form policies in which the principal honestly (approximately) fulfils the promises made to the agent.
Formally, we have:
\begin{definition}[$\eta$-honesty]
	A promise-form policy $\sigma =\{(I_h,J_h,\varphi_h,g_h)\}_{h \in \mathcal{H}}$ is $\eta$-honest, with $\eta \ge 0$, if, for every step $h \in \mathcal{H}$, state $s \in \mathcal{S}$, and promise $\iota \in I_h(s)$, the following holds:
	
	\begin{equation*}
		\left| \sum_{(p,a) \in \mathcal{X}_{\sigma}} \varphi_h(p,a|s,\iota) \sum_{s' \in \sset} P_h(s'|s,a) \left(r^\textnormal{A}_h(s,p,a,s') +g_h(s,\iota,p,a,s') \right) -\iota \right| \le \eta.
	\end{equation*} 
\end{definition}
Intuitively, a $\eta$-honest policy ensures that at the last time step $H$, the difference between the promise and the actual agent's expected utility is at most $\eta>0$.
Thus, by employing a recursive argument, it is possible to show that if $\sigma$ is $\eta$-honest, then the difference between the actual agent's utility and the promised one increases by at most $\eta > 0$ at each previous time step, as stated in the following lemma.
\begin{restatable}{lemma}{honestyValue}\label{lem:honesty_value}
	Let $\sigma =\{(I_h,J_h,\varphi_h,g_h)\}_{h \in \mathcal{H}}$ be a $\eta$-honest promise-form policy.
	Then, for every step $h \in \mathcal{H}$ and state $s \in \mathcal{S}$ it holds that $\left|\vagentpr_h(s,\iota)-\iota \right| \le \eta(H-h+1)$ for every $\iota \in I_h(s)$.
\end{restatable}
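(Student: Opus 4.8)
The plan is to prove the bound $\left|\vagentpr_h(s,\iota)-\iota\right| \le \eta(H-h+1)$ by \emph{backward induction} on the step $h$, starting from the last step $H$ and working down to $h=1$. The key observation is that the $\eta$-honesty condition relates the promise $\iota$ to a one-step quantity: the immediate expected agent utility plus the \emph{next-step promise} $g_h(s,\iota,p,a,s')$. Meanwhile, the agent's promise-form value function $\vagentpr_h(s,\iota)$ satisfies a Bellman-type recursion in which the next-step promise is replaced by the actual continuation value $\vagentpr_{h+1}(s',g_h(s,\iota,p,a,s'))$. The inductive hypothesis will control exactly this discrepancy between the promised continuation value and the realized one.

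\textbf{The base case.}
First I would handle $h=H$. Here the continuation is trivial: recalling that $I_{H+1}(s')=\{0\}$ for all $s'$, the continuation promise $g_H(s,\iota,p,a,s')$ equals $0$, and likewise $\vagentpr_{H+1}(s',0)=0$. Therefore the agent's value function at step $H$ reduces to
\begin{equation*}
\vagentpr_H(s,\iota) = \sum_{(p,a)\in\mathcal{X}_\sigma}\varphi_H(p,a|s,\iota)\sum_{s'\in\sset}P_H(s'|s,a)\, r^\textnormal{A}_H(s,p,a,s'),
\end{equation*}
which is precisely the quantity appearing inside the honesty condition (with the $g_H$ term vanishing). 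Thus $\eta$-honesty directly gives $\left|\vagentpr_H(s,\iota)-\iota\right|\le\eta=\eta(H-H+1)$, establishing the base case.

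\textbf{The inductive step.}
Assume the bound holds at step $h+1$: $\left|\vagentpr_{h+1}(s',\iota')-\iota'\right|\le\eta(H-h)$ for every $s'$ and $\iota'\in I_{h+1}(s')$. At step $h$ I would write $\vagentpr_h(s,\iota)-\iota$ and add and subtract the honesty quantity. Concretely, decompose the difference as
\begin{equation*}
\vagentpr_h(s,\iota)-\iota = \underbrace{\Big(\vagentpr_h(s,\iota)-\Lambda\Big)}_{\text{continuation gap}} + \underbrace{\big(\Lambda-\iota\big)}_{\text{honesty gap}},
\end{equation*}
where $\Lambda$ denotes the honesty expression $\sum_{(p,a)}\varphi_h(p,a|s,\iota)\sum_{s'}P_h(s'|s,a)\left(r^\textnormal{A}_h(s,p,a,s')+g_h(s,\iota,p,a,s')\right)$. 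The second term is bounded by $\eta$ via the honesty definition. For the first term, both $\vagentpr_h(s,\iota)$ and $\Lambda$ are averages (over the same distributions $\varphi_h$ and $P_h$) of terms that differ only in the continuation: the true value uses $\vagentpr_{h+1}(s',g_h(s,\iota,p,a,s'))$ while $\Lambda$ uses the bare promise $g_h(s,\iota,p,a,s')$. Applying the inductive hypothesis pointwise to each $s'$ (with $\iota'=g_h(s,\iota,p,a,s')\in I_{h+1}(s')$) bounds each such difference by $\eta(H-h)$, and since $\varphi_h$ and $P_h$ are probability distributions the averaged gap is also at most $\eta(H-h)$. By the triangle inequality the total is at most $\eta(H-h)+\eta=\eta(H-h+1)$, closing the induction.

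\textbf{Main obstacle.}
The only delicate point is ensuring the recursion for $\vagentpr_h$ is aligned correctly with the honesty expression so that the ``continuation gap'' isolates exactly the inductively-controlled quantity; this relies on the formal definition of $\vagentpr_h$ in Appendix~\ref{appendix:promise_form} having the expected Bellman form $\vagentpr_h(s,\iota)=\sum_{(p,a)}\varphi_h(p,a|s,\iota)\sum_{s'}P_h(s'|s,a)\big(r^\textnormal{A}_h(s,p,a,s')+\vagentpr_{h+1}(s',g_h(s,\iota,p,a,s'))\big)$. Given that form, the argument is a clean telescoping induction and no difficult estimates arise.
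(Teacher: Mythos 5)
Your proof is correct and follows essentially the same route as the paper's: backward induction from $h=H$, with the base case resting on $I_{H+1}(s')=\{0\}$ and the inductive step splitting the error into the honesty gap (at most $\eta$) plus the continuation gap controlled by the inductive hypothesis (at most $\eta(H-h)$). The only cosmetic difference is that you bound the absolute value in one pass via the triangle inequality, whereas the paper derives the lower and upper bounds separately through the action-value function $\qagentpr_h$.
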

%
%
We also introduce a crucial lemma showing that if an $\eta$-honest promise-form policy satisfies a suitable a set of ``local''constraints, then such a policy is also approximately IC.
%
\begin{restatable}{lemma}{localICConstr}
	\label{lem:local_ic_constr}
	Let $\sigma =\{(I_h,J_h,\varphi_h,g_h)\}_{h \in \mathcal{H}}$ be a $\eta$-honest promise-form policy such that for every step $h \in \mathcal{H}$, state $s \in \mathcal{S}$, promise $\iota \in I_h(s)$, actions $a,\widehat{a} \in \mathcal{A}$ and contract $p \in \mathcal{B}$ such that $\varphi_h(p,a|s,\iota)>0$, the following constraint holds:
	\begin{equation}
		\label{eq:local_ic_constr}
		\begin{split} 
			\sum_{s' \in \sset} &P_h(s'|s,a) \left(r^\textnormal{A}_h(s,p,a,s') +g_h(s,\iota,p,a,s') \right) \ge \\
			&\sum_{s' \in \sset} P_h(s'|s,\widehat{a}) \left(r^\textnormal{A}_h(s,p,\widehat{a},s') +g_h(s,\iota,p,a,s') \right).
		\end{split}	
	\end{equation}
	Then, $\sigma$ is $2\eta H^2$-IC. 
\end{restatable}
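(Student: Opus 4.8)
The plan is to first apply Lemma~\ref{lem:promise_functions_eq} to restate the desired conclusion entirely in promise-form language: it suffices to prove that for every step $h \in \mathcal{H}$, state $s \in \mathcal{S}$, promise $\iota \in I_h(s)$, and pair $(p,a)$ with $\varphi_h(p,a \mid s,\iota)>0$, one has $\qagentdevpr_h(s,\iota,p,a) - \qagentpr_h(s,\iota,p,a) \le 2\eta H^2$. I will establish this by a backward induction on $h$ that tracks the worst-case deviation gain
\[
	D_h \coloneqq \max_{s \in \mathcal{S},\, \iota \in I_h(s)} \left( \vagentdevpr_h(s,\iota) - \vagentpr_h(s,\iota) \right),
\]
whose base case is $D_{H+1}=0$, since $I_{H+1}(s)=\{0\}$ and the associated values vanish.

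For the inductive step I fix $(s,\iota,p,a)$ in the support of $\varphi_h$ and expand $\qagentdevpr_h(s,\iota,p,a)$ at a maximizing deviation $\widehat{a}^\star$. The crucial structural point is that the continuation entering this expansion is $\vagentdevpr_{h+1}(s', g_h(s,\iota,p,a,s'))$, i.e.\ the next promise $g_h(s,\iota,p,a,s')$ is formed from the \emph{recommended} action $a$, not from $\widehat{a}^\star$, because the principal never observes the deviation. This is exactly what makes the local constraint \eqref{eq:local_ic_constr}, in which the continuation promise is pinned to $g_h(s,\iota,p,a,s')$ on both sides, applicable. I then chain four estimates: (i) $\vagentdevpr_{h+1} \le \vagentpr_{h+1} + D_{h+1}$ by definition of $D_{h+1}$; (ii) replace $\vagentpr_{h+1}(s',\cdot)$ by the scalar promise $g_h(s,\iota,p,a,s')$, losing at most $\eta(H-h)$ by Lemma~\ref{lem:honesty_value} applied at step $h+1$; (iii) invoke \eqref{eq:local_ic_constr} to pass from $\widehat{a}^\star$ to $a$; and (iv) a second use of Lemma~\ref{lem:honesty_value} to turn $\sum_{s'} P_h(s'\mid s,a)\big(r^\textnormal{A}_h(s,p,a,s')+g_h(s,\iota,p,a,s')\big)$ back into $\qagentpr_h(s,\iota,p,a)$, again at a cost of $\eta(H-h)$. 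The result is the per-pair bound
\[
	\qagentdevpr_h(s,\iota,p,a) - \qagentpr_h(s,\iota,p,a) \le 2\eta(H-h) + D_{h+1}.
\]
Averaging this inequality against the distribution $\varphi_h(\cdot \mid s,\iota)$ (so the mean is dominated by the maximum over the support) yields $D_h \le 2\eta(H-h) + D_{h+1}$.

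Unrolling the recursion from $D_{H+1}=0$ gives $D_h \le 2\eta\sum_{h'=h}^{H}(H-h') = \eta(H-h)(H-h+1) \le \eta H^2$. Since the per-pair gap at step $h$ obeys the same upper bound $2\eta(H-h)+D_{h+1}$, it too is at most $\eta H^2 \le 2\eta H^2$, which is precisely the claimed $2\eta H^2$-IC guarantee. I expect the main difficulty to be the careful bookkeeping around the promise $g_h(s,\iota,p,a,s')$, which plays two roles at once — it is the scalar target controlled by the honesty bound of Lemma~\ref{lem:honesty_value}, and it is the argument passed into the next-step value functions — while ensuring that the deviating agent's continuation stays anchored to the recommended action $a$ so that \eqref{eq:local_ic_constr} applies verbatim. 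The quadratic factor $H^2$ then emerges transparently: an $O(\eta H)$ honesty slack is incurred at each of the $O(H)$ levels of the backward recursion.
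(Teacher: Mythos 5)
Your proof is correct and follows essentially the same route as the paper's: a backward induction over $h$ that chains the honesty bound of Lemma~\ref{lem:honesty_value} (used twice, each costing $\eta(H-h)$), the local constraint \eqref{eq:local_ic_constr} applied at the maximizing deviation, and the inductive bound on the continuation deviation gain. The only differences are cosmetic — you work in promise-form coordinates throughout and track the value-level gap $D_h$ via the recursion $D_h \le 2\eta(H-h)+D_{h+1}$, yielding the slightly tighter closed form $\eta(H-h)(H-h+1)$ in place of the paper's $2\eta(H-h)^2$, both of which sit within the claimed $2\eta H^2$.
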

%
Thanks to Equation~\ref{eq:local_ic_constr}, we can require a policy to be approximately IC by only imposing a local constraint.
Indeed, for every state $s \in \sset$ and time step $h \in \mathcal{H}$, Equation~\ref{eq:local_ic_constr} depends only on the current promise $\iota$ and the components that define the promise-form policy $\sigma$.
%
Finally, given the previous results, we can prove that promise-form policies represent a subclass of history-dependent policies that is powerful enough to find the optimum.
\begin{restatable}{theorem}{promisingOptimal}
	\label{th:promising_optimal}
	There always exists a honest IC promising-form policy that achieves a principal's expected cumulative reward of $\textnormal{OPT}$.
\end{restatable}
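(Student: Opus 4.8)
The plan is to start from an optimal IC history-dependent policy $\pi^\star \in \Pi$ attaining $\vprincipal = \textnormal{OPT}$ and to fold it into a promise-form policy $\sigma$ by using the agent's value as the promise. Concretely, I would set the promise of a history $\tau$ ending in state $s$ at step $h$ to $\iota_\tau \coloneqq \vagent_h(\tau)$, and define $I_h(s)$ to be the set of such values over all histories reaching $s$ at step $h$; this set is finite because $\pi^\star$ is finitely supported and $H < \infty$. The one delicate point in the definition is that many histories may collapse onto the same pair $(s,\iota)$ while $\pi^\star$ behaves differently on them. To resolve this I would pick, for each $(h,s,\iota)$, a representative history $\tau^\star_{h,s,\iota}$ maximizing $\vprincipal_h(\cdot)$ among all histories mapping to that triple, and then set $\varphi_h(\cdot\mid s,\iota)\coloneqq\pi^\star(\cdot\mid\tau^\star_{h,s,\iota})$, let $J_h(s,\iota,a)$ be the corresponding contracts in the support, and define $g_h(s,\iota,p,a,s')\coloneqq \vagent_{h+1}(\tau^\star_{h,s,\iota}\oplus(p,a,s'))$, which lands in $I_{h+1}(s')$ by construction.

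With this construction, honesty holds with $\eta=0$: substituting the definitions into the honesty condition, the left-hand side is exactly the recursive expansion of $\vagent_h(\tau^\star_{h,s,\iota})$, since $g_h$ returns precisely the continuation agent-values, and this equals $\iota$ by the choice of representative. Incentive compatibility I would then obtain from Lemma~\ref{lem:local_ic_constr}: since $\sigma$ is $0$-honest, it suffices to verify the local constraint~\eqref{eq:local_ic_constr}. For a recommended $(p,a)$, its left-hand side equals $\qagent_h(\tau^\star,p,a)$, while its right-hand side maximized over $\widehat a$ equals $\qagentdev_h(\tau^\star,p,a)$; here I crucially use that $\pi^\star$ is IC, so that $\vagentdev_{h+1}=\vagent_{h+1}$ and the deviation-continuation value coincides with $g_h$. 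The IC-ness of $\pi^\star$ gives $\qagent_h(\tau^\star,p,a)\ge\qagentdev_h(\tau^\star,p,a)$, which is exactly~\eqref{eq:local_ic_constr}, and Lemma~\ref{lem:local_ic_constr} upgrades this to exact IC.

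The heart of the argument, and the step I expect to be the main obstacle, is showing $\vprincipalpr\ge\textnormal{OPT}$, i.e.\ that collapsing histories onto representatives does not lose principal value. I would prove by backward induction on $h$ the claim that $\vprincipalpr_h(s,\iota)\ge\vprincipal_h(\tau)$ for every history $\tau$ mapping to $(h,s,\iota)$. The base case $h=H+1$ is trivial. For the inductive step I expand $\vprincipalpr_h(s,\iota)$ using $\varphi_h(\cdot\mid s,\iota)=\pi^\star(\cdot\mid\tau^\star)$, apply the inductive hypothesis to each continuation $\tau^\star\oplus(p,a,s')$, whose promise is exactly $g_h(s,\iota,p,a,s')$, and recover $\vprincipal_h(\tau^\star)$; since the representative maximizes principal value, the inequality follows for every $\tau$ with that $(s,\iota)$. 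Applying the claim at $h=1$, where each $I_1(s)$ is a singleton and the history is just $(s)$, and averaging over $\mu$ yields $\vprincipalpr\ge\sum_{s\in\sset}\mu(s)\vprincipal_1(s)=\textnormal{OPT}$. The reverse inequality is immediate because $\sigma$ is IC and $\textnormal{OPT}$ is the maximum over IC policies, invoking Lemma~\ref{lem:promise_functions_eq} to identify $\vprincipalpr$ with the genuine value of the induced policy $\pi^\sigma$. Hence $\vprincipalpr=\textnormal{OPT}$, and $\sigma$ is the desired honest IC promise-form policy.
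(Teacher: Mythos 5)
Your construction coincides with the paper's own proof: the same promise sets $I_h(s)=\{\vagent_h(\tau)\}$ over reachable histories, the same representative $\tau_{h,s,\iota}$ chosen to maximize $\vprincipal_h$, the same $g_h$ defined via continuation agent-values, honesty with $\eta=0$ by direct expansion, IC via Lemma~\ref{lem:local_ic_constr} using $\vagentdev_{h+1}=\vagent_{h+1}$ for the IC policy $\pi^\star$, and the same backward induction showing the collapse onto representatives does not decrease principal value. The argument is correct and essentially identical to the paper's.
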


\section{From $\epsilon$-IC policies to an approximate optimum}
\label{sec:epsilon_ic_to_ic}
Our goal is to compute an IC policy that achieves at least $\text{OPT}-\gamma$ for some given $\gamma$.
In order to do this, we show that an $\epsilon$-IC $\pi$ can be converted into an IC policy $\pi'$ with value $V^{\text{P},\pi'} \ge \vprincipal - (H+1)\sqrt{\epsilon}$.
This way, we can relax our problem, computing an $\epsilon$-IC policy with value at least OPT, and then convert it into an approximate optimum $\pi'$ with value $\text{OPT} - (H+1)\sqrt{\epsilon}$.

To show how to perform this conversion, we first need to introduce some additional notation.
As a first step, we introduce the set of histories $\mathcal{T}' \coloneqq \mathcal{T}_2 \cup \mathcal{T}_3 \cup \dots \cup \mathcal{T}_{H+1}$.
We recall that a history $\tau = (s_1,p_1,a_1,\dots,s_{h},p_{h},a_{h},s_{h+1}) \in \mathcal{T}_{h+1}$ contains $h$ transitions. 
Intuitively, the set $\mathcal{T}'$ include all the histories with at least a transition in them, including the histories in $\mathcal{T}_{H+1}$ containing the last transition of the MDP.

Given this set, we introduce the notion of \emph{agent's function}. 
\begin{definition}[Agent's function]
	An agent's function $\alpha : \mathcal{T}' \rightarrow \bigcup_{h \in \mathcal{H}} \mathcal{A}^{h}$ is a function mapping each history $\tau_h \in \mathcal{T}_h \subseteq \mathcal{T}'$ to a sequence of $h-1$ actions satisfying the following conditions:
	\begin{enumerate}
		\item For every pair of histories $\tau,\tau' \in \mathcal{T}'$ such that $\tau \oplus \tau' \in \mathcal{T}'$, it holds that $\alpha(\tau) \oplus \alpha(\tau') = \alpha(\tau \oplus \tau')$.
		\item Given any history $\tau = (s_1,p_1,a_1,\dots,s_{h},p_h,a_{h},s_{h+1}) \in \mathcal{T}'$, it holds that 
		\[\alpha(s_1,p_1,a_1,\dots,s_h,p_h,a_h,s_{h+1}) = \alpha(s_1,p_1,a_1,\dots,s_h,p_h,a_h,s)\] for every state $s \in \mathcal{S}$.
	\end{enumerate}
\end{definition}
Intuitively, an agent's function provides, given any history $\tau = (s_1,p_1,a_1,\dots,s_h,p_h,a_h,s_{h+1}) \in \mathcal{T}'$, the sequence of actions $\alpha(\tau_h) = (\widehat{a}_1,\widehat{a}_2,\dots,\widehat{a}_h)$ effectively played by a (possibly irrational) agent.
With an abuse of notation, we denote with $\alpha_k(\tau)$ the $k$-th element of the sequence.

We will employ different agent's functions to analyze how the cumulative rewards change depending on both the principal's policy and the strategy used by the agent to respond.
In particular, we are interest in cumulative rewards when the agent follows the recommendations.
The \emph{recommended agent's function} $\alpha$ is formally defined as:
\begin{equation*}
	\alpha(\tau) \coloneqq (a_1,a_2,\dots,a_h) \quad \forall \tau = (s_1,p_1,a_1,\dots,s_h,p_h,a_h,s_{h+1}) \in \mathcal{T}'.
\end{equation*}

Furthermore, given a policy $\pi$, we consider the \emph{incentivized agent's functions}, which correspond to an agent that always plays rationally.
Formally, an agent's function $\alpha$ is incentivized by $\pi$ if:\footnote{Given a history $\tau = (s_1,a_1,\dots,s_h,a_h,s_{h+1}) \in \mathcal{T}'$, for any $h' \in \{1,2,\dots,h+1\}$ we denote with $\tau(s_{h'})$ the history $\tau(s_{h'})=(s_1,a_1,\dots,s_{h'-1},a_{h'-1},s_{h'})$, that is the portion of $\tau$ up to state $s_{h'}$.}
\begin{equation*}
	\alpha_{h'}(\tau) \in \argmax_{\widehat{a} \in \mathcal{A}} \sum_{s' \in \sset}P_h(s'|s_{h'},\widehat{a}) \left( p_{h'}(s') - c_{h'}(s_{h'},\widehat{a}) +\widehat{V}^{\text{A},\pi}_{h+1}(\tau(s_{h'}) \oplus (p_{h'},a_{h'},s')) \right).
\end{equation*}
We observe that there may be different incentivized policies, depending on how ties are broken.

Finally, we define the value of a policy $\pi$ and agent's function $\alpha$ as:
\begin{equation*}
	V^{\text{P},\pi,\alpha} \coloneqq \hspace{-1.5cm} \sum_{\substack{\tau \in \mathcal{T}'_{\pi,\alpha}, \\ \tau \coloneqq (s_1,p_1,a_1,\dots,s_h,p_h,a_h,s_{h+1})}} \hspace{-1.5cm} \mathbb{P}(\tau|\pi,\alpha) \left(r_h(s_{h},s_{h+1}) - p_h(s_{h+1})\right),
\end{equation*}
where the probability of history $\tau = (s_1,p_1,a_1,\dots,s_h,p_h,a_h,s_{h+1}) \in \mathcal{T}'$ is:
\begin{equation*}
	\mathbb{P}(\tau|\pi,\alpha) = \mu(s_1)\prod_{h'=1}^{h}  \pi(p_{h'},a_{h'}|\tau(s_{h'})) P_{h'}(s_{h'+1}|s_{h'},\alpha_{h'}(\tau)),
\end{equation*}
while $\mathcal{T}'_{\pi,\alpha}$ is the set of all histories $\tau$ in $\mathcal{T}'$ such that $\mathbb{P}(\tau|\pi,\alpha)>0$.
Similarly, we define the value of the policy \emph{for the agent} as:
\begin{equation*}
	V^{\text{A},\pi,\alpha} \coloneqq \hspace{-1.5cm} \sum_{\substack{\tau \in \mathcal{T}'_{\pi,\alpha}, \\ \tau \coloneqq (s_1,p_1,a_1,\dots,s_h,p_h,a_h,s_{h+1})}} \hspace{-1.5cm} \mathbb{P}(\tau|\pi,\alpha) \left(p_h(s_{h+1}) - c_h(s_{h},\alpha_h(\tau)) \right).
\end{equation*}

To build an IC policy $\pi'$ given an $\epsilon$-IC policy $\pi$, we extend the technique proposed for the single-state case by~\cite{dutting2021complexity}.
In particular, we change every contract $p$ proposed by $\pi$ at any time step $h$ and state $s$ to a new contract $p'$ defined as:
\begin{equation*}
	p'(s') \coloneqq (1-\sqrt{\epsilon})p(s') + \sqrt{\epsilon}r_h(s') \quad \forall s' \in \sset.
\end{equation*}
The new policy $\pi'$ achieves a ``good'' value when the farsighted agent best responds to it, as stated by the following lemma:
\begin{restatable}{lemma}{fromepictopseudoic}
	\label{lem:from_ep_ic_to_pseudo_ic}
	Given an $\epsilon$-IC direct policy $\pi$, let $\pi'$ be the policy defined changing the contracts proposed by $pi$, formally $\pi'(p',a|\tau) \coloneqq \pi(p,a|\tau)$ for $p'(s') \coloneqq (1-\sqrt{\epsilon})p(s') + \sqrt{\epsilon}r_h(s')$, with $\tau \in \mathcal{T}_h$ and $(p,a) \in \mathcal{X}_\pi$.
	Then it holds that $V^{\textnormal{P},\pi',\alpha'} \ge V^{\textnormal{P},\pi} -(H+1)\sqrt{\epsilon}$ for any agent's function $\alpha'$ incentivized by $\pi'$.
\end{restatable}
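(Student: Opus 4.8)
The plan is to reduce the whole statement to a global, trajectory-wise accounting identity and then exploit that $\alpha'$ best-responds to $\pi'$. For a trajectory $\tau=(s_1,p_1,a_1,\dots,s_h,p_h,a_h,s_{h+1})$ let me denote the total original payment $P_\pi(\tau)=\sum_{h'}p_{h'}(s_{h'+1})$, the total reward $R(\tau)=\sum_{h'}r_{h'}(s_{h'},s_{h'+1})$, the total cost $C(\tau)=\sum_{h'}c_{h'}(s_{h'},\alpha'_{h'}(\tau))$, and the welfare $W(\tau)=R(\tau)-C(\tau)$. Since the transformation replaces $p_{h'}(s')$ by $p'_{h'}(s')=(1-\sqrt\epsilon)p_{h'}(s')+\sqrt\epsilon\,r_{h'}(s_{h'},s')$ at \emph{every} step, the payment under $\pi'$ along the same trajectory obeys $P_{\pi'}(\tau)=(1-\sqrt\epsilon)P_\pi(\tau)+\sqrt\epsilon R(\tau)$, hence $R-P_{\pi'}=(1-\sqrt\epsilon)(R-P_\pi)$ pointwise. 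Taking the expectation under $(\pi',\alpha')$ I get the clean identity
\[ V^{\textnormal{P},\pi',\alpha'} = (1-\sqrt\epsilon)\,\widehat V^{\textnormal{P}}, \qquad \widehat V^{\textnormal{P}} \coloneqq \mathbb{E}_{(\pi',\alpha')}\!\left[R(\tau)-P_\pi(\tau)\right], \]
the farsighted analogue of the single-shot identity $U_P(p',\hat a)=(1-\sqrt\epsilon)U_P(p,\hat a)$ behind~\cite{dutting2021complexity}. It thus suffices to lower bound $\widehat V^{\textnormal{P}}$, the principal's value when the agent plays $\alpha'$ but is paid with the \emph{original} contracts.

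I would run the same decomposition on the agent's side: its trajectory utility obeys $U_A^{\pi'}(\tau)=(1-\sqrt\epsilon)U_A^{\pi}(\tau)+\sqrt\epsilon W(\tau)$, with $U_A^\pi(\tau)=P_\pi(\tau)-C(\tau)$. Because $\alpha'$ is incentivized by $\pi'$, Bellman optimality makes it a global maximizer of $\mathbb{E}_{(\pi',\cdot)}[U_A^{\pi'}]$, so comparing it to the recommended function $\alpha$ gives $\mathbb{E}_{(\pi',\alpha')}[U_A^{\pi'}]\ge\mathbb{E}_{(\pi',\alpha)}[U_A^{\pi'}]$. Expanding both sides through the linear identity, and using that under recommendations the trajectory laws of $(\pi',\alpha)$ and $(\pi,\alpha)$ coincide (the scaling changes only the \emph{recorded} contracts, not recommendation probabilities or transitions), this rearranges to $\mathcal{W}'-\mathcal{W}\ge\frac{1-\sqrt\epsilon}{\sqrt\epsilon}\big(V^{\textnormal{A}}-\widetilde V^{\textnormal{A}}\big)$, where $\mathcal W'=\mathbb{E}_{(\pi',\alpha')}[W]$, $\mathcal W=\mathbb{E}_{(\pi,\alpha)}[W]$, $V^{\textnormal{A}}=\vagent$ is the agent's value under recommendations, and $\widetilde V^{\textnormal{A}}=\mathbb{E}_{(\pi',\alpha')}[U_A^\pi]$. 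Since $\widehat V^{\textnormal{P}}=\mathcal W'-\widetilde V^{\textnormal{A}}$ and $\vprincipal=\mathcal W-V^{\textnormal{A}}$, adding $V^{\textnormal{A}}-\widetilde V^{\textnormal{A}}$ collapses the estimate to $\widehat V^{\textnormal{P}}-\vprincipal\ge\frac{1}{\sqrt\epsilon}\big(V^{\textnormal{A}}-\widetilde V^{\textnormal{A}}\big)$.

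The crux is then the single inequality $\widetilde V^{\textnormal{A}}\le V^{\textnormal{A}}+\epsilon$, and this is where $\epsilon$-IC of $\pi$ enters. Translating $\alpha'$ through the contract bijection into an agent's function for $\pi$, the quantity $\widetilde V^{\textnormal{A}}=\mathbb{E}_{(\pi',\alpha')}[U_A^\pi]$ is exactly the agent's utility under $\pi$ (original contracts) against \emph{some} deviation, hence is at most the optimal deviation value $\vagentdev$ at the root. But $\epsilon$-IC gives $\vagent_h(\tau)\ge\vagentdev_h(\tau)-\epsilon$ at every history, so in expectation over $\mu$ we get $\vagentdev\le\vagent+\epsilon=V^{\textnormal{A}}+\epsilon$. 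Combining yields $\widehat V^{\textnormal{P}}-\vprincipal\ge-\sqrt\epsilon$. Plugging into the first identity and using $\widehat V^{\textnormal{P}}\le H$ (rewards in $[0,1]$) to pay for the $(1-\sqrt\epsilon)$ factor, I obtain $V^{\textnormal{P},\pi',\alpha'}\ge\widehat V^{\textnormal{P}}-\sqrt\epsilon H\ge\vprincipal-(H+1)\sqrt\epsilon$.

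I expect the main obstacle to be the bookkeeping of the coupling between $\pi$ and $\pi'$, not the algebra. One must verify carefully that (i) following recommendations induces identical trajectory laws under the two policies, so that $\mathcal W$ and $V^{\textnormal{A}}$ really are the intended benchmark quantities; and (ii) the best response $\alpha'$ to $\pi'$, read back through the bijection as a deviation under $\pi$, is dominated by $\vagentdev$, so that the per-step $\epsilon$-IC guarantee upgrades to the global bound $\widetilde V^{\textnormal{A}}\le V^{\textnormal{A}}+\epsilon$. Once this coupling is in place, the two linear payment identities and the final rearrangement are routine.
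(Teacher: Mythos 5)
Your proposal is correct and is essentially the paper's own argument: the same coupling of the trajectory laws of $\pi$ and $\pi'$ through the contract bijection (which is exactly where directness is needed), the same two inequalities ($\epsilon$-IC of $\pi$ against the pulled-back deviation $\widehat{\alpha}$, and exact optimality of $\alpha'$ against the recommended actions under $\pi'$, i.e.\ the paper's Lemmas on agent's functions), and the same final split of the loss into $H\sqrt{\epsilon}$ from the $(1-\sqrt{\epsilon})$ rescaling of the principal's payoff plus $\sqrt{\epsilon}$ from the change of trajectory distribution. The only difference is cosmetic: you route the algebra through the welfare identity $V^{\mathrm{P}}=\mathcal{W}-V^{\mathrm{A}}$, whereas the paper directly bounds $\sum_{\tau}\left(\mathbb{P}(\tau\mid\pi,\alpha)-\mathbb{P}(\tau\mid\pi,\widehat{\alpha})\right)\left(r_h(s_h,s_{h+1})-p_h(s_{h+1})\right)$ by $\sqrt{\epsilon}$; the two rearrangements are equivalent.
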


The result above holds for any possible history-dependent policies.
In particular, it holds even when we consider promise-form policies, which admit polynomial-sized representations.
Furthermore, we observe that the result holds for any incentivized agent's function, regardless of how the agent breaks ties.
Thus, while the optimal IC policy $\pi^\star$ requires the agent to beak ties accordingly to the recommendations, our approximately optimal policy $\pi'$ is robust with respect to how the agent breaks such ties, as long as they are farsighted and play rationally.

\begin{algorithm}[!htp]
	\caption{\texttt{From $\epsilon$-IC to IC}}\label{alg:from_ep_ic_to_ic}
	\begin{algorithmic}[1]
		\Require direct $\epsilon$-IC $\sigma = \{(I_h,J_h,\varphi_h,g_h)\}_{h \in \mathcal{H}}$ \Statex that achieves $\vprincipalpr \ge OPT$.
		\State $\sigma^1 \gets \texttt{Change-Contracts}(\sigma)$
		\State $\sigma^2 \gets \texttt{Realign-Actions}(\sigma^1)$
		\State $\sigma^3 \gets \texttt{Realign-Promises}(\sigma^2)$
		\State \textbf{Return} $\sigma^3$
	\end{algorithmic}
\end{algorithm}


Let $\sigma$ be an $\epsilon$-IC promise-form policy with value at least OPT.
Algorithm~\ref{alg:from_ep_ic_to_ic} computes a honest and IC promise-form policy $\sigma^3$ with value at least $\text{OPT} - (H+1)\sqrt{\epsilon}$.
The algorithm is divided in three subprocedures, whose details are provided in Appendix~\ref{appendix:subprocedures_ep_ic_to_ic}.
The first procedure, $\texttt{Change-Contracts}$, computes a policy $\sigma^1$ by applying Lemma~\ref{lem:from_ep_ic_to_pseudo_ic}.
However, the policy $\sigma^1$ is not IC, as it still recommends the same actions of $\sigma$, albeit it changed the contracts.
To address this issue, the procedure $\texttt{Realign-Actions}$ computes an IC policy $\sigma^2$ by changing the recommended actions to the actual best responses.
For completeness, the third procedure, $\texttt{Realign-Promises}$, computes a policy $\sigma^3$ that is both IC and honest, by recomputing the promises that had lost their semantic value when the algorithm changed the contracts.
The properties of Algorithm~\ref{alg:from_ep_ic_to_ic} are summarized in the following lemma, where the size of $\sigma$ is the size of the sets of promises and contracts $\mathcal{I}$ and $\mathcal{J}$.
\begin{restatable}{theorem}{fromepictoic}
	\label{th:from_ep_ic_to_ic}
	Given a direct and $\epsilon$-IC promise-form policy $\sigma = \{(I_h,J_h,\varphi_h,g_h)\}_{h \in \mathcal{H}}$ that achieves at least $\vprincipalpr \ge OPT$, Algorithm~\ref{alg:from_ep_ic_to_ic} computes an IC and honest promise-form policy $\sigma^3$ in time polynomial in the size of $\sigma$ and the instance. 
	Furthermore, the value of $\sigma^3$ is at least $V^{\textnormal{P},\sigma^3} \ge \textnormal{OPT} - (H+1)\sqrt{\epsilon}$,
	while its size is such that $|\mathcal{I}^3| \le |\mathcal{I}|$ and $|\mathcal{J}^3| \le |\mathcal{J}|$.
\end{restatable}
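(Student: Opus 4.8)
The plan is to analyze Algorithm~\ref{alg:from_ep_ic_to_ic} one subprocedure at a time, following the quantities of interest along the pipeline $\sigma \to \sigma^1 \to \sigma^2 \to \sigma^3$: the principal's value, the incentive-compatibility slack, the honesty slack, and the sizes $|\mathcal{I}|,|\mathcal{J}|$. The organizing principle is that \texttt{Change-Contracts} is the only step that changes the principal's value, and it does so by a controlled amount; \texttt{Realign-Actions} and \texttt{Realign-Promises} then leave the \emph{induced history-dependent policy} unchanged and merely relabel recommendations and promises to restore, respectively, exact incentive compatibility and exact honesty.

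For the first step I would invoke Lemma~\ref{lem:from_ep_ic_to_pseudo_ic}, which applies since promise-form policies are a subclass of history-dependent ones. Applying the transformation $p'(s')=(1-\sqrt{\epsilon})p(s')+\sqrt{\epsilon}\, r_h(s')$ to every contract in $\sigma$ yields $\sigma^1$, and the lemma guarantees $V^{\textnormal{P},\pi^{\sigma^1},\alpha'} \ge \vprincipalpr-(H+1)\sqrt{\epsilon} \ge \textnormal{OPT}-(H+1)\sqrt{\epsilon}$ for every agent's function $\alpha'$ incentivized by $\pi^{\sigma^1}$, using the hypothesis $\vprincipalpr \ge \textnormal{OPT}$ together with Lemma~\ref{lem:promise_functions_eq}. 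Because this step only rescales payment values and touches neither the promise sets nor the contract supports structurally, $|\mathcal{I}^1|=|\mathcal{I}|$ and $|\mathcal{J}^1|=|\mathcal{J}|$.

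Next, \texttt{Realign-Actions} replaces, at every node $(s,\iota)$ and every contract $p$ in the support of $\varphi^1$, the recommended action by a best response of the agent to $p$, computed from the action-value functions (available through Lemma~\ref{lem:promise_functions_eq}) while keeping the continuation promises $g_h(s,\iota,p,\cdot,s')$ attached to that contract unchanged. Since the recommendation now coincides with a maximizer of the agent's deviation value and the continuation is identical, we obtain $Q^{\textnormal{A},\sigma^2}_h=\widehat{Q}^{\textnormal{A},\sigma^2}_h$ on the support, so $\sigma^2$ is IC. The principal's reward $r^{\textnormal{P}}_h(s,p,s')=r_h(s,s')-p(s')$ ignores the action, and the realigned recommendations reproduce exactly the actions an incentivized $\alpha'$ plays under $\sigma^1$; hence the two policies traverse the same transition kernel, the trajectory distribution is unchanged, and $V^{\textnormal{P},\sigma^2}=V^{\textnormal{P},\pi^{\sigma^1},\alpha'}\ge \textnormal{OPT}-(H+1)\sqrt{\epsilon}$, with $|\mathcal{I}^2|\le|\mathcal{I}^1|$ and $|\mathcal{J}^2|=|\mathcal{J}^1|$. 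Finally, \texttt{Realign-Promises} runs one backward pass setting the promise attached to each $(h,s,\iota)$ to the \emph{exact} agent value $V^{\textnormal{A},\sigma^2}_h(s,\iota)$ and redirecting $g_h$ to the recomputed continuation promises. This relabeling preserves $\pi^{\sigma^3}=\pi^{\sigma^2}$ as history-dependent policies, so value and IC carry over verbatim, while the honesty inequality now holds with $\eta=0$, as it reduces to the Bellman identity defining $V^{\textnormal{A},\sigma^2}$. As distinct old promises may collapse to the same value, $|\mathcal{I}^3|\le|\mathcal{I}^2|\le|\mathcal{I}|$ and $|\mathcal{J}^3|=|\mathcal{J}|$. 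Each subprocedure is a backward dynamic program over the triples $(h,s,\iota)$ evaluating finitely-supported expectations, hence polynomial in the size of $\sigma$ and of the instance.

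The hard part will be the value-preservation argument inside \texttt{Realign-Actions}: one must show that swapping a recommendation for the agent's best response does not perturb the distribution over reached states, even though recommendations enter the histories on which $\pi^{\sigma^1}$ conditions. The clean route is to exhibit an incentivized agent's function $\alpha'$ for $\pi^{\sigma^1}$ whose played actions are precisely the recommendations installed by \texttt{Realign-Actions}, and then to conclude $V^{\textnormal{P},\sigma^2}=V^{\textnormal{P},\pi^{\sigma^1},\alpha'}$ because the principal's reward is action-independent and both processes share identical transitions; the coupling between the chosen recommendation and the continuation promise is resolved by leaving $g_h$ attached to the contract rather than to the action label. A secondary point of care is verifying $\pi^{\sigma^3}=\pi^{\sigma^2}$ after the promise relabeling, which is exactly what lets honesty be restored without disturbing incentive compatibility or value.
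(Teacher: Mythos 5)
Your proposal follows essentially the same route as the paper: the same three-phase decomposition, with Lemma~\ref{lem:from_ep_ic_to_pseudo_ic} controlling the value loss in \texttt{Change-Contracts}, the ``exhibit an incentivized agent's function whose actions are exactly the realigned recommendations'' argument for \texttt{Realign-Actions} (the paper's Lemma~\ref{lem:from_ep_ic_to_ic_phase_2}), and the relabel-promises-to-exact-agent-values pass for honesty (Lemma~\ref{lem:from_ep_ic_to_ic_phase_3}). One small correction: after \texttt{Realign-Promises} it is \emph{not} true that $\pi^{\sigma^3}=\pi^{\sigma^2}$, because distinct old promises can collapse to the same agent value and the algorithm then keeps only the branch maximizing the principal's continuation value; the correct statement is $V^{\textnormal{P},\sigma^3}\ge V^{\textnormal{P},\sigma^2}$ (which is all the theorem needs), and for the same reason $|\mathcal{J}^3|\le|\mathcal{J}|$ rather than equality.
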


\section{Approximation Policy}\label{sec:approx_policy}


We provide an algorithm to efficiently compute a promise-form policy $\sigma$ that is $\epsilon$-IC for a given $\epsilon$ and provide the principal with an expected cumulative utility of at least OPT.
Intuitively, the algorithm works by considering a discrete set of promises $\mathcal{I} \subseteq \mathcal{D}_\delta \coloneqq \{k\delta \mid 0 \le k \le \lfloor \nicefrac{HB}{\delta} \rfloor\}$, where $\delta$ is a discretization step defined depending on $\epsilon$.
Furthermore, there always exists $\pi^*$ that is both \emph{direct} and IC.
This fact prompts us to compute a direct promise-for policy $\sigma$, \emph{i.e.}, such that $|J_h(s,\iota,a)| \le 1$ for every step $h \in \mathcal{H}$, state $s \in \sset$, promise $\iota \in \mathcal{D}_\delta$, and action $a \in \mathcal{A}$.

		\begin{algorithm}[!htp]
			\caption{\texttt{Approximation policy}}\label{alg:approx_policy}
			\begin{algorithmic}[1]
				\Require $\delta \in (0,1)$.
				\State $M^\delta_{H+1}(s,0) \gets 0 \quad \forall s \in \sset$
				\State $M^\delta_{H+1}(s,\iota) \gets -\infty \quad \forall s \in \sset, \iota \in \mathcal{D}_\delta \setminus \{0\}$
				\ForAll{$h \in \mathcal{H}$}
				\ForAll{$s \in \mathcal{S}$}
				\State $I_h(s) \gets 0$
				\ForAll{$\iota \in \mathcal{D}_\delta$}
				\State $(\{\alpha_a\}_{a \in \A},\{p^a\}_{a \in \A},z,v) \gets O^{h,s}_{\iota,\delta}(M^\delta_{h+1})$ \label{line:use_oracle}
				\State $M^\delta_{h}(s,\iota) \gets v$
				\ForAll{$a \in \A : \alpha_a > 0$}
				\State $\varphi_h(p^a,a|s,\iota) \gets \alpha_a$
				\State $J_h(s,\iota,p^a,a) \gets p^a$
				\State $g_h(s,\iota,p^a,a,s') \gets z(a,s') \quad \forall s' \in \sset$.
				\EndFor
				\If{$v>-\infty$}
				\State $I_h(s) \gets I_h(s) \cup \{\iota\}$
				\EndIf
				\EndFor
				\EndFor
				\EndFor
				\ForAll{$s \in \sset$}
				\State $I_1(s) \gets \argmax_{\iota \in I_1(s)} M^\delta_1(s,\iota)$ \label{line:initial_promise}
				\EndFor
				\State \textbf{Return} $\{(I_h,\varphi_h,g_h)\}_{h \in \mathcal{H}}$
			\end{algorithmic}
		\end{algorithm}

Algorithm~\ref{alg:approx_policy} keeps a table for every step $h \in \mathcal{H}$ represented as a function $M^\delta_h : \sset \times \mathcal{D}_\delta \rightarrow \mathbb{R} \cup \{-\infty\}$.
Intuitively, $M^\delta_h(s,\iota)$ is an approximation of the maximum cumulative reward that the principal can achieve from step $h$, starting from state $s \in \sset$ with promise $\iota \in \mathcal{D}_\delta$.
$M^\delta_h(s,\iota)$ is set to $-\infty$ when it is not possible to promise $\iota$ while being ``honest enough'' to be $\epsilon$-IC.

At high level, Algorithm~\ref{alg:approx_policy} works by employing dynamic programming, filling the tables $M^{\delta}_h$ from step $h=H$ to $h=1$.
At any step $h \in \mathcal{H}$, for every state $s \in \sset$ and promise $\iota \in \mathcal{D}_\delta$, it imposes the policy $\sigma$ to respect the promise $\iota$ up to an error of $\lambda \ge \nicefrac{\delta}{2}$.
Hence, we can work with the discrete set of promises $\mathcal{D}_\delta$ instead of considering the whole set of possible honest promises, as a single promise $\iota \in \mathcal{D}_\delta$ kept ``honestly enough'' encompasses all the promises $\widetilde{\iota} \in [\iota-\lambda,\iota+\lambda]$ that could be kept honestly. 
Furthermore, Algorithm~\ref{alg:approx_oracle} imposes the constraint represented by Equation~\ref{eq:local_ic_constr}, so that the policy $\sigma$ results $\epsilon$-IC.
Satisfying these constraints, it computes the (approximately) optimal policy-functions at step $h$, state $s$ and with promise $\iota$, then fills $M^\delta_h(s,\iota)$ with the value it achieves for the principal.
When the tables $M^\delta_h$ are filled for every $h \in \mathcal{H}$, it chooses for every state the initial promise in $\mathcal{D}_\delta$ that maximizes the principal's utility. 

\subsection{The optimization problem}
At Line~\ref{line:use_oracle} Algorithm~\ref{alg:approx_policy}, ideally we would like to compute the components of the promise-form policy in order to maximize the principal's expected utility from state $s \in \sset$ and step $h \in \mathcal{H}$ while promising $\iota \in \mathcal{D}_\delta$ ``honestly enough''.
However, this requires us to solve a mixed integer quadratic program, thus we will develop an approximation method.

We start by defining the problem $\mathcal{P}_{h,s,\iota}(M)$, whose optimal solution maximizes the principal's cumulative reward while honestly keeping the promise $\iota$ at step $h$ and in state $s$, assuming that the future rewards are the ones specified by table $M$.
Given a table $M$ represented by the function $M : \sset \times \mathcal{D}_\delta \rightarrow \mathbb{R} \cup \{-\infty\}$, which at Line~\ref{line:use_oracle} Algorithm~\ref{alg:approx_oracle} is $M \coloneqq M^\delta_{h+1}$, we let $\mathcal{P}_{h,s,\iota}(M)$ be the optimization problem defined as:
\begin{equation*}
	\widehat{F}_{h,s,\iota}(M) \coloneqq \max_{\substack{\alpha \in \Delta(\A)\\ p \in \mathcal{P}^{|\A|}\\ z:\A \times \sset \rightarrow \mathcal{D}_\delta}}
	F_{h,s,M}(\alpha,p,z) \quad \text{s.t.} \quad (\alpha, p, z) \in \Psi^{h,s}_{\iota,0}.
\end{equation*}
Variable $\alpha_a$ corresponds to the probability of recommending action $a \in \A$, variable $p^a$ is the contract that incentives $a$, and $z$ corresponds to the function $g_h$ to compute the next promise. 

The objective function is defined as:
\begin{equation*}
	F_{h,s,M}(\alpha,p,z) = \sum_{a \in \A} \alpha_a \sum_{s' \in \sset} P_h(s'|s,a) \left(r^P_h(s,p^a,s') + M(s,z(a,s'))\right).
\end{equation*}

The set $\Psi^{h,s}_{\iota,0}$ includes all the variables $\alpha \in \Delta(\A), p \in \mathcal{P}^{|\A|}, z: \A \times \sset \rightarrow \mathcal{D}_\delta$ that satisfy the following constraints:
\begin{align}
	\sum_{a \in A}\alpha_a \sum_{s' \in \sset} P_h(s'|s,a) \left(r^A_h(s,p^a,a,s') +z(a,s')\right) &= \iota \label{eq:opt_problem_constr_honest} \\
	\begin{split}
		\alpha_a \sum_{s' \in \sset} P_h(s'|s,a) \left(r^A_h(s,p^a,a,s') + z(a,s')\right) &\ge \\
		\alpha_a \sum_{s' \in \sset} P_h(s'|s,\widehat{a}) \big(r^A_h(s,p^a,\widehat{a},s') + &z(a,s')\big) \quad \forall a,\widehat{a} \in \A \label{eq:opt_problem_constr_ic}		
	\end{split}
\end{align}
Intuitively, Equation~\ref{eq:opt_problem_constr_honest} and Equation~\ref{eq:opt_problem_constr_ic} correspond to honesty and incentive-compatibility respectively.
The term $\alpha_a$ in Equation~\ref{eq:opt_problem_constr_ic} makes sure that the constraint is trivially satisfied when action $a$ is not recommended.

We observe that the problem can be written as a mixed integer quadratic program (observe problem $\mathcal{P}_{h,s,\iota}(M)$ optimizes over functions $z:\A \times \sset \rightarrow \mathcal{D}_\delta$ with discrete values). 
In Appendix~\ref{appendix:approx_orcale} we provide a detailed solution to efficiently compute an approximate solution to it.
In particular, such an approximate solution belongs to a ``relaxed'' feasible space $\Psi^{h,s}_{\iota,\lambda}$, for a given $\lambda > 0$, defined by the set of variables $\alpha \in \Delta(\A), p \in \mathcal{P}^{|\A|}, z: \A \times \sset \rightarrow \mathcal{D}_\delta$ that satisfy the following constraints:
\begin{align}
	\sum_{a \in A}\alpha_a \sum_{s' \in \sset} P_h(s'|s,a) \left(r^A_h(s,p^a,a,s') +z(a,s')\right) &\ge \iota -\lambda \label{eq:opt_rel_prob_constr_honest_ge} \\
	\sum_{a \in A}\alpha_a \sum_{s' \in \sset} P_h(s'|s,a) \left(r^A_h(s,p^a,a,s') +z(a,s')\right) &\le \iota +\lambda \label{eq:opt_rel_prob_constr_honest_le} \\
	\begin{split}
		\alpha_a \sum_{s' \in \sset} P_h(s'|s,a) \left(r^A_h(s,p^a,a,s') + z(a,s')\right) &\ge \\
		\alpha_a \sum_{s' \in \sset} P_h(s'|s,\widehat{a}) \big(r^A_h(s,p^a,\widehat{a},s') + &z(a,s')\big) \quad \forall a,\widehat{a} \in \A \label{eq:opt_rel_prob_constr_ic}
	\end{split}
\end{align}
Intuitively, Equation~\ref{eq:opt_rel_prob_constr_honest_ge} and Equation~\ref{eq:opt_rel_prob_constr_honest_le} relax Equation~\ref{eq:opt_problem_constr_honest}, requiring the policy to be $\lambda$-honest at step $h$.
We observe that $\Psi^{h,s}_{\iota,\lambda} = \bigcup_{\widetilde{\iota} \in [\iota-\lambda,\iota+\lambda]} \Psi^{h,s}_{\iota,0}$.
This property let us work with the discrete set of promises $\mathcal{D}_\delta$.

Given these sets, we can provide the definition of \emph{approximation oracle} as follows.
\begin{definition}[Approximation oracle]
	\label{def:approx_oracle}
	An algorithm $O^{h,s}_{\iota,\delta}$ is an approximation oracle for $\mathcal{P}_{h,s,\iota}(M)$ if it returns a tuple $(\alpha,p,z,v)$, with $\alpha \in \Delta(\mathcal{A}),p \in \mathcal{P}^{|\mathcal{A}|}, z:\mathcal{A} \times \sset \rightarrow \mathcal{D}_\delta, v \in \mathbb{R} \cup \{-\infty\}$, such that:
	\begin{enumerate}
		\item $\widehat{F}_{h,s,\widetilde{\iota}}(M) \le v \le F_{h,s,M}(\alpha,p,z)$ for every $\widetilde{\iota} \in [\iota-\delta,\iota+\delta]$.
		\item if $v>-\infty$, then $(\alpha,p,z) \in \Psi^{h,s}_{\iota,\lambda}$ with $\lambda = 2\delta$.
		\item $M(s',z(a,s')) > -\infty$ for every $a \in \A$ and $s' \in \sset$.
	\end{enumerate}
\end{definition}
An approximation oracle relaxes the problem $\mathcal{P}_{h,s,\iota}(M)$ in two ways: (1) it uses real numbers, $q(a,s') \in \mathbb{R}$, in place of the discrete variables $z(a,s') \in \mathcal{D}_\delta$, and (2) it relaxes the constraints, so that the solution belongs to $\Psi^{h,s}_{\iota,\delta}$.
The resulting problem is equivalent to an LP, which can be solved in polynomial time.
We observe that considering the set $\Psi^{h,s}_{\iota,\delta}$ is ``equivalent'' to consider $\Psi^{h,s}_{\widetilde{\iota},0}$, with $\widetilde{\iota} \in [\iota-\delta,\iota+\delta]$.
This allows us to use a discrete set of promises $\mathcal{D}_\delta$, as we do not force the principal to be completely honest.
As a final step, the oracle approximates the values of $q(a,s')$ to discrete values in $\mathcal{D}_\delta$, so that the future promises $z(a,s')$ belong to $I_{h+1}(s')$.
Due to this final approximation step, the solution $(\alpha,p,z)$ returned by the approximation oracle belongs to $\Psi^{h,s}_{\iota,2\delta}$.
Finally, the oracle returns also the value $v$ of the LP, which is used by Algorithm~\ref{alg:approx_oracle} to fill the value $M^\delta_{h}(s,\iota)$ of table $M^\delta_h$.
This value is different from the one of the function $F$ evaluated in $(\alpha,p,z)$.
We fill the table with it in order to recover some concavity properties, as we discuss in detail in Appendix~\ref{appendix:approx_orcale}.

\begin{restatable}{theorem}{MainTh}
	\label{th:main_theorem}
	For any $\epsilon>0$, Algorithm~\ref{alg:approx_policy} instantiated with $\delta=\nicefrac{\epsilon}{4H^2}$ computes an $\epsilon$-IC promise-form policy $\sigma=\{(I_h,J_h,\varphi_h,g_h)\}_{h \in \mathcal{H}}$ in time polynomial in the instance size and $\nicefrac{1}{\epsilon}$.
	Furthermore, $\sigma$ is such that $V^{\text{P},\pi^\sigma} \ge \text{OPT}$, while $\mathcal{I}=\mathcal{O}(H^4B/\epsilon)$ and $\mathcal{J}=\mathcal{O}(H^4B|\sset||\mathcal{A}|/\epsilon)$.
\end{restatable}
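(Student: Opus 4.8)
The plan is to establish the four assertions—polynomial running time, the cardinality bounds on $\mathcal{I}$ and $\mathcal{J}$, the $\epsilon$-incentive compatibility of $\sigma$, and the value guarantee $V^{\textnormal{P},\pi^\sigma}\ge\textnormal{OPT}$—by analysing the backward dynamic program of Algorithm~\ref{alg:approx_policy} through the three guarantees of the approximation oracle in Definition~\ref{def:approx_oracle}. The technical core is a pair of backward inductions over $h=H,\dots,1$: a first one showing that the value actually attained by the induced policy dominates the table entries $M^\delta_h$, and a second, more delicate one showing that $M^\delta_h$ in turn dominates the value of the optimal honest IC policy furnished by Theorem~\ref{th:promising_optimal}.

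First I would dispatch the mechanical parts. Bounding $|\mathcal{D}_\delta|=1+\lfloor HB/\delta\rfloor=O(HB/\delta)=O(H^3B/\epsilon)$ for $\delta=\epsilon/(4H^2)$, and noting $\mathcal{I}\subseteq\mathcal{D}_\delta$, gives $|\mathcal{I}|=O(H^4B/\epsilon)$; since $\sigma$ is direct (at most one contract per tuple $(h,s,\iota,a)$), counting over the $O(H|\sset||\mathcal{A}||\mathcal{D}_\delta|)$ tuples yields $|\mathcal{J}|=O(H^4B|\sset||\mathcal{A}|/\epsilon)$. The algorithm fills $H|\sset||\mathcal{D}_\delta|$ cells, each with one oracle call, and each call reduces to a linear program of size polynomial in the instance and in $|\mathcal{D}_\delta|$, so the total running time is polynomial in the instance size and $1/\epsilon$. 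For incentive compatibility, the second guarantee of Definition~\ref{def:approx_oracle} places every finite-valued cell's solution in $\Psi^{h,s}_{\iota,2\delta}$: constraints \eqref{eq:opt_rel_prob_constr_honest_ge}–\eqref{eq:opt_rel_prob_constr_honest_le} certify that $\sigma$ is $2\delta$-honest, and dividing constraint \eqref{eq:opt_rel_prob_constr_ic} by $\alpha_a=\varphi_h(p^a,a\mid s,\iota)>0$ for each recommended action reproduces exactly the local constraint \eqref{eq:local_ic_constr}. Lemma~\ref{lem:local_ic_constr} then applies with $\eta=2\delta$, making $\sigma$ a $2\eta H^2=4\delta H^2$-IC policy, which equals $\epsilon$ after substituting $\delta=\epsilon/(4H^2)$.

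For the value bound I would run the two inductions and combine them via Lemma~\ref{lem:promise_functions_eq}. The first induction uses the first oracle guarantee $v\le F_{h,s,M^\delta_{h+1}}(\alpha,p,z)$: plugging the returned $(\alpha,p,z)$ into $F$ and applying the inductive hypothesis on $M^\delta_{h+1}$ gives $\vprincipalpr_h(s,\iota)\ge F_{h,s,M^\delta_{h+1}}(\alpha,p,z)\ge v=M^\delta_h(s,\iota)$, so the induced policy never underperforms the table. The second induction establishes $M^\delta_h(s,\iota)\ge\overline V_h(s,\widetilde\iota)$ for every $\widetilde\iota\in[\iota-\delta,\iota+\delta]$, where $\overline V_h(s,\cdot)$ denotes the optimal value of an exactly honest, exactly IC continuation from $(h,s)$ with continuous promises. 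Taking $\sigma^\star$ to be the optimal honest IC policy of Theorem~\ref{th:promising_optimal}, evaluating at $h=1$ where Line~\ref{line:initial_promise} selects the best grid promise in each state, and averaging against $\mu$ then gives $V^{\textnormal{P},\pi^\sigma}=\sum_s\mu(s)\vprincipalpr_1(s,\iota_1(s))\ge\sum_s\mu(s)M^\delta_1(s,\iota_1(s))\ge\textnormal{OPT}$.

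The hard part is the inductive step of the second induction, which by the first oracle guarantee reduces to proving $\widehat F_{h,s,\widetilde\iota}(M^\delta_{h+1})\ge\overline V_h(s,\widetilde\iota)$ for $\widetilde\iota\in[\iota-\delta,\iota+\delta]$, i.e.\ comparing the discrete-promise, exactly honest, exactly IC problem against its continuous-promise counterpart at the same target. One takes the continuous optimizer $(\alpha^\star,p^\star,z^\star)$ for $\overline V_h(s,\widetilde\iota)$ and rounds its promise function $z^\star$ onto $\mathcal{D}_\delta$; since each rounded promise stays within $\delta/2$ of $z^\star$, the inductive hypothesis yields $M^\delta_{h+1}(s',\bar z(a,s'))\ge\overline V_{h+1}(s',z^\star(a,s'))$, so no value is lost in the objective. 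The genuine difficulty is feasibility: rounding $z^\star$ perturbs both the honesty target and each local-IC inequality \eqref{eq:opt_problem_constr_ic} by $O(\delta)$, and one must restore exactly the constraints defining $\Psi^{h,s}_{\widetilde\iota,0}$—notably \emph{exact} local incentive compatibility and honesty to $\widetilde\iota$—without any value loss. This is made possible by defining the stored value $v$ as the optimum of a concave relaxation rather than as $F(\alpha,p,z)$ itself (so that promise discretization is lossless) together with the calibration $\delta=\epsilon/(4H^2)$ and the honesty budget $2\delta$; securing this rounding-while-preserving-feasibility step is the crux of the whole argument and is exactly what the approximation-oracle construction underlying Definition~\ref{def:approx_oracle} is designed to guarantee.
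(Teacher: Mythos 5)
Your proposal is correct and follows essentially the same route as the paper: the $2\delta$-honesty and local-IC constraints from the second oracle guarantee combined with Lemma~\ref{lem:local_ic_constr} at $\eta=2\delta$ give $4\delta H^2$-IC $=\epsilon$-IC, and your two backward inductions are precisely the paper's Lemma~\ref{lem:value_at_least_M} (policy value dominates $M^\delta_h$) and Lemma~\ref{lem:M_at_least_value} ($M^\delta_h$ dominates the optimal honest IC policy of Theorem~\ref{th:promising_optimal}), combined at $h=1$ via the initial-promise selection. Your identification of the rounding-while-preserving-feasibility step as the crux, resolved by storing the concave relaxation's optimum $v$ rather than $F(\alpha,p,z)$ and by the slack in $\widehat F_{h,s,\widetilde\iota}$ over $\widetilde\iota\in[\iota-\delta,\iota+\delta]$, matches exactly how the paper handles it inside Lemma~\ref{lem:M_at_least_value} and the oracle construction.
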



\bibliographystyle{plainnat}

\bibliography{references}

\newpage
\appendix
\section{Markovian policies}\label{sec:app_markov}
In this section we introduce Markovian policies and show wh they are not the ``right'' type of policies to tackle our problem.
First, we introduce the notation regarding Markovian policies.
Then, we propose an instance where no Markovian policy is optimal, and thus a history-dependent policy must be considered.
Finally, we prove that approximating the optimal Markovian policy is NP-hard.

\subsection{Markovian policies}
A Markovian (time-inhomogeneous) policy prescribes the current contract $p$ and action $a$ depending only the current state $s$ and time step $h$.
W.l.o.g., we restrict our attention to deterministic Markovian policies, as the optimal one is deterministic.
Formally, a Markovian policy $\rho \coloneqq \{p^s_h,a^s_h\}_{s \in \sset,h \in \mathcal{H}}$ prescribes contract $p^s_h$ and action $a^s_h$ at step $h \in \mathcal{H}$ and state $s \in \sset$.


As for history-dependent policies, we defined the value-functions for the principal and the agent.
The principal's value function $\vprincipalmark : \sset \times \mathcal{H}$ returns, given a state $s \in \sset$ and step $h \in \mathcal{H}$, the principal's expected cumulative utility from state $s$ and step $h$, formally:

\begin{equation*}
	\vprincipalmark_h(s) \coloneqq \sum_{s' \in \sset} P_h(s'|s,a^s_h) \left( r^\text{P}_h(s,p^s_h,s') +\vprincipalmark_{h+1}(s') \right).
\end{equation*}
Similarly, the agent's value function $\vprincipalmark : \sset \times \mathcal{H}$ returns, given a state $s \in \sset$ and step $h \in \mathcal{H}$, the agent's expected cumulative utility from state $s$ and step $h$, formally:
\begin{equation*}
	\vagentmark_h(s) \coloneqq \sum_{s' \in \sset} P_h(s'|s,a^s_h) \left( r^\text{A}_h(s,p^s_h,a^s_h,s') +\vagentmark_{h+1}(s') \right).
\end{equation*}

We will only consider IC Markovian policies, \emph{i.e.}, policies $\rho \coloneqq \{p^s_h,a^s_h\}_{s \in \sset,h \in \mathcal{H}}$ such that for every state $s \in \sset$ and step $h \in \mathcal{H}$, the action $a^s_h$ maximizes the agent's expected cumulative utility from state $s$ and step $h$.\footnote{Observe that, as long as the agent break ties in favor of the principal when indifferent between multiple actions, a persuasive Markovian policy does not need to recommend an action and can prescribes the contract only.}
As a result, a IC Markovian policy must satisfy the following:
\begin{equation*}
	\begin{split}
		\sum_{s' \in \sset} &P_h(s'|s,a^s_h) \left( r^\text{A}_h(s,p^s_h,a^s_h,s') +\vagentmark_{h+1}(s') \right) \ge \sum_{s' \in \sset} P_h(s'|s,\widehat{a}) \left( r^\text{P}_h(s,p^s_h,\widehat{a},s') +\vagentmark_{h+1}(s') \right),
	\end{split}
\end{equation*}
for every state $s \in \sset$, step $h \in \mathcal{H}$, and action $\widehat{a} \in \A$.
We denote with $\varrho$ the set of IC Markovian policies.

Finally, the expected principal's cumulative utility of a Markovian policy $\rho \coloneqq \{p^s_h,a^s_h\}_{s \in \sset,h \in \mathcal{H}}$ is defined as:
\begin{equation*}
	\vprincipalmark \coloneqq \sum_{s \in \sset}\mu(s) \vprincipalmark_1(s),
\end{equation*}
while the optimal Markovian policy $\rho^\star$ maximizes the expected principal's cumulative utility, formally:
\begin{equation*}
	\rho^\star \coloneqq \argmax_{\rho \in X} \vprincipalmark
\end{equation*}

\subsection{Markovian policies are suboptimal}

\begin{figure}[H]
	\centering
	\includegraphics{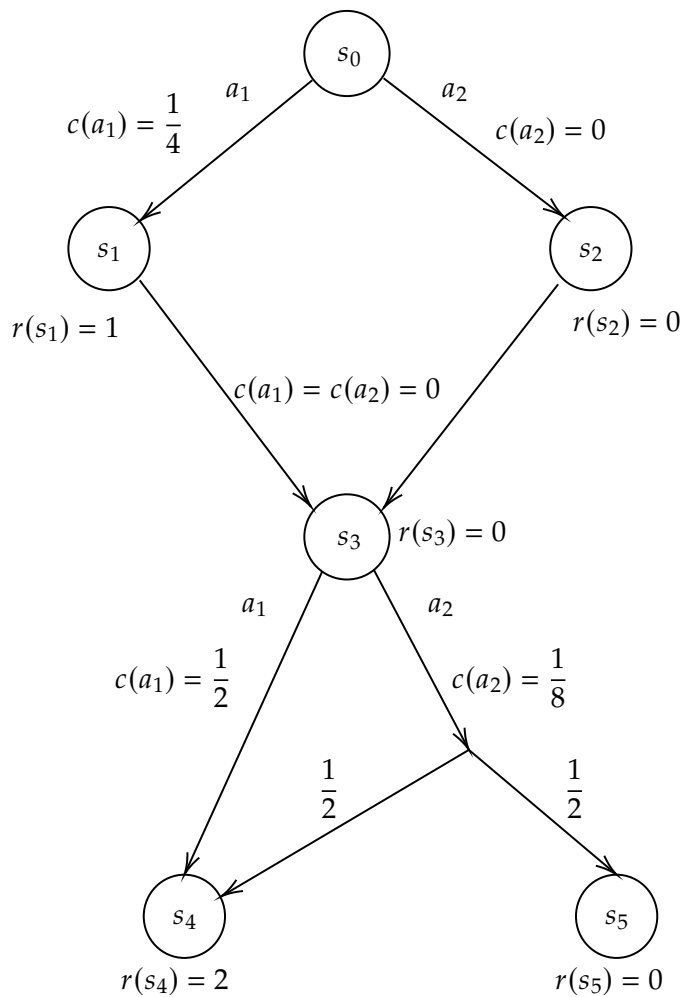}
	\caption{An instance where no Markovian policy is optimal.}
	\label{fig:history_necessary}
\end{figure}

\historyNecessary*
\begin{proof}
	Consider the instance depicted in Figure~\ref{fig:history_necessary}.
	The initial state is $s_0$ and the time horizon is $H=3$.
	In this example every state can only be reached at a single time step, thus we will drop the subscript $h$ when denoting functions that depend on them.
	Furthermore, rewards are not normalized in the set $[0,1]$ to ease the exposition.
	
	From state $s_0$, the agent can play action $a_1$ with cost $\nicefrac{1}{4}$ to reach state $s_2$, or action $a_2$ with cost zero to reach state $s_2$.
	Furthermore, the principal collects a reward of one is state $s_1$ is reached and of zero if, instead, the state transitions to $s_2$.
	From state $s_1$ and $s_2$ the agent can only reach state $s_3$ with no cost.
	For the sake of simplicity, we assume that the set of actions depends on the state, and that a single action $a_1$ is available in this states.
	From state $s_3$ the agent can play either $a_1$ with cost $\nicefrac{1}{2}$ or $a_2$ with cost $\nicefrac{1}{8}$. 
	We further assume the agent can play an action $a_3$ without cost and providing zero reward to the principal. 
	This action is not depicted in Figure~\ref{fig:history_necessary} and it is only required to guarantee a cost $0$ action and individual Rationality.
	States $s_4$ and $s_5$ are ``terminal states'', in the sense that they are reached at the end of the time horizon and no action is taken in them.
	
	
	We observe that the agent chooses an action only in states $s_0$ and $s_3$, while in states $s_1$ ad $s_2$ there is a single available action.
	Furthermore, given a policy that in state $s_1$ proposes a contract with payment $p^{s_1}(s_3)=k>0$, we can build an equivalent policy by increasing the previous payment $p^{s_0}(s_1)$ by $k$ and setting $p^{s_1}(s_3)$ to zero. A similar argument holds for state $s_2$.
	As a result, we can consider policies that do not pay in states $s_1$ and $s_2$, \emph{i.e.} they propose a contract with $p^{s_1}(s_3)=p^{s_2}(s_3)=0$.
	
	First, we upperbound the principal's utility achievable with Markovian policies.
	Against a Markovian policy in this instance, the agent acts myopically maximizing their immediate utility.
	Indeed, the first choice of the agent is in state $s_0$ and impacts the utilities at step $h=1$.
	Then, regardless of this choice, both the principal and the agent collects rewards of zero at step $h=2$, and the state reaches $s_3$ at step $h=3$.
	Finally, the Markovian policy ignores how state $s_3$ has been reached.
	As a result, the agent can act in order to maximize their immediate utility in state $s_1$, as this action does not impact the next choice at state $s_3$, and in state $s_3$, which is reached at the end of the time horizon.
	Hence, the optimal Markovian policy $\rho \coloneqq \{p^s,a^s\}_{s \in \sset}$ maximizes the principal's immediate utility in each state.
	In particular, such a policy prescribes action $a^{s_0}=a_1$ in state $s_0$, with payment $p^{s_0}(s_1)=\nicefrac{1}{4}$ if the future state is $s_1$ and $p^{s_0}(s_2)=0$ if the future state $s_2$.
	In state $s_3$, the action $a_1$ is incentivized by a contract $p^{s_3}$ such that:
	\begin{equation*}
		p^{s_3}(s_4) - \frac{1}{2} \ge \frac{1}{2}p^{s_3}(s_5) +\frac{1}{2}p^{s_3}(s_5) -\frac{1}{8}.
	\end{equation*}
	The inequality above implies:
	\begin{equation*}
		p^{s_3}(s_4) \ge \frac{3}{4} +p^{s_3}(s_5).
	\end{equation*}
	Thus, the contract to incentivize action $a_1$ with the minimum payment is the one that pays $p^{s_3}(s_4)=\nicefrac{3}{4}$ for the terminal state $s_4$ and zero for any other.
	With such a contract, the principal collects an immediate utility of $\vprincipalmark(s_3)=\nicefrac{5}{4}$.
	This is optimal, as any other agent's action would provide an immediate utility smaller than $1$.
	Overall, the expected cumulative of the principal is $\vprincipalmark(s_0) = 2$.
	
	Now, we design an history-dependent policy with strictly larger utility, concluding the proof.
	We observe that while $s_3$ is always reached, the contract and hence the agent's action in $s_3$ depend on the agent's action in $s_0$.
	Consequently, the agent does not necessarily play in order to maximize the immediate utility in state $s_0$, since the action in this state influences the following choice.
	This can be exploited by a history-dependent policy in order to reduce the payment in state $s_0$ with respect to the optimal Markovian policy.
	Let us define the following two histories:
	\begin{align*}
		\tau_1 &\coloneqq (s_0,a_1,s_1,a_1,s_3) \\
		\tau_2 &\coloneqq (s_0,a_2,s_2,a_1,s_3).
	\end{align*}
	Intuitively, history $\tau_1$ corresponds to the case in which the agent plays action $a_1$ in state $s_0$, while history $\tau_2$ encodes the case in which the agent plays $a_2$.
	Thus, these two histories represent the two paths through which the agent can reach state $s_3$.
	
	We design a deterministic policy $\pi$ that, in state $s_0$, prescribes action $a_1$ and does not provide any payment.
	If the state $s_3$ is reached through history $\tau_1$, the policy prescribes the same contract proposed by the Markovian policy in $s_3$ and recommend action $a_1$.
	Instead, it does not provide any payment for history $\tau_2$, while it recommends action $a_3$.
	
	We can show that the policy $\pi$ is IC.
	Indeed, as shown before, the agent is incentivized to play action $a_1$ in state $s_1$ when the history is $\tau_1$.
	This provides an immediate utility:
	\begin{equation*}
		\vagent(\tau_1) = \frac{3}{4} - \frac{1}{2} = \frac{1}{4}.
	\end{equation*}
	Instead, if the history is $\tau_2$, then the agent plays action $a_3$ with no cost and no payment, as any other action incurs in a cost with no payment.
	Thus:
	\begin{equation*}
		\vagent(\tau_2) = 0.
	\end{equation*}
	Furthermore, if the agent plays $a_1$ in $s_0$, then it collects a cumulative utility of:
	\begin{equation*}
		\vagent(\tau_0) = \vagent(\tau_1) - \frac{1}{4} = 0,
	\end{equation*}
	where $\tau_0 \coloneqq(s_0)$ is the initial history.
	Instead, by playing action $a_2$ in $s_0$, the agent collects a cumulative utility of zero.
	As a result, by recommending action $a_1$ in $s_0$ the policy $\pi$ is IC.
	
	Finally, the policy $\pi$ provides the principal with a cumulative utility of:
	\begin{equation*}
		\vprincipal(\tau_0) = 1 + \vprincipal(\tau_1) = 1 +2 -\frac{3}{4} = \frac{9}{4} > 2,
	\end{equation*}
	concluding the proof.
\end{proof}

\begin{figure}[h]
	\centering
	\includegraphics{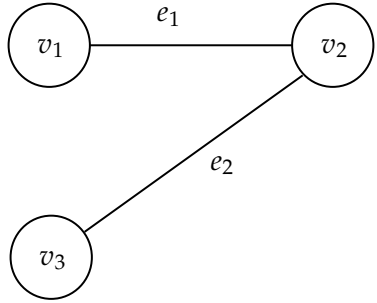}
	\caption{Illustrative instance of a cubic graph with $V=\{v_1,v_2,v_3\}$ and $E=\{e_1,e_2\}$.}
	\label{fig:cubic_graph}
\end{figure}

\begin{figure}
	\centering
	\resizebox{\linewidth}{!}{\input{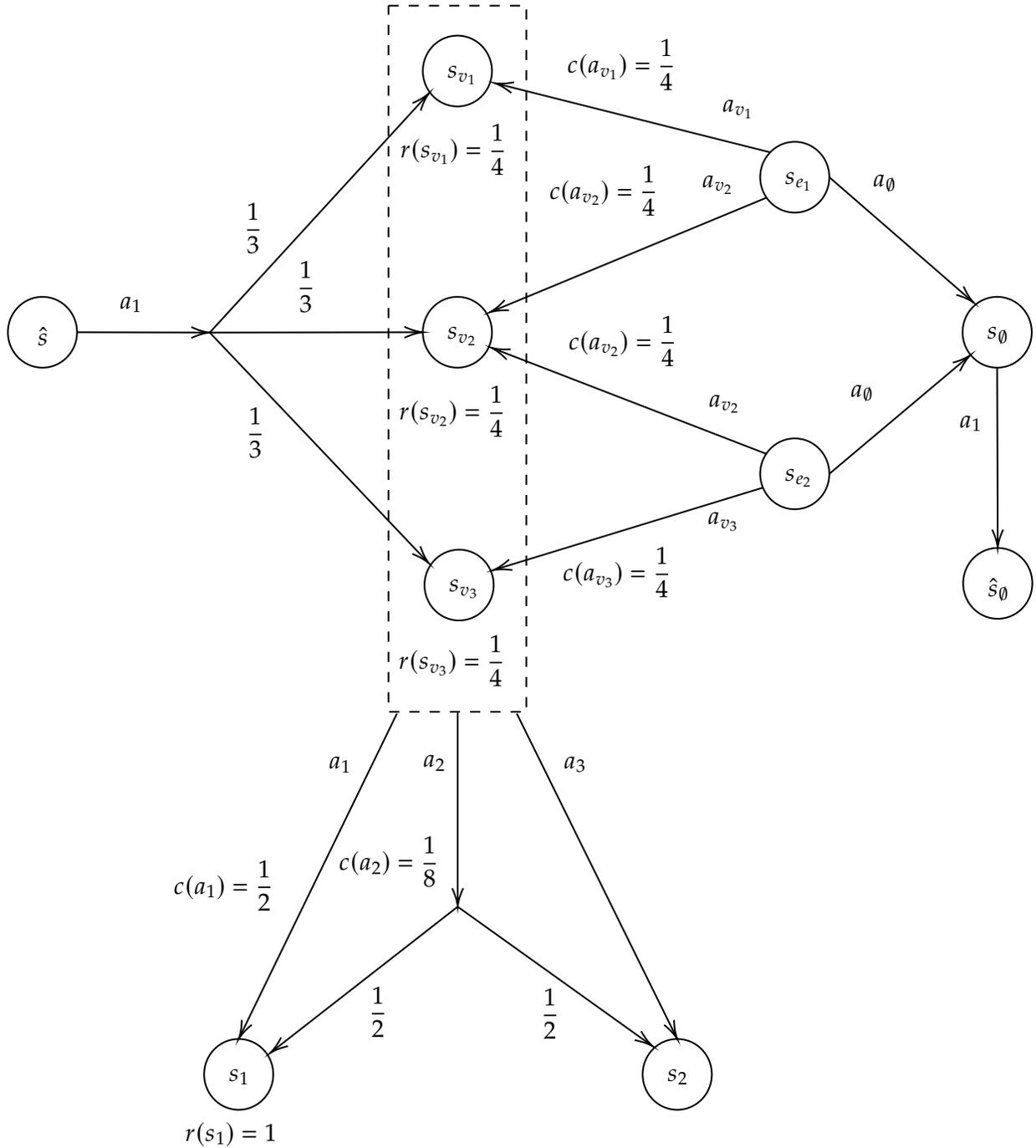}}
	\caption{Instance of the Contract MDP problem constructed from the Cubic Graph depict in Figure~\ref{fig:cubic_graph}. Costs and rewards are reported only when different from zero. The states $s_{v_1}$, $s_{v_2}$ and $s_{v_3}$ share the same set of actions $\{a_1,a_2,a_3\}$.}
	\label{fig:mdp_from_graph}
\end{figure}

\Hard*

\begin{proof}
	
	We reduce from an NP-Hard problem related to find vertex covers in cubic graphs. In particular, \citet{alimonti2000some} show that there exists a constant $\epsilon>0$ such that the following promise problem is NP-Hard:
	 Given a graph $(V,E)$ in which each nodes has degree at most $3$, distinguish whether there is a vertex cover of size $k$ or all the vertex covers have size at least $k(1+\epsilon)$.
	  
	\paragraph{Construction.}
	Given an graph $(V,E)$, we build an instance as follows.
In the following, for the ease of exposition we assume that the set of actions depends on the state.
	There is a state $s_e$ for each edge $e \in E$. The agent has three actions in each state $s_e$, $e=(v_1,v_2)$: 
	\begin{itemize} 
		\item $a_\varnothing$ that leads to state $s_{\varnothing}$  that has reward $r(s_e,s_\varnothing)=0$ and cost $c(s_e,a_1)=0$,
		\item  $a_{v_1}$ that has cost $c(s_e,a_1)=1/4$ and leads deterministically to node $s_{v_1}$,
		\item  $a_{v_2}$ that has cost $c(s_e,a_1)=1/4$ and leads deterministically to node $s_{v_2}$.
	\end{itemize}
	In state $s_\varnothing$, there is a single action $a_1$ with cost $c(s_\varnothing,a_1)=0$ that leads to a state $\hat s_\varnothing$.
	
	There exists a state $\hat s$. In this  state, the agent has only an action $a_1$ available with  cost $c(\hat s, a_1)=0$ that leads to an uniform distribution over $s_v$, $v\in V$. 
	Finally, for each $v \in V$, there is a state $s_v$ with two available actions:
	\begin{itemize}
		\item  $a_1$ that leads deterministically to $s_1$ and has cost $c(s_v, a_1)=\frac{1}{2}$, \item $a_{2}$ that lead to $s_1$ with probability $1/2$ and to $s_2$ with probability $1/2$ and has cost $c(s_v, a_2)=1/8$,
		\item  $a_3$ that leads deterministically to $s_2$ and has cost $c(s_v, a_3)=0$.
	\end{itemize}

	The time horizon is $H=3$, and hence the interaction ends when the agent enters in state $s_1$ or $s_2$. Hence, it is not required to specified the action in these states. Moreover, since each state can be reach only in a specific time step, we remove this dependency.
	
	Then, we define the rewards. 
	Since they depend only on the reached state, we drop the dependency on the initial state, \emph{i.e.}, $r(s')=r(s,s')$ for all $s,s' \in \sset$.
	The reward reaching a state $s_v$, $v \in V$, is $1$, i.e., $r(s_v)=\frac{1}{4}$ for all $v$.
	Moreover, the reward reaching a state $s_1$ is $r(s_1)=1$. All the other rewards are set to $0$.
	 
	We conclude the construction specifing the initial state. The initial state is $\hat s$ with probability $\frac{1}{10}$, and for each $e \in E$ the initial state is $s_e$ with probability $\frac{9}{10|E|}$.

\paragraph{If.}

	Suppose that there exists a vertex cover $V^*$ of size $k$.
	Consider a Markovian policy that sets payment on all nodes $s_v$, $v\in V^*$ to $p^{s_v}(s_{1})=\frac{3}{4}$.
	On all nodes $s_{v}$, $v \notin V^*$, the payment is set to $p^{s_v}(s_{1})=\frac{1}{4}$.
	All the other payments are set to $0$.
	
	It is easy to see that when the initial node is $s_e$, $e=(v_1,v_2) \in E$, the agent will take action $a_v$ for a $v\in \{v_1,v_2\}  \cap V^*$, and then $a_1$ in the reached state $s_{v}$ with $v \in \{v_1,v_2\}$.
	Indeed, the utility of the agent playing this sequence of actions is $3/4- 1/4-1/2=0$.
	Playing sequence $a_1$ and $a_2$, the utility of the agent is $3/8- 1/4-1/8=0$.
	Finally, playing any other sequence of actions the payment is $0$, and hence the agent's utility is at most $0$.
	Then, simple calculations show that the cumulative principal's utility is $5/4-3/4=1/2$ when the initial state is $s_e$, $e\in E$.
	
	If the initial state is a $\hat s$, the agent plays the only action, i.e., $a_1$, and the transaction leads to a node $s_v$, $v \in V^*$, the principal's utility is $\frac{1}{4}$. Indeed, the agent is incentivized to play action $s_1$, with reward $1$ and payment $3/4$.
	
	If the initial state is a $\hat s$ and the transaction leads to a node $s_v$, $v \in V\setminus V^*$ the principal's utility is $\frac{1}{2}-\frac{1}{8}=\frac{3}{8}$. Indeed, the agent is incentivized to play action $a_2$, with expected reward $1$ and expected payment $1/8$.

	Hence, the expected utility is:
	\[\frac{9}{10}\frac{1}{2}+ \frac{1}{10|V|}\left[ \frac{1}{4} k +\frac{3}{8}(|V|-k)\right]= \frac{9}{20} + \frac{3}{80} - \frac{1}{80}\frac{k}{|V|}= \frac{39}{80}- \frac{1}{80}\frac{k}{|V|}.\]
	
	\paragraph{Only If.}
	 
	Suppose that all the vertex covers have size at least $(1+\epsilon) k$.
	
	Consider a policy $\rho\coloneqq \{p^s,a^s\}_{s \in \mathcal{S}}$ of the agent. We upperbound the utility of any policy such that the set of actions $\{a^s\}_{s \in \mathcal{S}}$ is IC.
	To do so, we show that it is sufficient to consider a subset of the IC contraints for the agent.
	
	In particular, we consider the following ‘‘sub-components'' of the graph.
	For each node $s_v$ with $a^{s_v}=a_1$, we let $E(v)$ be the set of $e$ such that $a^{s_e}=a_v$.
	Given a $v\in V$, we analyze the component $v \cup E(v)$ with $E(v)\neq \emptyset$. 
	Notice that the agent plays action $a_1$ in node $s_v$ by definition of $E(v)$  $E(v)\neq \emptyset$. Then, it must be the case that $p^{s_v}(s_{1})\ge\frac{3}{4}$.
	This implies that if the initial node is an $e \in E(v)$, the principal's utility is at most $\frac{1}{2}$ since the reward is at most $\frac{5}{4}$ and the payment at least $\frac{3}{4}$, while if the initial node is $\hat s$ and the transaction leads to state $v$, the utility is at most $1/4$ by a similar argument.

	Then, we upperbound the principal's utility if the initial node is $s_e$, $e \notin \cup_{v \in V} E(v)$.
	If the played action is $a_\varnothing$, the principal's utility is clearly $0$.
	If the played action is $a_v$, $v \in V$, and the following action is $a_2$ or $a_3$ (recall that by definition of $E(v)$ the agent will not play $a_1$), the utility is upperbounded by the social welfare of the sequence of actions that is $3/8$.
	
	Finally, consider the case in which the initial node is $\hat s$, the agent plays action $a_1$, and the transaction leads to state $v$, $E(v)= \emptyset$. Also in case, indipendently from the played action the principal's utility is upperbounded by the social welfare $\frac{3}{8}$.
	
	Now, we count the number of nodes belonging to each component. In particular, we let $n=\cup_{v\in V} E(v)$, and $m=|\{v:E(v)\neq \emptyset\}|$.
	Then, by the previous analysis of all the cases we can conclude that the expected principal's utility is at most:	
	\begin{align*}
		\frac{9}{10} \frac{1}{2} \frac{n}{|E|}+ \frac{9}{10} \frac{3}{8} \left(1-\frac{n}{|E|}\right) + \frac{1}{10}\frac{1}{4}\frac{m}{|V|} + \frac{1}{10}\frac{3}{8} \left(1-\frac{m}{|V|}\right)=  \frac{30}{80}+ \frac{9}{80}\frac{n}{|E|} - \frac{1}{80} \frac{m}{|V|}. 
	\end{align*}
	
	Now, we we proceed providing a lowerbound on $m$. Intuitively, we know that if $n=|E|$, then $m\ge (1+\epsilon)k$ since it is an vertex cover.
	We can generalize this argument noticing that adding $|E|-n$ nodes to $\{v:E(v)\neq \emptyset\}$ it is possible to build a vertex cover. To do that, it is sufficient to take all the edge $e(v_1,v_2)$ not in $\cup_v E(v)$ and add $v_1$ to the vertex cover.
	Hence, by the assumption on the size of vertex cover, we get
	\[  m+ |E|-n \ge (1+\epsilon)k.\]
	This implies:
	\[ m /|V| \ge \frac{(1+\epsilon)k+n-|E|}{|V|}  =  \frac{(1+\epsilon)k}{|V|} - \frac{|E|-n}{|V|}\ge  \frac{(1+\epsilon)k}{|V|} -\frac{2}{3} \frac{|E|-n}{|E|}, \]
	where the last inequality follows since each nodes has degree at most $3$.
	Hence,

		\begin{align*}
		 \frac{30}{80}+ \frac{9}{80}\frac{n}{|E|} - \frac{1}{80} \frac{m}{|V|}  & \le \frac{30}{80}  + \frac{9}{80}\frac{n}{|E|} - \frac{1}{80} \left[\frac{(1+\epsilon)k}{|V|} -\frac{2}{3} \frac{|E|-n}{|E|}\right]\\
		&  =   \frac{92}{240}  +\frac{25}{240} \frac{n}{|E|}- \frac{1}{80}\frac{(1+\epsilon)k}{|V|}  \\
		& \le   \frac{92}{240} +  \frac{25}{240} - \frac{1}{80}\frac{(1+\epsilon)k}{|V|} \\
		& \le   \frac{39}{80} - \frac{1}{80}\frac{(1+\epsilon)k}{|V|}.
	\end{align*}

	The proof is concluded observing that $\frac{k}{|V|}$ is constant.\footnote{Each non-trivial node is connected to at least another node. Hence $|E|\ge |V|/2$. Hence $k \ge |E|/3 \ge |V|/6$.}. Indeed, it implies that
	\[  \frac{\frac{39}{80} - \frac{1}{80}\frac{(1+\epsilon)k}{|V|}}{\frac{39}{80}- \frac{1}{80}\frac{k}{|V|}} \]
	is a constant strictly smaller than 1.

\end{proof}

\section{Details of Section~\ref{sec:promise_form}}
\label{appendix:promise_form}
\subsection{Value-functions of promise-form policies}
As we discussed in Section~\ref{sec:promise_form}, given a promise-form policy we can define equivalent (According to Lemma~\ref{lem:promise_functions_eq}) value and action-value functions that exploit the particular structure of this class of policies.
Formally, given a promise-form policy $\sigma = \{(I_h,J_h,\varphi_h,g_h)\}_{h \in \mathcal{H}}$, we define the principal/agent value function ${V_{h}}^{\textnormal{P/A},\sigma}$ as the function that, for each step $h \in \mathcal{H}$, state $s \in \mathcal{S}$ and promise $\iota \in I_{h}(s)$, recursively satisfies:
\begin{equation*}
	{V_{h}}^{\textnormal{P /A},\sigma} (s,\iota) \coloneqq \sum_{(p,a) \in \mathcal{X}_{\sigma}} \varphi_h(p,a | s,\iota)  {Q_{h}}^{\textnormal{P/A},\sigma} (s,\iota,p,a),
\end{equation*}
where the principal and the agent action-value functions are recursively defined as follows:
\begin{equation*}
	{Q_{h}}^{\textnormal{P},\sigma}(s,\iota,p,a) \coloneqq \sum_{s' \in \sset}P_h(s'|s,a) \left( r^\text{P}_h(s,p,s') + Q_{h+1}^{\textnormal{P},\sigma}(s',g_h(s,\iota,p,a,s')) \right),
\end{equation*}
and,
\begin{equation*}
	\qagentpr_h(s,\iota,p,a) \coloneqq \sum_{s' \in \sset} P_h(s'|s,a) \left(r^\text{A}_h(s,p,a,s') + \vagentpr_{h+1}(s',g_h(s,\iota,p,a,s')) \right).
\end{equation*}
for each step $h \in \mathcal{H}$, state $s \in \mathcal{S}$, contract $p \in \mathcal{J}$, action $a \in \mathcal{A}$ and promise $\iota \in I_{h}(s)$.
In the equations above, the support $\mathcal{X}_\sigma$ of the policy $\sigma$ is defined as $\mathcal{X}_\sigma \coloneqq \mathcal{X}_{\pi^\sigma}$.

We also define the value and action-value functions of the agent in the case in which the agent does not follow the principal's recommendation as follows:
\begin{equation*}
	\vagentdevpr_h(s,\iota) \coloneqq \sum_{(p,a) \in \mathcal{X}_\sigma} \varphi(p,a|s,\iota) \qagentdevpr_h(s,\iota,p,a).
\end{equation*}
\begin{equation*}
	\qagentdevpr_h(s,\iota,p,a) \coloneqq \max_{\widehat{a} \in A} \sum_{s' \in \sset} P(s'|s,\widehat{a}) \left(r^\text{A}_h(s,p,\widehat{a},s') +\vagentdevpr_{h+1}(s',g_h(s,\iota,p,a,s')) \right).
\end{equation*}

Finally, we observe that, thanks to Lemma~\ref{lem:promise_functions_eq}, the value of a promise-form policy $\sigma=\{(I_h,J_h,\varphi_h,g_h)\}_{h \in \mathcal{H}}$ can be written as:
\begin{equation*}
	V^{\text{P}, \pi^\sigma} = \sum_{s \in \sset} \mu(s) \vprincipal_1(s) = V^{\text{P}, \sigma} \coloneqq \sum_{s \in \sset} \mu(s) \vprincipalpr_1(s,i(s)).
\end{equation*}

\subsection{Omitted proofs from Section~\ref{sec:promise_form}}

\promiseFunctionsEq*
\begin{proof}
	As a first step, we observe that $\mathcal{X}_\pi = \mathcal{X}_\sigma$ by construction (see Algorithm~\ref{alg:implement}).
	We prove the statement by induction over the steps $\mathcal{H}$ starting from $h = H$.
	For $h=H$, by the definitions of $\qprincipal_H(\tau,p,a)$ and $\qprincipalpr_H(s,\iota^\sigma_\tau,p,a)$, it holds that:
	\begin{equation*}
		\qprincipal_H(\tau,p,a) = \sum_{s' \in \sset} P_H(s'|s,a)r^P_H(s,p,s') = \qprincipalpr_H(s,\iota^\sigma_\tau,p,a)
	\end{equation*}
	for every history $\tau \in \mathcal{T}_H$ ending in state $s \in \mathcal{S}$, and pair $(p,a) \in \mathcal{X}$.
	Furthermore, since by construction $\varphi_H(p,a|s,\iota^\sigma_\tau) = \pi(p,a|\tau)$, it holds that:
	\begin{equation*}
		\vprincipal_H (\tau) =\hspace{-0.3cm} \sum_{(p,a) \in \mathcal{X}_\sigma}\hspace{-0.3cm} \pi(p,a | \tau)  \qprincipal_H(\tau,p,a) = \hspace{-0.3cm} \sum_{(p,a) \in \mathcal{X}_\sigma} \hspace{-0.3cm} \varphi_H(p,a|s,\iota^\sigma_\tau)  \qprincipalpr_H(s,\iota^\sigma_\tau,p,a) = \vprincipalpr_H(s,\iota^\sigma_\tau).
	\end{equation*}
	With a similar argument, it is possible to show that $\vagentpr_H (s,\iota^\sigma_\tau) = \vagent_H(\tau)$ and $\qagentpr_H(s,\iota^\sigma_\tau,p,a) = \qagent_H(\tau,p,a)$ for every state $s \in \mathcal{S}$, history $\tau \in \mathcal{T}_H$ ending in state $s$, and pair $(p,a) \in \mathcal{X}$.
	Furthermore, we observe that:
	\begin{equation*}
		\qagentdev_H(\tau,p,a) = \max_{\widehat{a} \in A} \sum_{s' \in \sset} P_H(s'|s,\widehat{a}) r^\textnormal{A}(s,p,a,s') = \qagentdevpr_H(s,\iota^\sigma_\tau,p,a).
	\end{equation*}
	Consequently:
	\begin{equation*}
		\vagentdev_H (\tau) = \hspace{-0.3cm} \sum_{(p,a) \in \mathcal{X}_\sigma} \hspace{-0.3cm} \pi(p,a | \tau)  \qagentdev_H(\tau,p,a) = \hspace{-0.3cm} \sum_{(p,a) \in \mathcal{X}_\sigma} \hspace{-0.3cm} \varphi_H(p,a|s,\iota^\sigma_\tau)  \qagentdevpr_H(s,\iota^\sigma_\tau,p,a) = \vagentdevpr_H(s,\iota^\sigma_\tau).
	\end{equation*}
	
	Now we consider a generic step $0<h<H$ and assume that the statement holds for $h+1$.
	For every history $\tau \in \mathcal{T}_h$, contract $p \in \mathcal{P}$ and action $a \in \A$, the following holds:
	\begin{equation*}
		\qprincipal_h(\tau,p,a) = \sum_{s' \in \sset}P_h(s'|s,a) \left( r^\text{P}_h(s,p,s') + \vprincipal_{h+1}(\tau \oplus (p,a,s')) \right).
	\end{equation*}
	We observe that $\vprincipal_{h+1}(\tau \oplus (p,a,s')) = \vprincipalpr_{h+1}(s',\iota^\sigma_{\tau \oplus (p,a,s')})$ by inductive hypothesis.
	Furthermore, $\iota^\sigma_{\tau \oplus (p,a,s')} = g_h(s,\iota^\sigma_\tau,p,a,s')$ by construction (see Algorithm~\ref{alg:implement}). 
	Consequently, the following holds:\footnote{We let $g_h(s,\iota,p,a,s') \coloneqq \min(I_{h+1}(s'))$ for any contract $p \notin J_h(s,\iota,a)$. This way, the policy $\pi$ is well defined for every history, even those that happen with probability zero (as they include contracts that are never prescribed).}
	\begin{align*}
		\qprincipal_h(\tau,p,a) &= \sum_{s' \in \sset}P_h(s'|s,a) \left( r^\text{P}_h(s,p,s') + \vprincipal_{h+1}(\tau \oplus (p,a,s')) \right) \\
		&= \sum_{s' \in \sset}P_h(s'|s,a) \left( r^\text{P}_h(s,p,s') + \vprincipalpr_{h+1}(s',g_h(s,\iota^\sigma_\tau,p,a,s')) \right) \\
		&= \qprincipalpr_h(s,\iota^\sigma_\tau,p,a)
	\end{align*}
	As a result:
	\begin{equation*}
		\vprincipal_h (\tau) = \hspace{-0.3cm}  \sum_{(p,a) \in \mathcal{X}_\sigma} \hspace{-0.3cm} \pi(p,a | \tau)  \qprincipal_h(\tau,p,a) = \hspace{-0.3cm} \sum_{(p,a) \in \mathcal{X}_\sigma} \hspace{-0.3cm} \varphi_H(p,a|s,\iota^\sigma_\tau)  \qprincipalpr_H(s,\iota^\sigma_\tau,p,a) = \vprincipalpr_H(s,\iota^\sigma_\tau).
	\end{equation*}
	A similar argument shows that $\vagentpr_H (s,\iota^\sigma_\tau) = \vagent_H(\tau)$ and $\qagentpr_H(s,\iota^\sigma_\tau,p,a) = \qagent_H(\tau,p,a)$.
	
	Finally, we have that:
	\begin{align*}
		\qagentdev_h(\tau,p,a) &= \max_{\widehat{a} \in \mathcal{A}} \sum_{s' \in \sset} P_h(s'|s,\widehat{a})\left( r^{\text{A}}_h(s,p,\widehat{a},s') +\vagentdev_{h+1}(\tau \oplus (p,a,s')) \right) \\
		&= \max_{\widehat{a} \in \mathcal{A}} \sum_{s' \in \sset} P_h(s'|s,\widehat{a})\left( r^{\text{A}}_h(s,p,\widehat{a},s') +\vagentdevpr_{h+1}(s',g_h(s,\iota^\sigma_\tau,p,a,s')) \right) \\
		&=\qagentdevpr_h(s,\iota^\sigma_\tau,p,a)
	\end{align*}
	and
	\begin{equation*}
		\vagentdev_h (\tau) = \hspace{-0.3cm} \sum_{(p,a) \in \mathcal{X}_\sigma} \hspace{-0.3cm} \pi(p,a | \tau)  \qagentdev_h(\tau,p,a) = \hspace{-0.3cm} \sum_{(p,a) \in \mathcal{X}_\sigma} \hspace{-0.3cm} \varphi_H(p,a|s,\iota^\sigma_\tau)  \qagentdevpr_H(s,\iota^\sigma_\tau,p,a) = \vagentdevpr_H(s,\iota^\sigma_\tau),
	\end{equation*}
	concluding the proof.
\end{proof}

\honestyValue*
\begin{proof}
	We prove the statement by induction over the time steps $h \in \mathcal{H}$, starting from $h=H$.
	When $h=H$, for every state $s \in \mathcal{S}$, promise $\iota \in I_H(s)$ the following holds:
	\begin{align*}
		\left|\vagentpr_H(s,\iota) -\iota\right| &= \left| \sum_{(p,a) \in \mathcal{X}} \varphi_H(p,a|s,\iota) \qagentpr_H(s,\iota,p,a) -\iota \right|\\
		&= \left|\sum_{(p,a) \in \mathcal{X}} \varphi_H(p,a|s,\iota) \sum_{s' \in \sset} P_H(s'|s,a)r^\text{A}_H(s,p,a,s') - \iota \right| \\
		&= \left|\sum_{(p,a) \in \mathcal{X}} \varphi_H(p,a|s,\iota) \sum_{s' \in \sset} P_H(s'|s,a)\left(r^\text{A}_H(s,p,a,s') +g_H(s,\iota,p,a,s')\right) - \iota \right| \\
		&\le \eta,
	\end{align*}
	where the last equality holds because $I_{H+1}(s)=\{0\}$ by definition, while the inequality holds because $\sigma$ is $\eta$-honest at step $H$.
	
	Now we consider a time step $h<H$ and assume that the statement holds for $h' = h+1$.
	Then, for every state $s \in \mathcal{S}$, promise $\iota \in I_h(s)$ and pair $(p,a) \in \mathcal{X}$ such that $\varphi_H(p,a|s,\iota)>0$ the following holds:
	\begin{align*}
		\qagentpr_h(s,\iota,p,a) &= \sum_{s' \in \sset}P_h(s'|s,a) \left(r^\text{A}_h(s,p,a,s') + \vagentpr_{h+1}(s', g_h(s,p,a,\iota,s')) \right) \\
		&\ge \sum_{s' \in \sset}P_h(s'|s,a) \left(r^\text{A}_h(s,p,a,s') + g_h(s,p,a,\iota,s')-\eta(H-h) \right) \\
		&= \sum_{s' \in \sset}P_h(s'|s,a) \left(r^\text{A}_h(s,p,a,s') + g_h(s,p,a,\iota,s') \right) -\eta(H-h),
	\end{align*}
	where the first inequality holds by inductive hypothesis, while the last one holds because $\sum_{s' \in \sset} P_h(s'|s,a) = 1$.
	Thanks to this result and since $\sigma$ is $\eta$-honest, we have that:
	\begin{align*}
		\vagentpr_h(s,\iota) &= \sum_{(p,a) \in \mathcal{X}_\sigma} \varphi_h(p,a|s,\iota) \qagentpr_h(s,\iota,p,a) \\
		&\ge \iota - \eta -\eta(H-h) \\
		&= \iota - \eta(H-h+1).
	\end{align*}
	
	With a similar argument, we can show that $\vagentpr_h(s,\iota) \le \iota + \eta(H-h+1)$, concluding the proof.
\end{proof}

\localICConstr*
\begin{proof}
	Let $\pi$ be an history-dependent policy associated with $\sigma$.
	In order to prove that $\sigma$ (equivalently $\pi$) is $\epsilon$-IC, we show that for every step $h \in \mathcal{H}$, history $\tau \in \mathcal{T}_h$ contract $p \in \mathcal{P}$, and pair of actions $a,\widehat{a} \in \mathcal{A}$ such that $\pi(p,a|\tau)>0$, the following holds:
	\begin{equation}\label{eq:ic_to_prove}
		\qagent_h(\tau,p,a) \ge \sum_{s' \in \sset}P_h(s'|s,\widehat{a}) \left(r^\text{A}_h(s,p,\widehat{a},s') +\vagentdev_{h+1}(\tau \oplus (a,s'))\right) - \epsilon_h,
	\end{equation} 
	where $\epsilon_h \coloneqq 2\eta(H-h)^2 \le 2\eta H^2$.
	
	In order to prove that Equation~\ref{eq:ic_to_prove} holds for every time step $h \in \mathcal{H}$, we reason by induction, starting from $h=H$.
	Take any state $s \in \sset$, history $\tau \in \mathcal{T}$ ending in state $s$, contract $p \in \mathcal{P}$ and pair of actions $a,\widehat{a} \in \A$ such that $\pi(p,a|\tau)>0$.
	Then, thanks to Lemma~\ref{lem:promise_functions_eq}, we have that $\qagent_h(\tau,p,a) = \qagentpr_h(s,\iota^\sigma_\tau,p,a)$.
	Moreover, observing that $I_{H+1}(s) \coloneqq \{0\}$, the action-value of the agent can be lower bounded as follows:
	\begin{align*}
		\qagent_H(\tau,p,a) &= \qagentpr_H(s,\iota^\sigma_\tau,p,a) \\
		&= \sum_{s' \in \sset} P_H(s'|s,a) r^{\text{A}}_H(s,p,a,s') \\
		&= \sum_{s' \in \sset} P_H(s'|s,a) \left(r^{\text{A}}_H(s,p,a,s') +g_H(s,\iota,p,a,s')\right) \\
		&\ge \sum_{s' \in \sset} P_H(s'|s,\widehat{a}) \left(r^{\text{A}}_H(s,p,\widehat{a},s') +g_H(s,\iota,p,a,s')\right) \\
		&= \sum_{s' \in \sset} P_H(s'|s,\widehat{a}) r^{\text{A}}_H(s,p,\widehat{a},s'),
	\end{align*}
	where the inequality holds by hypothesis thanks to Equation~\ref{eq:local_ic_constr}.
	Since $\epsilon_H=0$, we conclude that Equation~\ref{eq:ic_to_prove} holds for $h=H$.
	
	Now we let $h<H$ and we assume that Equation~\ref{eq:ic_to_prove} holds for $h'=h+1$. Thus, given the inductive hypothesis, the following holds:
	\begin{equation*}
		\qagent_{h+1}(\tau,p,a) + \epsilon_{h+1} \ge \qagentdev_{h+1}(\tau,p,a),
	\end{equation*}
	for every history $\tau \in \mathcal{T}_{h+1}$ and every contract $p \in \mathcal{P}$ and action $a, \in \A$ such that $\pi(p,a|\tau)>0$.
	%
	Furthermore, observing that $\sum_{(p,a) \in \mathcal{X}} \pi(p,a|\tau)=1$, we can prove that:
	\begin{equation*}
		\vagent_{h+1}(\tau) + \epsilon_{h+1} \ge \vagentdev_{h+1}(\tau),
	\end{equation*}
	%
	which implies that:
	\begin{equation}\label{eq:v_dev_upper_bound}
		\vagent_{h+1}(\tau)  \ge \vagentdev_{h+1}(\tau) -\epsilon_{h+1} \ge \vagentdev_{h+1}(\tau) - 2\eta(H-h+1)(H-h),
	\end{equation}
	as $\epsilon_{h+1} = 2\eta(H-h-1)^2 \le 2\eta(H-h)(H-h-1)$.
	
	Consider any other action $\widehat{a} \neq a$. Since $\sum_{s' \in \sset} P_h(s'|s,a) = 1$, the action-value function of the agent can be lower bounded as follows:
	\begin{align}
		\qagentpr_h(s,\iota^\sigma_\tau,p,a) &= \sum_{s' \in \sset}P_h(s'|s,a) \left(r^\text{A}_h(s,p,a,s') + \vagentpr_{h+1}(s', g_h(s,\iota^\sigma_\tau,p,a,s')) \right) \nonumber \\
		&\ge \sum_{s' \in \sset}P_h(s'|s,a) \left(r^\text{A}_h(s,p,a,s') + g_h(s,\iota^\sigma_\tau,p,a,s') -\eta(H-h) \right) \label{eq:lower_q_1}\\
		&= \sum_{s' \in \sset}P_h(s'|s,a) \left(r^\text{A}_h(s,p,a,s') + g_h(s,\iota^\sigma_\tau,p,a,s') \right) -\eta(H-h) \nonumber \\
		&\ge \sum_{s' \in \sset}P_h(s'|s,\widehat{a}) \left(r^\text{A}_h(s,p,\widehat{a},s') + g_h(s,\iota^\sigma_\tau,p,a,s') \right) -\eta(H-h) \label{eq:lower_q_2}\\
		&\ge \sum_{s' \in \sset}P_h(s'|s,\widehat{a}) \left(r^\text{A}_h(s,p,\widehat{a},s') + \vagent_{h+1}(\tau \oplus (p,a,s')) \right) -2\eta(H-h) \label{eq:lower_q_3}\\ 
		&\ge \sum_{s' \in \sset}P_h(s'|s,\widehat{a}) \left(r^\text{A}_h(s,p,\widehat{a},s') + \vagentdev_{h+1}(\tau \oplus (p,a,s')) \right) -2\eta(H-h)^2 \label{eq:lower_q_4}
	\end{align}
	where Equation~\ref{eq:lower_q_1} holds thanks to Lemma~\ref{lem:honesty_value}, Equation~\ref{eq:lower_q_2} employs Equation~\ref{eq:local_ic_constr}, Equation~\ref{eq:lower_q_3} holds due to Lemma~\ref{lem:honesty_value} and Lemma~\ref{lem:promise_functions_eq}, while Equation~\ref{eq:lower_q_4} holds due to Equation~\ref{eq:v_dev_upper_bound}.
	As a result, by considering that $\qagentpr_h(s,\iota^\sigma_\tau,p,a) = \qagent_h(\tau,p,a)$, Equation~\ref{eq:ic_to_prove} is satisfied. This concludes the proof.
\end{proof}

\promisingOptimal*
\begin{proof}
	We show that, given an optimal IC policy $\pi^\star$, one can always build a promise-form $\sigma = \{(I_h,J_h,\varphi_h,g_h)\}_{h \in \mathcal{H}}$ that achieves the same expected principal's cumulative utility.
	Let us define, for every step $h \in \mathcal{H}$, the set $\mathcal{T}^\pi_h$ as:
	\begin{equation*}
		\mathcal{T}^\pi_h \hspace{-0.1cm} \coloneqq \{\tau \in \mathcal{T}_h | \tau = (s_1,p_1,a_1, \dots,s_{h-1},p_{h-1},a_{h-1},s), \pi(p_i,a_i|\tau(s_i)) > 0, i \in \mathcal{H} \cap [1,H-1] \},
	\end{equation*}
	where $\tau(s_i) = (s_1,p_1,a_1,\dots,s_{i-1},p_{i-1},a_{i-1},s)>0$.
	Given a history-dependent policy $\pi$ with finite support $\mathcal{X}_\pi$, we define $\sigma = \sigma_\pi$ as follows:
	\begin{itemize}
		\item $I_h(s) \coloneqq \{\vagent_h(\tau) | \tau \in \mathcal{T}^\pi_h\}$ for all $s \in \mathcal{S}$ and $h \in \mathcal{H}$. 
		Observe that $|I_1(s)|=1$ for very $s \in \sset$, since $\mathcal{T}^\pi_1 = \{(s_i) \mid s_i \in \sset\}$.
		\item $\varphi_h(p,a|s,\iota) = \pi(p,a|\tau_{h,s,\iota})$ for all $(p,a) \in \mathcal{X}$, $h \in \mathcal{H}$, $s \in \mathcal{S}$ and $\iota \in I_h(s)$, where:
		\begin{equation*}
			\tau_{h,s,\iota} \in \argmax_{\substack{ \tau \in \mathcal{T}^\pi_h : \\ \tau=(s_1,a_1,\dots,s_{h-1},a_{h-1},s_h), \\ s_h=s \; \wedge \; \vagent_h(\tau)=\iota }} \{\vprincipal_h(\tau)\}
		\end{equation*}
		\item $J_h(s,\iota,a) = \{p \in \mathcal{P} \mid \pi(p,a|\tau_{h,s,\iota})>0\}$ for every $h \in \mathcal{H}$, $\iota \in I_h(s)$ and $a \in \mathcal{A}$.
		\item $g_h(s,\iota,p,a,s') = \vagent_{h+1}(\tau_{h,s,\iota} \oplus (p,a,s'))$ for all $h \in \mathcal{H}$, $s,s' \in \mathcal{S}$, $\iota \in I_h(s)$, $a \in \mathcal{A}$ and $p \in J_h(s,\iota,a)$.
	\end{itemize}
	We observe that given this definition, it holds that $\mathcal{X}_\sigma \subseteq \mathcal{X}_\pi$.
	Furthermore, for every $h \in \mathcal{H}$ the set $\mathcal{T}^\pi_h$ is finite.
	Consequently, both the sets $\mathcal{I}$ and $\mathcal{J}$ are finite.
	
	As a first step, we show that, $\sigma$ is $\eta$-honest with $\eta=0$.
	For every $h \in \mathcal{H}$, $s \in \mathcal{S}$, $\iota \in I_h(s)$ and $(p,a) \in \mathcal{X}$ such that $\varphi_h(p,a|s,\iota) = \pi(p,a|\tau_{h,s,\iota})>0$ the following holds:
	\begin{align*}
		&\sum_{(p,a) \in \mathcal{X}_\sigma} \varphi(p,a|s,\iota) \sum_{s' \in \sset} P_h(s'|s,a) \left(r^\text{A}_h(s,p,a,s') +g_h(s,\iota,p,a,s') \right) \\
		&= \sum_{\substack{(p,a) \in \mathcal{X} :\\ \pi(p,a|\tau_{h,s,\iota})>0}} \pi(p,a|\tau_{h,s,\iota}) \sum_{s' \in \sset} P_h(s'|s,a) \left(r^\text{A}_h(s,p,a,s') +\vagent_{h+1}(\tau_{h,s,\iota} \oplus (p,a,s')) \right) \\
		&= \vagent_h(\tau_{h,s,\iota}) = \iota,
	\end{align*}
	where the first equality holds thanks to the definitions of $g_h$ and $\varphi_h$, the second one due to the definition of $\vagent_h(\tau_{h,s,\iota})$, and the last equality thanks to the definition of $\tau_{h,s,\iota}$.
	Thus, $\sigma$ is $\eta$-honest, with $\eta = 0$.
	
	Now we prove that, given that $\pi$ is IC and $\sigma$ is honest, $\sigma$ is IC as well.
	In order to do that, we show that Equation~\ref{eq:local_ic_constr} is satisfied and apply Lemma~\ref{lem:local_ic_constr}.
	First, we observe that thanks to the definitions of $\qprincipal_h(\tau_{h,s,\iota},p,a)$ and $g_h(s,\iota,a,s')$:
	\begin{equation}
		\begin{split}
		\label{eq:promise_form_optimal_rewrite_q}
		\qagent_h(\tau_{h,s,\iota},p,a) &= \sum_{s' \in \sset} P_h(s'|s,a) \left(r^\text{A}_h(s,p,a,s') + \vagent_{h+1}(\tau_{h,s,\iota} \oplus (p,a,s')) \right) \\
		&= \sum_{s' \in \sset} P_h(s'|s,a) \left(r^\text{A}_h(s,p,a,s') + g_h(s,\iota,p,a,s') \right),
		\end{split}
	\end{equation}
	for every step $h \in \mathcal{H}$, state $s \in \mathcal{S}$, promise $\iota \in I_h(s)$, and action $a \in \mathcal{A}$ with corresponding contract $p \in \mathcal{P}$ such that $\varphi_h(p,a|s,\iota)>0$.
	Furthermore, for every action $\widehat{a} \in \A$:
	\begin{equation}
		\begin{split}
		\label{eq:promise_form_optimal_lower_q}
		\qagent_h(\tau_{h,s,\iota},p,a) &\ge \sum_{s' \in \sset} P(s'|s,\widehat{a}) \left(r^\text{A}_h(s,p,\widehat{a},s') +\vagentdev_{h+1}(\tau_{h,s,\iota} \oplus (p,a,s')) \right) \\
		&= \sum_{s' \in \sset} P(s'|s,\widehat{a}) \left(r^\text{A}_h(s,p,\widehat{a},s') +\vagent_{h+1}(\tau_{h,s,\iota} \oplus (p,a,s')) \right) \\
		&=\sum_{s' \in \sset} P(s'|s,\widehat{a}) \left(r^\text{A}_h(s,p,\widehat{a},s') +g_h(s,\iota,p,a,s') \right)
		\end{split}
	\end{equation}
	since $\pi$ is IC and $\vagent_{h+1}(\tau_{h,s,\iota} \oplus (p,a,s')) =g_h(s,\iota,p,a,s')$.
	As a result, by combining Equation~\ref{eq:promise_form_optimal_rewrite_q} and Equation~\ref{eq:promise_form_optimal_lower_q}, we have that Equation~\ref{eq:local_ic_constr} is satisfied. 
	Consequently, by applying Lemma~\ref{lem:local_ic_constr}, the policy $\sigma$ is incentive compatible.
	
	In order to conclude the proof, we show that the promise-form policy $\sigma$ achieves a principal's expected cumulative utility at least as good as that obtained by $\pi$.
	%
	Let $\pi^\sigma$ be the history-dependent policy that implements $\sigma$ according to Algorithm~\ref{alg:implement} (notice that it may be different from $\pi$).
	Formally, we prove by induction over $h \in \mathcal{H}$ that $V^{\text{P},\pi^\sigma}_h(\tau) \ge \vprincipal_h(\tau)$ for every step $h \in \mathcal{H}$ and history $\tau \in \mathcal{T}^\pi_h$.
	
	As a base case, we consider $h=H$. 
	For every state $s \in \sset$ and history $\tau \in \mathcal{T}^\pi_H$ ending in state $s$ it holds that:
	\begin{align*}
		\vprincipal_H(\tau) &\le \vprincipal_H(\tau_{H,s,\iota^\sigma_\tau}) \\
		&=\sum_{(p,a) \in \mathcal{X}_\pi} \pi(p,a|\tau_{H,s,\iota^\sigma_\tau}) \sum_{s' \in \sset} P_H(s'|s,a) r^P_H(s,p,s') \\
		&= \sum_{(p,a) \in \mathcal{X}_\sigma} \varphi_H(p,a|s,\iota^\sigma_\tau) \sum_{s' \in \sset} P_H(s'|s,a) r^P_H(s,p,s') \\
		&= \vprincipalpr_H(s,\iota^\sigma_\tau) \\
		&= V^{\text{P},\pi^\sigma}_h(\tau),
	\end{align*}
	where the inequality holds due to the definition of $\tau_{H,s,\iota^\sigma_\tau}$ and the last equality one thanks to Lemma~\ref{lem:promise_functions_eq}.
	
	Now we consider $h < H$ and assume that the statement holds for $h' = h+1$.
	Then for every history $\tau \in \mathcal{T}^\pi_h$ ending in some state $s \in \mathcal{S}$ the following holds:
	\begin{equation}
		\begin{split}
		\vprincipal_h(\tau) &\le \vprincipal_h(\tau_{h,s,\iota^\sigma_\tau}) \\
		&=\sum_{(p,a) \in \mathcal{X}_\pi} \pi(p,a|\tau_{h,s,\iota^\sigma_\tau}) \qprincipal_h(\tau_{h,s,\iota^\sigma_\tau},p,a) \\
		&=\sum_{(p,a) \in \mathcal{X}_\pi} \varphi_h(p,a|s,\iota^\sigma_\tau) \qprincipal_h(\tau_{h,s,\iota^\sigma_\tau},p,a),
		\label{eq:promise_form_optimal_proof_vprincipal}			
		\end{split}
	\end{equation}
	where the inequality holds due to the definition of $\tau_{H,s,\iota^\sigma_\tau}$.
	Furthermore, for every contract $p \in \mathcal{P}$ and action $a \in \A$ such that $\pi(p,a|\tau)>0$, it holds that:
	\begin{align}
		\qprincipal_h(\tau_{h,s,\iota^\sigma_\tau},p,a) &= \sum_{s' \in \sset}P_h(s'|s,a) \left( r^\text{P}_h(s,p,s') + \vprincipal_{h+1}(\tau_{h,s,\iota^\sigma_\tau} \oplus (p,a,s')) \right) \nonumber \\
		&= \sum_{s' \in \sset}P_h(s'|s,a) \left( r^\text{P}_h(s,p,s') + g_h(s,\iota^\sigma_\tau,p,a,s') \right) \label{eq:promise_form_optimal_1}\\ 
		&\le \sum_{s' \in \sset}P_h(s'|s,a) \left( r^\text{P}_h(s,p,s') + V^{\text{P},\pi^\sigma}_{h+1}(\tau_{h+1,s',g_h(s,\iota^\sigma_\tau,p,a,s')}) \right) \label{eq:promise_form_optimal_3}\\
		&= \sum_{s' \in \sset}P_h(s'|s,a) \left( r^\text{P}_h(s,p,s') + \vprincipalpr_{h+1}(s',g_h(s,\iota^\sigma_\tau,p,a,s')) \right) \label{eq:promise_form_optimal_4}\\
		&= \qprincipalpr_h(s,\iota^\sigma_\tau) \nonumber
	\end{align}
	where Equation~\ref{eq:promise_form_optimal_1} holds due to the definition of $g_h(s,\iota^\sigma_\tau,p,a,s')$, 
	Equation~\ref{eq:promise_form_optimal_3} holds by induction, while Equation~\ref{eq:promise_form_optimal_4} follows from Lemma~\ref{lem:promise_functions_eq}.
	Thus, by combining this result with Equation~\ref{eq:promise_form_optimal_proof_vprincipal} we get:
	\begin{equation*}
		\vprincipal_h(\tau) \le \sum_{(p,a) \in \mathcal{X}_\pi} \varphi_h(p,a|s,\iota^\sigma_\tau) \qprincipalpr_h(s,\iota^\sigma_\tau) =\vprincipalpr_h(s,\iota^\sigma_\tau) = V^{\text{P},\pi^\sigma}_h(\tau),
	\end{equation*}
	where the last equality holds thanks to Lemma~\ref{lem:promise_functions_eq} and the definition of $\iota^\sigma_\tau$, concluding the proof.
	
	Since $V^{\text{P},\pi^\sigma}_h(\tau) \ge \vprincipal_h(\tau)$ for every step $h \in \mathcal{H}$ and history $\tau \in \mathcal{T}^\pi_h$, it follows that:
	\begin{align*}
		\vprincipal &= \sum_{s \in \sset} \mu(s) \vprincipal_1(s) \\
		&= \sum_{s \in \sset} \mu(s) V^{\text{P},\pi^\sigma}_1(s) = V^{\text{P},\pi^\sigma},
	\end{align*}
	as $\mathcal{T}^\pi_1 = \mathcal{T}_1 = \{(s) \mid s \in \sset\}$.
	
	Finally, we observe that the promise-form policy $\sigma^\star = \sigma^\star_{\pi^\star}$, where $\pi^\star$ is an optimal history-dependent policy, is optimal, as $V^{\text{P},\pi^{\sigma^\star}} \ge V^{\text{P},\pi^\star}$.
\end{proof}

\section{From $\epsilon$-IC to IC}\label{sec:appendix_ic}
\subsection{Relationship between agent's functions and value functions}
In order to prove Lemma~\ref{lem:from_ep_ic_to_pseudo_ic}, we first need to characterized the relationship between the agent's functions introduced in Section~\ref{sec:epsilon_ic_to_ic} and the value concepts described in Section~\ref{sec:preliminaries}.
In particular, we show that $V^{\text{P},\pi,\alpha} = V^{\text{P},\pi}$ and $V^{\text{A},\pi,\alpha} = V^{\text{A},\pi}$ when $\alpha$ is the recommended agent's function, where the cumulative agent's utility is defined as:
\begin{equation*}
	\vagent \coloneqq \coloneqq \sum_{s1 \in \sset} \mu(s1) \vagent_h(s1).
\end{equation*}
On the other hand,  $V^{\text{A},\pi,\alpha} = \widehat{V}^{\text{A},\pi}$ when $\alpha$ is an agent's function incentivized by $\pi$, where:
\begin{equation*}
	\vagentdev \coloneqq \sum_{s1 \in \sset} \mu(s1) \vagentdev_h(s1).
\end{equation*}
Finally, we show that if $\pi$ is $\epsilon$-IC, $\alpha$ is the recommended agent's function and $\widehat{\alpha}$ is any other agent's function, then $V^{\text{A},\pi,\alpha} \ge V^{\text{A},\pi,} - \epsilon$
These results are formalized in the following lemmas.

As the proofs involve multiple summations over histories $\tau \in \mathcal{T}'$, for the sake of  notation, whenever we denote a history as $\tau \in \mathcal{T}'$ in this section, we let $\tau \coloneqq (s_1,p_1,a_1,\dots,s_h,p_h,a_h,s_{h+1})$ for the appropriate $h \in \mathcal{H}$.


\begin{restatable}{lemma}{valpharec}
	\label{lem:value_rec_alpha}
	Let $\alpha$ be the recommended agent's function for some policy $\pi$.
	Then $V^{\textnormal{P},\pi} = V^{\textnormal{P},\pi,\alpha}$ and $V^{\textnormal{A},\pi} = V^{\textnormal{A},\pi,\alpha}$.
\end{restatable}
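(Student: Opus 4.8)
The plan is to recognize that $V^{\textnormal{P},\pi,\alpha}$ is, by its very definition, the expected sum of the principal's immediate rewards $r_h(s_h,s_{h+1})-p_h(s_{h+1})$ accumulated along the trajectory generated by the pair $(\pi,\alpha)$, whereas $\vprincipal=\sum_{s}\mu(s)\vprincipal_1(s)$ is the same expected sum written recursively. The crucial point is that, since $\alpha$ is the \emph{recommended} agent's function, we have $\alpha_{h'}(\tau)=a_{h'}$ for every $\tau$, so the transition kernel entering $\mathbb{P}(\tau|\pi,\alpha)$ uses exactly the recommended action $a_{h'}$ that also appears in the history and in the recursion defining $\qprincipal_h$. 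Thus the trajectory distribution under $(\pi,\alpha)$ coincides with the one implicitly unrolled by $\vprincipal_h$, and the two quantities must agree.

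To make this precise, first I would fix a reachable history $\tau_h=(s_1,p_1,a_1,\dots,s_h)\in\mathcal{T}_h$ and introduce the conditional continuation probability of a full history $\tau=(s_1,\dots,s_{k+1})\in\mathcal{T}_{k+1}$ with $k\ge h$ extending $\tau_h$, namely $\mathbb{P}(\tau|\pi,\alpha,\tau_h)\coloneqq\prod_{h'=h}^{k}\pi(p_{h'},a_{h'}|\tau(s_{h'}))\,P_{h'}(s_{h'+1}|s_{h'},a_{h'})$. The heart of the proof is the backward-induction claim that for every $h\in\hset$ and every reachable $\tau_h\in\mathcal{T}_h$,
\[
\vprincipal_h(\tau_h)=\sum_{k=h}^{H}\ \sum_{\substack{\tau\in\mathcal{T}_{k+1}:\,\tau\succeq\tau_h\\ \mathbb{P}(\tau|\pi,\alpha,\tau_h)>0}}\mathbb{P}(\tau|\pi,\alpha,\tau_h)\,\bigl(r_k(s_k,s_{k+1})-p_k(s_{k+1})\bigr).
\]

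I would prove this by downward induction on $h$. For the base case $h=H$ the recursion gives $\vprincipal_H(\tau_H)=\sum_{(p,a)}\pi(p,a|\tau_H)\sum_{s'}P_H(s'|s_H,a)\,r^\text{P}_H(s_H,p,s')$, and since $r^\text{P}_H(s_H,p,s')=r_H(s_H,s')-p(s')$ this is exactly the (single) $k=H$ term on the right-hand side. For the inductive step I would expand $\vprincipal_h(\tau_h)$ through $\qprincipal_h$: the immediate-reward part produces the $k=h$ term, while the term $\vprincipal_{h+1}(\tau_h\oplus(p,a,s'))$, rewritten by the inductive hypothesis, produces the $k\ge h+1$ terms, with the prefactor $\pi(p,a|\tau_h)\,P_h(s'|s_h,a)$ coming from the recursion matching exactly the extra factor in the continuation probabilities, precisely because $\alpha$ recommends $a$ and so the transition is governed by $P_h(\cdot|s_h,a)$. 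Setting $h=1$ and $\tau_1=(s_1)$, multiplying by $\mu(s_1)$, and summing over $s_1$ then collapses the nested sum into the definition of $V^{\textnormal{P},\pi,\alpha}$, yielding $\vprincipal=V^{\textnormal{P},\pi,\alpha}$.

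The agent's identity $\vagent=V^{\textnormal{A},\pi,\alpha}$ follows from the identical induction with $r^\text{A}_h$ in place of $r^\text{P}_h$ and $\vagent_h,\qagent_h$ in place of $\vprincipal_h,\qprincipal_h$; here one additionally uses that, $\alpha$ being the recommended function, the cost $c_h(s_h,\alpha_h(\tau))=c_h(s_h,a_h)$ appearing in $V^{\textnormal{A},\pi,\alpha}$ is exactly the cost charged inside $r^\text{A}_h(s_h,p_h,a_h,\cdot)$ in the definition of $\qagent_h$. I expect the only real obstacle to be bookkeeping: keeping the history indices $\mathcal{T}_h$ versus $\mathcal{T}_{k+1}$ straight and verifying that the recursion's one-step factors line up with the continuation probabilities, which hinges entirely on the alignment ``recommended action $=$ action used in the transition'' that holds by definition of the recommended agent's function.
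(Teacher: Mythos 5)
Your proposal is correct and follows essentially the same route as the paper's proof: a downward induction establishing that $\vprincipal_h(\tau_h)$ equals the sum over all extending histories of their conditional probability times the terminal-step reward, with the inductive step splitting into the immediate ($k=h$) term and the continuation ($k\ge h+1$) terms, and with the identification $\alpha_{h'}(\tau)=a_{h'}$ aligning the transition kernel with the recursion. The only cosmetic difference is that you index the extensions by their length $k$ explicitly, whereas the paper sums directly over $\mathcal{T}'_{\pi,\alpha}$ restricted to extensions of $\bar\tau$.
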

\begin{proof}
	We prove that $V^{\textnormal{P},\pi} = V^{\textnormal{P},\pi,\alpha}$. 
	A similar argument can be made for the agent's value function.
	
	In order to prove the statement, we show by induction that for any step $\bar{h} \in \mathcal{H}$ and history $\bar{\tau} \in \mathcal{T}_{\bar{h}}$ such that $\mathbb{P}(\bar{\tau}|\pi,\alpha)>0$ it holds that:
	\begin{equation*}
		\vprincipal_{\bar{h}}(\bar{\tau}) = \sum_{\substack{\tau \in \mathcal{T}'_{\pi,\alpha}, : \tau(s_{\bar{h}})=\bar{\tau} }} \mathbb{P}(\tau | \pi, \alpha, \bar{\tau}) \left(r_h(s_{h},s_{h+1}) - p_h(s_{h+1})\right),
	\end{equation*}
	where the summation is over the histories $\tau = (s_1,p_1,a_1,\dots,s_h,p_h,a_h,s_{h+1})$ that begin with $\bar{\tau}$ and contain $h>\bar{h}$ actions, while:
	\begin{equation*}
		\mathbb{P}(\tau | \pi, \alpha, \bar{\tau}) = \prod_{h'=\bar{h}}^{h} \pi(p_{h'},a_{h'}|\tau(s_{h'}))P_{h'}(s_{h'+1}|s_{h'},a_{h'}),
	\end{equation*}
	as $\alpha_{h'}(\tau) = a_{h'}$.
	
	The base step is $\bar{h} = H$. Given any history $\bar{\tau} = (s_1,p_1,a_1,\dots,s_{H-1},p_{H-1},a_{H-1},s_H)$, we have that:
	\begin{align*}
		\vprincipal_H(\bar{\tau}) &= \sum_{(p,a) \in \mathcal{X}_\pi} \pi(p,a|\bar{\tau}) \sum_{s' \in \sset} P_H(s'|s_H,a)\left( r_H(s_H,s') - p(s')\right) \\
		&= \sum_{(p,a) \in \mathcal{X}_\pi} \sum_{s' \in \sset} \pi(p,a|\bar{\tau})  P_H(s'|s_H,a)\left( r_H(s_H,s') - p(s')\right) \\
		&=\sum_{\substack{\tau=\bar{\tau} \oplus (p,a,s'),\\ (p,a,s') \in \mathcal{X}_\pi \times \sset}} \pi(p,a|\tau(s_H))P_H(s'|s_H,a)\left( r_H(s_H,s')-p(s')\right) \\
		&=\sum_{\substack{\tau\in \mathcal{T}' :\\ \tau(s_{\bar{h}})=\bar{\tau}, \\ \pi(p_h,a_h|\bar{\tau})>0}} \pi(p_h,a_h|\tau(s_h))P_h(s_{h+1}|s_h,a_h)\left( r_h(s_h,s_{h+1})-p_h(s_{h+1})\right) \\
		&=\sum_{\substack{\tau\in \mathcal{T}' :\\ \tau(s_{\bar{h}})=\bar{\tau}, \\ \mathbb{P}(\tau|\pi,\alpha,\bar{\tau})>0}} \mathbb{P}(\tau|\pi,\alpha,\bar{\tau})P_h(s_{h+1}|s_h,a_h)\left( r_h(s_h,s_{h+1})-p_h(s_{h+1})\right) \\
		&=\sum_{\substack{\tau \in \mathcal{T}'_{\pi,\alpha}, : \\ \tau(s_{\bar{h}})=\bar{\tau} }} \mathbb{P}(\tau | \pi, \alpha, \bar{\tau}) \left(r_h(s_{h},s_{h+1}) - p_h(s_{h+1})\right),
	\end{align*}
	where $h=H$.
	
	Now we consider $\bar{h} < H$ and assume that the inductive hypothesis holds for every $h' \ge \bar{h}$.
	Given any history $\bar{\tau} = (s_1,p_1,a_1,\dots,s_{\bar{h}-1},p_{\bar{h}-1},a_{\bar{h}-1},s_{\bar{h}})$ such that $\mathbb{P}(\bar{\tau}|\pi,\alpha)>0$, we have that:
	\begin{equation}
	\label{eq:value_as_histories_1}
	\begin{split}		
		&\sum_{(p,a) \in \mathcal{X}_\pi} \pi(p,a|\bar{\tau}) \sum_{s' \in \sset} P_{\bar{h}}(s'|s_{\bar{h}},a)\left( r_{\bar{h}}(s_{\bar{h}},s') - p(s')\right) \\
		&= \sum_{(p,a) \in \mathcal{X}_\pi} \sum_{s' \in \sset} \pi(p,a|\bar{\tau})  P_{\bar{h}}(s'|s_{\bar{h}},a)\left( r_{\bar{h}}(s_{\bar{h}},s') - p(s')\right) \\
		&=\sum_{\substack{\tau=\bar{\tau} \oplus (a,p,s'),\\ (a,p,s') \in \mathcal{X}_\pi \times \sset}} \pi(a|\tau(s_{\bar{h}}))P_{\bar{h}}(s'|s_{\bar{h}},a)\left( r_{\bar{h}}(s_{\bar{h}},s')-p(s')\right) \\
		&=\sum_{\substack{\tau\in \mathcal{T}_{\bar{h}+1} :\\ \tau(s_{\bar{h}})=\bar{\tau}, \\ \mathbb{P}(\tau|\pi,\alpha)>0}} \pi(a|\tau(s_h))P_h(s_{h+1}|s_h,a)\left( r_h(s_h,s_{h+1})-p_h(s_{h+1})\right) \\
		&=\sum_{\substack{\tau\in \mathcal{T}_{\bar{h}+1} :\\ \tau(s_{\bar{h}})=\bar{\tau}, \\ \mathbb{P}(\tau|\pi,\alpha)>0}} \mathbb{P}(\tau|\pi, \alpha, \bar{\tau})\left( r_h(s_h,s_{h+1})-p_h(s_{h+1})\right).
	\end{split}
	\end{equation}
	We also observe that:
	\begin{equation}
	\label{eq:value_as_histories_2}
	\begin{split}
		&\sum_{(p,a) \in \mathcal{X}_\pi} \pi(p,a|\bar{\tau}) \sum_{s' \in \sset} P_{\bar{h}}(s'|s_{\bar{h}},a)\vprincipal_{\bar{h}+1}(\bar{\tau} \oplus (p,a,s')) \\
		&=\hspace{-0.3cm} \sum_{(p,a) \in \mathcal{X}_\pi} \hspace{-0.4cm} \pi(p,a|\bar{\tau}) \sum_{s' \in \sset} P_{\bar{h}}(s'|s_{\bar{h}},a) \hspace{-0.7cm} \sum_{\substack{\tau \in \mathcal{T}'_{\pi,\alpha} : \\ \tau(s_{\bar{h}})=\bar{\tau} \oplus (p,a,s') }} \hspace{-0.7cm} \mathbb{P}(\tau | \pi, \alpha, \bar{\tau} \oplus (p,a,s')) \left(r_h(s_{h},s_{h+1}) - p_h(s_{h+1})\right) \\
		&=\sum_{\substack{\tau' = \bar{\tau} \oplus (p,a,s'), \\(p,a,s') \in \mathcal{X}_\pi \times \sset}} \mathbb{P}(\tau' | \pi, \alpha, \bar{\tau})  \hspace{-0.5cm} \sum_{\substack{\tau \in \mathcal{T}'_{\pi,\alpha} : \\ \tau(s_{\bar{h}})=\bar{\tau} \oplus (p,a,s')}}  \hspace{-0.5cm} \mathbb{P}(\tau | \pi, \alpha, \bar{\tau} \oplus (p,a,s')) \left(r_h(s_{h},s_{h+1}) - p_h(s_{h+1})\right) \\
		&=\sum_{\substack{\tau' = \bar{\tau} \oplus (p,a,s'), \\(p,a,s') \in \mathcal{X}_\pi \times \sset}} \mathbb{P}(\tau' | \pi, \alpha, \bar{\tau}) \sum_{\substack{\tau \in \mathcal{T}'_{\pi,\alpha} : \\ \tau(s_{\bar{h}})=\tau'}} \mathbb{P}(\tau | \pi, \alpha, \tau') \left(r_h(s_{h},s_{h+1}) - p_h(s_{h+1})\right) \\
		&=\sum_{\substack{\tau \in \mathcal{T}'_{\pi,\alpha} : h \ge \bar{h}+1 \\ \tau(s_{\bar{h}})=\bar{\tau}}}  \hspace{-0.3cm} \mathbb{P}(\tau | \pi, \alpha, \bar{\tau}) \left(r_h(s_{h},s_{h+1}) - p_h(s_{h+1})\right),
	\end{split}
	\end{equation}
	where the summation in the last equality is over the histories that begin with $\bar{\tau}$ and have at least two transactions more (we recall that the last transition of $\tau$ is from state $s_h$ to state $s_{h+1}$).	
	Thus, by combining Equation~\ref{eq:value_as_histories_1} and Equation~\ref{eq:value_as_histories_2} we get:
	\begin{align*}
		\vprincipal_{\bar{h}}(\bar{\tau}) &= \sum_{(p,a) \in \mathcal{X}_\pi} \pi(p,a|\tau) \sum_{s' \in \sset} P_{\bar{h}}(s'|s_{\bar{h}},a) \left( r_{\bar{h}}(s_{\bar{h}},s') - p(s') + \vprincipal_{\bar{h}+1}(\bar{\tau} \oplus (p,a,s'))\right) \\
		&=\sum_{\substack{\tau\in \mathcal{T}_{\bar{h}+1} :\\ \tau(s_{\bar{h}})=\bar{\tau}, \mathbb{P}(\tau|\pi,\alpha)>0}} \mathbb{P}(\tau|\pi, \alpha, \bar{\tau})\left( r_h(s_h,s_{h+1})-p_h(s_{h+1})\right) \\
		&\hspace{3cm} + \hspace{-0.3cm} \sum_{\substack{\tau \in \mathcal{T}'_{\pi,\alpha} : h \ge \bar{h}+1 \\ \tau(s_{\bar{h}})=\bar{\tau}}}  \hspace{-0.3cm} \mathbb{P}(\tau | \pi, \alpha, \bar{\tau}) \left(r_h(s_{h},s_{h+1}) - p_h(s_{h+1})\right) \\
		&=\sum_{\tau \in \mathcal{T}'_{\pi,\alpha} : \tau(s_{\bar{h}})=\bar{\tau}} \mathbb{P}(\tau | \pi, \alpha, \bar{\tau}) \left(r_h(s_{h},s_{h+1}) - p_h(s_{h+1})\right).
	\end{align*}
		
	Given this result, we can prove the lemma as follows:
	\begin{align*}
		\vprincipal &= \sum_{s \in \sset} \mu(s) \vprincipal_1(s) \\
		&=\sum_{s \in \sset} \mu(s) \sum_{\tau \in \mathcal{T}'_{\pi,\alpha} : s_1 = s} \mathbb{P}(\tau | \pi, \alpha, s) \left(r_h(s_{h},s_{h+1}) - p_h(s_{h+1})\right) \\
		&=\sum_{\tau \in \mathcal{T}'_{\pi,\alpha}} \mathbb{P}(\tau | \pi, \alpha) \left(r_h(s_{h},s_{h+1}) - p_h(s_{h+1})\right) = V^{\text{P},\pi,\alpha},
	\end{align*}
	concluding the proof.
\end{proof}


\begin{restatable}{lemma}{things}
	\label{lem:value_rec_alpha_at_most_dev}
	Let $\alpha$ be any agent's function and consider a policy $\pi$.
	Then it holds that:
	\begin{equation*}
		V^{\text{A},\pi,\alpha} \le \vagentdev.
	\end{equation*}
	Furthermore, when $\alpha$ is the incentivized agent's function the inequality above holds as equality.
\end{restatable}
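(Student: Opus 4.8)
The plan is to mirror the backward induction used for Lemma~\ref{lem:value_rec_alpha}, but now carry along the $\max$ that appears in the deviation value function $\qagentdev_h$. Fix a policy $\pi$ and an agent's function $\alpha$. For each step $\bar h \in \mathcal{H}$ and history $\bar\tau \in \mathcal{T}_{\bar h}$ ending in a state $s_{\bar h}$, I would define the conditional continuation value
\[
	W_{\bar h}(\bar\tau) \coloneqq \sum_{\substack{\tau \in \mathcal{T}'_{\pi,\alpha}: \\ \tau(s_{\bar h}) = \bar\tau}} \mathbb{P}(\tau \mid \pi,\alpha,\bar\tau) \big( p_h(s_{h+1}) - c_h(s_h,\alpha_h(\tau)) \big),
\]
and prove the key claim that $W_{\bar h}(\bar\tau) \le \vagentdev_{\bar h}(\bar\tau)$ for every such $\bar\tau$, with equality whenever $\alpha$ is incentivized by $\pi$. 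Evaluating the claim at $\bar h = 1$ and averaging over $\mu$ then gives $V^{\textnormal{A},\pi,\alpha} = \sum_{s_1 \in \sset}\mu(s_1)\,W_1((s_1)) \le \sum_{s_1 \in \sset}\mu(s_1)\,\vagentdev_1(s_1) = \vagentdev$, which is precisely the statement, and the equality case transfers as well.

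For the base case $\bar h = H$, the histories extending $\bar\tau$ are of the form $\bar\tau \oplus (p,a,s')$, and by property~2 of agent's functions the played action $\alpha_H(\bar\tau \oplus (p,a,s'))$ is independent of $s'$; write it $\widehat a_{p,a}$. Since $\vagentdev_{H+1}\equiv 0$, the value $W_H(\bar\tau)$ factorizes as $\sum_{(p,a)\in\mathcal{X}_\pi}\pi(p,a\mid\bar\tau)\sum_{s'\in\sset} P_H(s'\mid s_H,\widehat a_{p,a})\, r^\textnormal{A}_H(s_H,p,\widehat a_{p,a},s')$, and each inner sum is the value of one particular deviating action, hence at most $\max_{\widehat a}(\cdots) = \qagentdev_H(\bar\tau,p,a)$. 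Summing against the $\pi$-weights yields $W_H(\bar\tau)\le\vagentdev_H(\bar\tau)$, with equality exactly when $\widehat a_{p,a}$ attains each maximum, i.e.\ when $\alpha$ is incentivized.

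For the inductive step I would peel off the first transition of the continuation. Using property~1 (concatenation) together with property~2 to express the step-$\bar h$ action as a single $s'$-independent action $\widehat a_{p,a} = \alpha_{\bar h}(\bar\tau\oplus(p,a,s'))$, the probabilities factorize so that
\[
	W_{\bar h}(\bar\tau) = \sum_{(p,a)\in\mathcal{X}_\pi} \pi(p,a\mid\bar\tau) \sum_{s'\in\sset} P_{\bar h}(s'\mid s_{\bar h},\widehat a_{p,a}) \Big( r^\textnormal{A}_{\bar h}(s_{\bar h},p,\widehat a_{p,a},s') + W_{\bar h+1}(\bar\tau\oplus(p,a,s')) \Big).
\]
Applying the inductive hypothesis $W_{\bar h+1}\le\vagentdev_{\bar h+1}$ inside the bracket, and then bounding the inner $s'$-sum by its maximum over deviating actions, reproduces exactly $\qagentdev_{\bar h}(\bar\tau,p,a)$ and hence $\vagentdev_{\bar h}(\bar\tau)$. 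For an incentivized $\alpha$ the inductive hypothesis is tight and $\widehat a_{p,a}$ selects the maximizing action, so both steps hold with equality.

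I expect the main obstacle to be the bookkeeping around the mismatch between recommended and actually-played actions. In $\qagentdev_{\bar h}$ the transition is taken under the deviating action $\widehat a$, whereas the future continuation is evaluated at $\bar\tau\oplus(p,a,s')$, i.e.\ along the \emph{recommended} action $a$ that the principal keeps feeding into $\pi$; aligning this with the definition of $W_{\bar h}$ forces a careful use of property~1 (so the action played at step $\bar h$ computed from a long history equals the one computed from its prefix) and property~2 (so it does not depend on the realized $s'$ and can be pulled outside the $s'$-sum). Once this alignment is secured, both the inequality and the equality case follow from the same single $\max$ argument.
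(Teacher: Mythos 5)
Your proposal is correct and follows essentially the same route as the paper: both prove by backward induction that the history-sum continuation value $W_{\bar h}(\bar\tau)$ is bounded above by $\vagentdev_{\bar h}(\bar\tau)$, using the $\max$ in $\qagentdev$ to dominate the specific action chosen by $\alpha$, and both obtain the equality case for an incentivized $\alpha$ by observing that the maximum is then attained. Your explicit invocation of properties 1 and 2 of agent's functions to pull the played action outside the $s'$-sum is a point the paper leaves implicit, but it is the same argument.
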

\begin{proof}
	In order to prove the statement, we show by induction that for any step $\bar{h} \in \mathcal{H}$ and history $\bar{\tau} \in \mathcal{T}_{\bar{h}}$ such that $\mathbb{P}(\bar{\tau}|\pi,\alpha)>0$ it holds that:
	\begin{equation*}
		\vagentdev_{\bar{h}}(\bar{\tau}) \ge \sum_{\tau \in \mathcal{T}'_{\pi,\alpha} : \tau(s_{\bar{h}})=\bar{\tau}} \mathbb{P}(\tau | \pi, \alpha, \bar{\tau}) \left(p_h(s_{h+1}) - c_h(s_h,\alpha_h(\tau))\right),
	\end{equation*}
	where the summation is over the histories $\tau = (s_1,p_1,a_1,\dots,s_{\bar{h}},p_{\bar{h}},a_{\bar{h}},\dots,s_h,p_h,a_h,s_{h+1})$ that begin with $\bar{\tau}$ and have at least a transition more, while:
	\begin{equation*}
		\mathbb{P}(\tau | \pi, \alpha, \bar{\tau}) = \prod_{h'=\bar{h}}^{h} \pi(p_{h'},a_{h'}|\tau(s_{h'}))P_{h'}(s_{h'+1}|s_{h'},\alpha_{h'}(\tau)).
	\end{equation*}
	
	The base step is $\bar{h} = H$. 
	Given any history $\bar{\tau} = (s_1,p_1,a_1,\dots,s_{H-1},p_{H-1},a_{H-1},s_H)$ that happens with probability larger than zero, we have that:
	\begin{align*}
		\vagentdev_H(\bar{\tau}) &= \sum_{(p,a) \in \mathcal{X}_\pi} \pi(p,a|\bar{\tau}) \max_{\widehat{a} \in \mathcal{A}} \sum_{s' \in \sset} P_H(s'|s_H,\widehat{a})\left( p(s') - c_H(s_H,\widehat{a})\right) \\
		&\ge \hspace{-0.3cm} \sum_{(p,a) \in \mathcal{X}_\pi} \hspace{-0.3cm} \pi(p,a|\bar{\tau}) \sum_{s' \in \sset} P_H(s'|s_H,\alpha_H(\bar{\tau} \oplus (p,a,s')))\left( p(s') - c_H(s_H,\alpha_H(\bar{\tau} \oplus (p,a,s')))\right) \\
		&=\sum_{\substack{\tau=\bar{\tau} \oplus (p,a,s'),\\ (p,a,s') \in \mathcal{X}_\pi \times \sset}} \pi(p,a|\tau(s_H))P_H(s'|s_H,\alpha_H(\tau))\left( p_H(s') - c_H(s_H,\alpha_H(\tau))\right) \\
		&=\sum_{\substack{\tau\in \mathcal{T}'_{\pi,\alpha} :\\ \tau(s_H)=\bar{\tau}}} \pi(p_h,a_h|\tau(s_h))P_h(s_{h+1}|s_h,\alpha_h(\tau))\left( p_h(s_{h+1}) - c_h(s_h,\alpha_h(\tau)) \right) \\
		&= \sum_{\substack{\tau\in \mathcal{T}'_{\pi,\alpha} :\\ \tau(s_H)=\bar{\tau}}} \mathbb{P}(\tau|\pi,\alpha,\bar{\tau}) \left( p_h(s_{h+1}) - c_h(s_h,\alpha_h(\tau)) \right),
	\end{align*}
	where $h=H$.
	
	Now we consider $\bar{h} < H$ and assume that the inductive hypothesis holds for $h' \ge \bar{h}$.
	Given any history $\bar{\tau} = (s_1,p_1,a_1,\dots,s_{\bar{h}-1},p_{\bar{h}-1},a_{\bar{h}-1},s_{\bar{h}})$ with an argument similar to the one above we can prove that:
	\begin{equation}
		\label{eq:dev_value_as_histories_1}
		\begin{split}		
		&\sum_{(p,a) \in \mathcal{X}_\pi} \pi(p,a|\bar{\tau}) \sum_{s' \in \sset}  P_{\bar{h}}(s'|s_{\bar{h}},\alpha_{\bar{h}}(\bar{\tau} \oplus (p,a,s'))) \left( p(s') -c_h{\bar{h}}(s_{\bar{h}},\alpha_{\bar{h}}(\bar{\tau} \oplus (p,a,s'))) \right) \\
		&= \sum_{\substack{\tau \in \mathcal{T}_{\bar{h}+1} : \\ \tau(s_{\bar{h}}) = \bar{\tau}, \\ \mathbb{P}(\tau|\pi,\alpha)>0}} \mathbb{P}(\tau | \pi, \alpha, \bar{\tau}) \left( p_h(s_{h+1}) - c_h(s_h,\alpha_h(\tau)) \right)
		\end{split}
	\end{equation}
	Moreover, with an argument similar to the one adopted in the proof of Lemma~\ref{lem:value_rec_alpha}, we can show that:
	\begin{equation}
		\label{eq:dev_value_as_histories_2}
		\begin{split}
			&\sum_{(p,a) \in \mathcal{X}_\pi} \pi(p,a|\bar{\tau}) \sum_{s' \in \sset}  P_{\bar{h}}(s'|s_{\bar{h}},\alpha_{\bar{h}}(\bar{\tau} \oplus (p,a,s')))\vagentdev_{\bar{h}+1}(\bar{\tau} \oplus (p,a,s')) \\
			&=\sum_{\substack{\tau \in \mathcal{T}'_{\pi,\alpha} : h \ge \bar{h}+1 \\ \tau(s_{\bar{h}})=\bar{\tau}}} \hspace{-0.3cm} \mathbb{P}(\tau | \pi, \alpha, \bar{\tau}) \left( p_h(s_{h+1}) -c_h(s_{h},\alpha_h(\tau)) \right),
		\end{split}
	\end{equation}
	where the summation in the last equality is over the histories that begin with $\bar{\tau}$ and have at least two transactions more.	
	Thus, by combining Equation~\ref{eq:dev_value_as_histories_1} and Equation~\ref{eq:dev_value_as_histories_2} we get:
	\begin{align*}
		&\vagentdev_{\bar{h}}(\bar{\tau}) \\
		&= \sum_{(p,a) \in \mathcal{X}_\pi} \pi(p,a|\bar{\tau})\max_{\widehat{a} \in A} \sum_{s' \in \sset}  P_{\bar{h}}(s'|s_{\bar{h}},\widehat{a}) \left( p(s') -c_{\bar{h}}(s_{\bar{h}},\widehat{a}) + \vagentdev_{\bar{h}+1}(\bar{\tau} \oplus (p,a,s'))\right) \\
		&\ge \sum_{(p,a) \in \mathcal{X}_\pi} \pi(p,a|\bar{\tau}) \sum_{s' \in \sset}  P_{\bar{h}}(s'|s_{\bar{h}},\alpha_{\bar{h}}(\tau')) \left( p(s') -c_{\bar{h}}(s_{\bar{h}},\alpha_{\bar{h}}(\tau')) + \vagentdev_{\bar{h}+1}(\tau')\right) \\
		&= \hspace{-0.35cm} \sum_{\substack{\tau \in \mathcal{T}_{\bar{h}+1} : \\ \tau(s_{\bar{h}}) = \bar{\tau}, \\ \mathbb{P}(\tau|\pi,\alpha)>0}} \hspace{-0.35cm} \mathbb{P}(\tau | \pi, \alpha, \bar{\tau}) \left( p_h(s_{h+1} - c_h(s_h,\alpha_h(\tau))) \right) + \hspace{-0.55cm} \sum_{\substack{\tau \in \mathcal{T}' : h \ge \bar{h}+1 \\ \tau(s_{\bar{h}})=\bar{\tau}}}  \hspace{-0.55cm} \mathbb{P}(\tau | \pi, \alpha, \bar{\tau}) \left( p_h(s_{h+1}) -c_h(s_{h},\alpha_h(\tau)) \right) \\
		&= \sum_{\substack{\tau \in \mathcal{T}'_{\pi,\alpha} : \\ \tau(s_{\bar{h}}) = \bar{\tau}}} \mathbb{P}(\tau | \pi, \alpha, \bar{\tau}) \left( p_h(s_{h+1} - c_h(s_h,\alpha_h(\tau))) \right) = V^{\text{A},\pi,\alpha},
	\end{align*}
	where in the inequality $\tau' = \bar{\tau} \oplus (p,a,s')$ and the second equality holds thanks to Equation~\ref{eq:dev_value_as_histories_1} and Equation~\ref{eq:dev_value_as_histories_2}.
	
	Finally, we recall that the incentivized agent's function is such that:
	\begin{equation*}
		\widehat{\alpha}_{h'}(\tau) \in \argmax_{\widehat{a} \in \mathcal{A}} \sum_{s' \in \sset}P_h(s'|s_{h'},\widehat{a}) \left( p_{h'}(s') - c_{h'}(s_{h'},\widehat{a}) +\widehat{V}^{\text{A},\pi}_{h+1}(\tau(s_{h'}) \oplus (p_{h'},a_{h'},s')) \right)
	\end{equation*} 
	for every history $\tau$ and step $h' \le h$.
	One can easily verify that for $\alpha=\widehat{\alpha}$, the inequalities above hold as equalities, concluding the proof.
\end{proof}

\begin{restatable}{lemma}{valphaepic}
	\label{lem:value_alpha_epsilon_ic}
	Let $\alpha$ be the recommended agent's function for some $\epsilon$-IC policy $\pi$.
	Then it holds that:
	\begin{equation*}
		\sum_{\tau \in \mathcal{T}'} \mathbb{P}(\tau|\pi,\alpha) \left(p^{\pi,\tau}(s_{h+1})-c_h(s_h,\alpha_h(\tau))\right) \geq \sum_{\tau \in \mathcal{T}'} \mathbb{P}(\tau|\pi,\widehat{\alpha}) \left(p^{\pi,\tau}(s_{h+1})-c_h(s_h,\widehat{\alpha}_h(\tau))\right) -\epsilon
	\end{equation*}
	for every agent's function $\widehat{a}$.
\end{restatable}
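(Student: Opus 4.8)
The plan is to recognize both sides of the claimed inequality as agent-value objects already controlled by Lemmas~\ref{lem:value_rec_alpha} and~\ref{lem:value_rec_alpha_at_most_dev}, and then to reduce the whole statement to a single one-step consequence of $\epsilon$-incentive compatibility. Concretely, the left-hand side is exactly $V^{\textnormal{A},\pi,\alpha}$ evaluated at the recommended agent's function $\alpha$: the sum over all of $\mathcal{T}'$ agrees with the sum over $\mathcal{T}'_{\pi,\alpha}$ since zero-probability histories contribute nothing, and $p^{\pi,\tau}(s_{h+1})$ is the payment $p_h(s_{h+1})$ that $\pi$ records in $\tau$. Likewise, the first term on the right-hand side is $V^{\textnormal{A},\pi,\widehat\alpha}$ for the arbitrary agent's function $\widehat\alpha$. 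Thus the statement is equivalent to $V^{\textnormal{A},\pi,\alpha} \ge V^{\textnormal{A},\pi,\widehat\alpha} - \epsilon$.

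First I would apply Lemma~\ref{lem:value_rec_alpha} to rewrite the left-hand side as $\vagent$, and Lemma~\ref{lem:value_rec_alpha_at_most_dev} to bound the right-hand term via $V^{\textnormal{A},\pi,\widehat\alpha} \le \vagentdev$. After these substitutions it suffices to establish the single inequality
\[
\vagent \ge \vagentdev - \epsilon,
\]
which no longer mentions $\widehat\alpha$.

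The crux is to extract this inequality directly from $\epsilon$-IC. Fixing an initial state $s$, I would apply the defining inequality $\qagent_1(s,p,a) \ge \qagentdev_1(s,p,a) - \epsilon$ to every pair $(p,a)$ in the support of $\pi(\cdot\mid s)$, and then average against the weights $\pi(p,a\mid s)$. Using $\vagent_1(s) = \sum_{(p,a)} \pi(p,a\mid s)\,\qagent_1(s,p,a)$ and the analogous identity for $\vagentdev_1(s)$, this yields $\vagent_1(s) \ge \vagentdev_1(s) - \epsilon$. Multiplying by $\mu(s)$, summing over $s$, and using $\sum_{s} \mu(s)=1$ gives $\vagent \ge \vagentdev - \epsilon$. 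Chaining the three bounds then closes the argument: $V^{\textnormal{A},\pi,\alpha} = \vagent \ge \vagentdev - \epsilon \ge V^{\textnormal{A},\pi,\widehat\alpha} - \epsilon$.

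The point worth flagging is that, unlike the $2\eta H^2$ bound of Lemma~\ref{lem:local_ic_constr}, no $H$-fold accumulation of errors occurs here. This is because the $\epsilon$-IC condition is phrased in terms of the full continuation values $\qagent$ and $\qagentdev$ — the latter already permitting optimal deviations at every future step — so a single application at $h=1$ already pits ``follow recommendations throughout'' against ``best-respond throughout.'' The only genuine care needed is the bookkeeping identification of the two summations with $V^{\textnormal{A},\pi,\alpha}$ and $V^{\textnormal{A},\pi,\widehat\alpha}$, together with the observation that Lemma~\ref{lem:value_rec_alpha_at_most_dev} is precisely the bridge from the arbitrary $\widehat\alpha$ to the deviation value $\vagentdev$.
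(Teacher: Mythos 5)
Your proposal is correct and follows essentially the same route as the paper: identify the left-hand side with $\sum_{s}\mu(s)\vagent_1(s)$ via Lemma~\ref{lem:value_rec_alpha}, bound the right-hand side by $\sum_{s}\mu(s)\vagentdev_1(s)$ via Lemma~\ref{lem:value_rec_alpha_at_most_dev}, and close the gap with a single application of the $\epsilon$-IC condition at the initial step. Your explicit averaging of $\qagent_1(s,p,a)\ge\qagentdev_1(s,p,a)-\epsilon$ over the support of $\pi(\cdot\mid s)$ and over $\mu$ just spells out the step the paper states as ``since $\pi$ is $\epsilon$-IC it follows that.''
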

\begin{proof}	
	As of Lemma~\ref{lem:value_rec_alpha}, $V^{\text{A},\pi,\alpha}$ is equal to the agent's value, formally:
	\begin{equation*}
		 \sum_{s \in \sset} \mu(s) V^{\text{A},\pi}_1(s) = \sum_{\tau \in \mathcal{T}'_{\pi,\alpha}} \mathbb{P}(\tau|\pi,\alpha) \left(p_h(s_{h+1})-c_h(s_h,\alpha_h(\tau))\right).
	\end{equation*}
	Furthermore, by Lemma~\ref{lem:value_rec_alpha_at_most_dev}, it holds that:
	\begin{equation*}
		\sum_{\tau \in \mathcal{T}'_{\pi,\widehat{\alpha}}} \mathbb{P}(\tau|\pi,\widehat{\alpha}) \left(p_h(s_{h+1})-c_h(s_h,\widehat{\alpha}_h(\tau))\right) \le  \sum_{s \in \sset} \mu(s) \vagentdev_1(s).
	\end{equation*}
	Thus, since $\pi$ is $\epsilon$-IC it follows that:
	\begin{align*}
		&\sum_{\tau \in \mathcal{T}'_{\pi,\alpha}} \mathbb{P}(\tau|\pi,\alpha) \left(p_h(s_{h+1})-c_h(s_h,\alpha_h(\tau))\right) \\
		&= \sum_{s \in \sset} \mu(s) \vagent_1(s) \\
		&\ge \sum_{s \in \sset} \mu(s) \vagentdev_1(s) -\epsilon \\
		&\ge \sum_{\tau \in \mathcal{T}'_{\pi,\widehat{\alpha}}} \mathbb{P}(\tau|\pi,\widehat{\alpha}) \left(p_h(s_{h+1})-c_h(s_h,\widehat{\alpha}_h(\tau))\right) -\epsilon,
	\end{align*}
	concluding the proof.
\end{proof}

As a further step, we also want to write the value $V^{\text{P},\pi,\alpha}$ for some incentivized agent's function $\alpha$ in a recursive fashion.
With an abuse of notation, given a policy $\pi$ and agent's function $\alpha$ we let $V^{\text{P},\pi,\alpha}_h : \mathcal{T}_h \rightarrow \mathbb{R}$ the function defined as:
\begin{equation*}
	V^{\text{P},\pi,\alpha}_h(\tau') = \hspace{-0.35cm} \sum_{(p,a) \in \mathcal{X}_\pi} \hspace{-0.35cm} \pi(p,a|\tau') \hspace{-0.1cm} \sum_{s' \in \sset} \hspace{-0.1cm} P_h(s'|s,\alpha_h(\tau' \oplus (p,a,s'))) \hspace{-0.1cm} \left( r^{\text{P}}_h(s,p,s') + V^{\text{P},\pi,\alpha}_{h+1}(\tau' \oplus (p,a,s'))\right).
\end{equation*}
Intuitively, the function $V^{\text{P},\pi,\alpha}_h$ is similar to the function $V^{\text{P},\pi}_h$ defined in Section~\ref{sec:preliminaries}, but employs the agent's function $\alpha$ to determine the actions of the agent, rather than assuming that they follow the recommendations.
This function can be used to compute the value for the principal $V^{\text{P},\alpha}$ as follows:
\begin{equation*}
	V^{\text{P}, \pi, \alpha} = \sum_{\tau \in \mathcal{T}'_{\pi,\alpha}} \mathbb{P}(\tau|\pi,\alpha) \left( r_h(s_h,s_{h+1}) - p_h(s_{h+1}) \right)  \sum_{s \in \sset} \mu(s) V^{\text{P},\pi,\alpha}_1(s).
\end{equation*}
The equation above is formalized in the following Lemma:
\begin{restatable}{lemma}{recursiveprincipaldev}
	\label{lem:recursive_principal_dev}
	Given any policy $\pi$ and agent's function $\alpha$, it holds that $V^{\text{P}, \pi, \alpha} = \sum_{s \in \sset} \mu(s) V^{\text{P},\pi,\alpha}_1(s)$.
\end{restatable}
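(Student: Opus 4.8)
The plan is to prove by backward induction on the step index a ``conditional'' version of the identity, and then specialize it to the first step. Concretely, I would show that for every step $\bar h\in\mathcal H$ and every history $\bar\tau\in\mathcal T_{\bar h}$ with $\mathbb P(\bar\tau\mid\pi,\alpha)>0$ it holds that
\[
V^{\text{P},\pi,\alpha}_{\bar h}(\bar\tau)=\sum_{\substack{\tau\in\mathcal T'_{\pi,\alpha}:\\ \tau(s_{\bar h})=\bar\tau}}\mathbb P(\tau\mid\pi,\alpha,\bar\tau)\bigl(r_h(s_h,s_{h+1})-p_h(s_{h+1})\bigr),
\]
where the summation ranges over the completions $\tau$ of $\bar\tau$ and $\mathbb P(\tau\mid\pi,\alpha,\bar\tau)=\prod_{h'=\bar h}^{h}\pi(p_{h'},a_{h'}\mid\tau(s_{h'}))\,P_{h'}(s_{h'+1}\mid s_{h'},\alpha_{h'}(\tau))$ is the probability of generating the remaining portion of $\tau$ starting from $\bar\tau$. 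Taking $\bar h=1$, multiplying by $\mu(s)$ and summing over $s\in\sset$ then gives $\sum_{s\in\sset}\mu(s)V^{\text{P},\pi,\alpha}_1(s)=\sum_{\tau\in\mathcal T'_{\pi,\alpha}}\mathbb P(\tau\mid\pi,\alpha)\bigl(r_h(s_h,s_{h+1})-p_h(s_{h+1})\bigr)=V^{\text{P},\pi,\alpha}$, which is exactly the claimed identity. This is the same bookkeeping already carried out for the recommended agent's function in Lemma~\ref{lem:value_rec_alpha}, so I would reuse that scaffolding almost verbatim, replacing the agent's per-step utility by the principal's reward $r_h(s_h,s_{h+1})-p_h(s_{h+1})$ and the transition action by $\alpha_h(\tau)$.

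For the base case $\bar h=H$, I would expand the definition of $V^{\text{P},\pi,\alpha}_H(\bar\tau)$: since histories ending at step $H$ admit exactly one further transition, the inner expectation over $(p,a)\sim\pi(\bar\tau)$ and $s'\sim P_H(\cdot\mid s_H,\alpha_H(\cdots))$ rewrites directly as a sum over completions $\tau=\bar\tau\oplus(p,a,s')\in\mathcal T'_{\pi,\alpha}$, matching the right-hand side. For the inductive step at $\bar h<H$, assuming the claim for $\bar h+1$, I would split the reward in the recursive definition of $V^{\text{P},\pi,\alpha}_{\bar h}$ into the immediate contribution $r_{\bar h}(s_{\bar h},s')-p(s')$ and the future contribution $V^{\text{P},\pi,\alpha}_{\bar h+1}(\bar\tau\oplus(p,a,s'))$. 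The first piece sums to the contribution of the length-one completions (the analogue of Equation~\ref{eq:value_as_histories_1}), while the second piece, after substituting the inductive hypothesis and using $\mathbb P(\bar\tau\oplus(p,a,s')\mid\pi,\alpha,\bar\tau)\cdot\mathbb P(\tau\mid\pi,\alpha,\bar\tau\oplus(p,a,s'))=\mathbb P(\tau\mid\pi,\alpha,\bar\tau)$, collapses to the contribution of the longer completions (the analogue of Equation~\ref{eq:value_as_histories_2}); adding the two yields the claim.

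The one point requiring care, and the step I expect to be the only genuine obstacle, is that here $\alpha$ is an arbitrary agent's function rather than the recommended one, so the successor state in $V^{\text{P},\pi,\alpha}_h$ is sampled according to $P_h(\cdot\mid s,\alpha_h(\tau'\oplus(p,a,s')))$, in which the conditioning action appears to depend on the very state $s'$ being summed over. The resolution is condition~2 in the definition of an agent's function, which guarantees that $\alpha_h(\tau'\oplus(p,a,s'))$ does not depend on the terminal state $s'$; hence the action played at step $h$ is a well-defined function of $\tau'\oplus(p,a,\cdot)$, and $\sum_{s'}P_h(s'\mid s,\alpha_h(\tau'\oplus(p,a,s')))=1$, so the transition probabilities telescope correctly into $\mathbb P(\tau\mid\pi,\alpha,\bar\tau)$. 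With this observation the same index manipulations as in Lemma~\ref{lem:value_rec_alpha} go through, and no new inequalities or tie-breaking considerations are needed since the statement is an exact identity valid for every $\alpha$.
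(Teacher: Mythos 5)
Your proposal is correct and follows essentially the same route as the paper's proof: a backward induction establishing the conditional identity $V^{\text{P},\pi,\alpha}_{\bar h}(\bar\tau)=\sum_{\tau}\mathbb P(\tau\mid\pi,\alpha,\bar\tau)\,(r_h(s_h,s_{h+1})-p_h(s_{h+1}))$, splitting each step into the immediate and future contributions exactly as in Lemma~\ref{lem:value_rec_alpha}, and then summing over $\mu$. Your explicit appeal to condition~2 of the agent's-function definition to justify that $\alpha_h(\tau'\oplus(p,a,s'))$ does not depend on $s'$ (so the transition kernel still normalizes) is a point the paper leaves implicit, and it is the right justification.
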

\begin{proof}
	In order to prove the statement, we show by induction that for any step $\bar{h} \in \mathcal{H}$ and history $\bar{\tau} \in \mathcal{T}_{\bar{h}}$ it holds that:
	\begin{equation*}
		V^{\text{P},\pi,\alpha}_{\bar{h}}(\bar{\tau}) = \sum_{\substack{\tau \in \mathcal{T}'_{\pi,\alpha} : \\ \tau(s_{\bar{h}}) = \bar{\tau}}} \mathbb{P}(\tau|\pi,\alpha,\bar{\tau}) \left( r_h(s_h,s_{h+1}) - p_h(s_{h+1}) \right),
	\end{equation*}
	where the summation is over the histories $\tau = (s_1,p_1,a_1,\dots,s_h,p_h,a_h,s_{h+1})$ that begin with $\bar{\tau}$ and contain $h>\bar{h}$ actions.
	
	The base step is $\bar{h} = H$. Given any history $\bar{\tau} = (s_1,p_1,a_1,\dots,s_{H-1},p_{H-1},a_{H-1},s_H)$, we have that:
	\begin{align*}
		V^{\text{P},\pi,\alpha}_H(\bar{\tau}) &= \sum_{(p,a) \in \mathcal{X}_\pi} \pi(p,a|\bar{\tau}) \sum_{s' \in \sset} P_H(s'|s_H,\alpha_H(\bar{\tau} \oplus (p,a,s')))\left( r_H(s_H,s') - p(s')\right) \\
		&= \sum_{(p,a) \in \mathcal{X}_\pi} \sum_{s' \in \sset} \pi(p,a|\bar{\tau})  P_H(s'|s_H,\alpha_H(\bar{\tau} \oplus (p,a,s')))\left( r_H(s_H,s') - p(s')\right) \\
		&=\sum_{\substack{\tau=\bar{\tau} \oplus (p,a,s'),\\ (p,a,s') \in \mathcal{X}_\pi \times \sset}} \pi(p,a|\tau(s_H))P_H(s'|s_H,\alpha_H(\tau))\left( r_H(s_H,s')-p(s')\right) \\
		&=\sum_{\substack{\tau\in \mathcal{T}' :\\ \tau(s_{\bar{h}})=\bar{\tau}, \\ \pi(p_h,a_h|\bar{\tau})>0}} \pi(p_h,a_h|\tau(s_h))P_h(s_{h+1}|s_h,\alpha_h(\tau))\left( r_h(s_h,s_{h+1})-p_h(s_{h+1})\right) \\
		&=\sum_{\substack{\tau\in \mathcal{T}' :\\ \tau(s_{\bar{h}})=\bar{\tau}, \\ \mathbb{P}(\tau|\pi,\alpha,\bar{\tau})>0}} \mathbb{P}(\tau|\pi,\alpha,\bar{\tau})P_h(s_{h+1}|s_h,\alpha_h(\tau))\left( r_h(s_h,s_{h+1})-p_h(s_{h+1})\right) \\
		&=\sum_{\substack{\tau \in \mathcal{T}'_{\pi,\alpha}, : \\ \tau(s_{\bar{h}})=\bar{\tau} }} \mathbb{P}(\tau | \pi, \alpha, \bar{\tau}) \left(r_h(s_{h},s_{h+1}) - p_h(s_{h+1})\right),
	\end{align*}
	where $h=H$.
	
	Now we consider $\bar{h} < H$ and assume that the inductive hypothesis holds for every $h' \ge \bar{h}$.
	Let $\bar{\tau} = (s_1,p_1,a_1,\dots,s_{\bar{h}-1},p_{\bar{h}-1},a_{\bar{h}-1},s_{\bar{h}})$ be an history such that $\mathbb{P}(\bar{\tau}|\pi,\alpha)>0$.
	An argument similar to the one employed for the base inductive step shows that::
	\begin{equation}
		\label{eq:p_dev_value_as_histories_1}
		\begin{split}		
			&\sum_{(p,a) \in \mathcal{X}_\pi} \pi(p,a|\bar{\tau}) \sum_{s' \in \sset} P_{\bar{h}}(s'|s_{\bar{h}},\alpha_{\bar{h}}(\bar{\tau} \oplus (p,a,s')))\left( r_{\bar{h}}(s_{\bar{h}},s') - p(s')\right) \\
			&=\sum_{\substack{\tau\in \mathcal{T}_{\bar{h}+1} :\\ \tau(s_{\bar{h}})=\bar{\tau}, \\ \mathbb{P}(\tau|\pi,\alpha)>0}} \mathbb{P}(\tau|\pi, \alpha, \bar{\tau})\left( r_h(s_h,s_{h+1})-p_h(s_{h+1})\right).
		\end{split}
	\end{equation}
	Furthermore, an argument similar to the one employed by Lemma~\ref{lem:value_rec_alpha} proves that:
	\begin{equation}
		\label{eq:p_dev_value_as_histories_2}
		\begin{split}
			&\sum_{(p,a) \in \mathcal{X}_\pi} \pi(p,a|\bar{\tau}) \sum_{s' \in \sset} P_{\bar{h}}(s'|s_{\bar{h}},\alpha_{\bar{h}}(\bar{\tau} \oplus (p,a,s')))\vprincipal_{\bar{h}+1}(\bar{\tau} \oplus (p,a,s')) \\
			&=\sum_{\substack{\tau \in \mathcal{T}'_{\pi,\alpha} : h \ge \bar{h}+1 \\ \tau(s_{\bar{h}})=\bar{\tau}}}  \hspace{-0.3cm} \mathbb{P}(\tau | \pi, \alpha, \bar{\tau}) \left(r_h(s_{h},s_{h+1}) - p_h(s_{h+1})\right),
		\end{split}
	\end{equation}
	where the summation in the last equality is over the histories that begin with $\bar{\tau}$ and have at least two transactions more (we recall that the last transition of $\tau$ is from state $s_h$ to state $s_{h+1}$).	
	Thus, by combining Equation~\ref{eq:value_as_histories_1} and Equation~\ref{eq:value_as_histories_2}, one can easily verify that:
	\begin{align*}
		\vprincipal_{\bar{h}}(\bar{\tau}) &= \sum_{\tau \in \mathcal{T}'_{\pi,\alpha} : \tau(s_{\bar{h}})=\bar{\tau}} \mathbb{P}(\tau | \pi, \alpha, \bar{\tau}) \left(r_h(s_{h},s_{h+1}) - p_h(s_{h+1})\right).
	\end{align*}
	
	Given this result, we can prove the lemma as follows:
	\begin{align*}
		V^{\text{P},\pi,\alpha} &=\sum_{\tau \in \mathcal{T}'_{\pi,\alpha}} \mathbb{P}(\tau | \pi, \alpha) \left(r_h(s_{h},s_{h+1}) - p_h(s_{h+1})\right) \\
		&=\sum_{s \in \sset} \mu(s) \sum_{\tau \in \mathcal{T}'_{\pi,\alpha} : s_1 = s} \mathbb{P}(\tau | \pi, \alpha, s) \left(r_h(s_{h},s_{h+1}) - p_h(s_{h+1})\right) \\
		&=\sum_{s \in \sset} \mu(s) V^{\text{P},\pi,\alpha}_1(s),
	\end{align*}
	concluding the proof.
\end{proof}

\subsection{Subprocedures of Algorithm~\ref{alg:from_ep_ic_to_ic}}
\label{appendix:subprocedures_ep_ic_to_ic}
Algorithm~\ref{alg:from_ep_ic_to_ic} is divided in three subprocedures that we describe in this section.
The first procedure is $\texttt{Change-Contracts}$ (Algorithm~\ref{alg:from_ep_ic_to_ic_phase_1}).
It takes in input an $\epsilon$-IC promise-form policy $\sigma$ and computes a new promise-form policy $\sigma^1$ that achieves a ``good'' principal's utility when the agent best respond to it.
The algorithm directly applies Lemma~\ref{lem:from_ep_ic_to_pseudo_ic}, changing the contracts of the policy.

\begin{algorithm}[H]
	\caption{\texttt{Change-Contracts}}\label{alg:from_ep_ic_to_ic_phase_1}
	\begin{algorithmic}[1]
		\Require $\epsilon$-IC direct policy $\sigma = \{(I_h,J_h,\varphi_h,g_h)\}_{h \in \mathcal{H}}$.
		\For{$h=H,\dots,1$}\label{line:conversion_block_1}
		\State $I^1_h \gets I_h$
		\ForAll{$s \in \sset, \iota \in I_h(s)$}
		\ForAll{$a \in \mathcal{A}, p \in J_h(s,\iota,a)$}
		\State $p'(s') \gets (1-\sqrt{\epsilon})p(s') + \sqrt{\epsilon}r_h(s,s') \quad \forall s' \in \sset$
		\State $\strut \varphi^1_h(p',a|s,\iota) \gets \varphi_h(p,a|s,\iota)$
		\State $\strut J^1_h(s,\iota,a) \gets \{p'\}$
		\State $g^1_h(s,\iota,p',a,s') \gets g^2_h(s,\iota,p,a,s') \quad \forall s' \in \sset$ 
		\EndFor
		\EndFor
		\EndFor

		\State \textbf{Return} $\sigma^1 = \{I^1_h, J^1_h, \varphi^1_h, g^1_h\}_{h \in \mathcal{H}}$.
	\end{algorithmic}
\end{algorithm}

As a further step, Algorithm~\ref{alg:from_ep_ic_to_ic} employs the procedure $\texttt{Realign-actions}$ (Algorithm~\ref{alg:from_ep_ic_to_ic_phase_2}) to compute a new IC policy.
Algorithm~\ref{alg:from_ep_ic_to_ic_phase_2} takes in input a promise-form policy $\sigma^1$ and builds a new policy $\sigma^2$ by that proposes the same contracts of $\sigma^1$, but recommend IC actions.
This way, the value of $\sigma^2$ becomes $V^{\text{P},\sigma^2} = V^{\text{P},\sigma^1,\alpha}$ for some agent's function incentivized by $\sigma^1$.
In particular, the two policies $\sigma^1$ and $\sigma^2$ share the same sets of promises.
Algorithm~\ref{alg:from_ep_ic_to_ic_phase_2} computes the actual best response $a'$ for every contract proposed by the policy $\sigma^1$ (Line~\ref{line:phase_2_aprime_argmax}), and builds $\sigma^2$ by prescribing the same contract and recommending $a'$.
We remark that while the best way to break ties at Line~\ref{line:phase_2_aprime_argmax} Algorithm~\ref{alg:from_ep_ic_to_ic_phase_2} is to choose the action that maximizes the principal's cumulative utility, the theoretical gauntness of our Algorithm hold regardless of how these ties are broken.

Algorithm~\ref{alg:from_ep_ic_to_ic_phase_2} employs dynamic programming to compute the values of the function $\widehat{V}^{\text{A},\sigma^1}$.
In particular, at Line~\ref{line:phase_2_dynamic_prog} it computes and stores the value of $\widehat{V}^{\text{A},\sigma^1}_{h}(s,\iota)$ by using the values of $\widehat{V}^{\text{A},\sigma^1}_{h+1}$ computed at the previous time step.
Furthermore, the values of $\widehat{V}^{\text{A},\sigma^1}_{h+1}$ are employed at Line~\ref{line:phase_2_aprime_argmax}.

\begin{algorithm}[H]
	\caption{\texttt{Realign-actions}}\label{alg:from_ep_ic_to_ic_phase_2}
	\begin{algorithmic}[1]
		\Require promise-form policy $\sigma^1 = \{(I^1_h,J^1_h,\varphi^1_h,g^1_h)\}_{h \in \mathcal{H}}$.
		\For{$h=H,\dots,1$}
		\State $I^2_h \gets I^1_h$
		\ForAll{$s \in \sset, \iota \in I^1_h(s)$}
		\State $J^2_h(s,\iota,a) \gets \emptyset \quad \forall a \in \mathcal{A}$
		\State Compute and store $\widehat{V}^{\text{A},\sigma^1}_h(s,\iota)$ \label{line:phase_2_dynamic_prog}
		\ForAll{$a \in \mathcal{A}, p \in J^1_h(s,\iota,a)$}
		\State $a' \gets \argmax_{\widehat{a} \in \mathcal{A}} \sum_{s' \in \sset}P_h(s'|s,\widehat{a})\left( r^\text{A}_h(s,p,\widehat{a},s') + \widehat{V}^{\text{A},\sigma^1}_{h+1}(s',g^1_h(s,\iota,p,a,s')) \right)$ \label{line:phase_2_aprime_argmax}
		\State $\strut \varphi^2_h(p,a'| s, \iota) \gets \varphi^1_h(p,a | s,\iota)$ \label{line:phase_2_phi_assignment}
		\State $\strut J^2_h(s,\iota,a') \gets J^2_h(s,\iota,a') \cup \{p\}$
		\ForAll{$s' \in \sset$}
		\State $g^2_h(s,\iota,p,a',s') \gets g^1_h(s,\iota,p,a,s')$ \label{line:phase_2_g_assignment}
		\EndFor
		\EndFor
		\EndFor
		\EndFor
		
		\State \textbf{Return} $\sigma^2 = \{I^2_h, J^2_h, \varphi^2_h, g^2_h\}_{h \in \mathcal{H}}$
	\end{algorithmic}
\end{algorithm}

Thanks to Lemma~\ref{lem:recursive_principal_dev}, we can prove the two main guarantees of Algorithm~\ref{alg:from_ep_ic_to_ic_phase_2}.
The first is that the value of the policy $\sigma^2$ is at least $V^{\text{P},\pi^{\sigma^2}} \ge V^{\text{P},\pi^{\sigma^1},\alpha}$ for some agent's function $\alpha$ incentivized by $\sigma^2$.
The actual incentivized function depend on how ties are broken at Line~\ref{line:phase_2_aprime_argmax} Algorithm~\ref{alg:from_ep_ic_to_ic_phase_2}.
The second important guarantee is that the policy $\sigma^2$ is IC.
These properties are formalized in the following lemma:

\begin{restatable}{lemma}{fromepictoicphase2}
	\label{lem:from_ep_ic_to_ic_phase_2}
	Let $\sigma^1$ be a promise-form policy.
	Then Algorithm~\ref{alg:from_ep_ic_to_ic_phase_2} returns an IC promise-form policy $\sigma^2$ with value $V^{\text{P},\sigma^2} = V^{\text{P},\sigma^1,\alpha}$ for some agent's function $\alpha$ incentivized by $\sigma^1$.
	Furthermore, Algorithm~\ref{alg:from_ep_ic_to_ic_phase_2} runs in time polynomial in the size of $\sigma^1$ and the instance, while $|\mathcal{I}^2| = |\mathcal{I}^1|$ and $|\mathcal{J}^2| = |\mathcal{J}^1|$.
\end{restatable}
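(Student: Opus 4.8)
The plan is to exploit the structural fact that \texttt{Realign-actions} keeps every contract and every promise-transition of $\sigma^1$ intact — indeed $g^2_h(s,\iota,p,a',s') = g^1_h(s,\iota,p,a,s')$ (Line~\ref{line:phase_2_g_assignment}) and only the recommended action is changed to the agent's genuine best response $a'$ (Line~\ref{line:phase_2_aprime_argmax}). Consequently, ``following the recommendations'' under $\sigma^2$ is trajectory-by-trajectory the same as ``deviating optimally'' under $\sigma^1$, and the whole proof reduces to a backward induction over $h \in \mathcal{H}$ that matches the promise-form value functions of the two policies.

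First I would establish incentive compatibility together with the identity linking the agent values. The claim, proved by induction from $h=H$ downward, is that for every $h$, state $s$, and promise $\iota \in I^2_h(s)=I^1_h(s)$,
\[
V^{\textnormal{A},\sigma^2}_h(s,\iota) = \widehat{V}^{\textnormal{A},\sigma^2}_h(s,\iota) = \widehat{V}^{\textnormal{A},\sigma^1}_h(s,\iota).
\]
The base case is immediate because $I_{H+1}(\cdot)=\{0\}$ kills the continuation and $a'$ maximizes the immediate agent utility. For the inductive step I would expand $Q^{\textnormal{A},\sigma^2}_h(s,\iota,p,a')$, use $g^2=g^1$ together with the inductive hypothesis to replace the continuation $V^{\textnormal{A},\sigma^2}_{h+1}$ by $\widehat{V}^{\textnormal{A},\sigma^1}_{h+1}$, and then invoke the defining maximization at Line~\ref{line:phase_2_aprime_argmax}: since $a'$ attains that maximum, $Q^{\textnormal{A},\sigma^2}_h(s,\iota,p,a') = \widehat{Q}^{\textnormal{A},\sigma^1}_h(s,\iota,p,a) = \widehat{Q}^{\textnormal{A},\sigma^2}_h(s,\iota,p,a')$, so the recommended action exactly realizes the deviation value. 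Thus the IC inequality holds with equality at every decision point, and averaging over the probability-preserving recommendation weights $\varphi^2_h(p,a'|s,\iota)=\varphi^1_h(p,a|s,\iota)$ closes the induction and shows $\sigma^2$ is IC.

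Next I would handle the principal's value. The tie-breaking of Line~\ref{line:phase_2_aprime_argmax}, read through Lemma~\ref{lem:promise_functions_eq}, defines a genuine agent's function $\alpha$ incentivized by $\sigma^1$: at any history reaching state $s$ with promise $\iota$ and offered pair $(p,a)$, $\alpha$ plays the best response $a'$. Since $\sigma^2$ recommends exactly this $a'$, proposes the same contract $p$, transitions with $P_h(\cdot\mid s,a')$, and updates the promise by the same $g$, the principal's reward is identical along matching trajectories. Formally I would prove by backward induction that the promise-form principal value $V^{\textnormal{P},\sigma^2}_h(s,\iota)$ equals the recursive quantity $V^{\textnormal{P},\sigma^1,\alpha}_h$ at $(s,\iota)$, again using $g^2=g^1$ and the fact that $a'$ governs both the transition and the reward; summing against $\mu$ and applying Lemma~\ref{lem:recursive_principal_dev} yields $V^{\textnormal{P},\sigma^2} = V^{\textnormal{P},\sigma^1,\alpha}$.

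The size and complexity claims are then direct bookkeeping. The assignment $I^2_h \leftarrow I^1_h$ gives $\mathcal{I}^2=\mathcal{I}^1$, while each contract of $\sigma^1$ is retained and merely re-labelled with the new action $a'$ (none created, none deleted), so $\mathcal{J}^2=\mathcal{J}^1$; hence $|\mathcal{I}^2|=|\mathcal{I}^1|$ and $|\mathcal{J}^2|=|\mathcal{J}^1|$. The algorithm loops over $h\in\mathcal{H}$, states, promises in $\mathcal{I}^1$, and contract--action pairs, performing at each point a maximization over $\mathcal{A}$ and a sum over $\sset$ with $\widehat{V}^{\textnormal{A},\sigma^1}_{h+1}$ already tabulated by the dynamic program of Line~\ref{line:phase_2_dynamic_prog}, so the total cost is polynomial in the size of $\sigma^1$ and the instance. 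The main obstacle I expect is the simultaneous induction in the incentive-compatibility step: one must carry $V^{\textnormal{A},\sigma^2}$ and $\widehat{V}^{\textnormal{A},\sigma^2}$ alongside $\widehat{V}^{\textnormal{A},\sigma^1}$ so that the continuation value appearing inside the maximization (which is computed for $\sigma^1$, not $\sigma^2$) may legitimately be substituted for the one governing $\sigma^2$, and verifying that all three functions collapse to a single identity at every level is the delicate point.
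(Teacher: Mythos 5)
Your proposal is correct and follows essentially the same route as the paper's proof: a backward induction on the promise-form value functions exploiting that \texttt{Realign-actions} preserves contracts, probabilities, and promise transitions ($g^2=g^1$, $\varphi^2_h(p,a'|s,\iota)=\varphi^1_h(p,a|s,\iota)$) while the argmax at Line~\ref{line:phase_2_aprime_argmax} forces $Q^{\textnormal{A},\sigma^2}_h(s,\iota,p,a')=\widehat{Q}^{\textnormal{A},\sigma^1}_h(s,\iota,p,a)=\widehat{Q}^{\textnormal{A},\sigma^2}_h(s,\iota,p,a')$, followed by Lemma~\ref{lem:recursive_principal_dev} for the principal's value. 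The only cosmetic difference is that you package the agent-side identities as one simultaneous induction where the paper runs two sequential ones, which is the delicate point you correctly flag and which your argument handles.
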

\begin{proof}
	In the following we let $\pi^1 = \pi^{\sigma^1}$ and $\pi^2 = \pi^{\sigma^2}$.
	We first prove that $V^{\text{P},\sigma^2} = V^{\text{P},\sigma^1,\alpha}$, then we will show that $\sigma^2$ is IC.
	
	We observe that the action $a'=a'(s,\iota,p,a)$ computed at Line~\ref{line:phase_2_aprime_argmax} maximizes the agent's cumulative utility from state $s \in \sset$ and promise $\iota \in I_h(s)$ when the principal plays the policy $\sigma^1$.
	Thus, there exists an agent's function $\alpha$ incentivized by $\pi^1$ such that $\alpha_h(\tau) = a'(s_h,\iota^{\sigma^1}_{\tau(s_h)},p_h,a_h)$ for every history $\tau = (s_1,p_1,a_1,\dots,s_{h},p_{h},a_{h},s_{h+1})$.
	
	Thanks to the particular structure of $\alpha$, the actions played according to $\alpha$ depend only on the promises and states.
	Let us define, with an abuse of notation, the function $V^{\text{P},\sigma^1,\alpha}_h : \sset \times \mathcal{I}^1 \rightarrow \mathbb{R}$ as:
	\begin{equation*}
		V^{\text{P},\sigma^1,\alpha}_h (s,\iota) \coloneqq \hspace{-0.5cm} \sum_{(p,a) \in \mathcal{X}_{\sigma^1}} \hspace{-0.5cm} \varphi^1_h(p,a|s,\iota) \sum_{s' \in \sset} P_h(s'|s,a') \left( r_h(s,s') - p(s') + V^{\text{P},\sigma^1}_{h+1}(s',g^1(s,\iota,p,a,s'))\right),
	\end{equation*}
	where $a'=a'(s,\iota,p,a)$.
	Intuitively, $V^{\text{P},\sigma^1,\alpha}_h$ is ``equivalent'' to the function $V^{\text{P},\pi^1,\alpha}_h$, but it is written leveraging the promise-form representation.
	With an argument similar to the one employed by Lemma~\ref{lem:promise_functions_eq}, one can show that for every step $\bar{h} \in \mathcal{H}$ and history $\bar{\tau} = (s_1,p_1,a_1,\dots,s_{\bar{h}-1},p_{\bar{h}-1},a_{\bar{h}-1},s_{\bar{h}}) \in \mathcal{T}_{\bar{h}}$, it holds that:
	\begin{equation}
		\label{eq:v_sigma_1_alpha_eq_p1_alpha}
		V^{\text{P},\pi^1,\alpha}_{\bar{h}}(\bar{\tau}) = V^{\text{P},\sigma^1,\alpha}_{\bar{h}} (s,\iota),
	\end{equation} 
	where $s=s_{\bar{h}}$ and $\iota = \iota^{\sigma^1}_{\bar{\tau}}$.
	
	Now we prove by induction that $V^{\text{P},\sigma^2}_h(s,\iota) = V^{\text{P},\sigma^1,\alpha}_h(s,\iota)$ for every step $h \in \mathcal{H}$, state $s \in \sset$ and promise $\iota \in I^1_h(s)$ (observe that $I^1_h(s) = I^2_h(s)$ by construction).
	The base step is $h=H$:
	\begin{align*}
		V^{\text{P},\sigma^2}_H(s,\iota) &= \hspace{-0.3cm} \sum_{\substack{(p,a') : \\a' \in \mathcal{A}, \\ p \in J^2_H(s,\iota,a')}} \hspace{-0.3cm} \varphi^2_H(p,a'|s,\iota) \sum_{s' \in \sset} P_H(s'|s,a') \left( r_H(s,s') - p(s') \right) \\
		&=\hspace{-0.3cm} \sum_{\substack{(p,a) : \\a \in \mathcal{A}, \\ p \in J^1_H(s,\iota,a)}}\hspace{-0.3cm}  \varphi^2_H(p,a'(s,\iota,p,a)|s,\iota) \sum_{s' \in \sset} P_H(s'|s,a'(s,\iota,p,a)) \left( r_H(s,s') - p(s') \right) \\
		&=\hspace{-0.3cm} \sum_{\substack{(p,a) : \\a \in \mathcal{A}, \\ p \in J^1_H(s,\iota,a)}}\hspace{-0.3cm} \varphi^1_H(p,a|s,\iota) \sum_{s' \in \sset} P_H(s'|s,a'(s,\iota,p,a)) \left( r_H(s,s') - p(s') \right) = V^{\text{P},\sigma^1,\alpha}_H(s,\iota),
	\end{align*}
	where the second equality holds by construction, as for every action $a' \in \mathcal{A}$ and contract $p \in J^2_H(s,\iota,a')$ there is another action $a \in \mathcal{A}$ such that $a'=a'(s,\iota,p,a)$ $J^2_H(s,\iota,a') = J^1_H(s,\iota,a)$, while the third equality holds thanks to Line~\ref{line:phase_2_phi_assignment} Algorithm~\ref{alg:from_ep_ic_to_ic_phase_2}.
	
	Now we consider a step $h<H$ and assume that $V^{\text{P},\sigma^2}_{h+1}(s',\iota') = V^{\text{P},\sigma^1,\alpha}_{h+1}(s',\iota')$ for every state $s' \in \sset$ and promise $\iota' \in I^1_{h+1}(s')=I^2_{h+1}(s')$.
	Then, for every state $s \in \sset$ and promise $\iota \in I^1_h(s)$ the following holds:
	\begin{align*}
		&V^{\text{P},\sigma^2}_h(s,\iota) \\
		&= \hspace{-0.5cm} \sum_{\substack{(p,a') : \\a' \in \mathcal{A}, \\ p \in J^2_h(s,\iota,a')}} \hspace{-0.5cm} \varphi^2_h(p,a'|s,\iota) \sum_{s' \in \sset} P_h(s'|s,a') \left( r_h(s,s') - p(s') + V^{\text{P},\sigma^2}_{h+1}(s',g^2(s,\iota,p,a',s')) \right) \\
		&= \hspace{-0.5cm} \sum_{\substack{(p,a) : \\a \in \mathcal{A}, \\ p \in J^1_h(s,\iota,a)}} \hspace{-0.5cm} \varphi^1_h(p,a|s,\iota) \sum_{s' \in \sset} P_h(s'|s,a') \left( r_h(s,s') - p(s') + V^{\text{P},\sigma^2}_{h+1}(s',g^2(s,\iota,p,a',s')) \right) \\
		&= \hspace{-0.5cm} \sum_{\substack{(p,a) : \\a \in \mathcal{A}, \\ p \in J^1_h(s,\iota,a)}} \hspace{-0.5cm} \varphi^1_h(p,a|s,\iota) \sum_{s' \in \sset} P_h(s'|s,a') \left( r_h(s,s') - p(s') + V^{\text{P},\sigma^2}_{h+1}(s',g^1(s,\iota,p,a,s')) \right) \\
		&= \hspace{-0.5cm} \sum_{\substack{(p,a) : \\a \in \mathcal{A}, \\ p \in J^1_h(s,\iota,a)}} \hspace{-0.5cm} \varphi^1_h(p,a|s,\iota) \sum_{s' \in \sset} P_h(s'|s,a') \left( r_h(s,s') - p(s') + V^{\text{P},\sigma^1,\alpha}_{h+1}(s',g^1(s,\iota,p,a,s')) \right) \\
		&= V^{\text{P},\sigma^1,\alpha}_h(s,\iota).
	\end{align*}
	where the second equality holds, with $a' = a'(s,\iota,p,a)$, by means of the same argument employed in the base step $h=H$, while the third one holds since $g^1(s,\iota,p,a,s') = g^2(s,\iota,p,a',s')$ by construction (see Line~\ref{line:phase_2_g_assignment} Algorithm~\ref{alg:from_ep_ic_to_ic_phase_2}), and the second to last one thanks to the inductive hypothesis.
	
	As a result, we can show that $V^{\text{P},\pi^2} \ge V^{\text{P},\pi^1,\alpha}$, as:
	\begin{align*}
		V^{\text{P},\pi^2} &= \sum_{s \in \sset} \mu(s) V^{\text{P},\pi^2}_1(s) \\
		&=\sum_{s \in \sset} \mu(s) V^{\text{P},\sigma^2}_1(s,i^2(s)) \\
		&=\sum_{s \in \sset} \mu(s) V^{\text{P},\sigma^1,\alpha}_1(s,i^1(s)) \\
		&=\sum_{s \in \sset} \mu(s) V^{\text{P},\pi^1,\alpha}_1(s) = V^{\text{P},\pi^1,\alpha},
	\end{align*}
	where the first equality holds by definition, the second one thanks to Lemma~\ref{lem:promise_functions_eq}, the third one thanks to the result proved by induction above and because $\sigma^1$ and $\sigma^2$ share the initial promises, the second to last one holds according to Equation~\ref{eq:v_sigma_1_alpha_eq_p1_alpha}, and the last one by Lemma~\ref{lem:recursive_principal_dev}.
	
	In order to conclude the proof, we show that the policy $\sigma^2$ is IC.
	
	First, we show that $V^{\text{A},\sigma^2}_h(s,\iota) = \widehat{V}^{\text{A},\sigma^1}_h(s,\iota)$ for every step $h \in \mathcal{H}$, state $s \in \ss$ and promise $\iota \in I^2_h(s)=I^1_h(s)$. 
	Consider any action $a \in \mathcal{A}$ and contract $p \in J^1_h(s,\iota,a)$, and let $a' = a'(s,\iota,p,a)$
	Then it holds that:
	\begin{align}
		Q^{\text{A},\sigma^2}_h(s,\iota,p,a') &= \sum_{s' \in \sset} P_h(s'|s,a') \left( r^\text{A}_h(s,p,a',s') + V^{\text{A},\sigma^2}_{h+1}(s',g^2_h(s,\iota,p,a',s')) \right) \nonumber \\
		&= \sum_{s' \in \sset} P_h(s'|s,a') \left( r^\text{A}_h(s,p,a',s') + V^{\text{A},\sigma^2}_{h+1}(s',g^1_h(s,\iota,p,a,s')) \right) \nonumber \\
		&=\widehat{Q}^{\text{A},\sigma^1}_h(s,\iota,p,a) \label{eq:q_sigma_2_to_widehat_q_sigma_1}
	\end{align}
	where the second equality holds because $g^2_h(s,\iota,p,a',s') = g^1_h(s,\iota,p,a,s')$ (Line~\ref{line:phase_2_g_assignment}), while the last equality holds by construction given the definition of $a'$ (Line~\ref{line:phase_2_aprime_argmax}).
	Consequently:
	\begin{align*}
		V^{\text{A},\sigma^2}_h(s,\iota) &= \sum_{\substack{(p,a') : \\a' \in \mathcal{A}, \\ p \in J^2_H(s,\iota,a')}} \varphi^2_h(s,\iota|p,a') Q^{\text{A},\sigma^2}_h(s,\iota,p,a') \\
		&=\sum_{\substack{(p,a') : \\a \in \mathcal{A}, \\ p \in J^1_H(s,\iota,a)}} \varphi^2_h(s,\iota|p,a'(s,\iota,p,a)) Q^{\text{A},\sigma^2}_h(s,\iota,p,a'(s,\iota,p,a)) \\
		&=\sum_{\substack{(p,a') : \\a \in \mathcal{A}, \\ p \in J^1_H(s,\iota,a)}} \varphi^2_h(s,\iota|p,a'(s,\iota,p,a)) \widehat{Q}^{\text{A},\sigma^1}_h(s,\iota,p,a)  \\
		&=\sum_{\substack{(p,a') : \\a \in \mathcal{A}, \\ p \in J^1_H(s,\iota,a)}} \varphi^1_h(s,\iota|p,a') \widehat{Q}^{\text{A},\sigma^1}_h(s,\iota,p,a) = \widehat{V}^{\text{A},\sigma^1}_h(s,\iota),
	\end{align*}
	where the second equality holds by construction, as for every action $a' \in \mathcal{A}$ and contract $p \in J^2_H(s,\iota,a')$ such that $\varphi^2_h(s,\iota|p,a') > 0$ there is a (possibly different) action $a \in \mathcal{A}$ such that $a'=a'(s,\iota,p,a)$ and $J^2_H(s,\iota,a') = J^1_H(s,\iota,a)$, while the last one holds because $\varphi^2_h(s,\iota|p,a'(s,\iota,p,a)) = \varphi^1_h(s,\iota|p,a)$.
	
	Now we prove by induction that $Q^{\text{A},\sigma^2}_h(s,\iota,p,a') = Q^{\text{A},\sigma^2}_h(s,\iota,p,a')$ for every step $h \in \mathcal{H}$, state $s \in \sset$, promise $\iota \in I^2_h(s)$, and pair $(p,a) \in \mathcal{X}$ such that $\varphi^2_h(p,a'|s,\iota)>0$.
	As $\varphi^2_h(p,a'|s,\iota)>0$, there exists some action $a \in \mathcal{A}$ such that $a =a'(s,\iota,p,a)$, and thus Equation~\ref{eq:q_sigma_2_to_widehat_q_sigma_1} holds.
	The base step is $h=H$.
	\begin{align*}
		\widehat{Q}^{\text{A},\sigma^2}_h(s,\iota,p,a') &= \max_{\widehat{a} \in A} \sum_{s' \in \sset} P_h(s'|s,\widehat{a}) \left( r^\text{A}_h(s,p,\widehat{a},s')\right) \\
		&= \widehat{Q}^{\text{A},\sigma^1}_h(s,\iota,p,a) =Q^{\text{A},\sigma^2}_h(s,\iota,p,a'),
	\end{align*}
	where the first two equalities hold by definition, and the last one according to Equation~\ref{eq:q_sigma_2_to_widehat_q_sigma_1}.
	Now we consider $h<H$ we assume that the base step holds for the step $h+1$.
	Then:
	\begin{align*}
		\widehat{Q}^{\text{A},\sigma^2}_h(s,\iota,p,a') &= \max_{\widehat{a} \in A} \sum_{s' \in \sset} P_h(s'|s,\widehat{a}) \left( r^\text{A}_h(s,p,\widehat{a},s') + \widehat{V}^{\text{A},\sigma^2}_{h+1}(s',g^2_h(s,\iota,p,a',s'))\right) \\
		&=\max_{\widehat{a} \in A} \sum_{s' \in \sset} P_h(s'|s,\widehat{a}) \left( r^\text{A}_h(s,p,\widehat{a},s') + V^{\text{A},\sigma^2}_{h+1}(s',g^2_h(s,\iota,p,aì,s'))\right) \\
		&=Q^{\text{A},\sigma^2}_h(s,\iota,p,a'),
	\end{align*}
	where the second equality holds thanks to the inductive hypothesis.
	
	Thanks to the result above, we can prove that $\sigma^2$ is IC.
	Let $h \in \mathcal{H}$, $s \in \sset$, $\tau \in \mathcal{T}_h$ ending in state $s$, and consider a pair $(p,a) \in \mathcal{X}$ such that $\pi^2(p,a|\tau)>0$.
	Furthermore, let $\iota = \iota^{\sigma^2}_\tau$. 
	Then the following holds:
	\begin{equation*}
		Q^{\text{A},\pi^2}_h(\tau,p,a) = Q^{\text{A},\sigma^2}_h(s,\iota,p,a) = \widehat{Q}^{\text{A},\sigma^2}_h(s,\iota,p,a) = \widehat{Q}^{\text{A},\pi^2}_h(s,\iota,\tau),
	\end{equation*}
	where the first and last equalities hold thanks to Lemma~\ref{lem:promise_functions_eq}.
	As a result, the policy $\pi^2$ that implements $\sigma^2$ is IC.
	
	As a further step, we observe that $\sigma^2$ uses the same set of promises of $\sigma^1$. 
	Furthermore, every contract $p \in \mathcal{J}^2$ is added to some set $J_h(s,\iota,a')$ for some $h \in \mathcal{H}$, $s \in \sset$, $\iota \in I^3_h(s)$ and $a \in \mathcal{A}$.
	It is also easy to verify that Algorithm~\ref{alg:from_ep_ic_to_ic_phase_2} runs in time polynomial in the sizes of the instance and $\sigma^1$. 
\end{proof}

The last subprocedure of Algorithm~\ref{alg:from_ep_ic_to_ic} is provided in 
Algorithm~\ref{alg:from_ep_ic_to_ic_phase_3}.
While $\sigma^2$ is IC, it is not honest.
Indeed, it keeps the promises of the original policy $\sigma$, which work correctly as ``internal states'' of the policy, but have no semantic meaning.
In order to get some economic insights on the inner workings of our final policy, Algorithm~\ref{alg:from_ep_ic_to_ic_phase_3} changes the promises of policy $\sigma^2$ in order to build a policy $\sigma^3$ that is honest.
Intuitively it maps each promise $\iota_2$ of $\sigma^2$ to a new honest promise $\iota_3$.
In particular, for every step $h \in \mathcal{H}$, state $s \in \sset$ and promise $\iota_2 \in I^2_h(s)$, the new honest promise is $\iota_3 = V^{\text{A},\sigma^2}_h(s,\iota_2)$ (we recall that since $\sigma^2$ is IC, it holds that $V^{\text{A},\sigma^2}_h = \widehat{V}^{\text{A},\sigma^2}_h$).
The other functions of the policy $\sigma^3$ are defined in order to preserve this mapping, \emph{i.e.}, $\varphi^3_h(s,\iota_3) \coloneqq \varphi^2_h(s,\iota_2)$ and $g^3_h(s,\iota_3,p,a,s') \coloneqq V^{\text{A},\sigma^2}_h(s',g^2(s,\iota_2,p,a,s'))$.

However, it may happen that two different promises $\iota_2,\widetilde{\iota}_2 \in I^2_h(s)$ are mapped to the same new promise $\iota_3 = V^{\text{A},\sigma^2}_h(s,\iota_2) = V^{\text{A},\sigma^2}_h(s,\widetilde{\iota}_2)$.
To solve this ``conflict'', we define the functions $\varphi^3_h(s,\iota_3)$ and $g^3_h(s,\iota_3,p,a,s')$ considering the old promise that achieves the largest principal value, \emph{i.e.} $\varphi^3_h(s,\iota_3) \coloneqq \varphi^2_h(s,\iota_2)$ and $g^3_h(s,\iota_3,p,a,s') \coloneqq V^{\text{A},\sigma^2}_h(s',g(s,\iota_2,p,a,s'))$ if they provide the largest principal value, or the same using $\widetilde{\iota}_2$ in place of $\iota_2$ otherwise.
The value that the principal achieves by using an old promise $\iota_2 \in I^2_h(s)$ to solve a conflict in $\iota_3 \in I^3_h(s)$ is:
\begin{equation}
	\label{eq:realign_promises_conflict_value}
	G_h(s,\iota_2,\iota_3) \coloneqq \hspace{-0.35cm} \sum_{(p,a) \in \mathcal{X}_{\sigma^2}} \hspace{-0.35cm} \varphi^2_h(p,a|s,\iota_2) \sum_{s' \in \sset} P_h(s'|s,a)\left(r^\text{P}_h(s,p,s') + V^{\text{P},\sigma^3}_{h+1}(s',\iota') \right),
\end{equation} 
with $\iota' = V^{\text{A},\sigma^2}_{h+1}(s',g^2_h(s,\iota_2,p,a,s'))$.

Practically, Algorithm~\ref{alg:from_ep_ic_to_ic_phase_3} makes use of two dictionaries for each time step, $C_h : \sset \times \mathcal{I}^3 \rightarrow \mathcal{I}^2$ and $G_h : \sset \times \mathcal{I}^3 \times \mathcal{I}^2 \rightarrow \mathbb{R}$.
The first stores, for every state $s \in \sset$ and promise $\iota_3 \in \mathcal{I}^3_h(s)$, the set of old promises $C_h(s,\iota_3) \subseteq I^2_h(s)$ that have been mapped into $\iota_3$.
When $|C_h(s,\iota_3)|>1$, a conflict arises.
To solve it, Algorithm~\ref{alg:from_ep_ic_to_ic_phase_3} employs the dictionary $G_h$, which is filled according to Equation~\ref{eq:realign_promises_conflict_value}.

Algorithm~\ref{alg:from_ep_ic_to_ic_phase_3} employs dynamic programming to compute the values of the functions $V^{\text{A},\sigma^2}_h$ and $V^{\text{P},\sigma^3}_h$.
In particular, observe that when $G_h(s,\iota_3,\iota_2)$ is set at Line~\ref{line:realign_promises_conflict_value}, the policy $\sigma^3$ has been already computed for the step $h+1$, thus making it possible to compute $V^{\text{P},\sigma^3}_{h+1}(s',V^{\text{A},\sigma^2}_{h+1}(g^2_h(s,\iota_2,p,a,s')))$.

\begin{algorithm}[H]
	\caption{\texttt{Realign-promises}}\label{alg:from_ep_ic_to_ic_phase_3}
	\begin{algorithmic}[1]
		\Require IC promise-form policy $\sigma = \{(I^2_h,J^2_h,\varphi^2_h,g^2_h)\}_{h \in \mathcal{H}}$.
		
		\State Initialize dictionaries $C_h(s,\iota_3)$ and $G_h(s,\iota_3,\iota_2)$ to empty.
		\State \Comment{$C_h(s,\iota_3) \subseteq I^2_h(s)$ is the set of old promises mapped to the new promise $\iota'$}
		\State \Comment{$G_h(s,\iota_3,\iota_2)$ is the value that the principal gets if it copies the policy $\sigma^2$ with promise $\iota_2$ when the new promise is $\iota_3$.}
		
		\For{$h=H,\dots,1$}
		\ForAll{$s \in \sset$}
		\ForAll{$\iota_2 \in I^2_h(s)$}
		\State Compute and store $V^{\text{A},\sigma^2}_h(s,\iota_2)$
		\State $\iota_3 \gets V^{\text{A},\sigma^2}_h(s,\iota_2)$
		\State $C_h(s,\iota_3) \gets C_h(s,\iota_3) \cup \{\iota_2\}$
		\State $G_h(s,\iota_3,\iota_2) \gets $the value of Equation~\ref{eq:realign_promises_conflict_value}. \label{line:realign_promises_conflict_value}
		\State $I^3_h(s) \gets I^3_h(s) \cup \{\iota_3\}$
		\EndFor
		\ForAll{$\iota_3 \in I^3_h(s)$}
		\State $\iota_2 \gets \argmax_{\widetilde{\iota}_2 \in C_h(s,\iota_3)}G_h(s,\iota_3,\widetilde{\iota}_2)$ \label{line:phase_3_argmax}
		\State $\strut \varphi^3_h(p,a|s,\iota_3) \gets \varphi^2_h(p,a|s,\iota_2) \quad \forall a \in \mathcal{A}, p \in J^2_h(s,\iota_2,a)$ \label{line:phase_3_phi_assignment}
		\State $\strut J^3_h(s,\iota_3,a) \gets J^2_h(s,\iota_2,a) \quad \forall a \in \mathcal{A}$
		\State $g^3_h(s,\iota_3,p,a,s') \gets V^{\text{A},\sigma^2}_{h+1}(s',g^2_h(s,\iota_2,p,a,s')) \quad \forall a \in \mathcal{A}, p \in J^2_h(s,\iota_2,a), s' \in \sset$ \label{line:phase_3_g_assignment}
		\State Compute and store $V^{\text{P},\sigma^3}_h(s,\iota_3)$
		\EndFor
		\EndFor
		\EndFor
		
		\State \textbf{Return} $\sigma3 = \{I^3_h, J^3_h, \varphi^3_h, g^3_h\}_{h \in \mathcal{H}}$
	\end{algorithmic}
\end{algorithm}

\begin{restatable}{lemma}{fromepictoicphase3}
	\label{lem:from_ep_ic_to_ic_phase_3}
	Let $\sigma^2$ be an IC promise-form policy.
	Then Algorithm~\ref{alg:from_ep_ic_to_ic_phase_3} returns an IC and honest promise-form policy $\sigma^3$ with value $V^{\text{P},\sigma^3} \ge V^{\text{P},\sigma^2}$.
	Furthermore, Algorithm~\ref{alg:from_ep_ic_to_ic_phase_3} runs in time polynomial in $|\mathcal{J}^2|$, $|\mathcal{I}^2|$ and the instance size, while $|\mathcal{J}^3| \le |\mathcal{J}^2|$ and $\mathcal{I}^3 \le \mathcal{I}^2$.
\end{restatable}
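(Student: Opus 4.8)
The plan is to exploit the fact that Algorithm~\ref{alg:from_ep_ic_to_ic_phase_3} rebuilds $\sigma^3$ so that, at every state $s$ and new promise $\iota_3 \in I^3_h(s)$, it literally copies the contracts, recommendations and transitions of $\sigma^2$ at a single chosen old promise $\iota_2^\ast = \iota_2^\ast(h,s,\iota_3)$ (the $\argmax$ selected at Line~\ref{line:phase_3_argmax}), the only difference being that promise labels are replaced by the true agent values $V^{\text{A},\sigma^2}$. I would therefore establish the three qualitative claims by backward induction on $h$, treating honesty and incentive compatibility together through a single relabeling invariant, and handling the value inequality through a separate ``max-domination'' induction. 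Throughout, I use that $\sigma^2$ is IC, so $\widehat{V}^{\text{A},\sigma^2}_{h}=V^{\text{A},\sigma^2}_{h}$, and that the target $g^2_h(s,\iota_2,p,a,s')$ lies in $I^2_{h+1}(s')$, whose image under $\iota \mapsto V^{\text{A},\sigma^2}_{h+1}(s',\iota)$ is exactly the new promise used by $g^3_h$.

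For honesty and IC, I would first prove the invariant $V^{\text{A},\sigma^3}_h(s,\iota_3) = \iota_3$ for every $h,s,\iota_3 \in I^3_h(s)$ by backward induction. The base case $h=H$ uses $I^3_{H+1}(s')=\{0\}$ together with $\iota_3 = V^{\text{A},\sigma^2}_H(s,\iota_2^\ast)$; in the inductive step, $g^3_h$ sends $(p,a,s')$ to $\iota' \coloneqq V^{\text{A},\sigma^2}_{h+1}(s',g^2_h(s,\iota_2^\ast,p,a,s')) \in I^3_{h+1}(s')$, so the hypothesis gives $V^{\text{A},\sigma^3}_{h+1}(s',\iota') = \iota'$, collapsing the recursion for $\sigma^3$ into the very recursion defining $V^{\text{A},\sigma^2}_h(s,\iota_2^\ast)=\iota_3$. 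Read as a one-step identity, the same computation shows that the honesty expression in the definition of $\eta$-honesty equals $\iota_3$ exactly, so $\sigma^3$ is $0$-honest. For IC I would then verify the local constraint of Equation~\ref{eq:local_ic_constr} pointwise: since $g^3_h(s,\iota_3,p,a,s') = V^{\text{A},\sigma^2}_{h+1}(s',g^2_h(s,\iota_2^\ast,p,a,s'))$, the left-hand side of Equation~\ref{eq:local_ic_constr} equals $Q^{\text{A},\sigma^2}_h(s,\iota_2^\ast,p,a)$, while its right-hand side is at most $\widehat{Q}^{\text{A},\sigma^2}_h(s,\iota_2^\ast,p,a)$ (using $\widehat{V}^{\text{A},\sigma^2}_{h+1}=V^{\text{A},\sigma^2}_{h+1}$); the IC inequality for $\sigma^2$ then delivers Equation~\ref{eq:local_ic_constr}. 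Applying Lemma~\ref{lem:local_ic_constr} with $\eta=0$ concludes that $\sigma^3$ is IC.

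For the value guarantee I would prove by backward induction the stronger claim
\begin{equation*}
	V^{\text{P},\sigma^3}_h(s,\iota_3) \ \ge\ \max_{\iota_2 \in C_h(s,\iota_3)} V^{\text{P},\sigma^2}_h(s,\iota_2).
\end{equation*}
By construction of $\varphi^3,g^3$ the value the algorithm stores at $(s,\iota_3)$ equals $G_h(s,\iota_3,\iota_2^\ast)$ from Equation~\ref{eq:realign_promises_conflict_value}; at $h=H$ the continuation term vanishes, giving $G_H(s,\iota_3,\iota_2)=V^{\text{P},\sigma^2}_H(s,\iota_2)$ and the base case with equality. In the step, for any $\iota_2 \in C_h(s,\iota_3)$ the continuation promise $\iota'=V^{\text{A},\sigma^2}_{h+1}(s',g^2_h(s,\iota_2,p,a,s'))$ satisfies $g^2_h(s,\iota_2,p,a,s') \in C_{h+1}(s',\iota')$, so the induction hypothesis yields $V^{\text{P},\sigma^3}_{h+1}(s',\iota') \ge V^{\text{P},\sigma^2}_{h+1}(s',g^2_h(s,\iota_2,p,a,s'))$; substituting term by term gives $G_h(s,\iota_3,\iota_2) \ge V^{\text{P},\sigma^2}_h(s,\iota_2)$, and the $\argmax$ at Line~\ref{line:phase_3_argmax} then proves the displayed inequality. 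At $h=1$ the unique initial promise of $\sigma^2$ creates no conflict, hence $V^{\text{P},\sigma^3}_1(s,i^3(s)) \ge V^{\text{P},\sigma^2}_1(s,i^2(s))$, and weighting by $\mu$ yields $V^{\text{P},\sigma^3} \ge V^{\text{P},\sigma^2}$.

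The remaining bounds are immediate: the surjection $\iota_2 \mapsto V^{\text{A},\sigma^2}_h(s,\iota_2)$ from $I^2_h(s)$ onto $I^3_h(s)$ gives $|I^3_h(s)| \le |I^2_h(s)|$ and hence $|\mathcal{I}^3| \le |\mathcal{I}^2|$, while every contract placed in $\sigma^3$ is copied verbatim from $\sigma^2$, so $\mathcal{J}^3 \subseteq \mathcal{J}^2$; processing $\mathcal{O}(H|\sset||\mathcal{I}^2|)$ promise entries and evaluating each value function and each $G_h$ by dynamic programming over already-filled tables runs in time polynomial in $|\mathcal{I}^2|$, $|\mathcal{J}^2|$ and the instance size. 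I expect the value-monotonicity induction to be the main obstacle: one must argue that resolving a promise ``conflict'' by the principal-optimal old promise never loses value, which rests on the slightly subtle observation that the continuation old promise $g^2_h(s,\iota_2,\cdots)$ always lands in the conflict class of the continuation new promise, allowing the max-domination to propagate backward in lockstep with the recursive definition of $V^{\text{P},\sigma^3}$.
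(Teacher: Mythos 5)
Your proposal is correct and follows essentially the same route as the paper's proof: a backward ``max-domination'' induction over conflict classes for the value inequality (using that the continuation promise $g^2_h(s,\iota_2,p,a,s')$ lands in $C_{h+1}(s',\iota')$), a direct relabeling computation for honesty, the local constraint of Lemma~\ref{lem:local_ic_constr} combined with the IC of $\sigma^2$ for incentive compatibility, and the surjection $\iota_2 \mapsto V^{\text{A},\sigma^2}_h(s,\iota_2)$ for the size bounds. The only cosmetic difference is that you first establish the invariant $V^{\text{A},\sigma^3}_h(s,\iota_3)=\iota_3$ before reading off honesty, whereas the paper verifies the one-step honesty identity directly.
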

\begin{proof}
	As a first step, we observe that for every step $h \in \mathcal{H}$ and state $s \in \ss$, by construction:
	\begin{equation*}
		I^3_h(s) = \{\iota_3 \in \mathbb{R} \mid \iota_3 = V^{\text{A},\sigma^2}_h(s,\iota_2), \iota \in I^2_h(s) \}.
	\end{equation*}
	and:
	\begin{equation*}
		C_h(s,\iota_3) = \{\iota_2 \in I^2_h(s) \mid \iota_3 = V^{\text{A},\sigma^2}_h(s,\iota_2) \} \quad \forall \iota_3 \in I^3_h(s),
	\end{equation*}
	where $V^{\text{A},\sigma^2}_h = \widehat{V}^{\text{A},\sigma^2}_h$ as $\sigma^2$ is IC.
	Furthermore, $\mathcal{X}_{\sigma^3} \subseteq \mathcal{X}_{\sigma^2}$, as $\sigma^3$ prescribes only pairs $(p,a) \in \mathcal{X}$ that are prescribed also by $\sigma^2$.
	
	Now we will first show that $V^{\text{P},\sigma^3} \ge V^{\text{P},\sigma^2}$, and then that the policy $\sigma^3$ is both honest and IC. 
	
	We prove by recursion that for every step $h \in \mathcal{H}$, state $s \in \sset$, and promise $\widetilde{\iota}_2 \in I^2_h(s)$, it holds that $V^{\text{P},\sigma^2}_h(s,\widetilde{\iota}_2) \le V^{\text{P},\sigma^3}_h(s,\iota_3)$, where $\iota_3 = V^{\text{A},\sigma^2}_h(s,\widetilde{\iota}_2)$.
	Let:
	\begin{equation*}
		\iota_2 = \argmax_{\bar{\iota}_2 \in C_h(s,\iota_3)}G_h(s,\iota_3,\bar{\iota}_2).
	\end{equation*}
	be the value selected at Line~\ref{line:phase_3_argmax} Algorithm~\ref{alg:from_ep_ic_to_ic_phase_3}.
	We also observe that $\iota_2, \widetilde{\iota}_2 \in C_h(s,\iota_3)$.
	
	The base step is $h = H$.
	Then, according to Equation~\ref{eq:realign_promises_conflict_value} and the definition of $\iota_2$:
	\begin{equation*}
		 \hspace{-0.35cm} \sum_{(p,a) \in \mathcal{X}_{\sigma^2}} \hspace{-0.35cm} \varphi^2_H(p,a|s,\iota_2) \sum_{s' \in \sset} \hspace{-0.15cm} P_H(s'|s,a)r^\text{P}_H(s,p,s') \ge \hspace{-0.35cm} \sum_{(p,a) \in \mathcal{X}_{\sigma^2}} \hspace{-0.35cm} \varphi^2_H(p,a|s,\widetilde{\iota}_2) \sum_{s' \in \sset} \hspace{-0.15cm} P_H(s'|s,a)r^\text{P}_H(s,p,s').
	\end{equation*}
	Consequently:
	\begin{align*}
		V^{\text{P},\sigma^2}_H(s,\widetilde{\iota}_2) &= \sum_{(p,a) \in \mathcal{X}_{\sigma^2}} \varphi^2_H(p,a|s,\widetilde{\iota}_2) \sum_{s' \in \sset} P_H(s'|s,a)r^\text{P}_H(s,p,s') \\
		&\le \sum_{(p,a) \in \mathcal{X}_{\sigma^2}} \varphi^2_H(p,a|s,\iota_2) \sum_{s' \in \sset} P_H(s'|s,a)r^\text{P}_H(s,p,s') \\
		&= \sum_{(p,a) \in \mathcal{X}_{\sigma^3}} \varphi^3_H(p,a|s,\iota_3) \sum_{s' \in \sset} P_H(s'|s,a)r^\text{P}_H(s,p,s') = V^{\text{P},\sigma^3}_H(s,\iota_3),
	\end{align*}
	where the second to last equality holds because $\mathcal{X}_{\sigma^3} \subseteq \mathcal{X}_{\sigma^2}$ and, as of Line~\ref{line:phase_3_phi_assignment} Algorithm~\ref{alg:from_ep_ic_to_ic_phase_3}, $\varphi^2_H(p,a|s,\iota_2) = \varphi^3_H(p,a|s,\iota_3)$.
	
	Now consider $h<H$ and assume that $V^{\text{P},\sigma^2}_{h+1}(s,\widetilde{\iota}_2) \le V^{\text{P},\sigma^3}_{h+1}(s,\iota_3)$ for every state $s \in \sset$, and promises $\widetilde{\iota}_2 \in I^2_{h+1}(s)$ and $\iota_3 = V^{\text{A},\sigma^2}_h(s,\widetilde{\iota}_2)$.
	Let $h \in \mathcal{H}$, $s \in \sset$, and consider two promises $\widetilde{\iota}_2 \in I^2_h(s)$ and $\iota_3 = V^{\text{A},\sigma^2}_h(s,\widetilde{\iota})$.
	Let also $\iota_2$ be the value selected at Line~\ref{line:phase_3_argmax} Algorithm~\ref{alg:from_ep_ic_to_ic_phase_3}.
	Then, according to Equation~\ref{eq:realign_promises_conflict_value} and the definition of $\iota_2$:
	\begin{equation}
		\label{eq:phase_3_principal_inequality}
		\begin{split}
		\hspace{-0.35cm} \sum_{(p,a) \in \mathcal{X}_{\sigma^2}} \hspace{-0.35cm} &\varphi^2_h(p,a|s,\iota_2) \hspace{-0.1cm} \sum_{s' \in \sset} \hspace{-0.15cm} P_h(s'|s,a) \hspace{-0.1cm} \left(r^\text{P}_h(s,p,s') \hspace{-0.1cm}+\hspace{-0.1cm} V^{\text{P},\sigma^3}_{h+1}(s',V^{\text{A},\sigma^2}_{h+1}(s',g^2_h(s,\iota_2,p,a,s'))) \right) \\
		\ge &\hspace{-0.35cm} \sum_{(p,a) \in \mathcal{X}_{\sigma^2}} \hspace{-0.35cm} \varphi^2_h(p,a|s,\widetilde{\iota}_2) \hspace{-0.1cm} \sum_{s' \in \sset} \hspace{-0.15cm} P_h(s'|s,a) \hspace{-0.1cm} \left(r^\text{P}_h(s,p,s') \hspace{-0.1cm}+\hspace{-0.1cm} V^{\text{P},\sigma^3}_{h+1}(s',V^{\text{A},\sigma^2}_{h+1}(s',g^2_h(s,\widetilde{\iota}_2,p,a,s'))) \right).
		\end{split}
	\end{equation}
	Consequently:
	\begin{align*}
		&V^{\text{P},\sigma^2}_h(s,\widetilde{\iota}_2) \\
		&= \sum_{(p,a) \in \mathcal{X}_{\sigma^2}} \varphi^2_h(p,a|s,\widetilde{\iota}_2) \sum_{s' \in \sset} P_h(s'|s,a)\left(r^\text{P}_h(s,p,s') + V^{\text{P},\sigma^2}_{h+1}(s',g_h(s,\widetilde{\iota}_2,p,a,s'))\right) \\
		&\le \sum_{(p,a) \in \mathcal{X}_{\sigma^2}} \varphi^2_h(p,a|s,\widetilde{\iota}_2) \sum_{s' \in \sset} P_h(s'|s,a)\left(r^\text{P}_h(s,p,s') + V^{\text{P},\sigma^3}_{h+1}(s',V^{\text{A},\sigma^2}_{h+1}(s',g^2_h(s,\widetilde{\iota}_2,p,a,s')))\right) \\
		&\le \sum_{(p,a) \in \mathcal{X}_{\sigma^2}} \varphi^2_h(p,a|s,\iota_2) \sum_{s' \in \sset} P_h(s'|s,a)\left(r^\text{P}_h(s,p,s') + V^{\text{P},\sigma^3}_{h+1}(s',V^{\text{A},\sigma^2}_{h+1}(s',g^2_h(s,\iota_2,p,a,s'))) \right) \\
		&= \sum_{(p,a) \in \mathcal{X}_{\sigma^2}} \varphi^2_h(p,a|s,\iota_2) \sum_{s' \in \sset} P_h(s'|s,a)\left(r^\text{P}_h(s,p,s') + V^{\text{P},\sigma^3}_{h+1}(s',g^3_h(s,\iota_3,p,a,s')) \right) \\
		&= \sum_{(p,a) \in \mathcal{X}_{\sigma^3}} \varphi^3_h(p,a|s,\iota_3) \sum_{s' \in \sset} P_h(s'|s,a)\left(r^\text{P}_h(s,p,s') + V^{\text{P},\sigma^3}_{h+1}(s',g^3_h(s,\iota_3,p,a,s')) \right) \\
		&=V^{\text{P},\sigma^3}_h(s,\iota_3)
	\end{align*}
	where the first inequality holds thanks to the inductive hypothesis and the second one according to Equation~\ref{eq:phase_3_principal_inequality}.
	Furthermore, by construction, $\varphi^2_h(p,a|s,\iota_2) = \varphi^3_h(p,a|s,\iota_3)$ and $g^2_h(s,\iota_2,p,a,s') = g^3_h(s,\iota_3,p,a,s')$, which proves the last three equalities.
	
	Let $h \in \mathcal{H}$ and $s \in \sset$.
	Given the structure of the set $I^3_h(s)$, for every $\iota_3 \in I^3_h(s)$ there is a promise $\iota_2 \in I^2_h(s)$ such that $\iota_3 = V^{\text{A},\sigma^2}_h(s,\widetilde{\iota}_2)$.
	Since $V^{\text{P},\sigma^3}_{h+1}(s,\iota_3) \ge V^{\text{P},\sigma^2}_{h+1}(s,\widetilde{\iota}_2)$, it follows that $V^{\text{P},\pi^3} \ge V^{\text{P},\pi^2}$, where $\pi^3$ and $\pi^2$ are the history-dependent policies that implement $\sigma^3$ and $\sigma^2$ respectively.
	
	As a further step, we show that the policy $\sigma^3$ is honest.
	We prove that for every step $h \in \mathcal{H}$, state $s \in \sset$ and promise $\iota_3 \in I^3_h(s)$, the following holds:
	\begin{equation*}
		\sum_{(p,a) \in \mathcal{X}_{\sigma^3}} \varphi^3_h(p,a|s,\iota_3) \sum_{s' \in \sset} P_h(s'|s,a) \left(r^{\text{A}}_h(s,p,a,s') + g^3_h(s,\iota_3,p,a,s')\right) = \iota_3,
	\end{equation*}
	where, by construction, $\iota_3 = V^{\text{A},\sigma^2}_h(s,\iota_2)$ for some promise $\iota_2 \in I^2_h(s)$.
	We observe that:
	\begin{align*}
		&\sum_{(p,a) \in \mathcal{X}_{\sigma^3}} \varphi^3_h(p,a|s,\iota_3) \sum_{s' \in \sset} P_h(s'|s,a) \left(r^{\text{A}}_h(s,p,a,s') + g^3_h(s,\iota_3,p,a,s')\right) \\
		&=\sum_{(p,a) \in \mathcal{X}_{\sigma^2}} \varphi^2_h(p,a|s,\iota_2) \sum_{s' \in \sset} P_h(s'|s,a) \left(r^{\text{A}}_h(s,p,a,s') + V^{\text{A},\sigma^2}_{h+1}(s',g^2_h(s,\iota_2,p,a,s'))\right) \\
		&=V^{\text{A},\sigma^2}_{h}(s,\iota_2),
	\end{align*}
	where the first equality holds because $\varphi^3_h(p,a|s,\iota_3) = \varphi^2_h(p,a|s,\iota_2)$ (see Line~\ref{line:phase_3_phi_assignment}) and $g^3_h(s,\iota_3,p,a,s') = V^{\text{A},\sigma^2}_{h+1}(s',g^2_h(s,\iota_2,p,a,s'))$ given the definition of $I^3_{h+1}(s')$ and Line~\ref{line:phase_3_g_assignment} Algorithm~\ref{alg:from_ep_ic_to_ic_phase_3}.
	As a result, $\sigma^3$ is honest.
	
	As a further step, we show that $\sigma^3$ is IC by means of Lemma~\ref{lem:local_ic_constr}.
	Thus, we have to prove that for any step $h \in \mathcal{H}$, state $s \in \sset$, promise $\iota_3 \in I^3_h(s)$, actions $a,\widehat{a} \in \mathcal{A}$ and contract $p \in J^3_h(s,\iota_3,a)$, the following holds:
	\begin{equation*}
		\begin{split} 
			\sum_{s' \in \sset} &P_h(s'|s,a) \left(r^\text{A}_h(s,p,a,s') +g^3_h(s,\iota_3,p,a,s') \right) \ge \\
			&\sum_{s' \in \sset} P_h(s'|s,\widehat{a}) \left(r^\text{A}_h(s,p,\widehat{a},s') +g^3_h(s,\iota_3,p,a,s') \right).
		\end{split}	
	\end{equation*}
	Let $\iota_2 \in \mathcal{I}^2_h(s)$ be such that $\iota_2 = V^{\text{A},\sigma^2}_h(s,\iota_2)$.
	Then $g^3_h(s,\iota_3,p,a,s') = V^{\text{A},\sigma^2}_h(s,g^2_h(s,\iota_2,p,a,s'))$.
	Since the policy $\sigma^2$ is IC, it follows that the inequality above is satisfied.
	As a result, $\sigma^3$ is IC according to Lemma~\ref{lem:local_ic_constr}.
	
	In order to conclude the proof, we observe that Algorithm~\ref{alg:from_ep_ic_to_ic_phase_3} runs in time polynomial in the sizes of $\sigma^2$ and the instance.
	Furthermore, since for every step $h \in \mathcal{H}$ and state $s \in \sset$ we have that  $|I^3_h(s)| \le |I^2_h(s)|$ by construction, it follows that $|\mathcal{I}^3| \le |\mathcal{I}^2|$.
	A similar result holds for the contracts: fix a step $h \in \mathcal{H}$ and a state $s \in \sset$, promise $\iota^3 \in I^3_h(s)$.
	For every promise $\iota_3 \in I^3_h(s)$ there is a promise $\iota_2 \in I_h(s)$ such that $J^3_h(s,\iota_3,a) = J^2_h(s,\iota_2,a)$ for every $a \in \mathcal{A}$.
	Such a promise $\iota_2= \iota_2(\iota_3)$ is different for every $\iota_3 \in I^3_h(s)$.
	As $|I^3_h(s)| \le |I^2_h(s)|$, it follows that $|\mathcal{J}^3| \le |\mathcal{J}^2|$
\end{proof}

\subsection{Omitted proofs from Section~\ref{sec:epsilon_ic_to_ic}}
\fromepictopseudoic*
\begin{proof}
	To prove the statement, we introduce a set of pseudo-histories $t \in \mathcal{T}^\circ$.
	Formally, for each step $h \in \mathcal{H} \cup \{H+1\}$ we define the set $\mathcal{T}^\circ_h$ as:
	\begin{equation*}
		\mathcal{T}^\circ_h \coloneqq \{(s_1,a_1,\dots,s_{h-1},a_{h-1},s_h) : s_1 \in \sset, a_i \in \mathcal{A}\}.
	\end{equation*}
	We further let $\mathcal{T}^\circ \coloneqq \mathcal{T}^\circ_1 \cup \mathcal{T}^\circ_2 \cup \dots \cup \mathcal{T}^\circ_{H+1}$.
	Intuitively, the pseudo-histories in $\mathcal{T}^\circ$ correspond to the histories in $\mathcal{T}'$ without the contracts.
	Given a history $\tau$, the corresponding pseudo-history is $t^\tau=(s_1,a_1,\dots,s_h,a_h,s_{h+1})$.
	Furthermore, we let $t^\tau(s_{h'}) = (s_1,a_1,\dots,s_{h'-1},a_{h'-1},s_{h'})$ for any $h'<h$.
	For the sake of notation, whenever we refer to a pseudo-history as $t$, we let $t=(s_1,a_1,\dots,s_h,a_h,s_{h+1})$ for the appropriate $h \in \mathcal{H}$.
	
	By hypothesis, given a history $\tau \in \mathcal{T}$ and an action $a$, there is at most a contract $p$ such that $\pi(p,a|\tau)>0$.
	The same holds for $\pi'$.
	The pseudo-history $t^\circ(\tau)$ removes the contracts from the history $\tau$.
	These contracts can be reconstructed considering the property explained above.
	Formally, given a pseudo-history $t=(s_1,a_1,\dots,s_h,a_h,s_{h+1}) \in \mathcal{T}^\circ$, there is at most only one history $\tau \coloneqq f^\pi(t) \coloneqq (s_1,p_1,a_1,\dots,p_h,s_h,a_h,s_{h+1}) \in \mathcal{T}'$ such that:
	\begin{equation*}
		\pi(p_{h'},a_{h'}|\tau(s_{h'})) >0 \quad\forall h' \le h.
	\end{equation*}
	If such a history does not exist, we let $f^\pi(t) \coloneqq (s_1,p_1,a_1,\dots,s_h,p_h,a_h,s_{h+1}) \in \mathcal{T}'$ be the history filled with any contract.
	Indeed, regardless of how $f^\pi(t)$ is defined in this case, it holds that $\mathbb{P}(f^\pi(t)|\pi,\alpha) = 0$ for any agent's function $\alpha$.
	We also let $p^{\pi,t}$ be the last contract of $f^\pi(t)$.
	We observe that if $\mathbb{P}(\tau|\pi,\alpha)>0$ for some agent's function $\alpha$, then $f^\pi(t^\tau)=\tau$.
	
	Let $\alpha$ be the \emph{recommended} agent's function, and $\alpha'$ be an agent's function \emph{incentivized} by $\pi'$.
	Let also $\widehat{\alpha}$ be an agent's function such that:
	\begin{equation*}
		\widehat{\alpha}(f^{\pi}(t)) = \alpha'(f^{\pi'}(t)) \quad \forall t \in \mathcal{T}^\circ.
	\end{equation*}
	
	Now we rewrite the values that we will need as summations over the pseudo-histories and without using the policy $\pi'$.
	We start from $V^{\text{A},\pi,\alpha}$ and $V^{\text{A},\pi,\widehat{\alpha}}$:
	\begin{align*}
		V^{\text{A},\pi,\alpha} &= \sum_{\tau \in \mathcal{T}'_{\pi,\alpha}} \mathbb{P}(\tau|\pi,\alpha) \left(p_h(s_{h+1})-c_h(s_h,\alpha_h(\tau))\right) \\
		&=\sum_{\tau \in \mathcal{T}'_{\pi,\alpha}} \mathbb{P}(f^\pi(t^\circ(\tau))|\pi,\alpha) \left(p_h(s_{h+1})-c_h(s_h,\alpha_h(f^\pi(t^\tau)))\right) \\
		&=\sum_{t \in \mathcal{T}^\circ} \mathbb{P}(f^\pi(t)|\pi,\alpha) \left(p^{\pi,t}(s_{h+1})-c_h(s_h,\alpha_h(f^\pi(t)))\right),
	\end{align*}
	where the second equality holds because every pseudo-history $t \in \mathcal{T}^\circ$ correspond to a different history $f^\pi(\tau) \in \mathcal{T}'$.
	Similarly, we can show that:
	\begin{equation*}
		V^{\text{A},\pi,\widehat{\alpha}} = \sum_{t \in \mathcal{T}^\circ} \mathbb{P}(f^\pi(t)|\pi,\widehat{\alpha}) \left(p^{\pi,t}(s_{h+1})-c_h(s_h,\widehat{\alpha}_h(f^\pi(t)))\right).
	\end{equation*}
	
	By employing a similar argument, we can obtain a similar result for $V^{\text{A},\pi',\alpha'}$ and $V^{\text{A},\pi',\alpha}$:
	\begin{align*}
		V^{\text{A},\pi',\alpha'} &= \sum_{t \in \mathcal{T}^\circ} \mathbb{P}(f^{\pi'}(t)|\pi',\alpha') \left(p^{\pi',t}(s_{h+1})-c_h(s_h,\alpha'_h(f^{\pi'}(t)))\right) \text{ and} \\
		V^{\text{A},\pi',\alpha} &= \sum_{t \in \mathcal{T}^\circ} \mathbb{P}(f^{\pi'}(t)|\pi',\alpha) \left(p^{\pi',t}(s_{h+1})-c_h(s_h,\alpha_h(f^{\pi'}(t)))\right).
	\end{align*}
	Since $\alpha$ is the recommended agent's function, it holds that $\alpha_h(f^{\pi'}(t)) = \alpha_h(f^{\pi}(t))$, as the sequence of actions is the same in both $f^{\pi'}(t)$ and $f^{\pi}(t)$.
	We also observe that $\mathbb{P}(f^{\pi'}(t)|\pi',\alpha) = \mathbb{P}(f^{\pi}(t)|\pi,\alpha)$.
	Thus:
	\begin{equation*}
		V^{\text{A},\pi',\alpha'} = \sum_{t \in \mathcal{T}^\circ} \mathbb{P}(f^{\pi}(t)|\pi,\widehat{\alpha}) \left(p^{\pi',t}(s_{h+1}) -c_h(s_h,\widehat{\alpha}_h(f^{\pi'}(t)))\right).
	\end{equation*}
	Furthermore, we can show that $\mathbb{P}(f^{\pi'}(t)|\pi',\alpha') = \mathbb{P}(f^{\pi'}(t)|\pi',\alpha)$. Let:
	\begin{align*}
		\tau' &= (s_1,p'_1,a_1,\dots,s_h,p'_h,a_h,s_{h+1}) = f^{\pi'}(t) \text{ and } \\
		\tau &= (s_1,p_1,a_1,\dots,s_h,p_h,a_h,s_{h+1}) = f^{\pi}(t).
	\end{align*}
	Then:
	\begin{align*}
		\mathbb{P}(f^{\pi'}(t)|\pi',\alpha') &= \mu(s_1) \prod_{h'=1}^h P_{h'}(s_{h'+1}|s_{h'},\alpha'_{h'}(f^{\pi'}(t)))\pi'(p'_h,a_h|\tau'(s_{h'})) \\
		&= \mu(s_1) \prod_{h'=1}^h P_{h'}(s_{h'+1}|s_{h'},\widehat{\alpha}_{h'}(f^{\pi}(t)))\pi'(p_h,a_h|\tau(s_{h'})) = \mathbb{P}(f^{\pi'}(t)|\pi',\alpha).
	\end{align*}
	As a result:
	\begin{equation}
		V^{\text{A},\pi',\alpha} = \sum_{t \in \mathcal{T}^\circ} \mathbb{P}(f^{\pi}(t)|\pi,\alpha) \left(p^{\pi',t}(s_{h+1})-c_h(s_h,\alpha_h(f^{\pi}(t)))\right).
	\end{equation}
	
	Finally, similar arguments let us rewrite the principal values as:
	\begin{equation*}
		V^{\text{P},\pi,\alpha} = \sum_{t \in \mathcal{T}^\circ} \mathbb{P}(f^{\pi}(t)|\pi,\alpha) \left(r_h(s_h,s_{h+1}) -p^{\pi,t}(s_{h+1})\right)
	\end{equation*}
	and:
	\begin{equation*}
		V^{\text{P},\pi',\alpha'} = \sum_{t \in \mathcal{T}^\circ} \mathbb{P}(f^{\pi}(t)|\pi,\widehat{\alpha}) \left(r_h(s_h,s_{h+1}) -p^{\pi',t}(s_{h+1})\right).
	\end{equation*}
		
	We observe that, as of Lemma~\ref{lem:value_rec_alpha}, $V^{\text{P},\pi} = V^{\text{P},\pi,\alpha}$.
	Furthermore, as stated by Lemma~\ref{lem:value_alpha_epsilon_ic}, it holds that $V^{\text{A},\pi,\alpha} \ge V^{\text{A},\pi,\widehat{\alpha}} - \epsilon$.
	By rewriting those quantities as summations over the pseudo-histories, we get:
	\begin{equation}
		\label{eq:alpha_ep_ic}
		\begin{split}
			\sum_{t \in \mathcal{T}^\circ} &\mathbb{P}(f^\pi(t)|\pi,\alpha) \left(p^{\pi,t}(s_{h+1})-c_h(s_h,\alpha_h(f^\pi(t)))\right) \ge \\
			&\sum_{\tau \in \mathcal{T}^\circ} \mathbb{P}(f^\pi(t)|\pi,\widehat{\alpha}) \left(p^{\pi,t}(s_{h+1})-c_h(s_h,\widehat{\alpha}_h(f^\pi(t)))\right) - \epsilon \\
		\end{split}
	\end{equation}

	On the other hand, according to Lemma~\ref{lem:value_rec_alpha_at_most_dev}, it holds that $V^{\text{A},\pi',\alpha'} \ge V^{\text{A},\pi',\alpha}.$
	Indeed, the left side of the inequality above is equal to $\sum_{s \in \sset} \mu(s) \vagentdev_1(s)$, while the right side is less than or equal to it.
	The inequality above can be rewritten in term of pseudo-histories as:
	\begin{equation}
		\label{eq:alpha_p_ic}
		\begin{split}
			\sum_{t \in \mathcal{T}^\circ} &\mathbb{P}(f^{\pi}(t)|\pi,\widehat{\alpha}) \left(p^{\pi',t}(s_{h+1})-c_h(s_h,\widehat{\alpha}_h(f^\pi(t)))\right) \ge \\
			&\sum_{t \in \mathcal{T}^\circ} \mathbb{P}(f^{\pi}(t)|\pi,\alpha) \left(p^{\pi',t}(s_{h+1})-c_h(s_h,\alpha_h(f^{\pi}(t)))\right).
		\end{split}
	\end{equation}
	
	
	Now we employ Equation~\ref{eq:alpha_ep_ic} and Equation~\ref{eq:alpha_p_ic} to show that $V^{\textnormal{P},\pi,\alpha} -V^{\textnormal{P},\pi',\alpha'} \le (H+1)\sqrt{\epsilon}$.
	For the sake of notation, we let $\tau = f^\pi(t)$ for every pseudo-history $t \in \mathcal{T}^\circ$.
	As a first step, we observe that the difference $V^{\textnormal{P},\pi,\alpha} -V^{\textnormal{P},\pi',\alpha'}$ can be upper bounded as:
	\begin{align}
		&V^{\textnormal{P},\pi,\alpha} -V^{\textnormal{P},\pi',\alpha'} \nonumber \\
		&= \sum_{t \in \mathcal{T}^\circ}\mathbb{P}(\tau|\pi,\alpha) \left( r_h(s_h,s_{h+1}) -p^{\pi,t}(s_{h+1}) \right) \nonumber \\
		&\hspace{2cm} -\sum_{t \in \mathcal{T}^\circ}\mathbb{P}(\tau|\pi,\widehat{\alpha}) \left( r_h(s_h,s_{h+1}) -p^{\pi',t}(s_{h+1}) \right) \nonumber \\
		&= \sum_{t \in \mathcal{T}^\circ}\mathbb{P}(\tau|\pi,\alpha) \left( r_h(s_h,s_{h+1}) -p^{\pi,t}(s_{h+1}) \right) \nonumber \\
		&\hspace{2cm} -\sum_{t \in \mathcal{T}^\circ}\mathbb{P}(\tau|\pi,\widehat{\alpha}) \left( r_h(s_h,s_{h+1}) -(1-\sqrt{\epsilon})p^{\pi,t}(s_{h+1}) -\sqrt{\epsilon}r_h(s_h,s_{h+1}) \right) \nonumber \\
		&= \sum_{t \in \mathcal{T}^\circ}\mathbb{P}(\tau|\pi,\alpha) \left( r_h(s_h,s_{h+1}) -p^{\pi,t}(s_{h+1}) \right) \nonumber \\
		&\hspace{2cm} -(1-\sqrt{\epsilon}) \sum_{t \in \mathcal{T}^\circ}\mathbb{P}(\tau|\pi,\widehat{\alpha}) \left( r_h(s_h,s_{h+1}) -p^{\pi,t}(s_{h+1}) \right) \nonumber \\
		&\le \sum_{t \in \mathcal{T}^\circ}\left(\mathbb{P}(\tau|\pi,\alpha) - \mathbb{P}(\tau|\pi,\widehat{\alpha})\right) \left( r_h(s_h,s_{h+1}) -p^{\pi,t}(s_{h+1}) \right) +H\sqrt{\epsilon}, \label{eq:vp_vpp_partial_bound}
	\end{align}
	where the second equality holds according to the definition of $\pi'$, and the last inequality holds because:
	\begin{equation*}
		 \sqrt{\epsilon} \sum_{t \in \mathcal{T}^\circ}\mathbb{P}(\tau|\pi,\widehat{\alpha}) \left( r_h(s_h,s_{h+1}) -p^{\pi,t}(s_{h+1}) \right) = \sqrt{\epsilon}V^{\text{P},\pi,\widehat{\alpha}} \le H\sqrt{\epsilon}.
	\end{equation*}
	
	Now we show that:
	\begin{equation}
		\label{eq:fa_fap_r_p_bound}
		\sum_{t \in \mathcal{T}^\circ}\left(\mathbb{P}(\tau|\pi,\alpha) - \mathbb{P}(\tau|\pi,\widehat{\alpha})\right) \left( r_h(s_h,s_{h+1}) -p^{\pi,t}(s_{h+1}) \right) \le \sqrt{\epsilon}.
	\end{equation}
	As the policy $\pi'$ changes every contract $p$ proposed by $\pi$ at step $h$ to a new contract $p'$ defined as $p'(s') = (1-\sqrt{\epsilon})p(s') + \sqrt{\epsilon}r_h(s_h,s')$, Equation~\ref{eq:alpha_p_ic} can be rewritten as:
	\begin{align*}
		\sum_{t \in \mathcal{T}^\circ} &\mathbb{P}(\tau|\pi,\widehat{\alpha}) \left( (1-\sqrt{\epsilon})p^{\pi,t}(s_{h+1}) - \sqrt{\epsilon} r_h(s_h,s_{h+1}) -c_h(s_h,\widehat{\alpha}_h(\tau)) \right) \ge \\
		&\sum_{\tau \in \mathcal{T}^\circ} \mathbb{P}(\tau|\pi,\alpha) \left( (1-\sqrt{\epsilon})p^{\pi,t}(s_{h+1}) - \sqrt{\epsilon} r_h(s_h,s_{h+1}) -c_h(s_h,\alpha_h(\tau)) \right).
	\end{align*}
	This implies that:
	\begin{align*}
		&\sum_{t \in \mathcal{T}^\circ}\mathbb{P}(\tau|\pi,\widehat{\alpha}) \left( p^{\pi,t}(s_{h+1}) - c_h(s_h,\widehat{\alpha}_h(\tau)) \right) 
		-\sum_{t \in \mathcal{T}^\circ}\mathbb{P}(\tau|\pi,\alpha) \left( p^{\pi,t}(s_{h+1}) - c_h(s_h,\alpha_h(\tau)) \right) \\
		&\ge \sqrt{\epsilon} \sum_{t \in \mathcal{T}^\circ} \Big( \mathbb{P}(\tau|\pi,\widehat{\alpha})\left( p^{\pi,t}(s_{h+1})-r_h(s_h,s_{h+1}) \right)
		-\mathbb{P}(\tau|\pi,\alpha) \left( p^{\pi,t}(s_{h+1})-r_h(s_h,s_{h+1})\right) \Big) \\
		&= \sqrt{\epsilon} \sum_{t \in \mathcal{T}^\circ}\left( \mathbb{P}(\tau|\pi,\widehat{\alpha}) -\mathbb{P}(\tau|\pi,\alpha) \right) \left( p^{\pi,t}(s_{h+1}) - r_h(s_h,s_{h+1}) \right).
	\end{align*}
	By combining this result with Equation~\ref{eq:alpha_ep_ic}, we get:
	\begin{align*}
		\epsilon &\ge \sum_{\tau \in \mathcal{T}^\circ}\mathbb{P}(\tau|\pi,\widehat{\alpha}) \left( p^{\pi,t}(s_{h+1}) - c_h(s_h,\widehat{\alpha}_h(\tau)) \right) 
		-\sum_{t \in \mathcal{T}^\circ}\mathbb{P}(\tau|\pi,\alpha) \left( p^{\pi,t}(s_{h+1}) - c_h(s_h,\alpha_h(\tau)) \right) \\
		&\ge \sqrt{\epsilon} \sum_{t \in \mathcal{T}^\circ}\left( \mathbb{P}(\tau|\pi,\widehat{\alpha}) -\mathbb{P}(\tau|\pi,\alpha) \right) \left( p^{\pi,t}(s_{h+1}) - r_h(s_h,s_{h+1}) \right).
	\end{align*}
	Thus, Equation~\ref{eq:fa_fap_r_p_bound} holds.
	
	Finally, by combining Equation~\ref{eq:vp_vpp_partial_bound} and Equation~\ref{eq:fa_fap_r_p_bound}, we can upper bound the difference between $V^{\textnormal{P},\pi,\alpha}$ and $V^{\textnormal{P},\pi',\alpha'}$ as:
	\begin{equation*}
		V^{\textnormal{P},\pi,\alpha} -V^{\textnormal{P},\pi',\alpha'} \le \sqrt{\epsilon} + H\sqrt{\epsilon} = (H+1)\sqrt{\epsilon}.
	\end{equation*}
	As a result:
	\begin{equation*}
		V^{\textnormal{P},\pi',\alpha'} \ge V^{\textnormal{P},\pi,\alpha} - (H+1)\sqrt{\epsilon},
	\end{equation*}
	concluding the proof.
\end{proof}

\fromepictoic*
\begin{proof}
	By Lemma~\ref{lem:from_ep_ic_to_pseudo_ic}, the policy $\sigma^1$ computed by Algorithm~\ref{alg:from_ep_ic_to_ic} is such that $V^{\text{P},\sigma^1} \ge V^{\text{P},\sigma} -(H+1)\sqrt{\epsilon}$.
	Since $V^{\text{P},\sigma} \ge \text{OPT}$, it follows that:
	\begin{equation*}
		V^{\text{P},\sigma^1} \ge \text{OPT} -(H+1)\sqrt{\epsilon}.
	\end{equation*}
	As a further step we observe that, thanks to Lemma~\ref{lem:from_ep_ic_to_ic_phase_2}, the policy $\sigma^2$ is IC and achieves a value of:
	\begin{equation*}
		V^{\text{P},\sigma^2} = V^{\text{P},\sigma^1} \ge \text{OPT} -(H+1)\sqrt{\epsilon}.
	\end{equation*}
	Finally, according to Lemma~\ref{lem:from_ep_ic_to_ic_phase_3} the policy $\sigma^2$ is both IC and honest, and achieves a value of:
	\begin{equation*}
		V^{\text{P},\sigma^3} \ge V^{\text{P},\sigma^2} \ge \text{OPT} -(H+1)\sqrt{\epsilon}.
	\end{equation*}
	
	To conclude the proof, we observe that every subprocedure of Algorithm~\ref{alg:from_ep_ic_to_ic} can be executed in polynomial time, and, according to Lemma~\ref{lem:from_ep_ic_to_ic_phase_2} and Lemma~\ref{lem:from_ep_ic_to_ic_phase_3}, the size of $\sigma^3$ is given by $|\mathcal{I}^3| \le |\mathcal{I}|$ and $|\mathcal{J}^3| \le |\mathcal{J}|$.
\end{proof}

\section{Proofs omitted from Section~\ref{sec:approx_policy}}
In order to prove Theorem~\ref{th:main_theorem}, we need to introduce two additional lemmas.
The first characterizes the relationship between the value of the policy $\sigma$ computed by Algorithm~\ref{alg:approx_policy} and the values of the tables $M^\delta_h$.
The second lemma relates the tables $M^\delta_h$ to the value of an optimal policy $\pi^\star$.
By combining them, we will prove Theorem~\ref{th:main_theorem}.

\begin{restatable}{lemma}{valueAtLeastM}
	\label{lem:value_at_least_M}
	Let $\sigma =\{(I_h,J_h,\varphi_h,g_h)\}_{h \in \mathcal{H}}$ be the promise-form policy returned by Algorithm~\ref{alg:approx_policy}.
	For every $h \in \mathcal{H}$, $s \in \sset$, and $\iota \in I_h(s)$, it holds that $\vprincipalpr_h(s,\iota) \ge M^\delta_h(s,\iota)$.
\end{restatable}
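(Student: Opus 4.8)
The plan is to prove the inequality by backward induction on the step $h$, running from $h = H$ down to $h = 1$, and to exploit at each step the three guarantees of the approximation oracle $O^{h,s}_{\iota,\delta}$ listed in Definition~\ref{def:approx_oracle}. The inductive hypothesis I would carry is exactly the statement of the lemma one step later: $\vprincipalpr_{h+1}(s',\iota') \ge M^\delta_{h+1}(s',\iota')$ for every state $s' \in \sset$ and every promise $\iota' \in I_{h+1}(s')$. Throughout, the key observation linking the two objects is that the policy components produced by Algorithm~\ref{alg:approx_policy} at $(h,s,\iota)$ --- namely $\varphi_h(p^a,a|s,\iota)=\alpha_a$ and $g_h(s,\iota,p^a,a,s')=z(a,s')$ --- are precisely the decision variables $(\alpha,p,z)$ returned by the oracle, so the promise-form recursion for $\vprincipalpr_h$ can be matched term-by-term against the oracle's objective $F_{h,s,M^\delta_{h+1}}(\alpha,p,z)$.

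For the base case $h=H$, I would use that $I_{H+1}(s')=\{0\}$ and $\vprincipalpr_{H+1}\equiv 0$, so that
\[\vprincipalpr_H(s,\iota)=\sum_{a:\alpha_a>0}\alpha_a\sum_{s'\in\sset}P_H(s'|s,a)\,r^\textnormal{P}_H(s,p^a,s').\]
Since $M^\delta_{H+1}(s',\cdot)$ is $0$ at $0$ and $-\infty$ elsewhere, property~3 of the oracle forces $z(a,s')=0$ for all $a,s'$, whence $M^\delta_{H+1}(s',z(a,s'))=0$ and $F_{H,s,M^\delta_{H+1}}(\alpha,p,z)$ collapses to exactly the displayed right-hand side. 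Property~1 then gives $M^\delta_H(s,\iota)=v\le F_{H,s,M^\delta_{H+1}}(\alpha,p,z)=\vprincipalpr_H(s,\iota)$, which is the base case.

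For the inductive step with $h<H$ and $\iota\in I_h(s)$ (so the oracle returned $v>-\infty$), I would expand $\vprincipalpr_h(s,\iota)$ using its recursion together with the assignments above, obtaining
\[\vprincipalpr_h(s,\iota)=\sum_{a:\alpha_a>0}\alpha_a\sum_{s'\in\sset}P_h(s'|s,a)\Big(r^\textnormal{P}_h(s,p^a,s')+\vprincipalpr_{h+1}(s',z(a,s'))\Big).\]
Property~3 guarantees $M^\delta_{h+1}(s',z(a,s'))>-\infty$, i.e.\ $z(a,s')\in I_{h+1}(s')$, so the inductive hypothesis applies and lets me replace $\vprincipalpr_{h+1}(s',z(a,s'))$ by its lower bound $M^\delta_{h+1}(s',z(a,s'))$. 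The resulting expression is exactly $F_{h,s,M^\delta_{h+1}}(\alpha,p,z)$ (the terms with $\alpha_a=0$ vanish and, again by property~3, carry no spurious $-\infty$), and a final application of property~1, $v\le F_{h,s,M^\delta_{h+1}}(\alpha,p,z)$, closes the induction: $\vprincipalpr_h(s,\iota)\ge M^\delta_h(s,\iota)$.

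The step I expect to require the most care is the bookkeeping around property~3 and the $-\infty$ entries: I must check that every future promise $z(a,s')$ emitted by the oracle really lies in the promise set $I_{h+1}(s')$, so that the inductive hypothesis is applicable and $\vprincipalpr_{h+1}$ is even defined on it, and that the convention $0\cdot(-\infty)=0$ for unrecommended actions never corrupts the identification of $\vprincipalpr_h$ with $F_{h,s,M^\delta_{h+1}}$. Both are handled precisely by property~3, which asserts finiteness for \emph{all} $a\in\A$ and $s'\in\sset$, so the matching of the two expressions is clean once this property is invoked.
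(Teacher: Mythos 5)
Your proof is correct and follows essentially the same route as the paper's: backward induction on $h$, identifying the components $(\varphi_h, g_h)$ of the returned policy with the oracle's output $(\alpha,p,z)$ so that the recursion for $\vprincipalpr_h$ matches $F_{h,s,M^\delta_{h+1}}(\alpha,p,z)$ term by term, then applying property~1 of Definition~\ref{def:approx_oracle} to conclude $F_{h,s,M^\delta_{h+1}}(\alpha,p,z)\ge v = M^\delta_h(s,\iota)$. Your explicit attention to property~3 (ensuring $z(a,s')$ lands in $I_{h+1}(s')$ so the inductive hypothesis applies and no $-\infty$ terms appear) is a point the paper's proof uses implicitly, and your treatment of it is sound.
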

\begin{proof}
	Let $\sigma = \{\{(I_h,\varphi_h,g_h)\}_{h \in \mathcal{H}}\}$ be the promise-form policy returned by Algorithm~\ref{alg:approx_policy}.
	We prove the statement by induction.
	For $h=H$, for any state $s \in \sset$ and promise $\iota \in I_H(s)$, we let $(\alpha,p,z,v) = O^{h,s}_{\iota,\delta}(M_{H+1})$. 
	Then, the following holds:
	\begin{align*}
		\vprincipalpr_H(s,\iota) &= \sum_{(p,a) \in \mathcal{X}} \varphi_H(p,a|s,\iota) \sum_{s' \in \sset} P_H(s'|s,a) r^{\text{P}}_H(s,p,s') \\
		&=\sum_{a \in \A} \alpha_a \sum_{s' \in \sset} P_H(s'|,s,a) r^{\text{P}}_H(s,p^a,s') \\
		&=F_{H,s,M^\delta_{H+1}}(\alpha,p,z) \\
		&\ge v,
	\end{align*}
	where the second equality holds by construction (see Algorithm~\ref{alg:approx_policy}), while the inequality follows from Definition~\ref{def:approx_oracle}.
	Since $v = M^\delta_{H}(s,\iota)$, the statement holds for the base case $h=H$.
	
	Now we consider a step $h<H$ and we assume that the statement holds for $h'=h+1$.
	Furthermore, we let $(\alpha,p,z,v) = O^{h,s}_{\iota,\delta}(M_{h+1})$.
	Then, the following holds:
	\begin{align*}
		\vprincipalpr_h(s,\iota) &= \sum_{(p,a) \in \mathcal{X}_\sigma} \varphi_h(p,a|s,\iota) \qprincipalpr_h(s,\iota,p,a) \\
		&= \sum_{a \in \A} \alpha_a \qprincipalpr_h(s,\iota,p^a,a) \\
		&= \sum_{a \in \A} \alpha_a \sum_{s' \in \sset} P_h(s'|s,a) \left( r^\text{P}_h(s,p^a,s') + \vprincipalpr_{h+1}(s',g_h(s,\iota,p^a,a,s')) \right) \\
		&\ge \sum_{a \in \A} \alpha_a \sum_{s' \in \sset} P_h(s'|s,a) \left( r^\text{P}_h(s,p^a,s') +M^\delta_{h+1}(s',g_h(s,\iota,p^a,a,s')) \right) \\
		&= \sum_{a \in \A} \alpha_a \sum_{s' \in \sset} P_h(s'|s,a) \left( r^\text{P}_h(s,p^a,s') +M^\delta_{h+1}(s',z(a,s')) \right) \\
		&= F_{h,s,M^\delta_{h+1}}(\alpha,p,z)\\
		&\ge v,
	\end{align*}
	where the first inequality holds employing the inductive hypothesis and the last inequality follows from Definition~\ref{def:approx_oracle}.
	The proof is concluded considering that $v = M^\delta_{h}(s,\iota)$ by construction.
\end{proof}

\begin{restatable}{lemma}{MAtLeastValue}
	\label{lem:M_at_least_value}
	Let $\sigma^\star =\{(I^\star_h,J^\star_h,\varphi^\star_h,g^\star_h)\}_{h \in \mathcal{H}}$ be a honest, direct, IC and optimal promise-form policy, and $M^\delta_h$, $I_h$ be computed accordingly to Algorithm~\ref{alg:approx_policy}.
	For every $h \in \mathcal{H}$, $s \in \sset$, and $\iota \in I^\star_h(s)$, it holds that $M^\delta_h(s,\lceil\iota\rceil_\delta) \ge V^{\text{P},\sigma^\star}_h(s,\iota)$ and $M^\delta_h(s,\lfloor\iota\rfloor_\delta) \ge V^{\text{P},\sigma^\star}_h(s,\iota)$.
\end{restatable}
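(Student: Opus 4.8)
The plan is to prove the two inequalities simultaneously by backward induction on $h$, running from $h=H+1$ down to $h=1$, where at each step the competitor solution is obtained by discretizing the one-step behaviour of $\sigma^\star$ onto the grid $\mathcal{D}_\delta$. By Theorem~\ref{th:promising_optimal} I may take $\sigma^\star$ to be honest (with $\eta=0$), direct, IC and optimal, and I will additionally use that such a policy satisfies the local constraint Equation~\ref{eq:local_ic_constr} exactly at every triple $(h,s,\iota)$. The base case $h=H+1$ is immediate, since $I^\star_{H+1}(s)=\{0\}$, $V^{\textnormal{P},\sigma^\star}_{H+1}(s,0)=0=M^\delta_{H+1}(s,0)$, and $\lceil 0\rceil_\delta=\lfloor 0\rfloor_\delta=0$.

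For the inductive step, fix $h\le H$, a state $s$, and a promise $\iota\in I^\star_h(s)$, and let $\alpha^\star$, $\{p^a\}_{a\in\A}$ and $g^\star_h(s,\iota,\cdot)$ be the (direct) contract distribution, contracts and promise function that $\sigma^\star$ uses at $(s,\iota)$. Because $\sigma^\star$ is $0$-honest, the triple $(\alpha^\star,\{p^a\},g^\star_h)$ has honesty value exactly $\iota$ and, by Equation~\ref{eq:local_ic_constr}, satisfies the incentive constraint defining $\Psi^{h,s}_{\iota,0}$; its only defect is that $g^\star_h$ takes values in the optimal policy's promise sets rather than on the grid $\mathcal{D}_\delta$. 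I would therefore replace each next-step promise $g^\star_h(s,\iota,p^a,a,s')$ by a grid neighbour $z(a,s')\in\{\lfloor g^\star_h(s,\iota,p^a,a,s')\rfloor_\delta,\lceil g^\star_h(s,\iota,p^a,a,s')\rceil_\delta\}$, obtaining a grid-valued triple $(\alpha^\star,\{p^a\},z)$. The inductive hypothesis is stated for \emph{both} roundings, so whichever neighbour is chosen one keeps $M^\delta_{h+1}(s',z(a,s'))\ge V^{\textnormal{P},\sigma^\star}_{h+1}(s',g^\star_h(s,\iota,p^a,a,s'))>-\infty$; consequently $F_{h,s,M^\delta_{h+1}}(\alpha^\star,\{p^a\},z)\ge V^{\textnormal{P},\sigma^\star}_h(s,\iota)$ for \emph{any} choice of directions (the immediate-reward terms are unchanged since the contracts are untouched), and property~3 of Definition~\ref{def:approx_oracle} is met.

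It remains to choose the rounding directions so that $(\alpha^\star,\{p^a\},z)$ is feasible for a nearby exact problem. Writing $\widetilde{\iota}$ for the honesty value of the rounded triple, each promise moves by strictly less than $\delta$, so $|\widetilde{\iota}-\iota|<\delta$; for the first claim I would bias the directions so that $\widetilde{\iota}\ge\iota$, which places $\widetilde{\iota}$ in $[\lceil\iota\rceil_\delta-\delta,\lceil\iota\rceil_\delta+\delta]$, and symmetrically bias downward ($\widetilde{\iota}\le\iota$) for the second. By construction the rounded triple is $0$-honest at $\widetilde{\iota}$, hence lies in $\Psi^{h,s}_{\widetilde{\iota},0}$ as soon as the local constraint Equation~\ref{eq:local_ic_constr} survives discretization. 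Granting this, property~1 of Definition~\ref{def:approx_oracle}, applied with the processed grid point equal to $\lceil\iota\rceil_\delta$ (respectively $\lfloor\iota\rfloor_\delta$) so that $\widetilde{\iota}$ lies in its $\delta$-window, together with the identity $\Psi^{h,s}_{\iota,\delta}=\bigcup_{\widetilde{\iota}\in[\iota-\delta,\iota+\delta]}\Psi^{h,s}_{\widetilde{\iota},0}$, yields $M^\delta_h(s,\lceil\iota\rceil_\delta)=v\ge\widehat{F}_{h,s,\widetilde{\iota}}(M^\delta_{h+1})\ge F_{h,s,M^\delta_{h+1}}(\alpha^\star,\{p^a\},z)\ge V^{\textnormal{P},\sigma^\star}_h(s,\iota)$, and symmetrically for $\lfloor\iota\rfloor_\delta$, closing the induction.

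The main obstacle is exactly this last feasibility requirement. Discretizing $g^\star_h$ perturbs the left minus right side of Equation~\ref{eq:local_ic_constr} by $\sum_{s'}(P_h(s'|s,\widehat{a})-P_h(s'|s,a))\,e(a,s')$, where $e(a,s')$ is the (signed) rounding displacement; a uniform all-up or all-down rounding does \emph{not} control this term, because $P_h(\cdot|s,a)-P_h(\cdot|s,\widehat{a})$ changes sign across next states, so IC can be destroyed for some deviation $\widehat{a}$. The key leverage, and the reason the lemma is stated for both $\lceil\cdot\rceil_\delta$ and $\lfloor\cdot\rfloor_\delta$, is that the strong inductive hypothesis lets me round each coordinate $g^\star_h(s,\iota,p^a,a,s')$ up or down \emph{independently} without ever hurting the value; the task thus becomes a rounding problem, namely to round the point $g^\star_h(s,\iota,p^a,a,\cdot)$, which lies in the box spanned by its floor/ceil grid neighbours, to a vertex of that box while keeping the finitely many linear functionals (the IC inequalities over all deviations $\widehat{a}$, and the honesty functional) on their feasible side. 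I would settle this by an iterative/dependent-rounding argument, using that the unrounded point is interior relative to the slacks of Equation~\ref{eq:local_ic_constr} and, where a slack vanishes, the concavity of $M^\delta_h(s,\cdot)$ that is recovered by storing the oracle value $v$ in the table (as discussed in Appendix~\ref{appendix:approx_orcale}); carrying out this rounding bookkeeping precisely is the delicate core of the formal proof.
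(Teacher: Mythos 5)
Your overall architecture matches the paper's: backward induction, with the competitor at step $h$ obtained by keeping $\sigma^\star$'s contract distribution and contracts at $(s,\iota)$ unchanged and rounding the next-step promises $g^\star_h(s,\iota,p^a,a,s')$ onto the grid $\mathcal{D}_\delta$, then chaining the inductive hypothesis (stated for both roundings precisely so that $F_{h,s,M^\delta_{h+1}}$ of the rounded triple dominates $V^{\textnormal{P},\sigma^\star}_h(s,\iota)$) with property~1 of Definition~\ref{def:approx_oracle} and the identity $\Psi^{h,s}_{\iota,\delta}=\bigcup_{\widetilde{\iota}\in[\iota-\delta,\iota+\delta]}\Psi^{h,s}_{\widetilde{\iota},0}$. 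Up to that point the proposal is sound, and you correctly locate the difficulty.

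The gap is that you do not close that difficulty. The whole argument hinges on the rounded triple $(\alpha^\star,\{p^a\},z)$ being an exactly \emph{feasible} point of $\mathcal{P}_{h,s,\widetilde{\iota}}(M^\delta_{h+1})$ for some $\widetilde{\iota}$ in the $\delta$-window of the processed grid point, i.e.\ on constraint~\ref{eq:opt_problem_constr_ic} surviving discretization. You observe, correctly, that rounding shifts the two sides of that constraint by $\sum_{s'}P_h(s'|s,a)k_{a,s'}$ and $\sum_{s'}P_h(s'|s,\widehat{a})k_{a,s'}$ respectively, which differ when the displacements vary across $s'$ --- but you then dispose of this with ``an iterative/dependent-rounding argument'' invoking ``the concavity of $M^\delta_h(s,\cdot)$'', neither of which is carried out or even specified. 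Naming the obstacle and promising to handle it is not a proof: it is not established that a vertex of the floor/ceil box satisfying all $|\A|^2$ IC functionals simultaneously exists, and concavity of the value table has no evident bearing on feasibility of the IC polytope. For comparison, the paper's proof avoids per-coordinate choices altogether: it rounds every next-step promise \emph{down}, sets $\widetilde{k}\coloneqq\sum_{a}\alpha_a\sum_{s'}P_h(s'|s,a)k_{a,s'}\in[0,\delta)$, and argues that the resulting triple lies in $\Psi^{h,s}_{\iota-\widetilde{k},0}$, so honesty holds exactly at the shifted promise $\iota-\widetilde{k}$, which still falls in the $\delta$-window of the relevant grid points; the oracle guarantee then finishes the induction. (The cross-term you flag is indeed the delicate point of that feasibility step as well, but in a blind reconstruction the burden of discharging it is yours, and your proposal leaves it open.)
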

\begin{proof}
	We recall that, w.l.o.g., we can consider a direct optimal, honest and IC promise-form policy $\sigma^\star$.
	Thus, for every action $a \in \A$ there is at most a single contract $p^{s,\iota,a}_h$ such that $\varphi^\star_h(p^{s,\iota,a}_h,a|s,\iota) > 0$.
	For the sake of notation, if $\varphi^\star_h(p,a|s,\iota)= 0 $ for every $p \in \mathcal{P}$, we let $p^{s,\iota,a}_h$ be any contract in $\mathcal{P}$.

	We prove the statement by induction, starting from $h=H$.
	We consider a state $s \in \sset$ and a promise $\iota \in I^\star_H(s)$.
	Let $(\alpha,p,z)$ be the solution defined as follows:
	\begin{align*}
		\alpha_a &= \varphi^\star_H(p^{s,\iota,a}_H,a|s,\iota) &\forall a \in \A, \\
		p_a &= p^{s,\iota,a}_H &\forall a \in \A, \\
		z(a,s') &= 0 &\forall a \in \A, s' \in \sset.
	\end{align*}
	We can verify that $(\alpha,p,z)$ is a feasible solution for the problem $\mathcal{P}_{h,s,\iota}(M^\delta_{H+1})$.
	Indeed, $z$ has values in $\mathcal{D}_\delta$, it satisfies Constraint~\ref{eq:opt_problem_constr_honest} since $\sigma^\star$ is honest and $g^\star_H(s,\iota,p^{s,\iota,a}_H,a,s') = 0 = z(a,s')$, and for every pair of actions $a,\widehat{a} \in \A$ such that $\alpha_a>0$ the following holds:
	\begin{align*}
		\sum_{s' \in \sset} P_H(s'|s,a) \left(r^A_H(s,p^a,a,s') + z(a,s')\right)  &=\sum_{s' \in \sset} P_h(s'|s,a) r^\textnormal{A}_h(s,p^a,a,s') \\
		&=\text{Q}^{A,\sigma^\star}_H(s,\iota,p^a,a) \\
		&\ge \sum_{s' \in \sset}  P_h(s'|s,\widehat{a}) r^\textnormal{A}_h(s,p^a,\widehat{a},s') \\
		&=\sum_{s' \in \sset} P_H(s'|s,\widehat{a}) \left(r^\textnormal{A}_H(s,p^a,\widehat{a},s') +z(a,s')\right),
	\end{align*}
	where the inequality holds because $\sigma^\star$ is IC. 
	As a result, Constraint~\ref{eq:opt_problem_constr_ic} is satisfied.
	
	Furthermore, the value of this solution is given by:
	\begin{align*}
		F_{H,s,M^\delta_{H+1}}(\alpha,p,z) &= \sum_{a \in \A} \alpha_a \sum_{s' \in \sset} P_H(s'|s,a) r^P_H(s,p^a,s') \\
		&=\sum_{(p',a) \in \mathcal{X}_\sigma^\star} \varphi^\star_H(p',a|s,\iota) \sum_{s' \in \sset} P_H(s',s,a)r^P_H(s,p',s') \\
		&=\text{V}^{P,\sigma^\star}_H(s,\iota).
	\end{align*}
	Now let $(\alpha^\circ,p^\circ,z^\circ,v^\circ) = O^{H,s}_{\iota,\delta}(M^\delta_{H+1})$.
	Thus, since $M^\delta_H(s,\lceil \iota \rceil_\delta)=v^\circ$ by construction is computed by an approximation oracle according to Definition~\ref{def:approx_oracle} and $\lceil \iota \rceil_\delta \in [\iota-\delta,\iota+\delta]$, the value of the optimal policy can be upper bounded as:
	\begin{align*}
		V^{\text{P},\sigma^\star}_H(s,\iota) &= F_{H,s,M^\delta_{H+1}}(\alpha,p,z) \\
		&\le \widehat{F}_{H,s,\iota}(M^\delta_{H+1}) \\
		&\le M^\delta_H(s,\lceil \iota \rceil_\delta),
	\end{align*}
	where the first inequality holds because $(\alpha,p,z)$ is a feasible solution of $\mathcal{P}_{h,s,\iota}(M^\delta_{H+1})$, while $\widehat{F}_{H,s,\iota}(M^\delta_{H+1})$ is the value of the optimal solution of such a problem.
	Similarly, $M^\delta_H(s,\lfloor \iota \rfloor_\delta) \ge \text{V}^{P,\sigma^\star}_H(s,\iota)$, proving the base step of the induction.
	
	Now we consider $h<H$ and assume that the statement holds for $h' = h+1$.
	Fix a state $s \in \sset$ and a promise $\iota \in I^\star_h(s)$. 
	Let $(\alpha,p,z)$ be the solution defined as:
	\begin{align*}
		\alpha_a &= \varphi^\star_h(p^{s,\iota,a}_h,a|s,\iota) &\forall a \in \A \\
		p_a &= p^{s,\iota,a}_h &\forall a \in \A \\
		z(a,s') &= \lfloor g^\star_h(s,\iota,a,s')\rfloor &\forall a \in \A, s' \in \sset.
	\end{align*}
	We observe that, since $\sigma^\star$ is honest and in every state the principal can take an action with cost zero, it holds that $\widetilde{\iota} \in [0,HB]$ for every promise $\widetilde{\iota} \in \mathcal{I}$.
	As $0,HB \in \mathcal{D}_\delta$, it follows that $z(a,s')$ (for every $s' \in \sset$), $\lfloor \iota \rfloor_\delta$ and $\lceil \iota \rceil_\delta$ belong to $\mathcal{D}_\delta$.
	
	For the sake of notation, let us define the following quantities:
	\begin{align*}
		k_{a,s'} &\coloneqq g^\star_h(s,\iota,a,s') - z(a,s'), \\
		\widetilde{k} &\coloneqq \sum_{a \in \A} \alpha_a \sum_{s' \in \sset} P_h(s'|s,a) k_{a,s'}, \\
		\widehat{k}_a &\coloneqq 
		\sum_{s' \in \sset} P_h(s'|s,a) k_{a,s'} \quad \forall a \in \A. \\
	\end{align*}
	
	We show that $(\alpha,p,z)$ is a feasible solution for the problem $\mathcal{P}_{h,s,\iota-\widetilde{k}}(M^\delta_{h+1})$.
	Constraint~\ref{eq:opt_problem_constr_honest} is satisfied as:
	\begin{align*}
		&\sum_{a \in A}\alpha_a \sum_{s' \in \sset} P_h(s'|s,a) \left(r^A_h(s,p^a,a,s') +z(a,s')\right) \\
		&= \sum_{a \in A}\alpha_a \sum_{s' \in \sset} P_h(s'|s,a) \left(r^A_h(s,p^a,a,s') +g^\star_h(s,\iota,p^a,a,s') -k_{a,s'}\right) \\
		&= \sum_{a \in A}\alpha_a \sum_{s' \in \sset} P_h(s'|s,a) \left(r^A_h(s,p^a,a,s') + g^\star_h(s,\iota,p^a,a,s')\right) -\widetilde{k} \\
		&= \iota -\widetilde{k},
	\end{align*}
	where the last equality holds because $\sigma^\star$ is honest.
	Furthermore, by considering that $\sum_{s' \in \sset} P_h(s'|s,a) =1$, for every pair of actions $a,\widehat{a} \in \A$ such that $\alpha_a>0$ the following holds:
	\begin{align*}
		&\sum_{s' \in \sset} P_h(s'|s,a) \left(r^A_h(s,p^a,a,s') + z(a,s')\right) \\
		&= \sum_{s' \in \sset} P_h(s'|s,a) \left(r^A_h(s,p^a,a,s') +g^\star_h(s,\iota,p^a,a,s') \right) -\widehat{k}_a \\
		&= \sum_{s' \in \sset} P_h(s'|s,a) \left(r^A_h(s,p^a,a,s') +\text{V}^{\text{P},\sigma^\star}_{h+1}(s',g^\star_h(s,\iota,p^a,a,s')) \right) -\widehat{k}_a \\
		&=\text{Q}^{A,\sigma^\star}_h(s,\iota,p^a,a) -\widehat{k}_a\\
		&\ge \sum_{s' \in \sset} P_h(\omega|s,\widehat{a}) \left(r^A_h(s,p^a,\widehat{a},s') \text{V}^{\text{P},\sigma^\star}_{h+1}(s',g^\star_h(s,\iota,p^a,a,s')) \right) -\widehat{k}_a \\
		&=\sum_{s' \in \sset} P_h(s'|s,\widehat{a}) \left(r^A_h(s,p^a,\widehat{a},s') +g^\star_h(s,\iota,p^a,a,s') \right) -\widehat{k}_a \\
		&=\sum_{s' \in \sset} P_h(s'|s,\widehat{a}) \left(r^A_h(s,p^a,\widehat{a},s') +z(a,s') \right),
	\end{align*}
	where we applied Lemma~\ref{lem:honesty_value} and the assumption that $\sigma^\star$ is both honest and IC.
	Consequently, Constraint~\ref{eq:opt_problem_constr_ic} is satisfied, and since $z$ has values in $\mathcal{D}_\delta$, it follows that $(\alpha,p,z)$ is a feasible solution for $\mathcal{P}_{h,s,\iota-\widetilde{k}}(M^\delta_{h+1})$.
	
	As a further step, we provide an upper bound for the value of the principal:
	\begin{align*}
		&V^{\text{P},\sigma^\star}_h(s,\iota) \\
		&=\sum_{a \in \A} \alpha_a \sum_{s' \in \sset} P_h(s'|s,a) \left(r^P_h(s,p^a,s') + V^{\text{P},\sigma^\star}_{h+1}(s',g^\star(s,\iota,p^a,a,s')) \right) \\
		&\le \sum_{a \in \A} \alpha_a \sum_{s' \in \sset} P_h(s'|s,a) \left(r^P_h(s,p^a,s') + M^\delta_{h+1}(s',\lfloor g^\star_h(s,\iota,p^a,a,s') \rfloor) \right) \\
		&=\sum_{a \in \A} \alpha_a \sum_{s' \in \sset} P_h(s'|s,a) \left(r^P_h(s,p^a,s') + M^\delta_{h+1}(s',z(a,s')) \right) \\
		&=F_{h,s,M^\delta_{h+1}}(\alpha,p,z) \\
		&\le \widehat{F}_{h,s,\iota-\widetilde{k}}(M^\delta_{h+1}),
	\end{align*}
	where we applied the definition of $V^{\text{P},\sigma^\star}_h(s,\iota)$ and $(\alpha,p,z)$, the inductive hypothesis, the definition of $z$, and the fact that $(\alpha,p,z)$ is a feasible solution for $\mathcal{P}_{h,s,\iota - \widetilde{k}}(M^\delta_{h+1})$.
	Finally, by considering that the value of $M^\delta_h(s,\lfloor \iota \rfloor_\delta)$ and $M^\delta_h(s,\lceil \iota \rceil_\delta)$ satisfy Definition~\ref{def:approx_oracle}, we can prove the statement and conclude the proof.
\end{proof}

\MainTh*
\begin{proof}
	First, we observe that $\sigma$ returned by Algorithm~\ref{alg:approx_policy} is a promise-form policy.
	Indeed, we have that $|I_1(s)|=1$ for every state $s \in \sset$ by construction (see Line~\ref{line:initial_promise} in  Algorithm~\ref{alg:approx_policy}). 
	By satisfying Definition~\ref{def:approx_oracle}, the approximating oracle guarantees the remaining properties of promise-form policies.
	
	As a further step, we show that $\sigma$ is $\eta$-honest, with $\eta = \lambda = 2\delta$.
	We observe that the policy $\sigma$ is constructed from the solutions provided by an approximation oracle that satisfies Definition~\ref{def:approx_oracle}.
	%
	Furthermore, for every state $s \in \sset$ and step $h \in \mathcal{H}$, a promise $\iota \in \mathcal{D}_\delta$ is added to the set $I_h(s)$ only if the value $v$ is larger than $-\infty$.
	Thus, according to Definition~\ref{def:approx_oracle}, the solutions $(\alpha,p,q)$ used to construct $\sigma$ belong to the corresponding set $\Psi^{h,s}_{\iota,2\delta}$ and satisfy Equation~\ref{eq:opt_rel_prob_constr_honest_ge} and Equation~\ref{eq:opt_rel_prob_constr_honest_le} for $\lambda = 2\delta$.
	Consequently, one can easily verify that $\sigma$ is $2\delta$-honest.
	
	Since $\sigma$ is $2\delta$-honest, we can employ Lemma~\ref{lem:local_ic_constr} to prove that it is also $4\delta H^2$-IC.
	In order to do that, we observe that $\sigma$ satisfies Equation~\ref{eq:local_ic_constr}, as the solutions returned by the approximation oracle satisfy Equation~\ref{eq:opt_rel_prob_constr_ic}.
	Thus, by observing that $\delta = \nicefrac{\epsilon}{4H^2}$, it follows that $\sigma$ is $\epsilon$-IC.
	
	To conclude the proof, we prove that the principal's expected cumulative utility achieved by of sigma is at least $\text{OPT}$.
	To show that, we consider an optimal and honest promise-form policy $\sigma^\star=\{I^\star_h,\varphi^\star_h,g^\star_h\}_{h \in \mathcal{H}}$, which always exists according to Theorem~\ref{th:promising_optimal}.
	By combining Lemma~\ref{lem:value_at_least_M} and Lemma~\ref{lem:M_at_least_value}, we have that $\vprincipalpr_h(s,\lfloor \iota \rfloor_\delta) \ge \text{V}^{\text{P},\sigma^\star}_h(s,\iota)$ for every state $s \in \sset$, step $h \in \mathcal{H} \setminus \{1\}$, and promise $\iota \in I^\star_h(s)$.
	Furthermore, at the time step $h=1$, thanks to Lemma~\ref{lem:value_at_least_M}, Lemma~\ref{lem:M_at_least_value} and Line~\ref{line:initial_promise} in Algorithm~\ref{alg:approx_policy}, it holds that $\vprincipalpr_1(s,i(s)) \ge V^{\text{P},\sigma^\star}_1(s,i^\star(s))$.
	Then, we have:
	\begin{align*}
		\text{OPT} &= V^{\text{P},\pi^{\sigma^\star}} \\
		&= \sum_{s \in \mathcal{S}} \mu(s) V^{\text{P},\pi^{\sigma^\star}}_1(s) \\
		&=\sum_{s \in \mathcal{S}} \mu(s) V^{\text{P},\sigma^\star}_1(s,i^\star(s)) \\
		&\le \sum_{s \in \mathcal{S}} \mu(s) V^{\text{P},\sigma}_1(s,i(s)) \\
		&= V^{\text{P},\pi^\sigma},
	\end{align*}
	which is the expected cumulative principal's utility of the policy $\pi^\sigma$ that implements $\sigma$.
	
	Finally, we observe that the size of $\sigma$ is polynomial in $\nicefrac{1}{\epsilon}$ and the instance size.
	Indeed, the number of possible promises is given by:
	\begin{equation*}
		|\mathcal{I}| \le |\mathcal{D}_\delta| H = \mathcal{O}\left(\frac{HB}{\delta} H \right) = \mathcal{O}\left(\frac{H^4B}{\epsilon} \right)
	\end{equation*}
	and, since $\sigma$ prescribes at most one contract for each action and promise in every state and time step, the number of contracts is bounded by:
	\begin{equation*}
		|\mathcal{J}| \le |\mathcal{I}||\mathcal{S}||\mathcal{A}| = \mathcal{O}\left(|\mathcal{S}||\mathcal{A}| \frac{H^4B}{\epsilon}\right).
	\end{equation*}
	Finally, given that the size of $\mathcal{D}_\delta$ is polynomial in $\nicefrac{1}{\epsilon}$ and the instance size and a call to the oracle can be executed in polynomial time (see Definition~\ref{def:approx_oracle}), it follows that the Algorithm~\ref{alg:approx_policy} runs in polynomial time.
\end{proof}

\section{Building an Approximation Oracle}
\label{appendix:approx_orcale}
Algorithm~\ref{alg:approx_policy} requires an approximation oracle that solves a relaxed version of the problem $\mathcal{P}_{h,s,\iota}(M)$ while satisfying Definition~\ref{def:approx_oracle}.
In this section we describe how to build such an oracle.
As a first step, we introduce the relaxed problem $\mathcal{R}^\delta_{h,s,\iota}(M)$, whose optimal solution allows us to retrieve a tuple $(\alpha,p,z)$ satisfying Definition~\ref{def:approx_oracle}. 
While this problem is still quadratic, it is equivalent to an LP, which can be solved in polynomial time.

\subsection{The relaxed problem}
Algorithm~\ref{alg:approx_oracle} provides an approximation oracle according to Definition~\ref{def:approx_oracle}.
The algorithm computes an optimal solution for the problem $\mathcal{R}^\delta_{h,s,\iota}(M)$ by solving an equivalent LP.
For the ease of notation, we introduce the set $\mathcal{D}^{M,s}_\delta \coloneqq \{\iota \in \mathcal{D}_\delta \mid M(s,\iota)>-\infty\}$.
\begin{algorithm}[H]
	\caption{\texttt{Approximation oracle}}\label{alg:approx_oracle}
	\begin{algorithmic}[1]
		\Require $h \in \mathcal{H}, s \in \sset, \delta \in \mathbb{R}, \iota \in \mathcal{D}_\delta, M: \sset \times \mathcal{D}_\delta \rightarrow \mathbb{R} \cup \{-\infty\}$.
		\If{$\mathcal{R}^\delta_{h,s,\iota}(M)$ is feasible}
		\State $(\alpha,p,\widetilde{q}) \gets $ Solution to $\mathcal{R}^\delta_{h,s,\iota}(M)$
		\State $v \gets G^\delta_{h,s,M}(\alpha,p,\widetilde{q})$
		\State $z(a,s') \gets \argmax_{\iota' \in \mathcal{D}^{M,s'}_\delta : |\iota'-q(a,s')|< \delta} M(s',\iota') \quad \forall s' \in \sset$ \label{line:oracle_discretizaion} 
		\Else
		\State $(\alpha,p,\widetilde{q}) \gets$ anything
		\State $v \gets -\infty$
		\EndIf
		\State \textbf{Return} $(\alpha,p,\widetilde{q},v)$
	\end{algorithmic}
\end{algorithm}

To define $\mathcal{R}^\delta_{h,s,\iota}(M)$, we write the future promises $z(a,s')$ as a discretization of linear combinations of the promises in $\mathcal{D}_\delta$.
This is accomplished by introducing the function $\widetilde{q}: \A \times \sset \rightarrow \Delta(\mathcal{D}_\delta)$, and defining the (continuous) future promise $q(a,s')$ as:
\begin{equation}
	\label{eq:q_from_q_tilde}
	q(a,s') \coloneqq \sum_{\iota' \in \mathcal{D}^{M,s'}_\delta} \widetilde{q}(\iota'|a,s') \iota',
\end{equation}
for every action $a \in \A$ and future state $s' \in \sset$.
Furthermore, we define the set of consistent functions $\widetilde{q}$ as:
\begin{equation*}
	\mathcal{Q}^M_\delta \coloneqq \{\widetilde{q}: \A \times \sset \rightarrow \Delta(\mathcal{D}_\delta) \mid \widetilde{q}(\iota'|a,s')=0 \; \forall a \in \A, s' \in \sset, \iota' \in \mathcal{D}_\delta \setminus \mathcal{D}^{M,s'}_\delta\}.
\end{equation*} 
In other words, $\widetilde{q} \in \mathcal{Q}^M_\delta$ if and only if $\widetilde{q}(a,s') \in \Delta(\mathcal{D}^{M,s'}_\delta)$ for every $a \in \mathcal{A}$ and $s' \in \sset$.

The actual future promise $z(a,s')$ is instead defined by taking the promise $\iota' \in \mathcal{D}^{M,s'}_\delta$ such that $|q(a,s')-z(a,s')|<\delta$ and with the largest value of $M(s',z(a,s'))$, formally:
\begin{equation*}\label{eq:z_from_q}
	z(a,s') = d^{M,s'}_\delta(q(a,s')) \coloneqq \argmax_{\iota' \in \mathcal{D}^{M,s'}_\delta : |\iota'-q(a,s')|< \delta} M(s',\iota').
\end{equation*}
Observe that if $q(a,s') \in \mathcal{D}^{M,s'}_\delta$, then $z(a,s')=q(a,s')$, while otherwise $z(a,s')$ is one of the two promises in $\mathcal{D}^{M,s'}_\delta$ that are nearest to $q(a,s')$.

Given this definitions, we let the objective function of the problem $\mathcal{R}^\delta_{h,s,\iota}(M)$ be:
\begin{equation*}
	G^\delta_{h,s,M}(\alpha,p,\widetilde{q}) \coloneqq \sum_{a \in \A} \alpha_a \sum_{s' \in \sset} P_h(s'|s,a) \left(r^P_h(s,p^a,s') +y(a,s')\right),
\end{equation*}
where $y(a,s') \coloneqq \sum_{\iota' \in \mathcal{D}^{M,s'}_\delta} \widetilde{q}(\iota'|a,s')M(s',\iota')$.

Now we can define the problem $\mathcal{R}^\delta_{h,s,\iota}(M)$ as:
\begin{equation*}
	\widehat{G}^\delta_{h,s,\iota}(M) \coloneqq \max_{\substack{\alpha \in \Delta(\A)\\ p \in \mathcal{P}^{|\A|}\\ \widetilde{q}:\A \times \sset \rightarrow \Delta(\mathcal{D}_\delta)}}
	G^\delta_{h,s,M}(\alpha,p,\widetilde{q}) \quad \text{s.t.} \quad (\alpha, p, q) \in \Psi^{h,s}_{\iota,\delta} \text{ and } \widetilde{q} \in \mathcal{Q}^M_\delta,
\end{equation*}
where $q$ is computed according to Equation~\ref{eq:q_from_q_tilde}.

We rewrite the complete problem $\mathcal{R}^\delta_{h,s,\iota}(M)$ for clarity:
\begin{maxi!}
	{\scriptstyle \substack{\alpha \in \Delta(\A)\\ p \in \mathcal{P}^{|\A|}\\ \widetilde{q}:\A \times \sset \rightarrow \Delta(\mathcal{D}_\delta)}}{ \sum_{a \in \A} \alpha_a \sum_{s' \in \sset} P_h(s'|s,a) \left(r^P_h(s,p^a,s') +y(a,s')\right)}{}{}\label{eq:r_obj}
	\addConstraint { \sum_{a \in A}\alpha_a \sum_{s' \in \sset} P_h(s'|s,a) \left(r^A_h(s,p^a,a,s') +q(a,s')\right) \ge \iota -\delta}{}{}\label{eq:r_honesty_ge}
	\addConstraint { \sum_{a \in A}\alpha_a \sum_{s' \in \sset} P_h(s'|s,a) \left(r^A_h(s,p^a,a,s') +q(a,s')\right) \le \iota +\delta}{}{}\label{eq:r_honesty_le}
	\addConstraint { \alpha_a \sum_{s' \in \sset} P_h(s'|s,a) \left(r^A_h(s,p^a,a,s') +q(a,s')\right) \ge}{}{\nonumber}
	\addConstraint { \hspace{0.5cm}\alpha_a \sum_{s' \in \sset} P_h(s'|s,\widehat{a}) \left(r^A_h(s,p^a,\widehat{a},s') +q(a,s') \right) \;\; \forall a,\widehat{a} \in \A,}{}{}\label{eq:r_ic}
	\addConstraint {\sum_{\iota' \in \mathcal{D}^{M,s'}_\delta} \widetilde{q}(\iota'|a,'s) = 1 \quad \forall a \in A, s' \in \sset}{}{}\label{eq:r_q_distribution}
\end{maxi!}
Since the problem is defined over functions with finite domains, it is possible to represent them as a limited number of real variables by adding opportune linear constraints to represent probability distributions and bounded payments.

Let us remark that the problem $\mathcal{P}_{h,s,\iota}(M)$ is defined optimizing over the functions $z: \A \times \sset \rightarrow \mathcal{D}_\delta$ with values in the set $\mathcal{D}_\delta$, as the table $M : \sset \times \mathcal{D}_\delta$ is not defined for a promise $\iota \notin \mathcal{D}_\delta$.
However, the ``relaxed'' set $\Psi^{h,s}_{\iota,\lambda}$ includes also any function $q: \A \times \sset \rightarrow \mathbb{R}$ with real values.
This means that a solution $(\alpha,p,\widetilde{q})$ of $\mathcal{R}^\delta_{h,s,\iota}(M)$ does not correspond directly to a relaxed solution $(\alpha,p,z)$ by taking $z=q$, since $q(a,s')$ can be any convex combination of the promises in $\mathcal{D}_\delta$, thus may not be in $\mathcal{D}_\delta$ itself.
For this reason, at Line~\ref{line:oracle_discretizaion} Algorithm~\ref{alg:approx_oracle} approximates the value of the future promise $q(a,s')$ to a value $z(a,s') = d_\delta^{M,s'}(q(a,s'))$ in $\mathcal{D}_\delta$.
Due to this further approximation, the solution $(\alpha,p,z)$ does not belong to $\Psi^{h,s}_{\iota,\delta}$, but to a set $\Psi^{h,s}_{\iota,\lambda}$ with a larger $\lambda>\delta$. 

%

\subsection{An equivalent LP}

We observe that the Program~\ref{eq:r_obj} includes some quadratic terms, specifically the products $\alpha_a \widetilde{q}(\iota'|a,s')$ and $\alpha_a p^a(s')$ in the objective function and the constraints represented by Equation~\ref{eq:r_honesty_ge}, Equation~\ref{eq:r_honesty_le} and Equation~\ref{eq:r_ic}.
In order to manage them, we introduce the variables $\gamma^a_{s'}$ (corresponding to $\alpha_a p^a(s')$), $\xi^{\iota'}_{a,s'}$ (corresponding to $\alpha_a \widetilde{q}(\iota'|a,s')$), and $\nu_a$ (corresponding to $\alpha_a$).
Then we provide the linear program $\mathcal{L}^\delta_{h,s,\iota}(M)$ defined as follows:
\begin{maxi!}
	{\scriptstyle \substack{ 
	\gamma^a_{s'} \in \mathbb{R}_{+} \\
	\xi^{\iota'}_{a,s'} \in \mathbb{R}_+ \\
	\nu_a \in \mathbb{R}_+
	}}
	{ \sum_{a \in \A} \sum_{s' \in \sset} P_h(s'|s,a)\left(\nu_a r_h(s,s') -\gamma^a_{s'} +\sum_{\iota' \in \mathcal{D}^{M,s}_\delta} \xi^{\iota'}_{a,s'}M(s',\iota') \right)}{}{}
	\label{eq:lp_obj}
	\addConstraint{ \sum_{a \in A} \sum_{s' \in \sset} P_h(s'|s,a) \left( \gamma^a_{s'} -\nu_a c_h(s,a) +\sum_{\iota' \in \mathcal{D}^{M,s}_\delta} \xi^{\iota'}_{a,s'} \iota' \right) \ge \iota - \delta}{}{}\label{eq:lp_honesty_ge}
	\addConstraint{ \sum_{a \in A} \sum_{s' \in \sset} P_h(s'|s,a) \left( \gamma^a_{s'} -\nu_a c_h(s,a) +\sum_{\iota' \in \mathcal{D}^{M,s}_\delta} \xi^{\iota'}_{a,s'} \iota' \right) \le \iota + \delta}{}{}\label{eq:lp_honesty_le}
	\addConstraint{ \sum_{s' \in \sset} P_h(s'|s,a)\left(\gamma^a_{s'} -\nu_a c_h(s,a) +\sum_{\iota' \in \mathcal{D}^{M,s}_\delta} \xi^{\iota'}_{a,s'} \iota' \right) \ge}{}{\nonumber}
	\addConstraint{\hspace{1cm} \sum_{s' \in \sset} P_h(s'|s,\widehat{a}) \left(\gamma^a_{s'} -\nu_a c_h(s,\widehat{a}) +\sum_{\iota' \in \mathcal{D}^{M,s}_\delta} \xi^{\iota'}_{a,s'} \iota' \right) \quad \forall a,\widehat{a} \in \A}\label{eq:lp_ic}
	\addConstraint{ \gamma^a_{s'} \le B\nu_a \quad \forall a \in \A, s' \in \sset}
	\addConstraint{ \sum_{\iota' \in \mathcal{D}^{M,s'}_\delta} \xi^{\iota'}_{a,s'} = \nu_a \quad \forall a \in A, s' \in \sset}
	\addConstraint{ \sum_{a \in A} \nu_a =1}
\end{maxi!}

Now we show that the problems $\mathcal{R}^\delta_{h,s,\iota}(M)$ and $\mathcal{L}^\delta_{h,s,\iota}(M)$ are equivalent, \emph{i.e.,} given a solution of $\mathcal{R}^\delta_{h,s,\iota}(M)$, one can compute a solution of $\mathcal{L}^\delta_{h,s,\iota}(M)$, and viceversa.
\begin{restatable}{lemma}{solRtoLP}
	\label{lem:sol_r_to_lp}
	Given a feasible solution $(\alpha,p,\widetilde{q})$ of $\mathcal{R}^\delta_{h,s,\iota}$, there exists a solution $(\nu,\gamma,\xi)$ of $\mathcal{L}^\delta_{h,s,\iota}$ with the same value.
\end{restatable}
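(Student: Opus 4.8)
The plan is to exhibit the change of variables that the linearization \eqref{eq:lp_obj} was built to implement, and to check directly that it carries a feasible point of $\mathcal{R}^\delta_{h,s,\iota}(M)$ to a feasible point of $\mathcal{L}^\delta_{h,s,\iota}(M)$ with the same objective value. Concretely, given a feasible $(\alpha,p,\widetilde q)$ of $\mathcal{R}^\delta_{h,s,\iota}(M)$ I would define
\[
\nu_a \coloneqq \alpha_a, \qquad \gamma^a_{s'} \coloneqq \alpha_a\, p^a(s'), \qquad \xi^{\iota'}_{a,s'} \coloneqq \alpha_a\, \widetilde q(\iota'|a,s'),
\]
for all $a \in \A$, $s' \in \sset$ and $\iota' \in \mathcal{D}_\delta$. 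Nonnegativity of all three is immediate from $\alpha_a \ge 0$, $p^a(s') \ge 0$ and $\widetilde q(\iota'|a,s') \ge 0$, so the sign constraints of $\mathcal{L}^\delta_{h,s,\iota}(M)$ hold by construction. Moreover, since $\widetilde q \in \mathcal{Q}^M_\delta$ forces $\widetilde q(\iota'|a,s')=0$ for $\iota'\notin \mathcal{D}^{M,s'}_\delta$, the corresponding $\xi^{\iota'}_{a,s'}$ vanish, so restricting every sum to $\mathcal{D}^{M,s'}_\delta$ is consistent.

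The first thing I would verify is objective equality. Using $r^{\textnormal P}_h(s,p^a,s') = r_h(s,s') - p^a(s')$ and $y(a,s') = \sum_{\iota'} \widetilde q(\iota'|a,s')\, M(s',\iota')$, a term-by-term substitution of the three identities above turns $G^\delta_{h,s,M}(\alpha,p,\widetilde q)$ into exactly \eqref{eq:lp_obj}, because $\alpha_a r_h(s,s') = \nu_a r_h(s,s')$, $\alpha_a p^a(s') = \gamma^a_{s'}$ and $\alpha_a \widetilde q(\iota'|a,s') = \xi^{\iota'}_{a,s'}$.

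Next I would discharge the honesty and incentive-compatibility constraints. Here I use $r^{\textnormal A}_h(s,p^a,a,s') = p^a(s') - c_h(s,a)$ together with $q(a,s') = \sum_{\iota'} \widetilde q(\iota'|a,s')\,\iota'$ from \eqref{eq:q_from_q_tilde}, so that $\alpha_a p^a(s') = \gamma^a_{s'}$, $\alpha_a c_h(s,a) = \nu_a c_h(s,a)$ and $\alpha_a \widetilde q(\iota'|a,s')\,\iota' = \xi^{\iota'}_{a,s'}\,\iota'$. These substitutions map \eqref{eq:r_honesty_ge}--\eqref{eq:r_honesty_le} onto \eqref{eq:lp_honesty_ge}--\eqref{eq:lp_honesty_le}, and map the bilinear IC constraint \eqref{eq:r_ic}—where the leading factor $\alpha_a$ distributes into both sides and the promise term is indexed by the \emph{recommended} action $a$ on both sides—onto \eqref{eq:lp_ic} with the same $\xi^{\iota'}_{a,s'}$ appearing on the left and right. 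The remaining structural constraints follow by short computations: $\gamma^a_{s'} = \alpha_a p^a(s') \le \alpha_a B = B\nu_a$ since $p \in \mathcal{P}^{|\A|}$ has payments bounded by $B$; $\sum_{\iota'} \xi^{\iota'}_{a,s'} = \alpha_a \sum_{\iota'} \widetilde q(\iota'|a,s') = \alpha_a = \nu_a$ by \eqref{eq:r_q_distribution}; and $\sum_a \nu_a = \sum_a \alpha_a = 1$ since $\alpha \in \Delta(\A)$.

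I do not expect a genuine obstacle here: the result is exactly the statement that this linearization is exact in the direction from the quadratic program to the LP, where no division by $\nu_a$ is needed. The only points requiring care are bookkeeping ones—making sure the leading $\alpha_a$ in \eqref{eq:r_ic} is absorbed identically into each linearized variable so that the recommended-action promise $\xi^{\iota'}_{a,s'}$ is shared across both sides, and explicitly invoking both the support restriction $\widetilde q \in \mathcal{Q}^M_\delta$ and the payment bound $p^a(s') \le B$ to close the distribution and box constraints.
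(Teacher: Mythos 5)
Your proposal is correct and takes exactly the same route as the paper: the paper's proof defines $\nu_a = \alpha_a$, $\gamma^a_{s'} = \alpha_a p^a(s')$, $\xi^{\iota'}_{a,s'} = \alpha_a \widetilde{q}(\iota'|a,s')$ and asserts feasibility and value-preservation by direct calculation. You simply spell out the constraint-by-constraint verification that the paper leaves implicit.
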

\begin{proof}
	Consider the solution $(\nu,\gamma,\xi)$ defined as follows:
	\begin{align*}
		\nu_a &= \alpha_a &\forall a \in \A, \\
		\gamma^a_{s'} &= \alpha_a p^a(s') &\forall a \in \A, s' \in \sset, \\
		\xi^{\iota'}_{a,s'} &= \widetilde{q}(\iota'|,a,s') \alpha_a &\forall a \in A, s' \in \sset, \iota' \in D^{M,s}_\delta.
	\end{align*}
	Then one can show by direct calculation that $(\nu, \gamma, \xi)$ is a feasible solution of $\mathcal{L}^\delta_{h,s,\iota}$ and that its value is $G^\delta_{h,s,M}(\alpha,p,\widetilde{q})$.
\end{proof}

\begin{restatable}{lemma}{solLPtoR}
	\label{lem:sol_lp_to_r}
	Given a feasible solution $(\nu,\gamma,\xi)$ of $\mathcal{L}^\delta_{h,s,\iota}$, there exists a feasible solution $(\alpha,p,\widetilde{q})$ of $\mathcal{R}^\delta_{h,s,\iota}$ with at least the same value.
\end{restatable}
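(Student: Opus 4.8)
The plan is to invert the substitution used in the proof of Lemma~\ref{lem:sol_r_to_lp}. Given a feasible solution $(\nu,\gamma,\xi)$ of $\mathcal{L}^\delta_{h,s,\iota}(M)$, I would set $\alpha_a \coloneqq \nu_a$ for every $a \in \A$; since $\sum_{a} \nu_a = 1$ and $\nu_a \ge 0$, this gives $\alpha \in \Delta(\A)$. For every action $a$ with $\nu_a > 0$, I would define $p^a(s') \coloneqq \gamma^a_{s'}/\nu_a$ and $\widetilde{q}(\iota'|a,s') \coloneqq \xi^{\iota'}_{a,s'}/\nu_a$ for all $s' \in \sset$ and $\iota' \in \mathcal{D}^{M,s'}_\delta$. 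The constraints $0 \le \gamma^a_{s'} \le B\nu_a$ guarantee $p^a(s') \in [0,B]$, so $p^a \in \mathcal{P}$, while $\xi^{\iota'}_{a,s'} \ge 0$ and $\sum_{\iota'} \xi^{\iota'}_{a,s'} = \nu_a$, together with the fact that $\xi$ is indexed only over $\mathcal{D}^{M,s'}_\delta$, guarantee that $\widetilde{q}(\cdot|a,s')$ is a probability distribution supported on $\mathcal{D}^{M,s'}_\delta$, hence $\widetilde{q} \in \mathcal{Q}^M_\delta$.

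The main (and essentially only) obstacle is the degenerate case $\nu_a = 0$, where the division above is undefined. Here I would first observe that feasibility of $\mathcal{L}^\delta_{h,s,\iota}(M)$ forces all associated variables to vanish: the constraint $\gamma^a_{s'} \le B\nu_a = 0$ with $\gamma^a_{s'} \ge 0$ yields $\gamma^a_{s'} = 0$, and $\sum_{\iota'} \xi^{\iota'}_{a,s'} = \nu_a = 0$ with $\xi^{\iota'}_{a,s'} \ge 0$ yields $\xi^{\iota'}_{a,s'} = 0$ for all $\iota'$. I would then define $p^a$ and $\widetilde{q}(\cdot|a,s')$ arbitrarily among the valid choices, e.g. $p^a \equiv 0 \in \mathcal{P}$ and $\widetilde{q}(\cdot|a,s')$ any fixed distribution in $\Delta(\mathcal{D}^{M,s'}_\delta)$, keeping $(\alpha,p,\widetilde{q})$ well defined and still in $\mathcal{Q}^M_\delta$. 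Because the relevant LP variables are already zero in this case, the choice does not affect any constraint or the objective value.

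With this construction I would verify that every term of $\mathcal{R}^\delta_{h,s,\iota}(M)$ collapses onto the corresponding LP term. Using the identities $\alpha_a p^a(s') = \gamma^a_{s'}$, $\alpha_a \widetilde{q}(\iota'|a,s') = \xi^{\iota'}_{a,s'}$ (which hold trivially when $\nu_a = 0$, as both sides are zero) and $q(a,s') = \sum_{\iota'} \widetilde{q}(\iota'|a,s')\iota'$, each summand $\alpha_a \sum_{s'} P_h(s'|s,a)\bigl(r^A_h(s,p^a,a,s') + q(a,s')\bigr)$ equals $\sum_{s'} P_h(s'|s,a)\bigl(\gamma^a_{s'} - \nu_a c_h(s,a) + \sum_{\iota'}\xi^{\iota'}_{a,s'}\iota'\bigr)$, where the routine check is the split $r^A_h(s,p^a,a,s') = p^a(s') - c_h(s,a)$ that lets the $\alpha_a$-weighting linearize the products. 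Consequently the honesty constraints Equation~\ref{eq:r_honesty_ge} and Equation~\ref{eq:r_honesty_le} follow from Equation~\ref{eq:lp_honesty_ge} and Equation~\ref{eq:lp_honesty_le}, and the incentive constraint Equation~\ref{eq:r_ic} follows from Equation~\ref{eq:lp_ic} (reducing to $0 \ge 0$ when $\nu_a = 0$). The same substitution applied to the objective shows $G^\delta_{h,s,M}(\alpha,p,\widetilde{q})$ equals the LP objective value, so $(\alpha,p,\widetilde{q})$ is feasible for $\mathcal{R}^\delta_{h,s,\iota}(M)$ with exactly, hence at least, the same value, concluding the proof.
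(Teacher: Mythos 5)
Your proposal is correct and follows essentially the same route as the paper's proof: invert the substitution $\nu_a=\alpha_a$, $\gamma^a_{s'}=\alpha_a p^a(s')$, $\xi^{\iota'}_{a,s'}=\alpha_a\widetilde{q}(\iota'|a,s')$ and check that each constraint and the objective collapse onto their LP counterparts. Your explicit handling of the degenerate case $\nu_a=0$ (choosing an arbitrary distribution for $\widetilde{q}(\cdot|a,s')$ rather than the all-zero assignment) is in fact slightly cleaner than the paper's, which leaves $\widetilde{q}(\cdot|a,s')$ not summing to one there.
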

\begin{proof}
	Consider the solution $(\alpha,p,\widetilde{q})$ defined as follows:
	\begin{align*}
		\alpha_a &= \nu_a &&\forall a \in \A, \\
		p^a(s') &= \begin{cases}
			\frac{\gamma^a_{s'}}{\alpha_a} & \alpha_a >0 \\
			0 & \alpha_a=0
		\end{cases} &&\forall a \in \A, s' \in \sset, \\
		\widetilde{q}(\iota'|a,s') &= \begin{cases}
			\frac{\xi^{\iota'}_{a,s'}}{\alpha_a} & \alpha_a > 0 \\
			0 & \alpha_a = 0 
		\end{cases} &&\forall a \in \A, s' \in \sset, \iota' \in D^{M,s}_\delta.
	\end{align*}
	One can easily observe that the following holds:
	\begin{align*}
		\nu_a &= \alpha_a &&\forall a \in \A, \\
		\gamma^a_{s'} &= \alpha_a p^a(s') &&\forall a \in \A, s' \in \sset, \\
		\xi^{\iota'}_{a,s'} &= \widetilde{q}(\iota'|,a,s') \alpha_a &&\forall a \in A, s' \in \sset, \iota' \in D^{M,s}_\delta.
	\end{align*}
	Thus, by substituting the values of $(\nu,\gamma,\xi)$ in Constraint~\ref{eq:r_honesty_ge}, Constraint~\ref{eq:r_honesty_le}, and Constraint~\ref{eq:r_ic} of Program~\ref{eq:r_obj}, one can observe that they are equivalent to Constraint~\ref{eq:lp_honesty_ge}, Constraint~\ref{eq:lp_honesty_le}, and Constraint~\ref{eq:lp_ic} of Program~\ref{eq:lp_obj} respectively.
	Furthermore, the remaining constraints of Program~\ref{eq:lp_obj} make sure that the payments are correctly bounded by $B$ and $\alpha$ and $\widetilde{q}$ represent distributions.
	As a result, $(\alpha,p,\widetilde{q})$ is a feasible solution for Program~\ref{eq:r_obj}.
	Furthermore, direct calculations show that it achieves the same value of $(\nu, \gamma, \xi)$.
\end{proof}

By employing Lemma~\ref{lem:sol_lp_to_r} and Lemma~\ref{lem:sol_r_to_lp}, we can show that The problem $\mathcal{R}^\delta_{h,s,\iota}(M)$ can be solved in polynomial time, as stated by the following Lemma:
\begin{restatable}{lemma}{rPolyTime}
	\label{lem:lp_poly_time}
	The problem $\mathcal{R}^\delta_{h,s,\iota}(M)$ can be solved in time polynomial in $\nicefrac{1}{\delta}$ and the instance size.
\end{restatable}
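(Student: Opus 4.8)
The plan is to reduce the problem $\mathcal{R}^\delta_{h,s,\iota}(M)$ to the linear program $\mathcal{L}^\delta_{h,s,\iota}(M)$ of Program~\ref{eq:lp_obj}, whose equivalence has essentially already been established by Lemma~\ref{lem:sol_r_to_lp} and Lemma~\ref{lem:sol_lp_to_r}. Concretely, combining these two lemmas shows that the two problems share the same optimal value: Lemma~\ref{lem:sol_r_to_lp} implies that the optimum of $\mathcal{L}^\delta_{h,s,\iota}(M)$ is at least that of $\mathcal{R}^\delta_{h,s,\iota}(M)$, while Lemma~\ref{lem:sol_lp_to_r} gives the reverse inequality. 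In particular, $\mathcal{R}^\delta_{h,s,\iota}(M)$ is feasible if and only if $\mathcal{L}^\delta_{h,s,\iota}(M)$ is, and an optimal solution of the LP can be turned into an optimal solution of $\mathcal{R}^\delta_{h,s,\iota}(M)$ by the explicit division-based map in the proof of Lemma~\ref{lem:sol_lp_to_r}, which is computable in polynomial time.

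It then remains to argue that $\mathcal{L}^\delta_{h,s,\iota}(M)$ can be solved (and its feasibility decided) in time polynomial in $\nicefrac{1}{\delta}$ and the instance size. First I would bound the size of the LP. Its variables are $\nu_a$ for $a \in \A$, $\gamma^a_{s'}$ for $(a,s') \in \A \times \sset$, and $\xi^{\iota'}_{a,s'}$ for $(a,s',\iota') \in \A \times \sset \times \mathcal{D}^{M,s'}_\delta$. Since $\mathcal{D}^{M,s'}_\delta \subseteq \mathcal{D}_\delta$ and $|\mathcal{D}_\delta| = \lfloor \nicefrac{HB}{\delta} \rfloor + 1 = \mathcal{O}(\nicefrac{HB}{\delta})$, the total number of variables is $\mathcal{O}(|\A|\,|\sset|\,\nicefrac{HB}{\delta})$, which is polynomial in $\nicefrac{1}{\delta}$ and the instance size. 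Likewise, the constraints (the two honesty inequalities~\ref{eq:lp_honesty_ge}--\ref{eq:lp_honesty_le}, the $|\A|^2$ incentive constraints~\ref{eq:lp_ic}, and the $\mathcal{O}(|\A|\,|\sset|)$ payment-bound and normalization constraints) are polynomially many as well, and all coefficients are given explicitly in terms of the transition probabilities, rewards, costs, the promise values in $\mathcal{D}_\delta$, and the tabulated values $M(s',\iota')$, hence have polynomial bit complexity.

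Finally, since a linear program of polynomial size with rational data of polynomial bit complexity can be solved exactly — and its feasibility decided — in polynomial time by standard methods, solving $\mathcal{L}^\delta_{h,s,\iota}(M)$ and applying the recovery map of Lemma~\ref{lem:sol_lp_to_r} yields an optimal solution of $\mathcal{R}^\delta_{h,s,\iota}(M)$ within the claimed time bound. I do not expect a genuine obstacle here; the only point requiring care is the size accounting, namely verifying that the discretized promise set $\mathcal{D}_\delta$ — which inflates the number of $\xi$-variables by a factor $|\mathcal{D}_\delta|$ — still has cardinality polynomial in $\nicefrac{1}{\delta}$, which holds because the promises range over $[0,HB]$ on a grid of width $\delta$.
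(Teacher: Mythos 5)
Your proposal is correct and follows essentially the same route as the paper: solve the polynomially-sized LP $\mathcal{L}^\delta_{h,s,\iota}(M)$ and transfer optimality back to $\mathcal{R}^\delta_{h,s,\iota}(M)$ via the two transformation lemmas. The paper phrases the optimality step as an explicit proof by contradiction and is terser on the size accounting, but the substance is identical; your explicit variable/constraint count is a welcome addition rather than a deviation.
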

\begin{proof}
	Consider an optimal feasible solution $(\nu,\gamma,\xi)$ of $\mathcal{L}^\delta_{h,s,\iota}(M)$.
	Such a solution can be computed in polynomial time, since $\mathcal{L}^\delta_{h,s,\iota}(M)$ is a linear program with a number of variables and constraints polynomial in $\nicefrac{1}{\delta}$ and the instance size.
	Now apply the transformation described in the proof of Lemma~\ref{lem:sol_lp_to_r} to obtain a feasible solution $(\alpha,p,\tilde{q})$ of $\mathcal{R}^\delta_{h,s,\iota}(M)$ with the same value.
	
	We show that such a solution is optimal.
	In order to do this, suppose by contradiction that there exists a different feasible solution $(\alpha',p',\tilde{q}')$ such that $G^\delta_{h,s,\iota}(\alpha',p',\tilde{q}') > G^\delta_{h,s,\iota}(\alpha,p,\tilde{q})$.
	Then apply the transformation provided in the proof of Lemma~\ref{lem:sol_r_to_lp} to obtain a feasible solution $(\nu',\gamma',\xi')$ of $\mathcal{L}^\delta_{h,s,\iota}(M)$ with the same value $G^\delta_{h,s,\iota}(\alpha',p',\tilde{q}')$.
	As such, according to Lemma~\ref{lem:sol_r_to_lp} the solution $(\nu',\gamma',\xi')$ achieves a value strictly larger than $(\nu,\gamma,\xi)$, which contradicts the hypothesis that $(\nu,\gamma,\xi)$ is an optimal solution.
\end{proof}

\subsection{Approximation oracle and concavity properties}
Lemma~\ref{lem:lp_poly_time} implies that Algorithm~\ref{alg:approx_oracle} can be executed in polynomial time.
Now we show that Algorithm~\ref{alg:approx_oracle} is an approximation oracle accordingly to Definition~\ref{def:approx_oracle}.
First, we need to introduce two additional technical lemmas.
In particular, we show that for any step $h \in \mathcal{H}$ and state $s \in \sset$, there exists a promise $\bar{\iota} \in \mathcal{D}_\delta$ such that $\mathcal{D}^{M^\delta_h,s}_\delta = [0,\bar{\iota}] \cap \mathcal{D}_\delta$.
Then, we prove that the function $x:[0,\bar{\iota}] \rightarrow \mathbb{R}$ defined as $x(\iota) \coloneqq \widehat{G}_{h,s,\iota}(M)$ is concave.

\begin{restatable}{lemma}{intervalDdelta}
	\label{lem:interval_d_delta}
	Suppose that Algorithm~\ref{alg:approx_oracle} is used as an approximate oracle $O^{h,s}_{\iota,\delta}$ during the entire execution of Algorithm~\ref{alg:approx_policy}.
	Then for every state $s \in \sset$ and step $h \in \mathcal{H}$, $0 \in \mathcal{D}^{M^\delta_h,s}_\delta$, and if a given $\iota \in \mathcal{D}_\delta$ does not belong to $\mathcal{D}^{M^\delta_h,s}_\delta$, then no $\iota'>\iota$ belongs to $\mathcal{D}^{M^\delta_h,s}_\delta$.
\end{restatable}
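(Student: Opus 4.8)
The plan is to prove the statement by downward induction on $h$, from the base level $h=H+1$ down to $h=1$. The first thing I would record is that, since Algorithm~\ref{alg:approx_oracle} is used as the oracle, the value it returns equals $-\infty$ \emph{exactly} when the relaxed program $\mathcal{R}^\delta_{h,s,\iota}(M^\delta_{h+1})$ is infeasible; hence $M^\delta_h(s,\iota)>-\infty$ iff that program is feasible, and $\mathcal{D}^{M^\delta_h,s}_\delta$ is precisely the set of $\iota\in\mathcal{D}_\delta$ for which $\mathcal{R}^\delta_{h,s,\iota}(M^\delta_{h+1})$ is feasible. So the task reduces to characterising this feasibility as a function of $\iota$. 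The base case $h=H+1$ is immediate, as $\mathcal{D}^{M^\delta_{H+1},s}_\delta=\{0\}$ by initialisation; and for the inductive step I will only need the hypothesis that $0\in\mathcal{D}^{M^\delta_{h+1},s'}_\delta$ for every $s'\in\sset$.

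The central observation is that the constraints of $\mathcal{R}^\delta_{h,s,\iota}$ depend on $\iota$ only through the honesty constraints Equation~\ref{eq:r_honesty_ge} and Equation~\ref{eq:r_honesty_le}, which force the common left-hand side, call it $\mathrm{Hon}(\alpha,p,\widetilde q)$, to lie in $[\iota-\delta,\iota+\delta]$; the incentive-compatibility constraint Equation~\ref{eq:r_ic} and the distribution constraint Equation~\ref{eq:r_q_distribution} do not involve $\iota$. Thus $\mathcal{R}^\delta_{h,s,\iota}$ is feasible iff $[\iota-\delta,\iota+\delta]$ meets the set $S$ of values $\mathrm{Hon}(\alpha,p,\widetilde q)$ attainable by triples satisfying only the IC and distribution constraints. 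The whole lemma then follows once I show that $S$ is an interval of the form $[0,U]$: for $\iota\ge 0$ this makes feasibility equivalent to $\iota\le U+\delta$, so $\mathcal{D}^{M^\delta_h,s}_\delta=[0,U+\delta]\cap\mathcal{D}_\delta$, which contains $0$ and is downward closed, giving both assertions.

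To establish $S=[0,U]$ I would argue two points. First, $S\subseteq[0,\infty)$: for any recommended action $a$ (one with $\alpha_a>0$), instantiating Equation~\ref{eq:r_ic} with the deviation set to the cost-zero action $a_0$ forces its per-action honesty contribution to satisfy $h_a\ge \sum_{s'\in\sset}P_h(s'|s,a_0)\big(p^a(s')+q(a,s')\big)\ge 0$, since payments and future promises are nonnegative and $a_0$ has zero cost; summing against $\alpha$ gives $\mathrm{Hon}\ge 0$. Second — and this is the step I expect to be the main obstacle — for every $V_0\in S$ the entire segment $[0,V_0]$ lies in $S$. The difficulty is that the feasible region is \emph{not} convex, because Equation~\ref{eq:r_ic} is bilinear in $\alpha$ and the per-action variables, so one cannot simply interpolate toward the promise-$0$ solution. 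The enabling fact is that, after dividing by $\alpha_a$, the IC constraint for $a$ involves only $p^a$ and $q(a,\cdot)$, decoupling it from $\alpha$ and from the other actions; this lets me build an explicit continuous, feasibility-preserving path from honesty $V_0$ down to $0$ in two phases. In the first phase I scale the cost-zero action's contract and its future promises to zero, realised by shifting the mass of $\widetilde q(\cdot\mid a_0,s')$ onto the promise $0$ (available by the inductive hypothesis); its IC stays valid because, writing the original inequality as $A-B+C\ge 0$ with $C=c_h(s,\widehat a)\ge 0$, the scaled version $\lambda(A-B)+C\ge 0$ holds for all $\lambda\in[0,1]$. In the second phase I move all recommendation weight onto $a_0$ equipped with this null contract, while every other recommended action keeps a fixed IC-feasible per-action choice, so IC is preserved throughout and $\mathrm{Hon}$ decreases continuously. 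By the intermediate value theorem the path covers all of $[0,V_0]$, and since the IC/distribution feasible region is compact and $\mathrm{Hon}$ continuous, the supremum $U$ is attained, yielding $S=[0,U]$ and completing the induction.
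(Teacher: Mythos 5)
Your proof is correct, and for the downward-closedness part it takes a genuinely different route from the paper. Both proofs handle $0 \in \mathcal{D}^{M^\delta_h,s}_\delta$ identically (exhibit the all-zero solution supported on the cost-zero action, using the inductive hypothesis that $0$ is a valid next-step promise). Where you diverge is in showing that the set of feasible promise targets is downward closed. The paper passes to the linearized program $\mathcal{L}^\delta_{h,s,\iota}(M)$, promotes $\iota$ to a decision variable, and observes that the resulting feasible region is a polytope: a convex combination of a feasible point at $\iota=0$ and one at $\iota=\widehat{\iota}$ yields a feasible point at any intermediate $\bar{\iota}$, and feasibility transfers back to $\mathcal{R}^\delta_{h,s,\bar{\iota}}$ via Lemmas~\ref{lem:sol_r_to_lp} and~\ref{lem:sol_lp_to_r}. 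You instead work directly on the nonconvex program $\mathcal{R}^\delta_{h,s,\iota}$, isolate the honesty value $\mathrm{Hon}$ as the only $\iota$-dependent quantity, prove it is nonnegative on the IC-feasible set (via the deviation to the cost-zero action), and build an explicit feasibility-preserving continuous path driving $\mathrm{Hon}$ from any attainable value $V_0$ down to $0$, concluding by connectedness. Your two-phase path is sound: the per-action IC constraints decouple from $\alpha$ after dividing by $\alpha_a$, the scaling step preserves the inequality $\lambda(A-B)+C\ge 0$ because $C\ge 0$, and reweighting $\alpha$ toward the null-contract zero-cost action preserves each (unscaled) per-action constraint. The paper's argument is shorter because the $\mathcal{R}\leftrightarrow\mathcal{L}$ equivalence is already needed elsewhere (for polynomial-time solvability and for the concavity argument in Lemma~\ref{lem:concave}), so convexity comes for free; your argument is self-contained at the level of the nonconvex formulation and yields slightly more information (the attainable honesty values form an interval $[0,U]$, so $\mathcal{D}^{M^\delta_h,s}_\delta=\{\iota\in\mathcal{D}_\delta : \iota\le U+\delta\}$), at the cost of a more delicate construction.
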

\begin{proof}
	In order to prove that $0 \in \mathcal{D}^{M^\delta_h,s}_\delta$, we reason by induction.
	We observe that for every state $s' \in \sset$, it holds that $M^\delta_{H+1}(s',0)=0$, while $M^\delta_{H+1}(s',\iota')=-\infty$ for every $\iota'>0$.
	Thus, the base case $h=H+1$ trivially satisfy the statement.
	Consider $h\le H$ and any state $s \in \sset$ and the promise $\iota=0$.
	We prove that the problem $\mathcal{R}^\delta_{h,s,0}(M)$ admits a solution.
	We recall that there is an action $\bar{a} \in \A$ such that $c_H(s,\bar{a})=0$.
	Thus, we consider the solution $(\alpha,p,\widetilde{q})$ defined as follows:
	\begin{align*}
		\alpha_a &= 
		\begin{cases}
			1, &a=\bar{a} \\
			0, &\text{otherwise}	
		\end{cases} &&\forall a \in \A \\
		p^a(s') &= 0 &&\forall a \in \A, s' \in \sset \\
		\widetilde{q}(\iota'|a,s') &=
		\begin{cases}
			1, &\iota'=0 \\
			0, &\text{otherwise}
		\end{cases} &&\forall a \in \A, s' \in \sset, \iota' \in \mathcal{D}_\delta.
	\end{align*}
	One can easily observe that this is a feasible solution to $\mathcal{R}^\delta_{h,s,0}(M)$.
	Consequently, $0 \in \mathcal{D}^{M^\delta_h,s}_\delta$.
	
	As a further step, we prove that if a given $\bar{\iota} \in \mathcal{D}_\delta$ does not belong to $\mathcal{D}^{M^\delta_h,s}_\delta$, then no $\widehat{\iota}>\bar{\iota}$ belongs to $\mathcal{D}^{M^\delta_h,s}_\delta$ (\emph{i.e.}, $\mathcal{R}^\delta_{h,s,\widehat{\iota}}(M)$ is unfeasible).
	Suppose, by contradiction, that there exists a a $\widehat{\iota}>\bar{\iota}$ such that $\widehat{\iota} \in \mathcal{D}^{M^\delta_h,s}_\delta$.
	We build an optimization problem $\mathcal{V}^\delta_{h,s,\widehat{\iota}}(M)$ as follows: take Program~\ref{eq:lp_obj} (\emph{i.e.},$\mathcal{L}^\delta_{h,s,\iota}(M)$) and let $\iota$ be a variable instead of a constant.
	Furthermore, add the constraint that $0 \le \iota \le \widehat{\iota}$.
	One can easily observe that $(\nu,\gamma,\xi,\widehat{\iota})$ is a feasible solution for $\mathcal{V}^\delta_{h,s,\widehat{\iota}}(M)$ if and only if $(\nu,\gamma,\xi)$ is a feasible solution for $\mathcal{L}^\delta_{h,s,\iota}(M)$.
	Since both $\mathcal{R}^\delta_{h,s,0}(M)$ and $\mathcal{R}^\delta_{h,s,\widehat{\iota}}(M)$ are feasible, by Lemma~\ref{lem:sol_r_to_lp} $\mathcal{L}^\delta_{h,s,0}(M)$ and $\mathcal{L}^\delta_{h,s,\widehat{\iota}}(M)$ admit feasible solutions too.
	Consequently, since  the feasibility space of $\mathcal{V}^\delta_{h,s,\widehat{\iota}}(M)$ is a polytope, it follows that there exist some $\alpha$, $p$ and $\widetilde{q}$ such that $(\alpha,p,\widetilde{q},\bar{\iota})$ is a feasible solution for $\mathcal{V}^\delta_{h,s,\widehat{\iota}}(M)$, as $0 \le \bar{\iota} \le \widehat{\iota}$.
	As a result, $\mathcal{L}^\delta_{h,s,\bar{\iota}}(M)$ and $\mathcal{R}^\delta_{h,s,\bar{\iota}}(M)$ are feasible, thus $\bar{\iota} \in \mathcal{D}^{M^\delta_h,s}_\delta$, which is a contradiction.
\end{proof}

\begin{restatable}{lemma}{concave}
	\label{lem:concave}
	Consider a fixed $h \in \mathcal{H}$, $s \in \mathcal{S}$ and let $M=M^\delta_{h+1}$.
	If $\mathcal{D}^{M^\delta_h,s}_\delta = \{\iota \in \mathcal{D}_\delta \mid \iota \le \bar{\iota}\}$ for some $\bar{\iota} \in \mathcal{D}_\delta$, then the function $x:[0,\bar{\iota}] \rightarrow \mathbb{R}$ defined as $x(\iota) \coloneqq \widehat{G}_{h,s,\iota}(M)$ is concave.
\end{restatable}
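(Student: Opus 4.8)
The plan is to reduce the statement to a standard sensitivity property of linear programs. First I would invoke the equivalence between the relaxed problem $\mathcal{R}^\delta_{h,s,\iota}(M)$ and the linear program $\mathcal{L}^\delta_{h,s,\iota}(M)$ (Program~\ref{eq:lp_obj}) established in Lemma~\ref{lem:sol_r_to_lp} and Lemma~\ref{lem:sol_lp_to_r}: since each feasible solution of one maps to a feasible solution of the other with the same value, their optima coincide, so $x(\iota)=\widehat{G}^\delta_{h,s,\iota}(M)$ equals the optimal value of $\mathcal{L}^\delta_{h,s,\iota}(M)$. This reduction is the heart of the argument: in $\mathcal{R}^\delta_{h,s,\iota}(M)$ the objective and the honesty/IC constraints contain bilinear terms $\alpha_a p^a(s')$ and $\alpha_a\widetilde{q}(\iota'\mid a,s')$, which would obstruct any convex-combination reasoning, whereas in $\mathcal{L}^\delta_{h,s,\iota}(M)$ every constraint and the objective are \emph{linear} in the variables $(\nu,\gamma,\xi)$, and, crucially, the parameter $\iota$ enters only \emph{affinely} and only on the right-hand sides of the two honesty constraints (Equation~\ref{eq:lp_honesty_ge} and Equation~\ref{eq:lp_honesty_le}); the constraint matrix, the IC constraints (Equation~\ref{eq:lp_ic}), and the objective are all independent of $\iota$.

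Before establishing concavity I would check that $x$ is finite on all of $[0,\bar\iota]$, so the claim is meaningful at every point of the interval and not merely on the grid $\mathcal{D}_\delta$. By hypothesis $0,\bar\iota\in\mathcal{D}^{M^\delta_h,s}_\delta$, so $\mathcal{L}^\delta_{h,s,0}(M)$ and $\mathcal{L}^\delta_{h,s,\bar\iota}(M)$ are both feasible (this is exactly how these grid promises entered $\mathcal{D}^{M^\delta_h,s}_\delta$). For an arbitrary real $\iota\in[0,\bar\iota]$, writing $\iota=\lambda\cdot 0+(1-\lambda)\bar\iota$ and taking the corresponding convex combination of feasible solutions at the two endpoints yields a feasible point of $\mathcal{L}^\delta_{h,s,\iota}(M)$, by the same linearity computation used below; this mirrors the polytope/projection structure already exploited in the proof of Lemma~\ref{lem:interval_d_delta}. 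Boundedness of the objective is immediate since $\nu$ is a distribution, $0\le\gamma^a_{s'}\le B\nu_a$, $0\le\xi^{\iota'}_{a,s'}\le\nu_a$, and $M(s',\iota')$ is finite for $\iota'\in\mathcal{D}^{M,s'}_\delta$; hence $x(\iota)\in\mathbb{R}$ throughout $[0,\bar\iota]$.

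For concavity itself I would argue directly. Fix $\iota_1,\iota_2\in[0,\bar\iota]$ and $\lambda\in[0,1]$, set $\iota_\lambda=\lambda\iota_1+(1-\lambda)\iota_2$, and let $w^1,w^2$ denote optimal solutions of $\mathcal{L}^\delta_{h,s,\iota_1}(M)$ and $\mathcal{L}^\delta_{h,s,\iota_2}(M)$. The convex combination $w_\lambda=\lambda w^1+(1-\lambda)w^2$ satisfies every $\iota$-independent constraint of Program~\ref{eq:lp_obj} (these are linear, hence preserved under convex combinations), and it satisfies the two honesty constraints at level $\iota_\lambda$: writing $L(\cdot)$ for the linear left-hand side of Equation~\ref{eq:lp_honesty_ge},
\begin{equation*}
	L(w_\lambda)=\lambda L(w^1)+(1-\lambda)L(w^2)\ge\lambda(\iota_1-\delta)+(1-\lambda)(\iota_2-\delta)=\iota_\lambda-\delta,
\end{equation*}
and symmetrically $L(w_\lambda)\le\iota_\lambda+\delta$ by Equation~\ref{eq:lp_honesty_le}. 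Thus $w_\lambda$ is feasible for $\mathcal{L}^\delta_{h,s,\iota_\lambda}(M)$, and since the objective is linear its value at $w_\lambda$ equals $\lambda\,x(\iota_1)+(1-\lambda)\,x(\iota_2)$. As $x(\iota_\lambda)$ is the optimum at $\iota_\lambda$, we conclude $x(\iota_\lambda)\ge\lambda\,x(\iota_1)+(1-\lambda)\,x(\iota_2)$, which is exactly the desired concavity. Equivalently, one may appeal to LP duality: $x(\iota)$ is the minimum over the fixed, $\iota$-independent dual-feasible polyhedron of an objective that is affine in $\iota$, and a pointwise minimum of affine functions is concave.

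The main obstacle, which the preceding steps are designed to circumvent, is that the natural formulation $\mathcal{R}^\delta_{h,s,\iota}(M)$ is not jointly convex because of its bilinear terms; only after passing to the linearized program $\mathcal{L}^\delta_{h,s,\iota}(M)$, in which $\iota$ appears solely and affinely in the right-hand side, does the value function become concave. A secondary point requiring care is extending feasibility from the discrete set $\mathcal{D}^{M^\delta_h,s}_\delta$ to the full continuous interval $[0,\bar\iota]$, which the convexity/projection structure from Lemma~\ref{lem:interval_d_delta} supplies.
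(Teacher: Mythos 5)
Your proposal is correct and follows essentially the same route as the paper: pass to the equivalent linear program $\mathcal{L}^\delta_{h,s,\iota}(M)$ via Lemmas~\ref{lem:sol_r_to_lp} and~\ref{lem:sol_lp_to_r}, and then use that the optimal value of an LP is concave in a parameter that enters only affinely in the right-hand side. The only difference is that you prove this sensitivity fact directly by a convex-combination argument (and note the dual view), whereas the paper simply cites Theorem~5.1 of~\cite{Bertsimas}; your added check that $x$ is finite on all of $[0,\bar{\iota}]$ is a welcome extra piece of care.
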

\begin{proof}
	Consider Program~\ref{eq:lp_obj} and $\widehat{L}_{h,s,\iota}(M)$ its optimal solution. 
	Thanks to Theorem 5.1 by~\cite{Bertsimas}, we have that the function $x'(\iota) = \widehat{L}^\delta_{h,s,\iota}(M)$ for any fixed step $h \in \mathcal{H}$ and state $s \in \sset$ is concave.
	Furthermore, the optimal solution of Program~\ref{eq:lp_obj} has the same value of the optimal solution of Problem~\ref{eq:r_obj} (see the proof of Lemma~\ref{lem:lp_poly_time}).
	Consequently:
	\begin{equation*}
		x(\iota) = \widehat{G}_{h,s,\iota}(M) = \widehat{L}_{h,s,\iota}(M) = x'(\iota),
	\end{equation*}
	thus $x$ is concave.
\end{proof}

\begin{restatable}{lemma}{approxOracle}
	\label{lem:approx_orcale}
	If Algorithm~\ref{alg:approx_oracle} is used as an approximate oracle $O^{h,s}_{\iota,\delta}$ during the entire execution of Algorithm~\ref{alg:approx_policy}, then it satisfies Definition~\ref{def:approx_oracle}.
\end{restatable}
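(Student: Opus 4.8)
The plan is to check the three conditions of Definition~\ref{def:approx_oracle} one by one for the tuple produced by Algorithm~\ref{alg:approx_oracle} (whose third output is the rounded promise $z$ computed at Line~\ref{line:oracle_discretizaion} and then fed to $g_h$ in Algorithm~\ref{alg:approx_policy}). Because the concavity and interval-structure inputs I will rely on are themselves properties of the tables $M^\delta_{h+1}$ \emph{as built by this same oracle}, the whole argument runs as a backward induction on $h$: the inductive hypothesis is that $M^\delta_{h+1}(s',\cdot)$ is concave on the down-closed grid interval $\mathcal{D}^{M,s'}_\delta=[0,\bar\iota]\cap\mathcal{D}_\delta$ guaranteed by Lemmas~\ref{lem:interval_d_delta} and~\ref{lem:concave}. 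The standing tool throughout is Lemma~\ref{lem:lp_poly_time} (via Lemmas~\ref{lem:sol_r_to_lp} and~\ref{lem:sol_lp_to_r}), which lets me treat $(\alpha,p,\widetilde q)$ as an exact optimum of the relaxed program $\mathcal{R}^\delta_{h,s,\iota}(M)$ and $v=\widehat G^\delta_{h,s,\iota}(M)$ as its optimal value.

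Condition 3 is the quickest: by construction each $z(a,s')$ is taken from $\mathcal{D}^{M,s'}_\delta$, on which $M(s',\cdot)>-\infty$ by definition of that set, so I only need the argmax at Line~\ref{line:oracle_discretizaion} to range over a nonempty set. This is where the interval structure of Lemma~\ref{lem:interval_d_delta} is used: since $q(a,s')$ is a convex combination of points of $\mathcal{D}^{M,s'}_\delta$ it lies in $[0,\bar\iota]$, and a grid interval contains a point within distance $\delta$ of any of its interior values, so at least one admissible $\iota'$ exists.

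For Condition 1 I split the sandwich. The lower bound $\widehat F_{h,s,\widetilde\iota}(M)\le v$ for every $\widetilde\iota\in[\iota-\delta,\iota+\delta]$ I obtain by embedding: any feasible point of the exact program $\mathcal{P}_{h,s,\widetilde\iota}(M)$ becomes a feasible point of $\mathcal{R}^\delta_{h,s,\iota}(M)$ of equal objective by choosing $\widetilde q(\cdot\mid a,s')$ to be the point mass at $z(a,s')$, whence $q=z$, the exact honesty value $\widetilde\iota$ meets the relaxed honesty bounds $\iota\pm\delta$, and the incentive constraint is literally unchanged; optimality of $v$ then finishes this direction, and the infeasible case $v=-\infty$ follows by contraposition. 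The upper bound $v\le F_{h,s,M}(\alpha,p,z)$ is the part that truly uses concavity, and I would reduce it to the per-coordinate inequality $y(a,s')\le M(s',z(a,s'))$. Here $y(a,s')=\mathbb{E}_{\widetilde q(\cdot\mid a,s')}[M(s',\cdot)]$, and by concavity of $M(s',\cdot)$ any distribution with mean $q(a,s')$ has expectation at most the linear interpolation of $M$ across the two grid points bracketing $q(a,s')$ (a mean-preserving-spread/Jensen comparison), which in turn is at most the larger of the two bracket values; since both brackets lie within $\delta$ of $q(a,s')$ they are admissible in the argmax defining $z$, so that maximum is exactly $M(s',z(a,s'))$.

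Condition 2 requires $(\alpha,p,z)\in\Psi^{h,s}_{\iota,2\delta}$ when $v>-\infty$. The honesty half is a clean perturbation estimate: feasibility of $(\alpha,p,q)$ puts the honesty expression within $\delta$ of $\iota$, and passing from $q$ to $z$ shifts it by $\sum_a\alpha_a\sum_{s'}P_h(s'\mid s,a)\,(z(a,s')-q(a,s'))$, of magnitude at most $\delta$ because $|z(a,s')-q(a,s')|<\delta$ and the weights sum to one, yielding the $2\delta$ slack. I expect the incentive-compatibility half (Equation~\ref{eq:opt_rel_prob_constr_ic}) to be the main obstacle: the constraint holds for the continuous $q$ by feasibility, and the continuation promise $z(a,s')$ appears identically on the recommendation and the deviation side, so the damage from rounding collapses to $\alpha_a\sum_{s'}\bigl(P_h(s'\mid s,a)-P_h(s'\mid s,\widehat a)\bigr)\bigl(z(a,s')-q(a,s')\bigr)$, which I would bound by the total-variation distance between the two transition rows times the rounding radius $\delta$. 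Because this residual is $O(\delta)$ while $\delta=\nicefrac{\epsilon}{4H^2}$, it is of the same order as the honesty slack and is ultimately absorbed into the constant of the final $\epsilon$-IC guarantee in Theorem~\ref{th:main_theorem}; making this transfer precise (rather than the concavity step, which is routine once Lemma~\ref{lem:concave} is available) is where I would concentrate the effort.
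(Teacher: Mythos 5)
Your plan matches the paper's proof almost step for step on Condition~3, on both halves of Condition~1 (the point-mass embedding $\widetilde q(\cdot\mid a,s')=\mathbbm{1}[z^\star(a,s')]$ for the lower bound, and the convex-hull/concavity argument via Lemmas~\ref{lem:interval_d_delta} and~\ref{lem:concave} reducing the upper bound to $y(a,s')\le M(s',z(a,s'))$), and on the honesty half of Condition~2 (the $\pm\delta$ perturbation from $q$ to $z$ stacked on the $\pm\delta$ relaxation already present in $\mathcal{R}^\delta_{h,s,\iota}(M)$, giving $\lambda=2\delta$). Making the backward induction on $h$ explicit rather than citing the two structural lemmas as black boxes is a presentational difference only.

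The genuine gap is in your treatment of the incentive constraint in Condition~2. Definition~\ref{def:approx_oracle} requires $(\alpha,p,z)\in\Psi^{h,s}_{\iota,2\delta}$, and in the definition of $\Psi^{h,s}_{\iota,\lambda}$ the slack $\lambda$ relaxes \emph{only} the two honesty constraints \eqref{eq:opt_rel_prob_constr_honest_ge}--\eqref{eq:opt_rel_prob_constr_honest_le}; constraint \eqref{eq:opt_rel_prob_constr_ic} must hold exactly. You correctly compute that rounding $q\mapsto z$ perturbs the IC inequality by $\alpha_a\sum_{s'}\bigl(P_h(s'\mid s,a)-P_h(s'\mid s,\widehat a)\bigr)k_{a,s'}$ with $k_{a,s'}=z(a,s')-q(a,s')$, but your plan to bound this by a total-variation term and ``absorb it into the constant of Theorem~\ref{th:main_theorem}'' does not prove the lemma as stated: an $O(\delta)$-violated IC constraint places $(\alpha,p,z)$ outside $\Psi^{h,s}_{\iota,2\delta}$, so Definition~\ref{def:approx_oracle}.2 fails, and Lemma~\ref{lem:local_ic_constr} (which converts \emph{exact} local IC plus $\eta$-honesty into $2\eta H^2$-IC) can no longer be invoked as is. The paper's own proof asserts that the IC constraint is preserved exactly under the substitution $q\to z$; to follow the paper you must either justify that claim (which requires care, precisely because the residual you identified weights $k_{a,\cdot}$ by two different transition rows) or explicitly redefine the oracle's feasible set to carry an IC slack and re-derive Lemma~\ref{lem:local_ic_constr} and Theorem~\ref{th:main_theorem} with that slack propagated. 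Either way, this step cannot be deferred to ``the final constant'' — concentrating your effort there, as you propose, is right, but the resolution must happen at the level of Definition~\ref{def:approx_oracle}, not downstream.
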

\begin{proof}
	Consider a fixed state $s \in \sset$, time step $h \in \mathcal{H}$ and promise $\iota \in \mathcal{D}_\delta$.
	In the following we let $M \coloneqq M^\delta_{h+1}$ and $(\alpha,p,z,v) = O^{h,s}_{\iota,\delta}(M)$ computed accordingly to Algorithm~\ref{alg:approx_oracle}.
	
	As a first step, we observe that $M(s',z(a,s')) > -\infty$ for every $a \in \A$ and $s' \in \sset$.
	This can be proved by considering that Algorithm~\ref{alg:approx_oracle} is such that $z(a,s') \in \mathcal{D}^{M,s}_\delta$ (see Line~\ref{line:oracle_discretizaion}).
	
	Now we prove that $v \ge \widehat{F}_{h,s,\widetilde{\iota}}(M)$ for any $\widetilde{\iota} \in [\iota-\delta,\iota+\delta]$.
	If the problem $\mathcal{P}_{h,s,\widetilde{\iota}}(M)$ does not admit any feasible solution, then $\widehat{F}_{h,s,\widetilde{\iota}}(M) = -\infty$ and thus $v \ge \widehat{F}_{h,s,\iota}(M)$.
	Suppose instead that $\mathcal{P}_{h,s,\widetilde{\iota}}(M)$ admits an optimal feasible solution $(\alpha^\star,p^\star,z^\star)$.
	Then we build a feasible solution $(\alpha,p,\widetilde{q})$ for the problem $\mathcal{R}^\delta_{h,s,\iota}$ as follows:
	\begin{align*}
		\alpha_a &= \alpha^\star_a &&\forall a \in \A \\
		p^a(s') &= p_a^\star(s') &&\forall a \in \A, s' \in \sset \\
		\widetilde{q}(\iota'|a,s') &=
		\begin{cases}
			1, &\iota' = z^\star(a,s') \\
			0, & \text{otherwise}
		\end{cases}
		&&\forall a \in A, s' \in \sset, \iota' \in \mathcal{D}_\delta,
	\end{align*}
	where $p^\star_a$ denotes the optimal contract for action $a \in \A$.
	
	We observe that $q(a,s') = z^\star(a,s')$ for all $a \in A$ and $s' \in \sset$. 
	Thus, as $(\alpha^\star,p^\star,z^\star) \in \Psi^{h,s}_{\widetilde{\iota},0}$, it follows that $(\alpha,p,q) \in \Psi^{h,s}_{\widetilde{\iota},0} \subseteq \Psi^{h,s}_{\iota,\delta}$.
	Furthermore, by considering that $\widetilde{q} \in \mathcal{Q}^M_\delta$, we have that is $(\alpha,p,\widetilde{q})$ a feasible solution for $\mathcal{R}^\delta_{h,s,\iota}$.
	
	Given this feasible solution, we show that $v \ge \widehat{F}_{h,s,\widetilde{\iota}}(M)$.
	For every $a \in A$, and $s' \in \sset$, it holds that:
	\begin{equation*}
		y(a,s') = \sum_{\iota' \in \mathcal{D}^{M,s}_\delta} \widetilde{q}(\iota'|a,s')M(s',\iota') 
		=M(s',z^\star(a,s')).
	\end{equation*}
	Thus, by considering that $\alpha=\alpha^\star$ and $p=p^\star$, it follows that:
	\begin{align*}
		\widehat{F}_{h,s,\widetilde{\iota}}(M) &= F_{h,s,M}(\alpha^\star,p^\star,z^\star) \\
		&= G^\delta_{h,s,M}(\alpha,p,\widetilde{q}) \\
		&\le \widehat{G}^\delta_{h,s,\iota}(M) =v.
	\end{align*}
	
	As a further step, we show that $v \le F_{h,s,M}(\alpha,p,z)$.
	Let $(\alpha,p,\widetilde{q})$ be the optimal feasible solution of $\mathcal{R}^\delta_{h,s,\iota}(M)$, so that $z$ is the discretization of $q$ (Line~\ref{line:oracle_discretizaion} Algorithm~\ref{alg:approx_oracle}) computed according to Equation~\ref{eq:z_from_q}.
	In order to do prove that $v \le F_{h,s,M}(\alpha,p,z)$, we prove that:
	\begin{equation}\label{eq:m_larger_than_y}
		M(s',z(a,s')) \ge y(a,s') = \sum_{\iota' \in \mathcal{D}^{M,s'}_\delta} \widetilde{q}(\iota'|a,s')M(s',\iota'),
	\end{equation}
	for every $a \in \A$, and $s' \in \sset$.
	Such a result would imply that $v = G_{h,s,M}(\alpha,p,\widetilde{q}) \le F_{h,s,M}(\alpha,p,z)$, by simply taking the definitions of these two functions.
	
	Since Equation~\ref{eq:m_larger_than_y} is trivially satisfied for $h=H$, as $M(s',z(a,s')) = y(a,s') = 0$ we assume $h<H$.
	Let us consider a fixed action $a \in \mathcal{A}$ and future state $s'\in \sset$.
	Thanks to Lemma~\ref{lem:interval_d_delta}, there exists some $\bar{\iota} \in \mathcal{D}_\delta$ such that $\mathcal{D}^{M,s'}_\delta =\{\iota' \in \mathcal{D}_\delta \mid \iota' \le \bar{\iota}\}$, while $z(a,s') \in \mathcal{D}^{M,s'}_\delta$ by definition (Equation~\ref{eq:z_from_q}).
	Furthermore, let us define:
	\begin{equation*}
		\mathcal{M} \coloneqq \{(\iota',m) \mid \iota' \in \mathcal{D}^{M,s'}_\delta, m = M(s',\iota')\}.
	\end{equation*} 
	If we define the concave (Lemma~\ref{lem:concave}) function $x(\iota') \coloneqq \widehat{G}_{h+1,s',\iota'}(M^\delta_{h+2})$, then we have that $M(s',\iota') = x(\iota')$ for every $\iota' \in \mathcal{D}^{M,s'}_\delta$.
	As a result, the polygon $\text{co}(\mathcal{M})$ is the convex hull of the points $(\iota',x(\iota'))$, with $\iota' \in \mathcal{D}^{M,s'}_\delta$ and $x$ being a concave function.
	We further observe that both the points $(q(a,s'),y(a,s'))$ and $(z(a,s'),M(s',z(a,s'))) = (z(a,s'),x(z(a,s'))) \in \mathcal{M}$ belong to $\text{co}(\mathcal{M})$.

	Suppose that $q(a,s') \in \mathcal{D}^{M,s'}_\delta$. 
	Then $q(a,s') = z(a,s')$, as there are no other values in $\mathcal{D}^{M,s'}_\delta$ at distance less than $\delta$ from $q(a,s')$.
	As the points $(z(a,s'),y(a,s'))$ and $(z(a,s'),x(z(a,s')))$ both belong to $\text{co}(\mathcal{M})$ and $x$ is concave, it holds that $y(a,s') \le x(z(a,s')) = M(s',z(a,s'))$.
	
	On the other hand, if $q(a,s') \notin \mathcal{D}^{M,s'}_\delta$, then $0 < q(a,s') < \bar{\iota}$ and the discrete values$\lfloor q(a,s') \rfloor_\delta$ and $\lceil q(a,s') \rceil_\delta$ belong to $\mathcal{D}^{M,s'}_\delta$. 
	As the polygon $\text{co}(\mathcal{M})$ is the convex hull of the points $(\iota',x(\iota'))$, with $\iota' \in \mathcal{D}^{M,s'}_\delta$ and $x$ is a concave function, it follows that either $q(a,s') \le x(\lfloor q(a,s') \rfloor_\delta)$ or $q(a,s') \le x(\lceil q(a,s') \rceil_\delta)$.
	Consequently, by construction Line~\ref{line:oracle_discretizaion} Algorithm~\ref{alg:approx_oracle} selects $z(a,s') \in \{\lfloor q(a,s') \rfloor_\delta, \lceil q(a,s') \rceil_\delta\}$ such that $M(s',z(a,s')) = x(z(a,s')) \ge y(a,s')$.
	
	To conclude the proof, we show that if $v> -\infty$, then $(\alpha,p,z) \in \Psi^{h,s}_{\iota,\lambda}$, with $\lambda = 2\delta$.
	Again, let $(\alpha,p,\widetilde{q})$ be the optimal solution to $\mathcal{R}^\delta_{h,s,\iota}(M)$.
	We must prove that $(\alpha,p,z)$, obtained according to Line~\ref{line:oracle_discretizaion} Algorithm~\ref{alg:approx_oracle} (Equation~\ref{eq:z_from_q}), belongs to $\Psi^{h,s}_{\iota,\lambda}$, \emph{i.e}, it satisfies Equation~\ref{eq:opt_rel_prob_constr_honest_ge}, Equation~\ref{eq:opt_rel_prob_constr_honest_le} and Equation~\ref{eq:opt_rel_prob_constr_ic}.
	We observe that, according to Equation~\ref{eq:z_from_q}, for any future state $s' \in \sset$ and action $a \in \A$, it holds that $|z(a,s') - q(a,s')| \le \delta$.
	Thus, Equation~\ref{eq:opt_rel_prob_constr_honest_ge} holds as:
	\begin{align*}
		&\sum_{a \in A}\alpha_a \sum_{s' \in \sset} P_h(s'|s,a) \left(r^A_h(s,p^a,a,s') +z(a,s')\right) \\
		&\ge \sum_{a \in A}\alpha_a \sum_{s' \in \sset} P_h(s'|s,a) \left(r^A_h(s,p^a,a,s') + q(a,s') -\delta \right) \\
		&= \sum_{a \in A}\alpha_a \sum_{x' \in \sset} P_h(s'|s,a) \left(r^A_h(s,p^a,a,s') +q(a,s')\right) -\delta \\
		&\ge \iota -2\delta = \iota - \lambda,
	\end{align*}
	where the equality holds because $\sum_{a \in \mathcal{A}}\alpha_a \sum_{s' \in \sset} P_h(s'|s,a) = 1$, while  the last inequality holds because Constraint~\ref{eq:r_honesty_ge} is satisfied, as $(\alpha,p,q)$ is a feasible solution for Program~\ref{eq:opt_rel_prob_constr_honest_ge}.
	Similarly, we prove that Equation~\ref{eq:opt_rel_prob_constr_honest_le} is satisfied.
	
	Let us define $k_{a,s'} \coloneqq z(a,s')-q(a,s')$.
	Then, by applying Equation~\ref{eq:r_ic}, we can show that for every pair of actions $a,\widehat{a} \in \A$ such that $\alpha_a>0$:
	\begin{align*}
		&\sum_{s' \in \sset} P_h(s'|s,a) \left(r^A_h(s,p^a,a,s') +z(a,s')\right) \\
		&= \sum_{s' \in \sset} P_h(s'|s,a) \left(r^A_h(s,p^a,a,s') +q(a,s')\right) +k_{a,s'} \\
		&\ge \sum_{s' \in \sset} P_h(s'|s,\widehat{a}) \left(r^A_h(s,p^a,\widehat{a},s') +q(a,s') \right) +k_{a,s'} \\
		&=\sum_{s' \in \sset} P_h(s'|s,\widehat{a}) \left(r^A_h(s,p^a,\widehat{a},s') +z(a,s') \right),
	\end{align*}
	thus implying that Equation~\ref{eq:opt_rel_prob_constr_ic} is satisfied.
	As a result, $(\alpha,p,z) \in \Psi^{h,s}_{\iota,\lambda}$, with $\lambda = 2\delta$, concluding the proof.
\end{proof}

\end{document}